\newcommand\xqed[1]{%
	\leavevmode\unskip\penalty9999 \hbox{}\nobreak\hfill
	\quad\hbox{#1}}
\newcommand\demo{\xqed{$\blacktriangleleft$}}
\newtheorem{thm}{Theorem}[section]
\newtheorem{conjecture}{thm}[section]
\newtheorem{lemma}[thm]{Lemma}
\newtheorem{corollary}[thm]{Corollary}
\newtheorem{claim}[thm]{Claim}
\newtheorem{lemm}[thm]{Lemma}
\newtheorem{cor}[thm]{Corollary}
\newtheorem{defn}[thm]{Definition}
\newtheorem{problem}[thm]{Problem}
\newtheorem{property}[thm]{Property}
\newtheorem{Example}[thm]{Example}
\newtheorem{assumption}[thm]{Assumption}
\newtheorem{ques}[thm]{Question}
\newcommand{\vx}{\mathbf{x}}
\newcommand{\abs}[1]{\left|{#1}\right|}
\newcommand{\cA}{\mathcal{A}}
\newcommand{\cI}{\mathcal{I}}
\newcommand{\cC}{\mathcal{C}}
\newcommand{\cV}{\mathcal{V}}
\newcommand{\cE}{\mathcal{E}}
\newcommand{\yell}[1]{\textcolor{red}{\textbf{#1}}}
\newcommand{\coreyyell}[1]{\textcolor{orange}{#1}}
\newcommand{\cH}{\mathcal{H}}
\newcommand{\cT}{\mathcal{T}}
\newcommand{\cD}{\mathcal{D}}
\newcommand{\eps}{\epsilon}
\newcommand{\bd}{\mathbf{d}}
\newcommand{\bard}{\mathbf{d}}
\newcommand{\cP}{\mathcal{P}}
\newcommand{\vy}{\mathbf{y}}
\newcommand{\set}[1]{\left\{#1\right\}}
\renewcommand{\tilde}{\widetilde}
\newcommand{\ceil}[1]{{\left\lceil#1\right\rceil}}
\newcommand{\floor}[1]{{\left\lfloor#1\right\rfloor}}
\newcommand{\Dom}{\mathrm{Dom}}
\newcommand{\mvec}[1]{\mathbf{#1}}
\newcommand{\vd}{\mvec{d}}
\newcommand{\inset}[1]{\left\{#1\right\}}
\newcommand{\inparen}[1]{\left(#1\right)}
\renewcommand{\Join}{\bowtie}
\mathchardef\mhyphen="2D
\newcommand{\vt}{\mathbf{t}}
\newcommand{\norm}[1]{\left\lVert#1\right\rVert}
\newcommand\numberthis{\addtocounter{equation}{1}\tag{\theequation}}
\newcommand{\B}{\mathcal{B}}
\newcommand{\edgeud}[2]{\ensuremath{({#1}, {#2})}}
\newcommand{\edgeu}[2]{\ensuremath{\set{{#1} , {#2}}}}
\newcommand{\edge}[2]{\ensuremath{({#1} \rightarrow {#2})}}
\newcommand{\edged}[2]{\ensuremath{({#1} \rightarrow {#2}, {\mathbf{d}})}}
\newcommand{\edgep}[2]{\ensuremath{({#1} \rightarrow {#2}, {p})}}
\newcommand{\edgeinfty}[2]{(\ensuremath{#1} \rightarrow \ensuremath{#2}, \infty)}
\newcommand{\LP}{\mathrm{LP}}
\newcommand{\rank}{\mathrm{rank}}
\newcommand{\inc}{\mathrm{In}}
\newcommand{\out}{\mathrm{out}}
\newcommand{\usum}[1]{\underset{#1}{\sum}}
\newcommand{\uprod}[1]{\underset{#1}{\prod}}
\newcommand{\ujoin}[1]{\underset{#1}{\Join}}
\newcommand{\ucup}[1]{\underset{#1}{\cup}}
\newcommand{\umin}[1]{\underset{#1}{\min}}
\newcommand{\umax}[1]{\underset{#1}{\max}}
\newcommand{\ty}{\tilde{\mathrm{y}}}
\newcommand{\vz}{\mathbf{z}}
\newcommand{\val}{\mathrm{val}}
\newcommand{\TJ}{\mathrm{J_G^{(I)}}}
\newcommand{\poly}{\mathrm{poly}}
\definecolor{darkgray}{rgb}{0.4, 0.4, 0.4}
\newenvironment{breakablealgorithm}
{% \begin{breakablealgorithm}
	\begin{center}
		\refstepcounter{algorithm}% New algorithm
		\hrule height.8pt depth0pt \kern2pt% \@fs@pre for \@fs@ruled
		\renewcommand{\caption}[2][\relax]{% Make a new \caption
			{\raggedright\textbf{\ALG@name~\thealgorithm} ##2\par}%
			\ifx\relax##1\relax % #1 is \relax
			\addcontentsline{loa}{algorithm}{\protect\numberline{\thealgorithm}##2}%
			\else % #1 is not \relax
			\addcontentsline{loa}{algorithm}{\protect\numberline{\thealgorithm}##1}%
			\fi
			\kern2pt\hrule\kern2pt
		}
	}{% \end{breakablealgorithm}
		\kern2pt\hrule\relax% \@fs@post for \@fs@ruled
	\end{center}
}
\newcommand{\panda}{\mathsf{PANDA}}
\newcommand{\lftj}{\mathsf{LFTJ}}
\newcommand{\toporder}{\mathsf{TopologicalOrdering}}
\newcommand{\mypar}[1]{\smallskip\noindent\textbf{{#1}.}}
\title{Worst-case Optimal Binary Join Algorithms under General $\ell_p$ Constraints}
\author{Sai Vikneshwar Mani Jayaraman$^1$, Corey Ropell$^2$ and Atri Rudra$^3$}
\date{%
	$^1$AWS Redshift\\
	$^2$Amazon, Inc\\
	$^3$University at Buffalo\\
}
\begin{document}
\addcontentsline{toc}{section}{Appendix}
\addtocontents{toc}{\protect\setcounter{tocdepth}{0}}
\maketitle
\begin{abstract}
Worst-case optimal join algorithms have so far been studied in two broad contexts -- $(1)$ when we are given input relation sizes [Atserias et al., FOCS 2008, Ngo et al., PODS 2012, Velduizhen et. al, ICDT 2014]  $(2)$ when in addition to size, we are given a degree bound on the relation [Abo Khamis et al., PODS 2017]. To the best of our knowledge, this problem has not been studied beyond these two statistics even for the case when input relations have arity (at most) two. %, which is the main focus of this paper.

In this paper, we present a worst-case optimal join algorithm when are given $\ell_{p}$-norm size bounds on input relations of arity at most two for $p \in (1, 2]$. ($p=1$ corresponds to relation size bounds and $p=\infty$ correspond to the degree bounds.) The worst-case optimality holds any fixed $p \in (2, \infty)$ as well (as long as the join query graph has large enough girth). Our algorithm is {\em simple}, does not depend on $p$ (or) the $\ell_{p}$-norm bounds and avoids the (large) poly-log factor associated with the best known algorithm $\panda$ [Abo Khamis et al., PODS 2017] for the size and degree bounds  setting of the problem. In this process, we (partially) resolve two open question from [Ngo, 2018 Gems of PODS].
% by designing worst-case optimal join algorithms (without any extra (poly-)log factor) for the case when all input relations have the same size and same degree bound. 
 We believe our algorithm has the {\em potential} to pave the way for practical worst-case optimal join algorithms beyond the case of size bounds. 

%Most of the technical work in our paper is proving the required worst-case join output size bounds, which crucially exploit structural results that are specific to the arity two case. In particular, we show that the worst-case size bounds are attained by some (suitably defined) acyclic spanning subgraph of the input query graph and in the process, (partially) resolve another open question from [Ngo, 2018 Gems of PODS]. We conclude with some hurdles in extending our arguments to the case of arity more than two for cyclic hypergraphs.
\end{abstract}

\section{Introduction} \label{sec:norm-intro}
Over the last decade or so, there has been a surge of interest in designing {\em worst-case optimal} join algorithms where the goal is to design algorithms that compute the natural join query in time that is linear  in worst-case size bounds on the join output based on some statistics about the input relations. The first such results were based on statistics on sizes of input relations -- the celebrated Atserias-Grohe-Marx (AGM) result proved the tight worst-case bounds on the join output size~\cite{AGM},  which were later realized via an algorithmic result by Ngo et al.~\cite{NPRR} (also see~\cite{V14}). That result has since been extended to handle {\em degree} bounds on the relations in addition to size bounds by Abo Khamis et al.~\cite{panda,panda-pods20}, though these results have drawbacks in the sense that their algorithm $\panda$ -- $(1)$ depends on size of the input relations and the degree bounds and $(2)$ loses a large multiplicative factor poly-logarithmic in the input sizes in its runtime analysis, which makes $\panda$ impractical (unlike the algorithm in~\cite{NPRR}, which has neither of these drawbacks).

In this paper, we mainly focus on the join processing problem for relations with arity (at most) two, which has applications in graph databases~\cite{semih-survey} (which we will discuss in detail in a bit), particularly for pattern matching in SPARQL for RDF data~\cite{sparql} and columnar databases~\cite{Rowid} (a connection we discuss in Appendix~\ref{app:columnar}). In particular, for any binary relation $R(A,B)$, we assume an arbitrary direction between the attributes say $\edge{A}{B}$.  Then, for any constant $a\in \Dom(A)$, the degree of $a$ in $R$ is the number of  tuples $(a,b)$ that are  in $R$. Now, consider the `degree vector' $\vd_R$ that we get by collecting the degree of every constant  in $\Dom(A)$ (or rather the effective  domain of $A$). Let  $\norm{R}_p$ denote the $\ell_p$ norm of  this  degree vector i.e., 
\begin{align*}
\norm{R}_p  \stackrel{\text{def}}{=}\norm{\vd_R}_p =\sqrt[p]{\usum{a\in\Dom(A)} \inparen{\vd_R[a]}^p }.
\end{align*}
When $p  = 1$, the above gives us the usual size bound and when $p = \infty$, we get the degree bound (in the chosen direction). Given a join query with $m$ relations $(R_{i})_{i \in [m]}$, we define a directed join query graph $G = (V, E)$, where each edge in $G$ corresponds to the schema of $R_i$ (for every $i \in [m]$) and each vertex corresponds to an attribute in the join query. We define a natural join $\ujoin{e \in E} R_e$, where every tuple $\vt \in \ujoin{e \in E}R_e$ satisifies $\pi_{e}(\vt) \in R_{e}$ for every $e \in E$. Here, $\pi_{e} (\vt)$ denotes the projection of $\vt$ on to vertices/attributes in $e$.

We are now ready to state the main problem we consider in this paper. 
\begin{ques} \label{ques:main}
Fix  $p\ge 1$. Given a directed join graph $G=(V,E)$ of the corresponding natural join query  $\ujoin{e = \edge{v}{u} \in E} R_{\edge{v}{u}}$\footnote{Throughout the paper, we will use $e$ and $\edge{v}{u}$ interchangeably to denote an edge in $E$.}  such that $\norm{R_{\edge{v}{u}}}_p\le L_{\edge{v}{u}}$ for every $e \in E$,\footnote{Our results also hold for the case when the directions of tuples are not fixed upfront and instead follow from an appropriate definition of the undirected $\ell_p$-norm (we discuss this in detail in Appendix~\ref{app:orientation}).} 
 can we design an algorithm with data complexity linear in the worst-case join size bound for this setting? Here, the  worst-case join size bound is over all possible relations  that satisfy the given norm bounds $L_e$ for all $e\in E$.  We also consider the setting where we are also given degree bounds $\norm{R_e}_\infty \le d_e$.
\end{ques}

Next, we discuss the motivation of studying this problem for general $p$ (specifically $p \in (1, 2]$).
\subsection{Motivation and Background}
The main motivation for Question~\ref{ques:main} comes from the setup of graph databases\\~\cite{semih-survey}, where a common goal is to enumerate all occurrences of a specific directed subgraph $G = (V, E)$ in a large directed graph $\cH = (\cV, \cE)$ (see Tables 7.(a) and 9 in~\cite{semih-survey} for specific examples). We note that this is a special case of Question~\ref{ques:main} where for each $\edge{u}{v}\in E$, we have:
\begin{align*}
\Dom(u)=\Dom(v)=\cV, \quad R_{\edge{v}{u}} = \cE, \quad \norm{R_{\edge{v}{u}}}_{p}  \le L.
\end{align*}
In other words, $R_{\edge{v}{u}}$ encodes $\cH$ as a bipartite graph with bipartition $\cV\times \cV$, where each $\edge{x}{y}\in \cE$ is in $R_{\edge{v}{u}}$ with $x\in\Dom(u)$ and $y\in\Dom(v)$. (Note that all input relations have exactly the same set of tuples and as a result, the same $\ell_{p}$-norm upper bound $L$.)

As mentioned earlier, this specific problem has been studied in two closely related contexts for {\em any} $G$ (including when $G$ is a hypergraph, which we do not quite cover here).
\begin{itemize}
\item {When $p = 1$, the seminal result of Atserias et. al~\cite{AGM} derived combinatorially tight size bounds for this problem, which were later leveraged to obtain worst-case optimal join algorithms~\cite{NPRR,V14}.}
\item {When $p = 1$ and a degree bound is given, combinatorially tight size bounds (the so-called {\em polymatroid bound}) for the arity two case\footnote{Their results also hold for a more general class of degree constraints for hypergraphs but not all of them.} were derived in a line of beautiful works by Abo Khamis et al.~\cite{panda,panda-pods20}. However, the corresponding algorithm $\panda$~\cite{panda} has a runtime that matches the polymatroid bound up to a multiplicative factor of \\ $O\inparen{\inparen{\log{N}}^{\inparen{\inparen{2^{|V|}}!}}}$.\footnote{Here, $N$ is an upper bound on size of $R_{\edge{v}{u}}$ for every $\edge{v}{u} \in E(G)$. To be precise, the best known bound on the multiplicative factor that can be proven in~\cite{panda} is $O\inparen{\inparen{\log{N}}^{\inparen{\inparen{2^{|V|}}!}\cdot \poly\inparen{2^{\abs{V}}}}}$ but for simplicity, we'll ignore the factor of $\poly\inparen{2^{\abs{V}}}$ in the exponent.}}
\end{itemize}
To the best of our knowledge, the above problem for values of $p\in (1,\infty)$ and specifically for $p\in (1,2]$ has not been studied before.

A natural question is if we gain anything by going beyond these two statistics in the first place? To answer this, we will begin with the case when $G$ is a triangle (which we will also use as a running example in this and the next section).

When $G$ is a triangle, the final bound obtained from both the above settings is $\min (N^{3/2}, N d)$, where $\umax{\edge{v}{u} \in E} |R_{\edge{v}{u}}|=N$ and $\umax{\edge{v}{u} \in E} d_{\edge{v}{u}}=d$. We consider the case of $p = 2$, where we have
\begin{align*}
\norm{R_{\edge{v}{u}}}_{2} = \sqrt[2]{\usum{v \in \cV(\cH)} \deg(v)^2} \le L \quad \text{ for every } \edge{v}{u} \in E(G).
\end{align*}

In this paper, we prove an upper bound of $L^2$ on the number of triangles with the above $\ell_2$ norm bound. We first present some numbers based on real-world benchmarks from SNAP~\cite{snap} (chosen based on a spectrum of number of edges) below. As we show in Table~\ref{table:joins-with-stat}, our bound based on $\ell_2$-norm is {\em at least} $2.75$x better than both the AGM ($N^{3/2}$) and degree-based bounds ($\min(N^{3/2}, Nd)$) 
and {\em up to} $18$x better than the AGM bound and $6$x better than the degree-based bounds.
\begin{table*}[th!]
{
\small
\centering
{

	\hspace{0.5cm} {
	\begin{tabular}{c|c|c|c} 
	Dataset & Number of Edges & AGM-$\ell_2$ Bound Ratio & Deg Bound-$\ell_2$ Ratio \\
	\hline
	 ca-GrQc &  $28980$ & $10.1$ & $4.8$ \\
	ca-HepTh & $51971$ & $18.19$ & $5.19$ \\
	facebook-combined & $88234$ & $3.26$ & $2.75$ \\
	soc-Epinions1 & $508837$ & $6.63$ &  $6.63$ \\
	\hline
\end{tabular}
}
\caption{\small Comparison of our $\ell_2$-norm bounds with AGM and degree-based bound when $G$ is a triangle: The first column denotes the SNAP dataset~\cite{snap}, we chose a subset of benchmarks based on the number of edges (denoted by second column). The third and final columns denote the ratio between the tight combinatorial bounds we obtain for the $\ell_2$-norm ($L^2$)  -- $(1)$ the AGM bound $(N^{3/2})$~\cite{AGM} and $(2)$ the degree-based bound $\min(Nd, N^{3/2})$~\cite{panda}. Here, the $\ell_1$-norm bound is denoted by $N$ and the $\ell_{\infty}$ bound is denoted by $d$. %More details are in Appendix~\ref{sec:joins-with-stats}.
} \label{table:joins-with-stat}
} 
}
\end{table*}

In fact, we present a theoretical justification of the results in Table~\ref{table:joins-with-stat} by considering the case when we want to list the copies of a (small) graph $G$ in a large graph $H$ that satisfies the {\em power-law} or is {\em scale-free}. Recall that a scale-free graph with exponent $\alpha$ has proportional to $k^{-\alpha}$ fraction of vertices with degree $k$. Graphs in practice tend to have $\alpha\in (2,3)$, which is what we consider mainly in this treatment.\footnote{We would like to stress that we do not claim that these graphs are very prevalent in practice (in fact, by now there is considerable doubt on whether such graphs strongly capture graphs that occur in practice~\cite{no-power-law}) but these form a {\em mathematically} natural class of graphs that have been well-studied.} For this setting, we are able to show the following (details are in Appendix~\ref{app:power-law}):
\begin{itemize}
\item Our join size bounds for scale-free graphs with exponent $\alpha$ are no worse for $p\in (1,\alpha-1]$ than those from the AGM bound (i.e., based on $\ell_1$ bound) for {\em every} graph $G$. In fact, we achieve the best bounds for $p=\alpha-1$.
\item For any $n$-cycle, our bounds based on $\ell_{\alpha-1}$-norm are asymptotically better than those that follow from $\ell_1+\ell_\infty$ bounds (as well as those based on just $\ell_1$ norm bounds).
\item For certain corner cases (e.g., for the triangle query and $\alpha=3$) our bounds are tight even for scale free graphs (our lower bounds are tight in general with respect to the instance that satisfies the given norm bounds-- these in general are not scale-free graphs).
\end{itemize}

We note that the $\ell_1$ norm based bounds are better for graphs that are complete bipartite graphs while the $\ell_1+\ell_\infty$ bounds are better for graphs where the degree distribution of the graph are very closely concentrated around the larger degrees. By contrast,
we expect better bounds based on $\ell_2$-norm when the degree distribution is skewed towards smaller degrees.
 
Finally, we would like to address the cost of maintaining the $\ell_p$-norm bound. Note that we can compute (and maintain) the degree sequence and hence, exactly compute the $\ell_p$-norm bound in linear time (and update in constant time with linear space). Since statistics are typically preprocessed (or) recomputed from scratch in periodic intervals in real-time Database systems~\cite{postgres}, we believe this is a reasonable computation cost. Additionally, $\ell_p$-norm (for $p\in [1,2]$) has (theoretically) appealing approximation guarantees in the streaming model, which we discuss in Appendix~\ref{sec:streaming}.

\subsection{Our Contributions}
We would like to translate the combinatorial gains from above to worst-case optimal join algorithms and hence, answer Question~\ref{ques:main}. More importantly, it would be convenient (both conceptually and from a practicality point of view) to have an algorithm that is {\em robust} to additional statistics i.e., if we add statistics based on another $p$, the algorithm remains the same (while the analysis could be different). 

A natural choice for such an algorithm is $\panda$~\cite{panda},\footnote{This is not immediate from $\panda$ but follows from our arguments-- see  Appendix~\ref{sec:panda-algo}.} which suffers from a multiplicative factor  of $O\inparen{\log{N}^{\inparen{\inparen{2^{|V|}}!}^2}}$. While theoretically, this is ``only'' a poly-log factor away from the (optimal) polymatroid bound, the fact that this poly-log factor depends {\em doubly-exponentially} on the query size has seriously hindered its practical implementation (unlike its worst-case optimal join counterpart~\cite{NPRR}). In fact, Ngo in the survey accompanying his 2018 Gems of PODS talk, highlighted the following open question:
\begin{ques}[Open Problem 5 in~\cite{ngo-survey}] \label{ques:panda}
Is there an algorithm running within the polymatroid bound that does not impose the poly-log (data) factor as in $\panda$ for the case when $p = 1$ and $\ell_{\infty}$ bounds are given?
\end{ques}
Further, $\panda$ needs the knowledge of $p$ (and the $\ell_p$-norm bounds) to work.

In this paper, we answer Question~\ref{ques:main}:
\begin{itemize}
\item Affirmatively for the case when $p \in (1, 2]$. For $p > 2$, our results hold for $G$s with girth\footnote{Girth here refers to length of the smallest directed cycle in $G$.} at least $p + 1$.
\item Affirmatively for the case when $p =1$ and $p={\infty}$ bounds are given, assuming same $L$ and same $d$. In this process, we answer Question~\ref{ques:panda} in the affirmative for a non-trivial class of join queries.
\end{itemize}
We achieve this using a fairly straightforward worst-case optimal join algorithm (see Section~\ref{sec:tech}). We briefly discuss our main technical result here, starting with a definition of our linear program (which we call $\LP^{(+)}$).
\begin{align*}
& \min \usum{\edge{v}{u} \in E} \left( x_{\edge{v}{u}} \log(L_{\edge{v}{u}})  \right) \\
& \usum{\edge{v}{u} \in E} x_{\edge{v}{u}}  + \usum{\edge{u}{w} \in E} \frac{x_{\edge{u}{w}}}{p} \ge 1 \quad \forall u \in V   \\
& x_{\edge{v}{u}} \ge 0 \quad \forall \edge{v}{u} \in E.
\end{align*}
\begin{thm} [Informal version of Theorem~\ref{thm:lp-main}]
For any $G$ with girth at least $p + 1$, our algorithm computes $\TJ$ in time linear in 
\begin{align*}
\Theta \left( 2^{(p + 1) |V|} \cdot \uprod{\edge{v}{u} \in E}  L_{\edge{v}{u}}^{x^*_{\edge{v}{u}}} \right), 
\end{align*}
where $\vx^* = (x^*_{\edge{v}{u}})_{\edge{v}{u} \in E}$ is an optimal solution to $\LP^{(+)}$.
\end{thm}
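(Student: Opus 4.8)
The plan is to prove the running-time bound by induction on $|V|$, combining (a) a combinatorial upper bound on $|\TJ|$ (and on the sizes of the sub-joins the algorithm recurses on) with (b) an accounting of the degree-classification overhead of the algorithm of Section~\ref{sec:tech}, the two glued together via linear-programming duality for $\LP^{(+)}$.

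\smallskip\noindent\textbf{Step 1: the combinatorial bound $|\TJ|\le\uprod{\edge{v}{u}\in E}L_{\edge{v}{u}}^{x^*_{\edge{v}{u}}}$.} I would argue via entropy: let $\vt$ be uniform on $\TJ$ and set $h(S):=H(\vt_S)$ for $S\subseteq V$, so that $h$ is monotone submodular and $h(V)=\log|\TJ|$. For a directed edge $e=\edge{v}{u}$, conditioning on the source value gives $h(\{u\}\mid\{v\})\le\mathbb{E}\bigl[\log\vd_{R_e}[\vt_v]\bigr]$, and applying Jensen's inequality (concavity of $\log$) to $\usum{a}\PR[\vt_v=a]\log\bigl(\vd_{R_e}[a]^p/\PR[\vt_v=a]\bigr)\le\log\usum{a}\vd_{R_e}[a]^p\le p\log L_e$ yields the key per-edge inequality
\begin{align*}
h(\{u\}\mid\{v\})+\tfrac1p\,h(\{v\})\;\le\;\log L_{\edge{v}{u}},
\end{align*}
together with the ``source-count'' bound $h(\{v\})\le p\log L_{\edge{v}{u}}$ (at most $L_{\edge{v}{u}}^{\,p}$ values of $v$ have positive degree in $R_{\edge{v}{u}}$). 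The global bound then comes from combining these inequalities weighted by the optimal $\LP^{(+)}$ solution $x^*$: one has $\usum{e}x^*_e\log L_e\ge\usum{e=\edge{v}{u}}x^*_e\bigl(h(\{u\}\mid\{v\})+\tfrac1p h(\{v\})\bigr)$, reorganizes the right-hand side by vertex (each $u$ collects $h(\{u\}\mid\text{tail})$ with coefficient $\usum{\edge{v}{u}\in E}x^*_{\edge{v}{u}}$ and $\tfrac1p h(\{u\})$ with coefficient $\tfrac1p\usum{\edge{u}{w}\in E}x^*_{\edge{u}{w}}$, whose sum is $\ge1$ by the $\LP^{(+)}$ constraint), and then applies the chain rule to $h(V)$ along directed walks, averaging over cyclic shifts so the $\tfrac1p h(\cdot)$ correction terms telescope. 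On a single directed $n$-cycle with equal bounds this closes cleanly to $h(V)\le\tfrac{p}{p+1}\usum{e}\log L_e$ once $p\le n-1$ (using subadditivity $\usum{v}h(\{v\})\ge h(V)$; otherwise one only gets the weaker $\tfrac{p}{n}\usum{e}\log L_e$ via $\usum{v}h(\{v\})\le p\usum{e}\log L_e$), which matches $\LP^{(+)}$ exactly precisely in the regime $\mathrm{girth}(G)\ge p+1$; the role of the girth hypothesis is exactly to keep each vertex's accumulated correction terms within its $\LP^{(+)}$ budget.

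\smallskip\noindent\textbf{Step 2: the algorithm.} The algorithm peels off one vertex $u$ at a time---preferring, whenever one exists, a $u$ incident to an out-edge---classifies the already-materialized neighbours of $u$ by degree into geometric scales, and for each combination of scales intersects the corresponding adjacency lists to form the candidates for $u$ and recurses on $G-u$ under the induced $\ell_p$ bounds. I would show by induction on $|V(G)|$ that the running time is $O\bigl(2^{(p+1)|V(G)|}\uprod{e}L_e^{x^*_e}\bigr)$ up to lower-order terms: the candidate set for $u$ is governed by the $\ell_p$ budget of its cheapest out-edge (hence the elimination rule), the sum over geometric degree scales is a convergent geometric series precisely because $p>1$---which is why no $\mathrm{polylog}$ factor of the kind $\panda$ suffers from survives, the degree-classification costing only an $O(1)$ factor per vertex for fixed $p$, bundled into the $2^{(p+1)|V|}$---and a feasible $\LP^{(+)}$ solution for $G-u$ together with the weight billed to $u$ reassembles into a feasible solution for $\LP^{(+)}$ on $G$, so the inductive estimates compose to $\uprod{e}L_e^{x^*_e}$. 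The girth hypothesis re-enters to certify that restricting relations keeps the induced $\ell_p$ bounds meaningful---short cycles would spawn residual binary relations whose degree sequences escape the original norms---so that the inductive invariant (an $\ell_p$-constrained binary join on a smaller graph) is preserved.

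\smallskip\noindent\textbf{Main obstacle.} I expect the hard part to be Step~1 for arbitrary $G$ with non-uniform bounds: the reorganized weighted per-edge inequalities must be closed into \emph{exactly} $\usum{e}x^*_e\log L_e$ rather than a strictly weaker quantity (naively conditioning each $h(\{u\}\mid\text{tail})$ down to $h(\{u\}\mid V\setminus u)$ is far too lossy on cyclic queries), and doing so seems to require exploiting the polyhedral structure of the optimal $x^*$ and precisely how $\mathrm{girth}(G)\ge p+1$ (automatic when $p\le2$) enables the cyclic-shift cancellation. A secondary difficulty is the LP-monotonicity lemma underlying the induction in Step~2---that eliminating a well-chosen vertex never raises the $\LP^{(+)}$ optimum by more than the weight billed to it---which is a careful but routine polyhedral argument once the elimination rule and girth bound are fixed.
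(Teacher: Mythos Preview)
Your per-edge entropy inequality $h(\{u\}\mid\{v\})+\tfrac1p\,h(\{v\})\le\log L_{\edge{v}{u}}$ is correct and genuinely interesting---the paper explicitly remarks that it is ``not immediately clear to us how those bounds can be captured in terms of entropy,'' so you have found something the authors did not. That said, your proposal diverges from the paper in both steps and has a real gap.

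\textbf{Step 1.} You correctly flag that closing the weighted per-edge inequalities into $h(V)$ is the obstacle, but you do not have a plan beyond the single-cycle case. The paper does not use entropy at all; instead it relies on a structural decomposition (Theorem~\ref{thm:opt-decomp}): any optimal \emph{basic feasible} solution $\vx^*$ to $\LP^{(+)}$ has support that decomposes $G$ into vertex-disjoint components $G_i$ with $|V_i|-1\le |E(G_i)|\le |V_i|$, i.e.\ each $G_i$ is a DAG or has exactly one directed cycle. This is the polyhedral structure you sensed was needed. With it, your entropy argument would only need to handle unicyclic graphs (a cycle with directed trees hanging off), not arbitrary $G$; but even that extension is not in your sketch, and the paper instead bounds the single-cycle case via a chain of three auxiliary LPs (Lemmas~\ref{claim:gen-lp} and~\ref{lem:gen-2}, Claim~\ref{claim:opt-lp-gen}), where the girth condition $|C|\ge p+1$ enters precisely in the proof of Lemma~\ref{lemma:z-opt-gen} to make a perturbation argument go through.

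\textbf{Step 2.} Your description of the algorithm does not match the paper's. There is no vertex-peeling induction and no LP-monotonicity lemma. The actual algorithm (Algorithms~\ref{algo:gen-ub}--\ref{algo:generic-ub}) globally enumerates all degree configurations $\bd$, for each $\bd$ runs $\lftj$ on every spanning acyclic subgraph in parallel, and takes the union. The analysis bounds the per-configuration cost by $\prod_e L_{e,\bd}^{x_e}$ and then---the crucial step you mischaracterize as a ``convergent geometric series''---applies H\"older's inequality with exponents $x_e/p$ (using $\sum_e x_e/p\ge 1$, which follows from the covering constraints once girth $\ge p+1$) together with $\sum_{d_e} L_{e,\bd}^p = L_e^p$ to collapse the sum over configurations without any $\mathrm{polylog}$ loss. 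Your induction-on-$|V|$ framework would require restricted relations to retain controllable $\ell_p$ norms, and your girth justification for that (``short cycles would spawn residual binary relations whose degree sequences escape the original norms'') is not how girth is used in the paper.
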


We first consider the case when $G$ is acyclic\footnote{The notion of acyclicity here is that $G$ has no directed cycles unlike the more common notion of acyclic hypergraphs in the query processing literature.} and all the degrees in a relation are within a factor of two-- essentially any reasonable algorithm here will work but we observe that the Leapfrog-TrieJoin ($\lftj$~\cite{V14}) when applied to this special case works well. To handle the case of general $G$ (but still with the bounded degree assumption), we simply run the algorithm above for all spanning acyclic subgraphs of $G$. Finally, to handle general degrees, we simply bucket tuples in a relation $R_{\edge{u}{v}}$ based on the degrees of values in $\Dom(u)$ and then run the previous algorithm for all possible combination of degree buckets. We would like to note here that our algorithm is independent of $p$ and the corresponding $\ell_{p}$-norm bounds -- both these are used only in our analysis.

\subsubsection{Other Results}  \label{sec:other-results}
We now remove the restriction that both $G$ and $\cH$ are directed. In particular, we consider a version where both $G$ and $\cH$ are undirected and each edge in $\cH$ can be directed in either direction. We first note that ability to orient tuples both ways in a relation can bring down the $\ell_{p}$-norm of $E(\cH)$ significantly. As an example, consider the $\ell_{\infty}$ case, where it turns out that if we allow the direction to be decided at the edge level, the $\ell_{\infty}$ bound changes from the maximum directed degree to the degeneracy~\cite{degen} of the undirected $\cH$. Our results for $p \in (1, \infty)$ hold for this case as well for appropriate definition of $\ell_p$-norm, as long as $G$ has girth (now, length of the smallest undirected cycle in $G$) at least $p + 1$. Further, our results for $p = 1$ with $\ell_{\infty}$ bounds given also hold for this case. We discuss this in detail in Appendix~\ref{app:orientation}.

\subsubsection{Dependence on $p$}
One aspect that we have avoided in our discussion so far is our restriction on $p$ in designing worst-case optimal join algorithms. It turns out that when $p \in (2, \infty]$ (say $p = 3$), the hard instance for the worst-case size lower bound on $|J|$ is not Cartesian product-based even for the case when $G$ is a triangle (a fact we discuss in detail in Section~\ref{sec:overview}). This, in turn, puts a limitation on our techniques to prove the corresponding upper bound, which rely on specific structural results based on $\LP^{(+)}$.

We conclude this section by noting that our results for Question~\ref{ques:main} can be extended to acyclic hypergraphs $G$ for the setting when $\ell_{p}$ for any $p \in [1, \infty)$ and $\ell_{\infty}$ bounds that are a subset of the ones considered in~\cite{panda-pods20} are given. The main limitation in going beyond acyclic hypergraphs seems to be our analysis (and not the algorithm). We discuss more in Section~\ref{sec:concl}.
\subsection{Implications of Our Results}
We start by discussing further implications of our results here. For the $\ell_1$, $\ell_{\infty}$ case, 
Ngo in~\cite{ngo-survey} showed that the upper bound among all possible acyclic subgraphs of $G$ is {\em finite} and raised the following question:
\begin{ques}[Open Question 3 in~\cite{ngo-survey}] \label{ques:acyclic}
Can we achieve the polymatroid bound by considering the smallest polymatroid bound among all possible acyclic spanning subgraphs (i.e., we drop a subset of the $\ell_\infty$ constraints so that the remaining $\ell_\infty$ constraints are acyclic) of the original join query graph?
\end{ques}
Through the analysis of our algorithm, we answer Question~\ref{ques:acyclic} in the affirmative for the special case of $L$ and $d$ being the same.

We hope that by answering Question~\ref{ques:panda} in the positive and the simplicity of our algorithm, our work opens the way to an eventual practical implementation of a worst-case join algorithm for the case of $\ell_1$ and $\ell_\infty$ bounds. However, we would like to mention that the additional exponential factors in its runtime due to considering acyclic subgraphs and the degree-based bucketing\footnote{While this is {\em somewhat} similar to the standard hash-partitioning technique (currently used in Database engines~\cite{postgres}), it is not immediately clear how we can handle skew in our scenario.} imply that more engineering improvements will have to be made to our algorithm's implementation can be competitive with existing worst-case optimal join variants.

At this juncture, we remark that the recent results of Abo Khamis et al.~\cite{panda,panda-pods20} are based on some beautiful results on {\em entropic inequalities}, while our results are based on more basic tools that have been used in the context of worst-case optimal join algorithms, starting at least from~\cite{NPRR}. One potential roadblock in using entropic inequalities for our results, especially those on $\ell_p$ bounds for $p\in(1,\infty)$, is that it is not immediately clear to us how those bounds can be captured in terms of entropy. However, given that our simple techniques can prove the new results presented in this paper, perhaps they can be improved and strengthened with an appropriate entropy formulation -- we leave this tantalizing possibility for future work.

Finally, to the best of our knowledge, the $\ell_p$-norm bound (for $p\in(1,\infty)$) as we define here is not a statistic that is currently used in database systems to evaluate join queries. Our work shows the potential benefit of having this statistic and we speculate that it has the potential to find applications in join query processing engines.

\subsection{Paper Organization} We present our algorithm and an overview of our techniques in Section~\ref{sec:tech} and we setup preliminaries and notation in Section~\ref{sec:norm-prelims}. Then, we present our results for $\ell_{p}$-norm bounds for general $G$ (with girth at least $p + 1$) in Section~\ref{sec:overview}. Next, we present results for $\ell_1$ plus $\ell_\infty$-norm bounds in Section~\ref{sec:panda}. Finally, we survey related work in Section~\ref{sec:related} and discuss limitations and conclude with open questions in Section~\ref{sec:concl}. For the sake of readability, all proofs have been deferred to the appendix.

\iffalse
\coreyyell{Graph databases are used for queries such as counting triangles which is used to compute the clustering coefficient of a node in a social network, as stated in~\cite{10.1145/1963405.1963491}. Counting triangles is also studied in Co-authorship networks, Information networks, Web graphs, and Internet networks with varying sparsity between them as seen in~\cite{4781156}. Beyond triangles there is interest in finding subgraphs such as 4-Clique, Lollipop, Barbell, PageRank, and SSSP all noted in~\cite{DBLP:journals/corr/AbergerNOR15} using datasets that contain directed and undirected edges. Further, in~\cite{DBLP:journals/corr/abs-1709-03188} other graph computations such as finding connected components, neighborhood queries, and subgraph matching are surveyed to be used by researchers, academics, and practical applications. The graphs captured in the survey are also noted to contain edges that are directed, undirected,or a combination; as well as provide various sparsity.} From Database theory point of view, this question has so far been studied in two (closely related) contexts.
\fi

\section{Our Algorithm}  \label{sec:tech}

In this section, we will present our generic algorithm for any query graph $G = (V, E)$, starting with some notation.
\subsection{Notation} \label{sec:sec2-notation}
Recall that $G$ corresponds to a join query. Given a database instance $I=\inset{R_{\edge{v}{u}}}_{\edge{v}{u} \in E}$, we denote the join output for $I$ by
\[\TJ = \ujoin{e = \edge{v}{u} \in E} R_{\edge{v}{u}}.\] 
A \textit{\textbf {degree configuration}} $\bd = (d_{\edge{v}{u}})_{\edge{v}{u} \in E}$ is where each value $d_{\edge{v}{u}}$ is a power of two that is at most $2 \cdot \norm{R_{\edge{v}{u}}}_\infty$. We define a subrelation $R^{d}_{\edge{v}{u}}$ of $R_{\edge{v}{u}}$ for any $d$ that is a power of two, as follows:
\begin{align*}
R_{\edge{v}{u}}^{d} =  \left \{ \mathbf{t}: \mathbf{t} \in  R_{\edge{v}{u}}, \frac{d}{2} < \deg_{R_{\edge{v}{u}}}(\pi_{v}(\vt)) \le d  \right \} \numberthis \label{eq:subrelation-deg},
\end{align*}
where $\pi_{v}(\vt)$ is the value in $\vt$ corresponding to $v$ and $\deg_{R_{\edge{v}{u}}}(x)$ denotes the degree of the value $x\in\Dom(v)$ in $R_{\edge{v}{u}}$.
We denote the join output for subrelations corresponding to degrees in $\bd$ by 
\[\TJ(\bd)= \ujoin{e = \edge{v}{u} \in E} R_{\edge{v}{u}}^{d_{\edge{v}{u}}}.\] 
We define $|V| = n$ and we would like to recall that the acyclicity of $G$ is defined in the directed sense (i.e., $G$ is a Directed Acyclic Graph or DAG). For any integer $m \ge 1$, $[m]$ denotes the set $\{1, \dots, m\}$. Throughout the paper, we will assume that the degree values $d_{\edge{v}{u}}$ for every $\edge{v}{u} \in E$ are powers of two.

\subsection{Our Algorithm} \label{sec:sec2-generic}
In this section, we present our algorithm with a running example of $G$ being a triangle  with $V = \{ A, B, C\}$, $E = \{ \edge{A}{B}, \edge{B}{C}, \edge{C}{A} \}$ and relations $R_{\edge{A}{B}}, S_{\edge{B}{C}}$ and $T_{\edge{C}{A}}$ (note that $G$ is acyclic). Our goal is to compute $\TJ(\bd)$ and we do so in three stages\footnote{We would like to note here that this structure was introduced in~\cite{JR-paper}.}. 

\begin{Example} \label{ex:triang-acyclic}
Consider the acyclic subquery with $R_{\edge{A}{B}}$ and  $S_{\edge{B}{C}}$ and let $\bd = (d_{\edge{A}{B}},d_{\edge{B}{C}})$ be a degree configuration for this subquery. Our goal here is to compute $J_{A, B, C} = R^{d_{\edge{A}{B}}}_{\edge{A}{B}} \Join S^{d_{\edge{B}{C}}}_{\edge{B}{C}}$.
 
We use the well-known algorithm $\lftj$~\cite{V14} for this purpose and start by considering the topological ordering $(A, B, C)$ of vertices. The set of values of $A$ in $\pi_{A}(J_{A, B, C}(\bd))$ is a subset of $\pi_{A} (R^{d_{\edge{A}{B}}}_{\edge{A}{B}})$. For a fixed $a \in \Dom(A)$, the set of values of $B$ is the intersection of\footnote{Here, $\sigma_{A = a}(R^{d_{\edge{A}{B}}}_{\edge{A}{B}})$ denotes the set of tuples in $R^{d_{\edge{A}{B}}}_{\edge{A}{B}})$ where the value corresponding to $A$ is $a$~\cite{relational-algebra}.} $\pi_{B}(\sigma_{A = a}(R^{d_{\edge{A}{B}}}_{\edge{A}{B}}))$  and $\pi_{B}(S^{d_{\edge{B}{C}}}_{\edge{B}{C}})$. Finally, for fixed values $(a, b) \in \Dom(A) \times \Dom(B)$, the set of values of $c\in\Dom(C)$ is $\pi_{C}(\sigma_{B = b}(S^{d_{\edge{B}{C}}}_{\edge{B}{C}}))$ where $B$ has value $b$. Taking the union over all such triples $(a,b,c)$ gives us $J_{A, B, C}(\bd)$, as required. \demo
\end{Example}
It turns out that we can extend the above algorithm to any acyclic $G$ (following similar ideas in~\cite{ngo-survey}), where we consider a topological ordering and for each vertex, we compute an intersection of its projections on all its `incoming' and `outgoing' relations on the vertex. Then, we take a Cartesian product of this intersection with the set of tuples computed so far and continue this process until the join output is computed.
\iffalse
For readers familiar with $\lftj$~\cite{V14}, our algorithm is exactly $\lftj$ (to handle the case that $G$ is a directed acyclic graph, we consider the attributes in a topological ordering). For readers not familiar with $\lftj$, here is the basic idea of the iterative algorithm. Let $u_1,\dots,u_n$ be a topological ordering of $G$. At the end of iteration $i\in [n]$, the idea is to compute the join of $R_{\edge{v}{u}}^{d_{\edge{v}{u}}}$ {\em projected down} to $\set{u_1,\dots,u_n}$ (let's call this set $J_i$). To compute $J_{i}$ from $J_{i - 1}$, we try and compute for each $\vt\in J_{i-1}$, all values $x\in\Dom(u_i)$ so that $(\vt,a)\in J_i$. The main observation is that all such $a$ have to lie in the intersection of two sets $P_{\text{out}}$ and $P_{\text{in}}(i,\vt)$, where $P_{\text{out}}$ itself is the intersection of the projections of the ``outgoing'' relations to $u_i$. On the other hand, $P_{\text{in}}(i,\vt)$ is the intersection of the projections of all ``incoming'' relations to $u_i$ (whose `prefix' matches $\vt$). The details are in Algorithm~\ref{algo:gen-ub}.
\fi
\begin{breakablealgorithm}
	\caption{Prefix-Join $(G,\bd,I)$} \label{algo:gen-ub}
	\begin{algorithmic}[1]
		\Require{Directed Acyclic Query graph $G=(V,E)$ (such that the undirected version is connected); degree configuration $\bard = (d_{\edge{v}{u}})_{\edge{v}{u} \in E}$; database instance $I=\set{R_e}_{e\in E}$.}
		\Ensure{$\TJ(\bd)$}
		\Statex
		\State $(u_1,\dots,u_n) \gets \toporder(G)$ 
		\State $J_1 \gets  \underset{\edge{u_1}{w}\in E}{\bigcap} \pi_{u_1}\left(R_{\edge{u_1}{w}}^{d_{\edge{u_1}{w}}} \right)$ \Comment{Since $u_1$ is a source, it has only outgoing relations.}
		\For{$i=2 \dots n$}
		\State $J_i\gets\emptyset$
		\State $P_{\text{out}}(i)\gets \underset{\edge{u_i}{w}\in E}{\bigcap} \pi_{u_i}\left(R_{\edge{u_i}{w}}^{d_{\edge{u_i}{w}}} \right)$
		\ForAll{$\vt\in J_{i-1}$}
		\State $P_{\text{in}}(i,\vt)\gets \underset{\edge{v}{u_i}\in E}{\bigcap} \inset{y: (\vt[v],y)\in R_{\edge{v}{u_i}}^{d_{\edge{v}{u_i}}}}$ 
		\State $P_i(\vt)\gets P_{\text{in}}(i,\vt) \cap P_{\text{out}}(i)$
		\State $J_{i}\gets J_{i} \cup \inset{\vt}\times P_i(\vt)$ \Comment{$\inset{\vt}\times P_i(\vt)$ is the set of all valid extensions of $\vt$ in $J_i$}
		\EndFor
		\EndFor
		\State \Return{$J_{n}$}
	\end{algorithmic}
\end{breakablealgorithm}
The correctness of Algorithm~\ref{algo:gen-ub} follows directly from the correctness of $\lftj$ -- the proof is by induction on $i \in [2, n]$. In particular, at the end of iteration $i$, the set $J_i$ is indeed the join $\ujoin{\edge{v}{u} \in E} R_{\edge{v}{u}}^{d_{\edge{v}{u}}}$ {\em projected down} to $\set{u_1,\dots,u_i}$. % (as shown in Example~\ref{ex:triang-acyclic}).

Next, we extend Algorithm~\ref{algo:gen-ub} to compute $\TJ(\bd)$ for any (potentially cyclic) $G$ via a simple generalization: run Algorithm~\ref{algo:gen-ub} on all spanning acyclic subgraphs of $G$ in `parallel' and stop once we have completely processed the first spanning acyclic subgraph. The details are in Algorithm~\ref{algo:acyclic-ub}.
\begin{breakablealgorithm}
	\caption{Acyclic-Join $(G, \bd, I)$} \label{algo:acyclic-ub}
	\begin{algorithmic}[1] 
		\Require{Directed Query graph $G=(V,E)$; Database instance $I=\set{R_e}_{e\in E}$; Degree configuration $\bard = (d_{\edge{v}{u}})_{\edge{v}{u} \in E}$.}
		\Ensure{$\TJ(\bd)$}
		\State $\TJ(\bd)\gets \emptyset$
		\ForAll{Spanning Acyclic Subgraphs of $G$}
		\Comment{We start all these runs in parallel and when the first one terminates, we set $\TJ(\bd)$ as its output.}
		\State Let the spanning acyclic subgraph under consideration be $\set{G_i}_{i\in [t]}$ for some $t>0$, where each $G_i$ is connected (in the undirected sense).
		\State $J\left(\set{G_i}_{i\in [t]}, \bard, I \right) \gets \times_{i \in [t]} \mathrm{Prefix-Join}(G_i, \bard,I)$
		\EndFor
		\State Let $\set{G_i^*}_{i\in [t^*]}$ be the acyclic subgraph that finishes first.
		\State Prune $J\left(\set{G_i^*}_{i\in [t^*]}, \bard, I \right)$ against all relations to get the final $\TJ(\bd)$. \Comment{Retain $\vt$ only if $\pi_e(\vt)\in R_e$ for every $e\in E$.}
		\label{line:pruning}
		\State \Return{$\TJ(\bd)$}
	\end{algorithmic}
\end{breakablealgorithm}
We note that we can run Algorithm~\ref{algo:acyclic-ub} on a standard `serial' machine with the known trick of multiplexing the runs of Algorithm~\ref{algo:gen-ub} on all spanning acyclic subgraphs (e.g., by running one iteration of Algorithm~\ref{algo:gen-ub} at a time) and to stop once a spanning acyclic subgraph finishes. The correctness of Algorithm~\ref{algo:acyclic-ub} follows from the fact that we only consider spanning subgraphs and the correctness of Algorithm~\ref{algo:gen-ub} (as well as the pruning step in Line~\ref{line:pruning}).

Finally, to compute $\TJ$, we simply run Algorithm~\ref{algo:acyclic-ub} on all possible degree configurations:
\begin{breakablealgorithm}
	\caption{Forward-Join $(G,I)$}  \label{algo:generic-ub}
	\begin{algorithmic}[1]
		\Require{Directed Query graph $G=(V,E)$; database instance $I=\set{R_e}_{e\in E}$.}
		\Ensure{$\TJ$}
		\State{$J \gets \emptyset$}
		\ForAll{degree configurations $\bard = (d_{\edge{v}{u}})_{\edge{v}{u} \in E}$} 
		\Comment{$d_{\edge{v}{u}}$ runs over all powers of $2$ until $\norm{R_{\edge{v}{u}}}_\infty$.}
		\State{$J\gets J\cup  \mathrm{Acyclic-Join}(G, \bard,I)$}
		%\State{$\TJ = \TJ \cup \TJ(\bd)$}
		\EndFor
		\State \Return{$J$}
	\end{algorithmic}
\end{breakablealgorithm}
We would like to stress that our overall algorithm {\em does not use {\bf any} information about $p$ or the corresponding $\ell_p$ norm bounds.} This information is {\em only used} in its runtime analysis and our algorithm works {\em simultaneously} for all norm bounds.
By contrast, when we adapt $\panda$ to our setup (i.e., use it 
in place of Algorithm~\ref{algo:gen-ub})\footnote{We discuss this in detail in Appendix~\ref{sec:panda-algo}.}, $\panda$ {\em does} need to know the $\ell_p$ norm bounds (as well as the value of $p$). More importantly, if we do so, we will be losing a practically prohibitive multiplicative factor of $O\inparen{\inparen{\log{N}}^{\inparen{\inparen{2^{|V|}}!}^2}}$ in the process as well.

Before discussing how we avoid this multiplicative factor in the runtime analysis of our algorithm, we would like to mention here that from a practical point of view, it makes sense to have a simple algorithm even though its proof of runtime/worst-case optimality could be more technically involved (since the latter is just for the analysis). Finally, a natural question that can arise here is if we need to run Algorithm~\ref{algo:acyclic-ub} on all acyclic spanning subgraphs of $G$ and in our proofs, we show how we can pick a very specific class of acyclic spanning subgraphs that achieve the worst-case size bound (though this needs the knowledge of the norm  bounds).

\subsection{Worst-Case Optimality of Generic Algorithm} \label{sec:sec2-runtime}
% Reducing poly-log factor from PANDA using Holder’s.
% Clarify complexity in number of acyclic subgraphs (from algorithmic). 
% Stress Combinatorial Result and probably add an example for it. Especially clarify upper bound among all possible degree configurations.
To analyze our algorithm, we take our running example of $G$ being a (directed) triangle. Recall that we have $V = \{ A, B, C\}$ and $E = \{ \edge{A}{B}, \edge{B}{C}, \edge{C}{A} \}$ with relations $R_{\edge{A}{B}}, S_{\edge{B}{C}}$ and $T_{\edge{C}{A}}$\footnote{We note here that if flipped the direction of the edge $\edge{C}{A}$, then we can prove an upper bound of $L^2$ using Cauchy-Schwarz (a fact we show in Appendix~\ref{app:cauchy}). However, we have not been able to extend this argument to other $G$.}. For ease of exposition, we assume the $\ell_2$-norm case and $\umax{\edge{v}{u} \in E}\norm{R_{\edge{v}{u}}}_{2} \le L$. Recall that we claimed an upper bound of $L^2$ for this case, which we prove in the following example (we ignore a constant factor of $4$ in our analysis).
\begin{Example} \label{ex:triang-cyclic}
Consider the acyclic subquery $R_{\edge{A}{B}} \Join S_{\edge{B}{C}}$ and let $\bd = (d_{\edge{A}{B}},d_{\edge{B}{C}})$ be a degree configuration for this subquery. We compute this subquery using the algorithm we describe in Example~\ref{ex:triang-acyclic} (which is essentially Algorithm~\ref{algo:gen-ub}). We recall the topological ordering $(A, B, C)$ of vertices and note that there are $\frac{L^2}{d_{\edge{A}{B}}^2}$  choices for $a\in\pi_A\left(R_{\edge{A}{B}}^{d_{\edge{A}{B}}}\right)$.\footnote{This is because each such value of $a$ has degree at least $d_{\edge{A}{B}}$ in $R_{\edge{A}{B}}^{d_{\edge{A}{B}}}$.} In Algorithm~\ref{algo:gen-ub}, this corresponds to bounding the size of $J_1$. Then, for each such choice of $a$, we have at most $d_{\edge{A}{B}}$ choices for $b$ and for each pair of choice of $(a,b)$, we have at most $d_{\edge{B}{C}}$ choices for $c$ in the output tuple $(a,b,c)$ (these correspond to bounding the size of $P_{\text{in}}(2,(a))$ and $P_{\text{in}}(3,(a,b))$ in Algorithm~\ref{algo:gen-ub} respectively). This implies we have an upper bound of
\[\mathcal{B}_{\neg\edge{A}{C}} =\frac{L^2}{d_{\edge{A}{B}}^2}\cdot d_{\edge{A}{B}}\cdot d_{\edge{B}{C}} 
 = L^2\frac{d_{\edge{B}{C}}}{d_{\edge{A}{B}}}.\]
We remark that the above is the bound on the size of the output of Algorithm~\ref{algo:gen-ub} for the subquery $R_{\edge{A}{B}}\Join S_{\edge{B}{C}}$ (for the degree configuration $(d_{\edge{A}{B}},d_{\edge{B}{C}})$).

Similarly, we get the following bounds from the subqueries $T_{\edge{C}{A}}\Join R_{\edge{A}{B}}$ and $S\edge{B}{C}\Join T\edge{C}{A}$, to get two more bounds:
\begin{align*}
\mathcal{B}_{\neg\edge{B}{C}} = L^2\frac{d_{\edge{A}{B}}}{d_{\edge{C}{A}}},
\mathcal{B}_{\neg\edge{A}{B}} = L^2\frac{d_{\edge{C}{A}}}{d_{\edge{A}{B}}}.
\end{align*}
While none of these three bounds by themselves are enough for all degree configurations, we can take their {\em minimum} (i.e. $\min\inset{\mathcal{B}_{\neg\edge{A}{C}}, \mathcal{B}_{\neg\edge{B}{C}}, \mathcal{B}_{\neg\edge{A}{B}}}$) to be the final bound, which can be bounded as:
\begin{align*}
& \min\inset{L^2\frac{d_{\edge{B}{C}}}{d_{\edge{A}{B}}}, L^2\frac{d_{\edge{A}{B}}}{d_{\edge{C}{A}}}, L^2\frac{d_{\edge{C}{A}}}{d_{\edge{A}{B}}} }\\
& \le \sqrt[3]{L^2\frac{d_{\edge{B}{C}}}{d_{\edge{A}{B}}}\cdot L^2\frac{d_{\edge{A}{B}}}{d_{\edge{C}{A}}}\cdot L^2\frac{d_{\edge{C}{A}}}{d_{\edge{A}{B}}}} \\
& =L^2,
\end{align*}
as claimed.\footnote{We still need to sum this bound up over all possible degree configurations but two things come to our aid here -- (a) considering the $\ell_2$-norm bound corresponding to the degree bucket and (b) exploiting the fact that the degrees are powers of two.}
We remark that the above bound is a valid upper bound for the size of the output of Algorithm~\ref{algo:acyclic-ub} for the triangle query and (any) degree configuration $\bd$. \demo
\end{Example}

Next, we discuss how to extend our techniques for general $G$, which involves more technical work than the above example. Our proofs consist of two broad parts:
(i)  Showing that picking the best acyclic sub-queries still allows us to prove an optimal bound.
(ii) Avoid paying the multiplicative $O\inparen{\log{N}^{\inparen{\inparen{2^{|V|}}!}^2}}$  factor that $\panda$ pays.

We start with (ii): it turns out that since our algorithm is simple, we can bound its runtime via either a simple self-contained expression (or) via a simple generalization of the edge covering LP (that is used to prove the AGM bound) to the case of $\ell_p$ bound:
\begin{align*}
& \min \quad \usum{\edge{v}{u} \in E} \left( x_{\edge{v}{u}} \log(L_{\edge{v}{u}}) ) \right)\tag{$\LP^{(+)}$}\\
& \usum{\edge{v}{u} \in E} \left( x_{\edge{v}{u}} \right) + \usum{\edge{u}{w} \in E} \frac{x_{\edge{u}{w}}}{p} \ge 1 \quad \forall u \in V  \numberthis \label{eq:covering} \\
& x_{\edge{v}{u}} \ge 0 \quad \forall \edge{v}{u} \in E \numberthis \label{eq:primal}.
\end{align*}

It turns out that if we considered the obvious bound for each {\em degree configuration}, then we can upper bound the join output size using the LP above: i.e., for any degree configuration, the size of the output of Algorithm~\ref{algo:acyclic-ub} is bounded by $2^{\LP^{(+)}}$. However, if we summed this worst-case bound over all degree configurations, we will suffer an $O\inparen{\inparen{\log{N}}^{|E|}}$ multiplicative factor loss, which is much better than the multiplicative $O\inparen{\inparen{\log{N}}^{\inparen{\inparen{2^{|V|}}!}^2}}$ factor of $\panda$ but still not ideal. Since we have a simple closed form expression for the sub-join query for each degree configuration, we can apply H\"{o}lder's inequality to simply `push in' the sums to sum up the $\ell_p$ bounds, which allow us to get rid of the multiplicative $O\inparen{\log{N}^{|E|}}$ factor. 

Finally, for (i), we exploit the fact that any optimal basic feasible solution to our LP implies that we only need to consider very specific classes of join queries. E.g., in~\cite{NPRR}, it was shown that for the simple join query graph case (with $p=1$), we only need to handle cycles and stars. For the setting of $\panda$, we have to put in a bit more effort and handle the case of $d\le \sqrt{L}$ (in which case our LP does not provide the correct upper bound) and the case of $d>\sqrt{L}$ (when our LP is indeed a valid upper bound) separately. 

\subsection{Question~\ref{ques:acyclic} for General $\ell_{p}$ Constraints} \label{sec:sec2-acyclic}
As an interesting by-product of our proofs, we answer Question~\ref{ques:acyclic} in the affirmative for the case when $\ell_{1}$ and $\ell_{\infty}$ bounds are the same. A natural extension to consider is if it holds for $\ell_2$ (and $\ell_p$ more generally), which gets more interesting. The answer to Question~\ref{ques:acyclic} for $\ell_2$ is {\em no} (we present an example in Appendix~\ref{sec:app-acyclic-subgraph}). As a consequence, in Algorithm~\ref{algo:generic-ub}, we make the choice of spanning acyclic subgraph depending on the degree configuration and show that this is sufficient to prove tight bounds for the $\ell_p$ norm case for $p\in (1,2]$.

\section{Further Preliminaries and Notation} \label{sec:norm-prelims}
We present most relevant preliminaries here and defer a detailed version to Appendix~\ref{app:prelims}. 

For each subrelation $R_{\edge{v}{u}}^{d_{\edge{v}{u}}}$ from~\eqref{eq:subrelation-deg} for every $\edge{v}{u} \in E$, $d_{\edge{v}{u}} \le L_{\edge{v}{u}}$, we define
\begin{align*}
L_{\edged{v}{u}} & \gets \norm{R_{\edge{v}{u}}^{d_{\edge{v}{u}}}}_{p} \numberthis \label{eq:l-edged}. 
\end{align*}
We will be using the following structural result (the proof is in Appendix~\ref{sec:norm-structural}) to prove our upper bound. We first define the notion of a basic feasible solution.
\begin{defn} [Basic Feasible Solution to LP~\eqref{eq:primal}~\cite{bf-book}] \label{def:basic-feasible-main}
A basic feasible solution $\mathbf{x} = \left(x_{\edge{v}{u}} \right)_{\edge{v}{u} \in E}$ to $\LP^{(+)}$ is one that satisfies all its $|V| + |E|$ constraints with at least $|E|$ of them satisfied with equality (ones we call {\em tight}). Let $C$ denote the  $(|V| + |E|)\times |E|$ constraint matrix, where the rows are indexed by constraints and columns are indexed by variables and let $S$ denote the set of tight constraints. Then, the matrix $C$ projected down to rows (i.e., constraints) in $S$ has rank exactly $|E|$. 
\end{defn}
\begin{thm} [Based on~\cite{NPRR}] \label{thm:opt-decomp}
For any directed graph $G = (V, E)$, there exists an optimal solution $\mathbf{x}^* = \left( x^*_{\edge{v}{u}} \right)_{\edge{v}{u} \in E}$ to $\LP^{(+)}$, and a $t$ such that $G$ can be decomposed into a disjoint union of $t$ connected components (in the undirected sense) $G_i = (V_i, E_i)$ with 
%\begin{align*}
$|V_i| - 1 \le |E(G_i)| \le |V_i|$, where  $x^*_{\edge{v}{u}} > 0 \quad$ for all  $\edge{v}{u} \in E(G_i)$.
%\end{align*}
and $\mathbf{x}^*_{i} = \left( x^*_{\edge{v}{u}} \right)_{\edge{v}{u} \in E(G_i)}$ is an optimal basic feasible solution for $\LP^{(+)}$ on $G_i$ for every $i \in [t]$. Further, we have $\cup_{i=1}^{t}V(G_i) = V$ and $V(G_i) \cap V(G_j) = \emptyset$ \space $\forall i, j \in [t], i \neq j$. The following is true:
\[\TJ = \times_{i \in [t]} J^{(I)}_{G_i}.\]
\end{thm}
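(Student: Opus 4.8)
The plan is to start from an arbitrary optimal \emph{basic feasible} solution $\mathbf{x}^*$ to $\LP^{(+)}$ (which exists since the LP is feasible and bounded below, and its feasible region has vertices), and to read off the decomposition from the support of $\mathbf{x}^*$ together with its set of tight constraints. First I would restrict attention to the subgraph $G'=(V,E')$ where $E'=\{\edge{v}{u}\in E : x^*_{\edge{v}{u}}>0\}$; dropping the zero-valued edges does not change the objective value nor the feasibility of the vertex-covering constraints~\eqref{eq:covering} (each constraint only gets easier when we delete a nonnegative term), and any edge not in $E'$ carries $x^*=0$. Let $G_1,\dots,G_t$ be the connected components (undirected sense) of $G'$ \emph{together with} any vertices of $V$ not touched by $E'$ treated as singleton components. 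Then $\bigcup_i V(G_i)=V$ and the $V(G_i)$ partition $V$ by construction, and the join $\TJ$ factors as the product $\times_i J^{(I)}_{G_i}$ because the query hypergraph literally splits into these vertex-disjoint pieces (tuples in disjoint attribute sets join freely). The restriction $\mathbf{x}^*_i$ of $\mathbf{x}^*$ to $E(G_i)$ is clearly feasible and optimal for $\LP^{(+)}$ on $G_i$: optimality is because the global objective is the sum of the per-component objectives and the per-component constraint sets are independent, so a better solution on some $G_i$ would contradict global optimality.

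The substantive step is the edge-count bound $|V_i|-1\le |E(G_i)|\le |V_i|$. Connectivity of $G_i$ immediately gives the lower bound $|E(G_i)|\ge |V_i|-1$. For the upper bound I would argue via the basic-feasible-solution structure, following the $p=1$ argument in~\cite{NPRR}. By Definition~\ref{def:basic-feasible-main}, the tight constraints of $\mathbf{x}^*$ restricted to the support have full column rank $|E'|$; since every edge in $E'$ has $x^*>0$, none of the nonnegativity constraints~\eqref{eq:primal} for $E'$ is tight, so all $|E'|$ tight rows must come from the vertex constraints~\eqref{eq:covering}. Hence there is a set of vertices $T\subseteq V$ with $|T|\ge |E'|$ whose covering constraints are tight and whose constraint rows are linearly independent — in particular $|E'|\le |T|\le |V|$, but more usefully this holds component-wise: the rank contributed by columns in $E(G_i)$ is exactly $|E(G_i)|$ and comes from tight vertex constraints among $V_i$, so $|E(G_i)|\le |V_i|$. (One should also check that the singleton components, where $E(G_i)=\emptyset$ and $|V_i|=1$, satisfy the bound trivially, and that a component with $|E(G_i)|=|V_i|-1$ is a tree while $|E(G_i)|=|V_i|$ is a tree-plus-one-edge, i.e.\ contains exactly one cycle — this matches the ``cycles and stars'' dichotomy but is not strictly needed for the statement.)

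I expect the main obstacle to be the bookkeeping that turns the \emph{global} rank/tight-constraint count into the \emph{per-component} bound $|E(G_i)|\le|V_i|$, i.e.\ verifying that the constraint matrix block-decomposes along the components so that rank adds up blockwise. This is where one must be careful that a vertex constraint for $u\in V_i$ involves only variables $x_{\edge{v}{u}}$ and $x_{\edge{u}{w}}$ with both endpoints in $V_i$ (true because $G_i$ is a connected component of $G'$, so no support edge leaves $V_i$), which makes the constraint matrix on the support genuinely block-diagonal with blocks indexed by $i\in[t]$; then rank is the sum of the block ranks, each block has rank $|E(G_i)|$, and each block's tight rows live among the $|V_i|$ vertex constraints. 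A secondary subtlety is existence of an \emph{optimal} basic feasible solution: I would invoke that $\LP^{(+)}$ has a finite optimum (the objective is bounded below by $0$ and the feasible region is nonempty, e.g.\ $x_e$ large enough), the polyhedron has at least one vertex (there is an implied bound making it pointed, or one can perturb), and a bounded LP over a pointed polyhedron attains its optimum at a vertex, which is exactly a basic feasible solution in the sense of Definition~\ref{def:basic-feasible-main}. Everything else — the partition of $V$, the product formula for $\TJ$, feasibility and optimality of the $\mathbf{x}^*_i$ — is routine once the support subgraph is fixed.
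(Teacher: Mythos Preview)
Your proposal is correct and follows essentially the same route as the paper: start from an optimal basic feasible solution, pass to the support subgraph, take undirected connected components, and use the block-diagonal structure of the tight-constraint submatrix to turn the global rank condition into the per-component bound $|E(G_i)|\le |V_i|$; optimality of each $\mathbf{x}^*_i$ is by the separability/contradiction argument you give. One small remark: singleton components cannot actually occur, since the covering constraint~\eqref{eq:covering} at any vertex $u$ forces some incident edge to carry positive mass---the paper notes this explicitly, and it slightly streamlines the case analysis.
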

\mypar{Other assumptions and notation} 
For all our algorithms, we assume that each relation $R_{e}$ is stored in a two level B-tree-like index structure~\cite{Btree-ref} in the ordering $\edge{v}{u}$ (see Appendix~\ref{app:data-struct}). Note that the total time of construction of these B-trees for each relation is $O(|E| L_{\edge{v}{u}} \log(L_{\edge{v}{u}}))$. We assume the RAM model of computation i.e., elements in the B-tree can be accessed in constant time. Throughout the paper, we assume that all the logarithms are base $2$ unless specified otherwise.

\section{Worst-case Optimality of Algorithm~\ref{algo:generic-ub} for $\ell_p$-norm bounds} \label{sec:overview}
In this section, our goal is to prove that Algorithm~\ref{algo:generic-ub} is worst-case optimal (under $\ell_p$ norm bounds) for any $G$ with girth at least $p + 1$. 

We discuss briefly here about our restriction on the girth being at least $p + 1$  and it mainly has to do with our $LP$ techniques not providing the optimal lower bound in this scenario. To illustrate this, consider our running example of $G$ being a triangle with a cyclic orientation, $p = 3$, and all $\ell_3$-norm bounds are upper bounded by $L$. In this case, we get a lower bound of $L^{\frac{9}{4}}$ using our $\LP$-based result; \footnote{Based on the dual of $\LP^{(+)}$~\eqref{eq:primal} with $p = 3$, the optimal values for $y_u$ is when all $y_u$ values are all equal. For the case of $p=3$, we have that $|\Dom(u)| = L^{\frac{3}{4}}$, yielding a final bound of $L^{\frac{9}{4}}$} However, we can get a (trivial) lower bound of $L^3$ on $|J_{G}^{(I)}|$ using the instance:
\begin{align*}
   ||R_{\edge{A}{B}}||_3 = ||R_{\edge{B}{C}}||_3 = ||R_{\edge{C}{A}}||_3 = \{(i, i) : i \in [L^3] \}
\end{align*}

% parallel structure
We start by stating the following result (the proof is in Appendix~\ref{sec:gen-acyclic}) on the runtime of Algorithm~\ref{algo:gen-ub}.
\begin{thm} \label{claim:gen-acyclic}
For any acyclic $G$, $p \in [1, \infty]$ and any degree configuration
\begin{align*}
\bd = (d_{\edge{v}{u}})_{d_{\edge{v}{u}} \le \min\left(L_{\edge{v}{u}} L_{\edgeinfty{v}{u}}\right), \edge{v}{u} \in E},
\end{align*}
Algorithm~\ref{algo:gen-ub} computes $|\TJ(\bd)|$ in time linear in $\B(\bard, G)$, where
\begin{align*}
\B(\bard, G) = \left( \uprod{u \in V} \cD_{u}(\bard) \right). \numberthis \label{eq:gen-bdg-ub}	
\end{align*}
In the above, for each $u \in V$, $\cD_{u}(\bard)$ is defined as
\begin{align*}
\min \inset{  \umin{\edge{v}{u} \in E}\inset{d_{\edge{v}{u}}},  \umin{\edge{u}{w} \in E} \inset{\frac{2^{p} \cdot L_{\edged{u}{w}}^{p}}{d_{\edge{u}{w}}^{p}}}   }.
\end{align*}
 Further, we have 
\begin{align*}
 \left|\TJ(\bd) \right| \le \B(\bard, G).
 \end{align*}
\end{thm}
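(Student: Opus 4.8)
The plan is to analyze Algorithm~\ref{algo:gen-ub} along the topological order $(u_1,\dots,u_n)$ it fixes, and to get both assertions — the running-time bound and the inequality $|\TJ(\bd)|\le\B(\bard,G)$ — out of a single size estimate; correctness of the output has already been reduced to that of $\lftj$, so only the accounting is left. The inductive claim I would carry is
\[
|J_i|\;\le\;\prod_{j=1}^{i}\cD_{u_j}(\bard)\qquad\text{for every }i\in[n],
\]
which at $i=n$ gives $|\TJ(\bd)|=|J_n|\le\prod_{u\in V}\cD_u(\bard)=\B(\bard,G)$ (recall $J_n$ is the full join $\TJ(\bd)$), and the time bound will follow because the work done to build $J_i$ from $J_{i-1}$ is, up to a logarithmic factor, proportional to $|E|$ times the per-tuple extension count, which is exactly what the induction controls.

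First I would isolate the two elementary per-vertex facts that drive the induction. For an \emph{outgoing} edge $\edge{u}{w}$: every value in $\pi_u(R^{d}_{\edge{u}{w}})$ has degree $>d/2$ in $R_{\edge{u}{w}}$, and by the definition~\eqref{eq:subrelation-deg} of the subrelation this is also its degree inside $R^{d}_{\edge{u}{w}}$ (the subrelation keeps \emph{all} tuples of a value whose degree lies in the window $(d/2,d]$); summing $p$-th powers, $(d/2)^p\cdot|\pi_u(R^{d}_{\edge{u}{w}})|<\norm{R^{d}_{\edge{u}{w}}}_p^p=L_{\edged{u}{w}}^p$, hence $|\pi_u(R^{d}_{\edge{u}{w}})|<2^pL_{\edged{u}{w}}^p/d^p$ (for $p=\infty$ this reads $+\infty$ and is vacuous). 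For an \emph{incoming} edge $\edge{v}{u}$ and any value $a$: the list $\{y:(a,y)\in R^{d}_{\edge{v}{u}}\}$ has size $\le d$, again directly from~\eqref{eq:subrelation-deg}. Now in iteration $i\ge2$ the extension set of a tuple $\vt\in J_{i-1}$ is $P_i(\vt)=P_{\text{in}}(i,\vt)\cap P_{\text{out}}(i)$, and since an intersection is no larger than any of its members, $|P_i(\vt)|\le\min\{\,\min_{\edge{v}{u_i}\in E}d_{\edge{v}{u_i}},\;\min_{\edge{u_i}{w}\in E}2^pL_{\edged{u_i}{w}}^p/d_{\edge{u_i}{w}}^p\,\}=\cD_{u_i}(\bard)$; when $u_i$ is a source only the second minimum is present, which also covers the base case $|J_1|\le\cD_{u_1}(\bard)$, and the assumption that the undirected version of $G$ is connected guarantees this minimum is taken over a nonempty index set so $\cD_{u_i}(\bard)$ is well defined. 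Finally, since the sets $\{\vt\}\times P_i(\vt)$ over $\vt\in J_{i-1}$ are pairwise disjoint, $|J_i|=\sum_{\vt\in J_{i-1}}|P_i(\vt)|\le|J_{i-1}|\cdot\cD_{u_i}(\bard)$, closing the induction and establishing $|\TJ(\bd)|\le\B(\bard,G)$.

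For the running time I would charge, in iteration $i$ and for each $\vt\in J_{i-1}$, one leapfrog-style intersection over the $\deg_G(u_i)\le|E|$ sorted lists making up $P_{\text{in}}(i,\vt)$ and $P_{\text{out}}(i)$ — these are either adjacency lists $\{y:(\vt[v],y)\in R^{d}_{\edge{v}{u_i}}\}$ or tail-side projections $\pi_{u_i}(R^{d}_{\edge{u_i}{w}})$, and both are directly addressable in the two-level B-tree indices stored in the ordering $\edge{v}{u}$ (which is precisely why that ordering is chosen), the restriction to $R^{d}$ costing only an $O(1)$/precomputed comparison of a stored degree against $(d/2,d]$. Intersecting $k$ sorted lists costs $\tO(k\cdot m)$ where $m$ is the size of the \emph{smallest} one, and the two facts above give $m\le\cD_{u_i}(\bard)$; so iteration $i$ costs $\tO(|E|\cdot|J_{i-1}|\cdot\cD_{u_i}(\bard))=\tO\big(|E|\cdot\prod_{j\le i}\cD_{u_j}(\bard)\big)$, and since each $\cD_u(\bard)\ge1$ (if some $\cD_u(\bard)=0$ then the corresponding subrelation is empty and $\TJ(\bd)=\emptyset$) this is $\tO(|E|\cdot\B(\bard,G))$. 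Summing over the $n$ iterations, adding the one-time index-construction cost $O(\sum_{e}L_e\log L_e)$, and observing that the remaining per-iteration work (computing $P_{\text{out}}(i)$ once, and forming a Cartesian product when $u_i$ is a source with only outgoing edges) is absorbed into the same bound, the total is $\tO(\poly(|V|)\cdot\B(\bard,G))$ — linear in $\B(\bard,G)$ in the data-complexity sense.

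I do not anticipate a genuine conceptual obstacle: this is essentially the standard $\lftj$ analysis transported to a DAG via a topological order. The one load-bearing new ingredient is the outgoing-projection bound $|\pi_u(R^{d}_{\edge{u}{w}})|<2^pL_{\edged{u}{w}}^p/d^p$, since it is the only place the $\ell_p$ norm (and the parameter $p$) enters the size/time formula; everything else is careful bookkeeping — checking that the leapfrog cost is always governed by a list of size at most $\cD_{u_i}(\bard)$, that the partial products $\prod_{j\le i}\cD_{u_j}(\bard)$ never exceed $\B(\bard,G)$, and that the degenerate configurations (multiple sources, vertices with only incoming or only outgoing edges, empty subrelations) remain consistent with $\cD_u(\bard)=\min\{\min_{\edge{v}{u}}d_{\edge{v}{u}},\min_{\edge{u}{w}}2^pL_{\edged{u}{w}}^p/d_{\edge{u}{w}}^p\}$.
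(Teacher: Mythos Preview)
Your proposal is correct and follows essentially the same approach as the paper: induct along the topological order, bound $|P_i(\vt)|\le\cD_{u_i}(\bard)$ by separately bounding $|P_{\text{in}}(i,\vt)|$ via the degree window and $|P_{\text{out}}(i)|$ via the $\ell_p$-norm (your inline argument is exactly the paper's Lemma~\ref{lemma:node-bound}), then charge the leapfrog intersection cost against the smallest participating list. The only cosmetic differences are that the paper spells out the correctness argument $J_n=\TJ(\bd)$ by double containment rather than deferring to $\lftj$, and phrases the per-tuple intersection cost as $|E|\cdot|P_i(\vt)|$ in the RAM model rather than your $\tO(|E|\cdot m)$ with $m$ the smallest list; both lead to the same final bound.
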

The proof follows by noting that in Algorithm~\ref{algo:gen-ub}, we proceed in a topological ordering of vertices in $V$. For each vertex $u \in V$, the size of projections of incoming relations to $u$ can be upper bounded by the smallest incoming degree and outgoing relations from $u$ can be upper bounded by an effective domain size based on the $\ell_{p}$-norm bound $L_{\edge{u}{w}}$ and outgoing degree (as shown in Example~\ref{ex:triang-cyclic}) respectively.

We use Theorem~\ref{claim:gen-acyclic} to prove the worst-case optimality of Algorithm~\ref{algo:generic-ub}. It turns out that in addition to Theorem~\ref{claim:gen-acyclic}, we also need a way to pick an acyclic spanning subgraph to reason about Algorithm~\ref{algo:acyclic-ub}. We do this in two steps -- we start with the case of $G$ being acyclic and then consider the general $G$ case with girth at least $p + 1$.
\subsection{$G$ is Acyclic} \label{sec:g-acyclic}
%The theorem then can be for \ell_p+\ell_\infty as it is there but note that this is for a fixed degree distribution
For acyclic $G$, we consider a slightly more general scenario, where in addition to the $\ell_{p}$-norm size bounds on each $\edge{v}{u} \in E$, we are given a $\ell_{\infty}$-norm constraint in the same direction (i.e., $\norm{R_{\edge{v}{u}}}_{\infty} \le L_{\edgeinfty{v}{u}}$). We state our primal $\LP^{(+)}$, 
which is a generalization of $\LP$~\eqref{eq:primal}). 
\begin{align*}
& \min \usum{\edge{v}{u} \in E} \left( x_{\edge{v}{u}} \log(L_{\edge{v}{u}}) + z_{\edge{v}{u}} \log(L_{\edgeinfty{v}{u}}) \right) \tag{$\LP^{(+)}$} \\
& \usum{\edge{v}{u} \in E} \left( x_{\edge{v}{u}} + z_{\edge{v}{u}} \right) + \usum{\edge{u}{w} \in E} \frac{x_{\edge{u}{w}}}{p} \ge 1 \quad \forall u \in V  \numberthis \label{eq:gen-covering} \\
& x_{\edge{v}{u}}, z_{\edge{v}{u}} \ge 0 \quad \forall \edge{v}{u} \in E \numberthis \label{eq:gen-primal}.
\end{align*}
In the LP above, $x_{\edge{v}{u}}$ coresponds to the $\ell_{p}$ constraints and $z_{\edge{v}{u}}$ corresponds to $\ell_{\infty}$ constraints for every $\edge{v}{u} \in E$ (we will consider the same setting in Section~\ref{sec:panda} as well). We state the following result based on $\B(\bard, G)$ (from~\eqref{eq:gen-bdg-ub}) and $\LP^{(+)}$ defined above.
\begin{lemm} \label{thm:acyclic-bound-lp}
For any acyclic $G$, any feasible solution $(\vx, \vz) = (x_{\edge{v}{u}}, z_{\edge{v}{u}})_{\edge{v}{u} \in E}$ to $\LP^{(+)}$ on $G$ and degree configuration $\bd$, we have 
\begin{align*}
\B(\bard, G)  \le 2^{p |V|} \cdot \uprod{\edge{v}{u} \in E}   \left( d_{\edge{v}{u}}^{z_{\edge{v}{u}}} \cdot L_{\edged{v}{u}}^{x_{\edge{v}{u}}} \right).
\end{align*}
\end{lemm}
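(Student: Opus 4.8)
The plan is to bound $\B(\bard, G)$ by taking logarithms and showing that the exponent of each $L_{\edged{v}{u}}$ and $d_{\edge{v}{u}}$ is controlled by the corresponding LP variable, using the covering constraint \eqref{eq:gen-covering}. Recall from \eqref{eq:gen-bdg-ub} that $\B(\bard, G) = \prod_{u \in V} \cD_u(\bard)$, where $\cD_u(\bard)$ is a minimum over (a) the incoming degrees $d_{\edge{v}{u}}$ and (b) the ``effective domain'' terms $2^p L_{\edged{u}{w}}^p / d_{\edge{u}{w}}^p$ for outgoing edges. Taking $\log$, it suffices to show
\begin{align*}
\sum_{u \in V} \log \cD_u(\bard) \le p|V| + \sum_{\edge{v}{u} \in E} \left( z_{\edge{v}{u}} \log d_{\edge{v}{u}} + x_{\edge{v}{u}} \log L_{\edged{v}{u}} \right).
\end{align*}

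First I would use the standard trick that a minimum is at most any convex combination: for each $u$, pick the weights coming from the LP. Specifically, for $u \in V$, the covering constraint says $\sum_{\edge{v}{u} \in E}(x_{\edge{v}{u}} + z_{\edge{v}{u}}) + \sum_{\edge{u}{w} \in E} x_{\edge{u}{w}}/p \ge 1$. I would use $z_{\edge{v}{u}}$ as the weight on the incoming term $\log d_{\edge{v}{u}}$, use $x_{\edge{u}{w}}/p$ as the weight on the outgoing term $\log(2^p L_{\edged{u}{w}}^p/d_{\edge{u}{w}}^p)$, and absorb the leftover weight $\sum_{\edge{v}{u}} x_{\edge{v}{u}}$ (the incoming $x$'s, which don't naturally multiply a term bounded by $\cD_u$) by noting $\cD_u(\bard) \le 2^p L_{\edged{v}{u}}^p$ as well — wait, more carefully: $\cD_u$ is also at most each outgoing term but the incoming $x$'s need a home. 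The cleanest route: since $\cD_u(\bard) \le d_{\edge{v}{u}}$ for every incoming edge and $\cD_u(\bard) \le 2^p L_{\edged{u}{w}}^p/d_{\edge{u}{w}}^p$ for every outgoing edge, and ALSO (since $R^{d}_{\edge{v}{u}}$ has every $v$-value of degree $> d/2$, hence $L_{\edged{v}{u}}^p \ge (d_{\edge{v}{u}}/2)^p \cdot(\text{something})$, in particular $\cD_u \le d_{\edge{v}{u}} \le 2 L_{\edged{v}{u}}$) we can also bound $\log \cD_u(\bard) \le 1 + \log L_{\edged{v}{u}}$ for incoming edges, giving a term to pair with $x_{\edge{v}{u}}$. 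Then I take the convex combination with LP weights: since the weights sum to $\ge 1$ and each bounds $\log\cD_u$, we get
\begin{align*}
\log \cD_u(\bard) &\le \sum_{\edge{v}{u}\in E} z_{\edge{v}{u}} \log d_{\edge{v}{u}} + \sum_{\edge{v}{u}\in E} x_{\edge{v}{u}}(1+\log L_{\edged{v}{u}}) \\
&\quad + \sum_{\edge{u}{w}\in E} \frac{x_{\edge{u}{w}}}{p}\left(p + p\log L_{\edged{u}{w}} - p \log d_{\edge{u}{w}}\right),
\end{align*}
being slightly careful that when the total weight strictly exceeds $1$ we are still fine because $\log \cD_u \le$ each bound and the bounds are nonnegative (here I would note $\cD_u \ge 1$ so $\log\cD_u \ge 0$, and each per-edge bound is $\ge 0$, so inflating the weight only helps). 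Then summing over $u \in V$: each edge $\edge{v}{u}$ contributes its $z$-term once (at head $u$), its $x(1+\log L)$-term once (at head $u$), and from the outgoing sum at tail $v$ it contributes $x_{\edge{v}{u}}(1 + \log L_{\edged{v}{u}} - \log d_{\edge{v}{u}})$. Collecting, the $\log d_{\edge{v}{u}}$ terms from the two $x$-contributions cancel, leaving $z_{\edge{v}{u}}\log d_{\edge{v}{u}} + x_{\edge{v}{u}}\log L_{\edged{v}{u}}$ plus $O(1)$ additive slack per edge and per vertex, which aggregates to at most $p|V|$ (generously) after using $|E| \le |V|$ or just folding constants into the $2^{p|V|}$ factor. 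Exponentiating gives the claim.

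The main obstacle I anticipate is the bookkeeping of the additive constants and making sure the ``leftover'' LP weight $\sum_{\edge{v}{u}} x_{\edge{v}{u}}$ on incoming edges is legitimately paired with a valid upper bound on $\log \cD_u$ — this requires the observation that $d_{\edge{v}{u}} \le 2 L_{\edged{v}{u}}$ (equivalently $\norm{R^{d}_{\edge{v}{u}}}_p \ge d_{\edge{v}{u}}/2$, which holds because the subrelation is nonempty only if some $v$-value has degree in $(d/2, d]$, contributing at least $(d/2)^p$ to the $p$-th power of the norm). A secondary subtlety is that the covering constraint gives weight $\ge 1$, not $= 1$; I handle this by observing all quantities being bounded are $\ge 0$ (using $\cD_u \ge 1$), so over-counting the weight is harmless. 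Everything else is a routine convexity-of-$\min$ argument plus reindexing sums over edges by their heads and tails, and collecting the resulting $2^{O(p|V|)}$ into the stated $2^{p|V|}$ prefactor (absorbing the precise constant, as the lemma statement already does with its clean $2^{p|V|}$).
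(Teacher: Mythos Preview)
Your overall strategy (use the covering constraint to turn the $\min$ at each vertex into a weighted product, then re-index by edges) is right and matches the paper. But the specific pairing you choose for the incoming $x_{\edge{v}{u}}$ weight is wrong, and the ``cancellation'' you claim does not happen.

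Concretely: you put weight $x_{\edge{v}{u}}$ on the bound $\log\cD_u \le 1+\log L_{\edged{v}{u}}$ at the head $u$, and weight $x_{\edge{v}{u}}/p$ on the bound $p(1+\log L_{\edged{v}{u}}-\log d_{\edge{v}{u}})$ at the tail $v$. Adding these, the $x$-contribution of edge $\edge{v}{u}$ is
\[
x_{\edge{v}{u}}\bigl(2+2\log L_{\edged{v}{u}}-\log d_{\edge{v}{u}}\bigr),
\]
not $x_{\edge{v}{u}}\log L_{\edged{v}{u}}$. There is only one $\log d$ term here, with nothing to cancel against, so your ``collecting'' step over-counts $\log L_{\edged{v}{u}}$ by a factor of two and leaves a stray $-x_{\edge{v}{u}}\log d_{\edge{v}{u}}$. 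The resulting bound is essentially $\prod_e L_{\edged{v}{u}}^{2x_e} d_e^{z_e-x_e}$, which exceeds the target by $\prod_e (L_{\edged{v}{u}}/d_e)^{x_e}$; since $L_{\edged{v}{u}}/d_e$ can be arbitrarily large (it scales like the $p$-th root of the number of $v$-values in the bucket), this is not absorbable into the $2^{p|V|}$ prefactor.

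The correct (and simpler) move, which the paper makes, is to put the incoming $x_{\edge{v}{u}}$ weight on the \emph{same} bound $\log d_{\edge{v}{u}}$ that carries the $z$-weight. Then at the head you get $(x_{\edge{v}{u}}+z_{\edge{v}{u}})\log d_{\edge{v}{u}}$, at the tail you get $x_{\edge{v}{u}}(\log L_{\edged{v}{u}}-\log d_{\edge{v}{u}})$, and now the two $x_{\edge{v}{u}}\log d_{\edge{v}{u}}$ terms genuinely cancel, leaving exactly $z_{\edge{v}{u}}\log d_{\edge{v}{u}}+x_{\edge{v}{u}}\log L_{\edged{v}{u}}$. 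This also makes the auxiliary inequality $d_{\edge{v}{u}}\le 2L_{\edged{v}{u}}$ unnecessary, and the only additive constants come from the $2^p$ in the outgoing bound, yielding exactly $2^{p|V|}$.
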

The proof proceeds by upper bounding the $\min$ in the term $\cD_{u}(\bard)$ for every $u \in V$ using a product where the exponents on the terms come from~\eqref{eq:gen-covering} as follows:
\begin{align*} 
\B(\bard, G) & = \uprod{u \in V} \min \left(  (d_{\edge{v}{u}})_{\edge{v}{u} \in E},  \left( \frac{2^{p} \cdot L_{{\edged{u}{w}}}^{p}}{d_{\edge{u}{w}}^{p}} \right)_{\edge{u}{w} \in E}   \right) \\
& \le 2^{p |V|} \cdot \uprod{u \in V} \left( \left( \uprod{\edge{v}{u} \in E} d_{\edge{v}{u}}^{x_{\edge{v}{u}} + z_{\edge{v}{u}}} \right) \cdot \left(\uprod{\edge{u}{w} \in E} \left( \frac{L_{{\edged{u}{w}}}^{p}}{d_{\edge{u}{w}}^{p}} \right)^{\frac{x_{\edge{u}{w}}}{p}}  \right) \right),
\end{align*}
which eventually proves Lemma~\ref{thm:acyclic-bound-lp} (details are deferred to Appendix~\ref{app:acyclic-bound-lp}). Recall that Algorithm~\ref{algo:generic-ub} computes $\TJ$ as the union of $\TJ(\bd)$ across all possible degree configurations $\bd$. We are now ready to state theorem for DAGs:
\begin{thm} \label{thm:gen-main}
For any DAG $G$ with $p \le |V| - 1$ and an optimal solution $(\vx^*, \vz^*) = \left(x^*_{\edge{v}{u}}, z^*_{\edge{v}{u}} \right)_{\edge{v}{u} \in E}$ to $\LP^{(+)}$ on $G$, Algorithm~\ref{algo:generic-ub} computes $\TJ$ in time linear in
\begin{align*}
2^{(p + 1) |V|} \cdot \uprod{\edge{v}{u} \in E} \left(L_{\edgeinfty{v}{u}}^{z^*_{\edge{v}{u}}} \cdot L_{\edge{v}{u}}^{x^*_{\edge{v}{u}}} \right) \numberthis \label{eq:gen-ub}
\end{align*}
for instances $\cI = \{R_{\edge{v}{u}}: ||R_{\edge{v}{u}}||_{p} \le L_{\edge{v}{u}}, ||R_{\edge{v}{u}}||_{\infty} \le L_{\edgeinfty{v}{u}}, \edge{v}{u} \in E\}$. Further, $|\TJ|$ is at most~\eqref{eq:gen-ub}. Finally, there {\em exists} an instance $I \in \cI$
such that
\begin{align*}
\left|\TJ \right| & \ge \frac{1}{2^{|V|}} \cdot \uprod{\edge{v}{u} \in E} \left( L_{\edgeinfty{v}{u}}^{z^*_{\edge{v}{u}}} \cdot L_{\edge{v}{u}}^{x^*_{\edge{v}{u}}}  \right) \numberthis \label{eq:gen-lb}.
\end{align*}
\end{thm}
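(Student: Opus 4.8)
The plan is to prove the three claims—the runtime upper bound on Algorithm~\ref{algo:generic-ub}, the matching bound on $|\TJ|$, and the existence of a near-matching instance—essentially in that order, using Theorem~\ref{claim:gen-acyclic} and Lemma~\ref{thm:acyclic-bound-lp} as the workhorses. For the runtime bound, recall that Algorithm~\ref{algo:generic-ub} iterates over all degree configurations $\bd$ and calls Algorithm~\ref{algo:acyclic-ub}, which in turn runs Algorithm~\ref{algo:gen-ub} on all spanning acyclic subgraphs in parallel, stopping at the first that finishes. So the cost for a fixed $\bd$ is at most $\min_{G' \text{ spanning acyclic}} \B(\bd, G')$ (up to the number of subgraphs and the pruning step, which only contribute a $2^{O(|V|^2)}$ factor absorbable into the $2^{(p+1)|V|}$ slack after I am careful—more on this below). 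The first step is therefore to fix $\bd$ and exhibit one good spanning acyclic subgraph: take an optimal basic feasible solution $(\vx^*,\vz^*)$ to $\LP^{(+)}$ on $G$; since $G$ is already a DAG the support of $\vx^*$ together with the $\ell_\infty$ edges gives (via Theorem~\ref{thm:opt-decomp}-style reasoning, or directly since a basic feasible solution's support on a connected component has at most $|V_i|$ edges) an acyclic subgraph $G^*$ on which the same $(\vx^*,\vz^*)$ is feasible, so by Lemma~\ref{thm:acyclic-bound-lp}, $\B(\bd,G^*) \le 2^{p|V|}\prod_{\edge{v}{u}\in E} d_{\edge{v}{u}}^{z^*_{\edge{v}{u}}} L_{\edged{v}{u}}^{x^*_{\edge{v}{u}}}$.

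The second step is to sum this per-configuration bound over all $\bd$. Here is where the paper's ``push in the sums via H\"older'' idea enters: naively summing over the $O(\log^{|E|} N)$ configurations loses a polylog factor, so instead I would group the sum by vertex. For each outgoing edge $\edge{u}{w}$, the factor $L_{\edged{u}{w}}^{x^*_{\edge{u}{w}}}$ depends on the degree bucket $d_{\edge{u}{w}}$, and $\sum_{d} \left(L_{\edged{u}{w}}\right)^{p} = \sum_d \norm{R^{d}_{\edge{u}{w}}}_p^p \le \norm{R_{\edge{u}{w}}}_p^p \le L_{\edge{u}{w}}^p$ since the buckets partition the relation and $p$-th powers of $\ell_p$ norms are additive over disjoint supports. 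Applying H\"older's inequality (or power-mean / the weighted AM-GM that turns $\sum_d (L_{\edged{u}{w}})^{x^* p}$ into something controlled by $(\sum_d (L_{\edged{u}{w}})^p)^{x^*}$ when $x^*\le 1$, which holds because of the LP constraint and $p \le |V|-1$) collapses the sum over $d_{\edge{u}{w}}$ into $L_{\edge{u}{w}}^{x^*_{\edge{u}{w}}}$ at the cost of only a constant (a factor $2^{p}$ per edge, from the $d/2 < \deg \le d$ slack and the powers-of-two bucketing). Similarly the incoming $d_{\edge{v}{u}}^{z^*_{\edge{v}{u}}}$ factors sum, over the $\le \log N$ buckets, to $L_{\edgeinfty{v}{u}}^{z^*_{\edge{v}{u}}}$ up to a constant, since $d_{\edge{v}{u}} \le 2\norm{R_{\edge{v}{u}}}_\infty \le 2 L_{\edgeinfty{v}{u}}$ and we are summing a geometric-type series dominated by its top term when $z^* \le 1$. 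Collecting the per-edge constants gives a total multiplicative overhead of $2^{O(p|V|)}$, and bundling it with the $2^{p|V|}$ from Lemma~\ref{thm:acyclic-bound-lp} and the $2^{O(|V|^2)}$ subgraph-enumeration factor yields the stated $2^{(p+1)|V|}$ (here I would be slightly loose and note that for the regime of interest these exponents are of the same order, or re-derive constants carefully). The bound on $|\TJ|$ itself then follows immediately since $|\TJ| \le$ (runtime), or more directly since $|\TJ| \le \sum_{\bd} |\TJ(\bd)| \le \sum_{\bd} \B(\bd,G^*)$ by the ``Further'' clause of Theorem~\ref{claim:gen-acyclic}.

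For the lower bound~\eqref{eq:gen-lb}, the plan is the standard product-construction dual argument: take an optimal solution $(\vy_u)_{u\in V}$ to the dual of $\LP^{(+)}$, which by LP duality has value equal to $\sum x^*_{\edge{v}{u}}\log L_{\edge{v}{u}} + z^*_{\edge{v}{u}}\log L_{\edgeinfty{v}{u}}$; then build an instance where $\Dom(u)$ has size roughly $2^{y_u}$ (rounded to a power of two, costing the $2^{-|V|}$ factor) and each relation $R_{\edge{v}{u}}$ is the full Cartesian product $\Dom(v)\times\Dom(u)$ restricted so that the $\ell_p$ and $\ell_\infty$ constraints are met with equality—concretely, $R_{\edge{v}{u}} = \Dom(v)\times\Dom(u)$ when $|\Dom(u)| \le L_{\edgeinfty{v}{u}}$ and $|\Dom(v)|\cdot|\Dom(u)|^{1/p}\le L_{\edge{v}{u}}$ (the dual feasibility constraints guarantee exactly these; this is where girth $\ge p+1$ and the Cartesian-product hardness discussed in Section~\ref{sec:overview} matter). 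Then $|\TJ| = \prod_u |\Dom(u)| = 2^{\sum_u y_u}$, which by strong duality equals the target product, up to the rounding factor. The main obstacle I anticipate is not any single inequality but the bookkeeping of constants: making the H\"older/AM-GM collapse of the degree-configuration sum genuinely fit inside $2^{(p+1)|V|}$ rather than $2^{O(p|V| + |V|^2)}$, and verifying that the chosen spanning acyclic subgraph $G^*$ is simultaneously good for the upper bound and that the dual-based product instance respects all norm bounds (including the $\ell_\infty$ ones in the generalized $\LP^{(+)}$) with the girth hypothesis doing the work of ruling out non-product hard instances. I would structure the write-up so that the per-configuration bound (Lemma~\ref{thm:acyclic-bound-lp} applied to $G^*$) is cleanly separated from the summation lemma, and defer the explicit constant-chasing and the dual-instance construction to the appendix as the excerpt promises.
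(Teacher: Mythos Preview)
Your overall strategy---per-configuration bound via Lemma~\ref{thm:acyclic-bound-lp}, then sum over $\bd$, then the dual product instance for the lower bound---matches the paper, but two points in the summation step are genuine gaps.

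First, since $G$ is already a DAG, there is no need to pick a spanning acyclic subgraph $G^*$ via basic-feasible-solution structure: $G$ itself is acyclic, so Algorithm~\ref{algo:acyclic-ub} is skipped and Algorithm~\ref{algo:gen-ub} runs directly on $G$ (the paper says this explicitly right after stating the theorem). Your $2^{O(|V|^2)}$ subgraph-enumeration factor is therefore both unnecessary and, as you yourself worry, \emph{not} absorbable into $2^{(p+1)|V|}$.

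Second, and more importantly, your edge-by-edge collapse of the degree-configuration sum does not give a bounded constant. You write that $\sum_{d} d^{z^*}$ is ``dominated by its top term'' so it equals $L_\infty^{z^*}$ up to a constant; but when $z^*$ is small this geometric series has ratio $2^{z^*}\approx 1+z^*\ln 2$, so the sum is $\Theta\!\bigl(L_\infty^{z^*}/z^*\bigr)$, and the product over all edges picks up $\prod_e (1/z_e^*)$, which can be enormous. (The paper does exactly this analysis in Appendix~\ref{sec:genp-ub} for the regime $p>|V|-1$ and pays the resulting $\bigl((p|V|)^2\bigr)^{(p|V|)^2}$ factor.) The paper's point for $p\le |V|-1$ is that you \emph{avoid} this entirely by a single joint application of H\"older (Lemma~\ref{lemma:Holders}) across all edges simultaneously:
\[
\sum_{\bd}\ \prod_{e} d_e^{\,z_e}\,\bigl(L_{e,\bd}^{\,p}\bigr)^{x_e/p}
\ \le\ \prod_{e}\Bigl(\sum_{\bd} d_e\Bigr)^{z_e}\Bigl(\sum_{\bd} L_{e,\bd}^{\,p}\Bigr)^{x_e/p},
\]
which is valid precisely when $\sum_{e}(z_e + x_e/p)\ge 1$. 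That condition is \emph{where} the hypothesis $p\le |V|-1$ enters: summing the covering constraints~\eqref{eq:gen-covering} over all $u\in V$ yields $\sum_e (z_e + x_e/p) \ge \sum_e \tfrac{1}{p+1}(z_e + x_e + x_e/p) \ge |V|/(p+1)\ge 1$. After H\"older, $\sum_d L_{e,\bd}^p = L_e^p$ exactly (buckets partition the relation) and $\sum_d d_e \le 2L_{\edgeinfty{e}}$ (powers of two), giving the clean $2^{(p+1)|V|}$. Your description has the right ingredients but applies H\"older in the wrong granularity and misidentifies the role of $p\le |V|-1$ (it is not that ``$x^*\le 1$'' but that the \emph{sum} of exponents is $\ge 1$).

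Two smaller points: the girth hypothesis is irrelevant here (it belongs to Theorem~\ref{thm:lp-main}, not this DAG theorem), and in the lower-bound construction the dual constraint reads $y_v/p + y_u \le \log L_{\edge{v}{u}}$, so the $\ell_p$ check is $|\Dom(v)|^{1/p}\cdot|\Dom(u)|\le L_{\edge{v}{u}}$, not the other way around.
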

Note that our upper and lower bounds differ by a factor of $2^{(p + 2) |V|}$ for $p \le |V| - 1$. The proof of~\eqref{eq:gen-ub} follows by a direct application of H\"{o}lder's inequality~\cite{ineq} using the fact that $\usum{\edge{v}{u} \in E} \left( z_{\edge{v}{u}} + \frac{x_{\edge{v}{u}}}{p} \right)$ is always at least $1$ as long as $p \le |V| - 1$ and is deferred to Appendix~\ref{sec:gen-deg-main}.\footnote{Our results hold for $p$ in $|V| - 1 < p < \infty$ as well with a slightly worse gap of $2^{p |V|}  \inparen{(p |E|)^2}^{(p |E|)^2} \cdot c^{|E|}$. The proof is deferred to Appendix~\ref{sec:genp-ub}.} Since $G$ is acyclic and connected (in the undirected sense), it is an spanning acyclic subgraph (by definition), which implies we can run Algorithm~\ref{algo:gen-ub} directly on $G$ (skipping Algorithm~\ref{algo:acyclic-ub}). Finally, the proof of~\eqref{eq:gen-lb} is similar to the proof of the AGM bound~\cite{AGM} (see Appendix~\ref{sec:gen-lb}).
\subsection{$G$ has girth at least $p + 1$} \label{sec:lp-main}
In this section, we consider general $G$ with girth at least $p + 1$. We are now ready to state our main theorem. The LP we consider here is $\LP^{(+)}$ from Section~\ref{sec:sec2-runtime}.
\begin{thm} \label{thm:lp-main}
For any $G$ with girth at least $p + 1$ and an optimal solution $\vx^* = (x^*_{\edge{v}{u}})_{\edge{v}{u} \in E}$ to $\LP^{(+)}$ on $G$, Algorithm~\ref{algo:generic-ub} computes $\TJ$ in time linear in
\begin{align*}
2^{(p + 1) |V|} \cdot \uprod{\edge{v}{u} \in E}  L_{\edge{v}{u}}^{x^*_{\edge{v}{u}}} \numberthis \label{eq:lp-ub}
\end{align*}
for instances $\cI = \{R_{\edge{v}{u}}: ||R_{\edge{v}{u}}||_{p} \le L_{\edge{v}{u}}, \edge{v}{u} \in E\}$. Further, $|\TJ|$ is at most~\eqref{eq:lp-ub}. Finally, there {\em exists} an instance $I \in \cI$
such that
\begin{align*}
\left|\TJ \right| & \ge \frac{1}{2^{|V|}} \cdot \left( \uprod{\edge{v}{u} \in E} L_{\edge{v}{u}}^{x^*_{\edge{v}{u}}} \right) \numberthis \label{eq:lp-lb}.
\end{align*}
\end{thm}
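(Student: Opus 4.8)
The plan is to split the argument into the upper bound \eqref{eq:lp-ub} and the matching lower bound \eqref{eq:lp-lb}, reducing the upper bound to the DAG case (Theorem~\ref{thm:gen-main}) via the acyclic-subgraph decomposition and then summing over degree configurations with Hölder's inequality. First I would invoke Theorem~\ref{thm:opt-decomp}: there is an optimal solution $\vx^*$ to $\LP^{(+)}$ and a decomposition of $G$ into undirected-connected components $G_i=(V_i,E_i)$ with $|V_i|-1\le|E_i|\le|V_i|$ on which $\vx^*$ restricted is an optimal basic feasible solution, and $\TJ=\times_{i\in[t]} J^{(I)}_{G_i}$. So it suffices to prove the bound on each $G_i$ separately and multiply; since the exponents and the ambient $2^{(p+1)|V|}$ factor are multiplicative across the disjoint $V_i$, this is clean. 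Each $G_i$ is either a tree ($|E_i|=|V_i|-1$) or a connected unicyclic graph ($|E_i|=|V_i|$); because $G$ has girth at least $p+1$, the unique cycle in a unicyclic $G_i$ has length at least $p+1$, which is exactly the regime where the LP lower bound is tight (as the girth discussion before the theorem flags).

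For a single component $G_i$: if it is a DAG, Theorem~\ref{thm:gen-main} (with all $z^*=0$, i.e.\ no $\ell_\infty$ constraints) gives \eqref{eq:lp-ub} directly. The remaining case is a unicyclic $G_i$ that is \emph{not} a DAG, i.e.\ its unique cycle is directed. Here the idea is the one sketched in Section~\ref{sec:sec2-acyclic} and in Example~\ref{ex:triang-cyclic}: for each degree configuration $\bd$, choose a spanning acyclic subgraph of $G_i$ by deleting one edge of the directed cycle — and crucially, choose \emph{which} edge to delete depending on $\bd$ so as to minimize $\B(\bd,G_i)$ over the cycle edges. By Theorem~\ref{claim:gen-acyclic}, Algorithm~\ref{algo:gen-ub} run on that acyclic subgraph has running time linear in $\B(\bd,G_i')$, and taking the minimum over the $\ge p+1$ choices of deleted edge and then a geometric-mean bound (as in Example~\ref{ex:triang-cyclic}, where $\min$ of three ratios was bounded by their cube root) converts the per-configuration bound into $\uprod{\edge{v}{u}\in E_i} L_{\edged{v}{u}}^{x^*_{\edge{v}{u}}}$ up to the $2^{p|V_i|}$ factor, using Lemma~\ref{thm:acyclic-bound-lp} to pass from $\B$ to the LP objective. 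Finally I would sum over all $O((\log N)^{|E_i|})$ degree configurations $\bd$; since $L_{\edged{v}{u}}=\norm{R^{d_{\edge{v}{u}}}_{\edge{v}{u}}}_p$ and $\sum_{d} L_{\edged{v}{u}}^p \le \norm{R_{\edge{v}{u}}}_p^p \le L_{\edge{v}{u}}^p$ (the subrelations partition the tuples by degree bucket), Hölder's inequality applied with the exponent vector $(x^*_{\edge{v}{u}}/1)$ — valid because the covering constraint \eqref{eq:covering} forces $\sum_{\edge{v}{u}} x^*_{\edge{v}{u}}(1+1/p)\ge 1$ on each vertex, hence the relevant exponents sum to at least $1$ when $p\le |V|-1$ — lets us ``push the sum inside'' and replace each $L_{\edged{v}{u}}$ by $L_{\edge{v}{u}}$, absorbing the number of configurations into (at most) the extra factor bringing $2^{p|V|}$ up to $2^{(p+1)|V|}$. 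The $|\TJ|\le\eqref{eq:lp-ub}$ statement follows since the output size is at most the running time.

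For the lower bound \eqref{eq:lp-lb}, I would follow the AGM template (as Appendix~\ref{sec:gen-lb} does for Theorem~\ref{thm:gen-main}): take the dual of $\LP^{(+)}$, let $(y_u)_{u\in V}$ be an optimal dual solution, and build a product instance where $\Dom(u)$ has size roughly $L^{\text{(something)}}$ determined by $y_u$ — concretely each relation $R_{\edge{v}{u}}$ is a full Cartesian product $[\,\cdot\,]\times[\,\cdot\,]$ sized so that its $\ell_p$-norm is exactly (within a factor of $2$) $L_{\edge{v}{u}}$, which forces the constraint to be tight and makes $|\TJ|=\uprod{u}|\Dom(u)|$. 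LP duality (the objective of the dual optimum equals $\sum x^*_{\edge{v}{u}}\log L_{\edge{v}{u}}$) then gives $|\TJ|\ge \uprod{\edge{v}{u}} L_{\edge{v}{u}}^{x^*_{\edge{v}{u}}}$ up to the rounding factor $2^{-|V|}$. The girth hypothesis is what guarantees this product instance is actually realizable and that the dual optimum has the clean ``all $y_u$ equal on a long cycle'' form used in the footnote discussion.

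The main obstacle I expect is the unicyclic directed-cycle case: making precise that choosing the deleted cycle-edge \emph{per degree configuration} and then taking the geometric mean over the $\ge p+1$ cycle edges exactly recovers the LP objective — this is where the girth-at-least-$(p+1)$ hypothesis is used, because the geometric-mean step needs at least $p+1$ terms to cancel the ratios $d_{\edge{u}{w}}^p$ against the $d$-factors the way the triangle example's cube root did, and getting the bookkeeping of the $L_{\edged{v}{u}}$ versus $L_{\edge{v}{u}}$ and the powers-of-two summation right simultaneously is the delicate part. The Hölder step and the AGM-style lower bound are comparatively routine given Theorems~\ref{thm:opt-decomp}, \ref{claim:gen-acyclic} and Lemma~\ref{thm:acyclic-bound-lp}.
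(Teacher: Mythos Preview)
Your high-level scaffolding matches the paper: invoke Theorem~\ref{thm:opt-decomp} to reduce to components $G_i$ with $|V_i|-1\le|E_i|\le|V_i|$, dispatch DAG components via Theorem~\ref{thm:gen-main}, handle the lower bound via the dual-LP Cartesian-product instance, and sum over degree configurations with H\"older. One small correction on the last point: the H\"older exponents in the paper are $x^*_{\edge{v}{u}}/p$ (not $x^*_{\edge{v}{u}}$), and the condition $\sum_{\edge{v}{u}} x^*_{\edge{v}{u}}/p\ge 1$ is argued from $|V_i|\ge p+1$, which in the unicyclic case follows because the cycle length $k\ge p+1$. Also, the lower bound~\eqref{eq:lp-lb} does \emph{not} need the girth hypothesis at all (see Corollary~\ref{cor:gen-lb}); it holds for every $G$.

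The substantive divergence is in the unicyclic directed-cycle case, and here your proposal has a real gap. You propose to bound $\min_j \B(\bd,G\setminus\{e_j\})$ by a geometric mean over the $k\ge p+1$ edge deletions and then invoke Lemma~\ref{thm:acyclic-bound-lp}. But Lemma~\ref{thm:acyclic-bound-lp} requires a feasible solution for $\LP^{(+)}$ on the \emph{acyclic subgraph} $G\setminus\{e_j\}$, and $\vx^*$ restricted to $E\setminus\{e_j\}$ is \emph{not} feasible there: the head of the deleted edge loses its incoming contribution and the tail loses its $1/p$ outgoing contribution, so both covering constraints break (this is exactly why Example~\ref{ex:ell-2-acyclic-subgraph} shows the best acyclic-subgraph bound can strictly exceed the LP bound). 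The equal-weight geometric mean from Example~\ref{ex:triang-cyclic} worked only because all $L_e$ were equal and the degree ratios telescoped; with arbitrary $L_{\edge{v}{u}}$ and with trees hanging off the cycle, no simple telescoping is available.

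The paper instead proves a cyclic analogue of Lemma~\ref{thm:acyclic-bound-lp} (Lemma~\ref{thm:cyclic-bound-lp}) by a chain of three LPs: it first relaxes $\log\B'(\bd,G)$ to an auxiliary program $\LP^{(*)}$ with extra variables $z_{\edge{A_i}{A_{i+1}}},y'_{A_i},z_C$ for the cycle (Lemma~\ref{claim:gen-lp}), and then shows $\LP^{(*)}\le\LP^{(**)}$ where $\LP^{(**)}$ is the dual of $\LP^{(+)}$ (Lemma~\ref{lem:gen-2}). The key technical step is Claim~\ref{claim:opt-lp-gen}: any optimal solution of $\LP^{(*)}$ can be perturbed so that $\min_{A_i}(z_{\edge{A_i}{A_{i+1}}}-y'_{A_i})=0$, and it is precisely in this perturbation (Lemmas~\ref{lemma:z-opt-gen} and~\ref{lemma:z-feasible-gen}) that the inequality $p\le k-1$ --- i.e., the girth hypothesis --- is used. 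So the girth assumption enters through an LP perturbation argument, not through ``needing $p+1$ terms for a geometric mean to cancel.'' You correctly flagged this step as the main obstacle; the resolution just requires different machinery than what you sketched.
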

The proof of~\eqref{eq:lp-lb} is in Appendix~\ref{sec:gen-lb}. In order to prove~\eqref{eq:lp-ub}, we proceed as follows -- we first invoke Theorem~\ref{thm:opt-decomp} on $G$ and then process its connected components one-by-one. If $G_i$ is a DAG, then we can invoke Theorem~\ref{thm:gen-main} directly. Otherwise (i.e., $G_i$ is cyclic), we prove an alternative version of Lemma~\ref{thm:acyclic-bound-lp}, where we successively upper bound $\B(\bard, G)$ by a sequence of three LPs (with the last one matching $\LP^{(+)}$). The first LP we consider is a natural relaxation of $\B(\bard, G)$ and the second LP is the dual of $\LP^{(+)}$ and the third/final LP is $\LP^{(+)}$. %By constructing a specific optimal solution to the first LP, we show that the optimal objective value of the first LP (i.e., the relaxation) is upper bounded by the optimal objective value of the second LP. Then, using weak duality, we have that the objective value of the second LP is upper bounded by any objective value of $\LP^{(+)}$. 
 Finally, to prove our upper bound and compute $\TJ$, we consider specific spanning acyclic subgraphs of each cyclic $G_i: i \in [t]$ (note that since $E_i \le |V_i|$, it can have at most one cycle).  We defer the proof to Appendix~\ref{sec:lp}.

%Alggorithm~\ref{algo:acyclic-ub} needs to only go through at most $\usum{i \in [t]} V_i$ spanning acyclic subgraphs\footnote{Note that we could also precompute the optimal spanning acyclic subgraph beforehand but our algorithm would need the knowledge of $\ell_{p}$-norm bound to do so.}

\section{Results on $\ell_{1}$ and $\ell_{\infty}$ for all $G$} \label{sec:panda}
%The new Section 6 (which is the current Section 5) will have the runtime of Algo 3 for \ell_1+\ell_\infty
In this section, we design worst-case optimal join algorithms for the case when we are given $\ell_{1}$ and $\ell_{\infty}$ constraints. We make the following assumption.
\begin{assumption} \label{assump:deg}
For each $\edge{v}{u} \in E$, we are given a $\ell_{1}$ bound $L$ (i.e., $|R_{\edge{v}{u}}| \le L$) and  a $\ell_{\infty}$ bound $d$ on $u$ of the form $\norm{R_{\edge{v}{u}}}_\infty \le d$.
\end{assumption}
We split our results into two categories based on the value of $d$ --  (i) $d^2 \le L$, which we tackle in Section~\ref{sec:dlow} and (ii)  $d^2 > L$, which we tackle in Section~\ref{sec:dhigh}. Our results focus on the cases where all $L$ and $d$ values are the same. We can generalize our arguments for $d^2 > L$ and $d^2 \le L$ to handle some special cases of distinct $L$ and $d$ values; however, we cannot handle the most general case of potentially different $\ell_1$ and $\ell_{\infty}$ bounds.

\subsection{Small degree bound: $d^2 \le L$} \label{sec:dlow}
\mypar{Preliminaries} Recall from Assumption~\ref{assump:deg} that $G$ is a directed graph with each edge $\edge{v}{u}$ has $\ell_{\infty}$-norm constraints of the form $L_{\edgeinfty{v}{u}}$. Now, we decompose vertices in $V(G)$ into four buckets -- $(1)$ Set of non-trivial source Strongly Connected Components (SCCs) (i.e., SCCs with at least two vertices, one of which is a source) $C(G)$, $(2)$ The remaining sources (which are SCCs with one vertex) $S(G)$, $(3)$ Set of vertices $T(G)$, where each vertex is connected by at least one vertex in $S(G)$ (through an incoming edge) and $(4)$ The remaining set of vertices $\rho(G)$.

% Need a diagram here.
We consider the induced subgraph on vertices $S(G) \cup T(G)$. We further partition $S$ into $S_1(G)$ and $S_2(G)$ and $T$ into $T_1(G)$ and $T_2(G)$ as follows. We choose a subset $E(S_1(G), T_1(G)) \subset E(G)$ to be a (disjoint) set of stars\footnote{A star with $n$ vertices is where one vertex has degree $n-1$ (which we call the {\em center}) and the remaining vertices have degree $1$ (which we call {\em leaves}).} with each $s_1 \in S_1(G)$ as the center (in the undirected sense) and each $t_1 \in T_1(G)$ s.t. $\edge{s_1}{t_1} \in E(S_1(G), T_1(G))$ (for the fixed $s_1$) as a leaf. Similarly, we define $E(S_2, T_2) \subset E$ to be another (disjoint) set of disjoint stars such with each $t_2 \in T_2(G)$ is a center (in the undirected sense) and each $s_2 \in S_2(G)$ with $\edge{s_2}{t_2} \in E(S_2(G), T_2(G))$ (for the fixed $t_2$) as a leaf. 
We pick $S_i(G),T_i(G)$ and $E(S_i,T_i)$ for $i\in [2]$ that {\em minimizes} the size of this star cover, i.e. minimizes $\abs{E(S_1,T_1)}+\abs{E(S_2,T_2)}=\abs{T_1(G)}+\abs{S_2(G)}$.
We discuss in Appendix~\ref{sec:agm-edge-cover} how to compute an optimal star cover of this kind using the AGM LP and also argue why minimizing the star cover size also minimizes our bound in Theorem~\ref{thm:dsmall}.

We are now ready to state our main theorem.
\begin{thm} \label{thm:dsmall}
For any $G$, $L$ and $d$ with $d^2 \le L$ satisfying Assumption~\ref{assump:deg} and an optimal star cover $E(S_1(G), T_1(G))$ and $E(S_2(G), T_{2}(G))$,  Algorithm~\ref{algo:generic-ub} runs in time linear in
\begin{align*}
2^{2(|V| + |C(G)| + |S_1(G)| + |T_1(G)|)} \left( \uprod{C_i \in C(G)} L d^{|C_i| - 2}  \right) \cdot \left(  \left(\frac{L}{d}\right)^{|S_1(G)|} \cdot d^{|T_1(G)|} \right) \cdot L^{|S_2(G)|} \cdot d^{|\rho(G)|} \numberthis \label{eq:dlow-ub} 
\end{align*}
for instances $\cI = \{R_{\edge{v}{u}}: ||R_{\edge{v}{u}}||_{1} \le L, ||R_{\edge{v}{u}}||_{\infty} \le d, \edge{v}{u} \in E\}$. Further, $|\TJ|$ is at most~\eqref{eq:dlow-ub}. Finally, there exists an instance $I \in \cI$ such that
\begin{align*}
|\TJ| &\ge \frac{1}{2^{|V|}} \left( \uprod{C_i \in C(G)} L d^{|C_i| - 2}  \right)  \cdot \left( \left(\frac{L}{d}\right)^{|S_1(G)|} \cdot d^{|T_1(G)|} \right) \cdot L^{|S_2(G)|} \cdot d^{|\rho(G)|} \numberthis \label{eq:dlow-lb}.
\end{align*}	
\end{thm}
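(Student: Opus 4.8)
The plan is to decompose the analysis along the four-bucket partition $V(G) = C(G) \cup S(G) \cup T(G) \cup \rho(G)$ and to account separately for each bucket's contribution to the product bound $\B(\bard,G)$ from Theorem~\ref{claim:gen-acyclic}, summed over all degree configurations $\bard$. First I would handle the upper bound~\eqref{eq:dlow-ub}. Fix a degree configuration $\bard$ and choose a spanning acyclic subgraph of $G$ as follows: for each non-trivial source SCC $C_i \in C(G)$, since $|E(C_i)| \le |V_i|$ (by Theorem~\ref{thm:opt-decomp}, as we may restrict to the decomposition) and $C_i$ has a source, we remove one back-edge to break its single cycle, leaving a directed tree; for the star covers $E(S_1,T_1)$ and $E(S_2,T_2)$ and the remaining edges into $\rho(G)$ we keep an appropriate acyclic orientation rooted at the sources. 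Then I would apply Theorem~\ref{claim:gen-acyclic} to this acyclic subgraph: the bound factorizes as $\uprod{u \in V}\cD_u(\bard)$, and I would bound $\cD_u(\bard)$ bucket-by-bucket. For a vertex in a source SCC $C_i$, the standard cycle argument (as in Example~\ref{ex:triang-cyclic}, using $\ell_1$ and $\ell_\infty$ bounds together since $d^2 \le L$) gives an aggregate contribution of $L\, d^{|C_i|-2}$ per component; for a center $s_1 \in S_1(G)$ the source contributes $(L/d)$ (effective domain size for the outgoing relation, using $L_{\edge{s_1}{t_1}} \le L$ and outgoing degree $\approx d$), and each leaf $t_1 \in T_1(G)$ contributes $d$ (smallest incoming degree); each leaf $s_2 \in S_2(G)$ (a source that is a leaf of a $T_2$-star) contributes $L$ as a free domain; and each vertex in $\rho(G)$ contributes $d$ via its incoming edge. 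Multiplying these yields the claimed product up to the $2^{O(\cdot)}$ factors coming from Theorem~\ref{claim:gen-acyclic}'s $2^{p|V|}$ (here $p=1$) and the bucketing.

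The summation over degree configurations is the step requiring care: naively there are $O((\log L)^{|E|})$ configurations. Here I would use the fact that for vertices in source SCCs and star centers the bound $\B(\bard,G)$ is maximized (up to a constant factor of $2$ per edge) at the extreme degree values $d_{\edge{v}{u}} \in \{1, d\}$ — the ``balanced'' argument of Example~\ref{ex:triang-cyclic} shows the geometric-mean over the cyclic constraints is degree-independent, and for the star/tree parts the expression is monotone in each $d_{\edge{v}{u}}$. More precisely, since all degrees are powers of two and the per-edge $\ell_p$ bound $L_{\edged{v}{u}}$ decays geometrically as we move away from the optimal bucket, the sum over configurations telescopes into the single worst-case term times a factor absorbed into the $2^{2(\cdot)}$ prefactor (this is the same ``push in the sums'' / geometric-series trick flagged in the footnote to Example~\ref{ex:triang-cyclic} and used in the proof of Theorem~\ref{thm:lp-main}). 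I would make this rigorous by writing $\sum_{\bard} \B(\bard,G) \le 2^{|E|} \cdot \max_{\bard} \B(\bard,G)$ after grouping terms, reusing the H\"older-type manipulation.

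For the lower bound~\eqref{eq:dlow-lb}, I would construct an explicit instance $I \in \cI$ achieving the product, mimicking the AGM-style product construction (as in the proof of~\eqref{eq:lp-lb} / Appendix~\ref{sec:gen-lb}): on each source SCC $C_i$ place a ``blow-up'' gadget realizing $\Theta(L d^{|C_i|-2})$ output tuples consistent with both $|R| \le L$ and $\norm{R}_\infty \le d$ (a $d \times (L/d)$ grid pattern threaded around the cycle); on each $S_1$-star let the center range over a domain of size $L/d$ with each value having $d$ children on each incident leaf edge; let each $S_2$-leaf range over a full domain of size $L$ (its outgoing star edge carries all $L$ tuples, degree $1$); and let each $\rho(G)$ vertex contribute a factor $d$ through a saturated incoming edge. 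Taking the Cartesian-style product of these gadgets across the disjoint buckets and checking that every relation meets its $L$ and $d$ bounds gives an instance with $|\TJ| \ge \frac{1}{2^{|V|}}$ times the stated product. The main obstacle I anticipate is the upper-bound argument for the source-SCC buckets combined with the summation over degree configurations: unlike the clean triangle computation, a general source SCC $C_i$ can have its unique cycle interacting with pendant trees and with the star covers of adjacent buckets, so one must argue that the optimal choice of which back-edge to drop (and which degree configuration is worst) still yields exactly $L d^{|C_i|-2}$ and not something larger — this is precisely where the optimality of the star cover (minimizing $|T_1(G)| + |S_2(G)|$) and the $d^2 \le L$ hypothesis must be invoked to rule out the alternative bound. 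I would isolate this into a lemma analogous to Lemma~\ref{thm:acyclic-bound-lp}, proving $\B(\bard,G_i) \le 2^{O(|V_i|)} L d^{|C_i|-2}$ for each source-SCC component uniformly over $\bard$, and then assemble the global bound from the per-component bounds via Theorem~\ref{thm:opt-decomp}.
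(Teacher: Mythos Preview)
Your proposal has the right overall shape but contains a genuine structural error that would break the upper-bound argument.

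You invoke Theorem~\ref{thm:opt-decomp} to claim that each non-trivial source SCC $C_i$ satisfies $|E(C_i)|\le |V_i|$, ``has a source'', and hence contains a single cycle that you can break by removing one back-edge. None of this is true. Theorem~\ref{thm:opt-decomp} decomposes $G$ into components coming from a basic feasible LP solution; those components have nothing to do with the SCCs of $G$. A non-trivial SCC is by definition strongly connected, so every vertex in $C_i$ has an in-edge inside $C_i$ (there is no source), and $C_i$ can have arbitrarily many edges and arbitrarily many cycles (think of a tournament). Consequently ``remove one back-edge'' does not make $C_i$ acyclic, and the whole per-$C_i$ bound $L\,d^{|C_i|-2}$ is left unjustified. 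The same confusion reappears at the end when you propose to ``assemble the global bound from the per-component bounds via Theorem~\ref{thm:opt-decomp}'': the four buckets $C(G),S(G),T(G),\rho(G)$ are a partition of vertices in one connected graph with many crossing edges, not vertex-disjoint LP components, so no Cartesian-product assembly is available.

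What the paper actually does for the upper bound is more aggressive: in each source SCC it fixes an arbitrary edge $\edge{v_i}{u_i}$ and drops \emph{all} incoming edges to $v_i$ and all incoming edges to $u_i$ except $\edge{v_i}{u_i}$, thereby manufacturing a source $v_i$; non-source SCCs are handled by dropping back-edges as they are encountered in the topological scan. Theorem~\ref{claim:gen-acyclic} is then applied once to the whole resulting acyclic spanning subgraph, the factor for $v_i$ is bounded by $L_{\edged{v_i}{u_i}}/d_{\edge{v_i}{u_i}}\cdot d_{\edge{v_i}{u_i}}$, and every other vertex of $C_i$ contributes a min of incoming degrees $\le d$; the sum over $\bard$ is handled directly by H\"older (not by a $2^{|E|}\cdot\max$ estimate). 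For the lower bound, your ``$d\times (L/d)$ grid'' gadget on a source SCC would give outdegree $L/d\ge d$, violating the $\ell_\infty\le d$ constraint; the paper instead uses $L/d^2$ disjoint $[d]\times[d]$ blocks, which keeps every degree at $d$ while producing $L\,d^{|C_i|-2}$ matching tuples.
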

% lower bound instance details.
We prove~\eqref{eq:dlow-lb} using a hard instance with two types of embeddings -- one for relations in $C(G)$ and the other for relations in the remaining graph. For relations in $C(G)$, we use an embedding with a disjoint union of Cartesian product-based instances (where each component has degree $O(d)$) and for the remaining relations, we use a single Cartesian product-based instance (where incoming degree is at most $d$). Details are deferred to Appendix~\ref{app:dlow-lb}. 

% upper bound instance details.
For proving~\eqref{eq:dlow-ub}, we construct a specific spanning acyclic subgraph of $G$ that achieves~\eqref{eq:dlow-ub} as follows -- for each non-trivial source SCC, we fix an arbitrary edge $\edge{v}{u}$ (and drop all incoming edges to $v$ and $u$ (except from $v$). We proceed in the standard topological ordering (note that the induced subgraph on $(S(G), T(G))$ is acyclic) and if we hit a vertex in a non-trivial non-source SCC with back edges (i.e., edges from the SCC), we drop all the back edges and continue the process. Our upper bound for a fixed degree configuration $\bd$ then follows by upper bounding the effective domain sizes of the source vertices $s \in V(G)$, $|\Dom_{\bd}(s)|$ by $\umin{\edge{s}{w} \in E} \frac{L_{\edge{s}{w}}}{d_{\edge{s}{w}}}$ and the remaining vertices $u \in V \setminus \ucup{\text{sources } s \in V(G)} \{s\} $, $\Dom_{\bd}(u)$ by $\umin{\edge{v}{u} \in E} d_{\edge{v}{u}}$ and taking their product. Summing these bounds over all degree configurations, we get our upper bound as required. Details are deferred to Appendix~\ref{app:dlow-ub}. 

\subsection{Large Degree Bound: $d^2 > L$} \label{sec:dhigh}
We first state $\LP^{(+)}$ for this scenario (which is essentially $\LP$~\eqref{eq:gen-primal} from Section~\ref{sec:g-acyclic} for the case when $p = 1$), where variables $x_{\edge{v}{u}}$ correspond to the $\ell_1$-norm bounds and variables $z_{\edge{v}{u}}$ correspond to the $\ell_{\infty}$-norm bounds.
\begin{align*}
& \min  \usum{\edge{v}{u} \in E}\left( x_{\edge{v}{u}} \log(L) + z_{\edge{v}{u}} \log(d) \right) \tag{$\LP^{(+)}$} \\
& \text{ s.t. } \usum{\edge{v}{u} \in E} \left( x_{\edge{v}{u}} + z_{\edge{v}{u}} \right) + \usum{\edge{u}{w} \in E} x_{\edge{u}{w}} \ge 1 \quad \forall u \in V \\
& x_{\edge{v}{u}}, z_{\edge{v}{u}} \ge 0 \quad \forall \edge{v}{u} \in E.
\end{align*}
We are now ready to state our main theorem.
\begin{thm} \label{thm:dhigh-main}
For any $G$, $L$ and $d$ with $d^2 > L$ satisfying Assumption~\ref{assump:deg}, Algorithm~\ref{algo:generic-ub} computes $\TJ$ in time linear in
\begin{align*}
2^{(p + 1) |V|} \left(\uprod{\edge{v}{u} \in E} L^{x^*_{\edge{v}{u}}} \cdot d^{z^*_{\edge{v}{u}}} \right) \numberthis \label{eq:dhigh-ub}.
\end{align*}
for instances $\cI = \{R_{\edge{v}{u}}: ||R_{\edge{v}{u}}||_{1} \le L, ||R_{\edge{v}{u}}||_{\infty} \le d, \edge{v}{u} \in E\}$. Further, $|\TJ|$ is at most~\eqref{eq:dhigh-ub}.  Finally, there exists an instance $I \in \cI$ such that
\begin{align*}
|\TJ| & \ge \frac{1}{2^{|V|}} \cdot \left( \uprod{\edge{v}{u} \in E} L^{x^*_{\edge{v}{u}}} \cdot d^{z^*_{\edge{v}{u}}} \right) \numberthis \label{eq:dhigh-lb}.
\end{align*}
\end{thm}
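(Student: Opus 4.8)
The plan is to prove the upper bound~\eqref{eq:dhigh-ub} (which simultaneously bounds the runtime of Algorithm~\ref{algo:generic-ub} and $|\TJ|$) together with the matching lower bound~\eqref{eq:dhigh-lb}, reusing the machinery behind Theorems~\ref{thm:gen-main} and~\ref{thm:lp-main} with $p=1$ and carrying the $\ell_\infty$ (i.e.\ the $z$) variables throughout. First I would apply the analogue of Theorem~\ref{thm:opt-decomp} to the $\LP^{(+)}$ of this subsection: this yields an optimal basic feasible solution $(\vx^*,\vz^*)$ and a partition of $G$ into connected components $G_i=(V_i,E_i)$ with $|V_i|-1\le|E_i|\le|V_i|$, each carrying an optimal basic feasible solution of its own, and with $\TJ$ equal to the Cartesian product of the joins on the $G_i$. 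Since both runtime and output size multiply across components, it suffices to prove~\eqref{eq:dhigh-ub} and~\eqref{eq:dhigh-lb} for a single connected $G_i$.

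If $G_i$ is a DAG it is its own spanning acyclic subgraph, Algorithm~\ref{algo:gen-ub} runs on it directly, and Theorem~\ref{thm:gen-main} (with $p=1$, $L_{\edgeinfty{v}{u}}=d$, $L_{\edge{v}{u}}=L$) already gives the claimed bound. The remaining case is $|E_i|=|V_i|$, i.e.\ $G_i$ is unicyclic. Here I would delete one suitably oriented edge of the unique cycle so that the remainder $G_i'$ is a DAG, run Prefix-Join on $G_i'$, bound its runtime by $\B(\bard,G_i')$ via Theorem~\ref{claim:gen-acyclic}, and then prove a variant of Lemma~\ref{thm:acyclic-bound-lp} bounding $\B(\bard,G_i')$ by $2^{O(|V_i|)}\cdot\uprod{\edge{v}{u}\in E_i}\inparen{d^{z^*_{\edge{v}{u}}}L_{\edged{v}{u}}^{x^*_{\edge{v}{u}}}}$ in terms of $\LP^{(+)}$ on the \emph{full}, cyclic $G_i$ --- not on $G_i'$, whose optimum would already overshoot. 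Concretely, $\log\B(\bard,G_i')=\sum_{u\in V_i}\log\cD_u(\bard)$ is a sum of $\min$'s; the right relaxation pairs up consecutive cycle vertices so that an outgoing term $\tfrac{2 L_{\edged{u}{w}}}{d_{\edge{u}{w}}}$ cancels against the incoming term $d_{\edge{u}{w}}$ at the head of $\edge{u}{w}$, and carrying this out along the cycle and the hanging trees exhibits a feasible point of the dual of $\LP^{(+)}$ on $G_i$, so LP duality caps the result by the $\LP^{(+)}$ optimum (this is the ``three LP'' chain: a natural relaxation of $\B(\bard,G_i')$, then the dual of $\LP^{(+)}$, then $\LP^{(+)}$ itself). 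This is exactly where $d^2>L$ is used: it forces the $\LP^{(+)}$ optimum on $G_i$ to coincide with the true worst case, so the relaxation stays an upper bound --- when $d^2\le L$ the $\LP^{(+)}$ value strictly \emph{undershoots} the worst case (on an $n$-cycle it equals $d^n$ while the worst case is $L d^{n-2}$), which is precisely why that regime needs the separate, star-cover argument of Theorem~\ref{thm:dsmall}. Finally I would sum the per-configuration bounds over all $O((\log N)^{|E|})$ degree configurations $\bard$ by ``pushing the sum inside the product'' via H\"older's inequality (using $\usum{d_e}\norm{R_e^{d_e}}_1\le\norm{R_e}_1\le L$ for the $\ell_1$ terms and $d_e\le d$ for the $\ell_\infty$ terms), which removes the $(\log N)^{|E|}$ factor and yields~\eqref{eq:dhigh-ub} and $|\TJ|\le$~\eqref{eq:dhigh-ub}.

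For~\eqref{eq:dhigh-lb} I would follow the AGM-style construction of Appendix~\ref{sec:gen-lb} component by component: on each $G_i$ take a Cartesian-product instance whose domain sizes are read off from the dual optimum, so every relation on an edge with $x^*_{\edge{v}{u}}>0$ is essentially a full product of size $\approx L$ and every relation on an edge with $z^*_{\edge{v}{u}}>0$ is a product whose head has degree exactly $d$. Because $d>\sqrt L$, any product relation of size at most $L$ has both of its degrees at most $\sqrt L<d$, so the $\ell_\infty$ bounds hold for free on the cyclic portion, and on the tree portions the forward degree ($=d$) and the backward degree ($\le L/d<d$) are both $\le d$; a direct count then shows the join has size at least $2^{-|V|}\uprod{\edge{v}{u}\in E}L^{x^*_{\edge{v}{u}}} d^{z^*_{\edge{v}{u}}}$, and multiplying over components gives~\eqref{eq:dhigh-lb}.

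The main obstacle I expect is the unicyclic case of the upper bound: choosing which cycle edge to delete so that the remainder is a DAG compatible with the orientations \emph{and} the pairing/cancellation argument goes through for \emph{every} degree configuration, then packaging it as the three-LP relaxation while checking that $d^2>L$ is exactly the hypothesis making each step an inequality in the correct direction. A secondary technical point is adapting Theorem~\ref{thm:opt-decomp} to the $\LP^{(+)}$ with both $x$ and $z$ variables so that every component is at most unicyclic; the H\"older summation and the lower-bound construction are routine variants of the corresponding steps for $p\in(1,2]$.
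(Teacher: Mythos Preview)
Your overall structure---decompose via the analogue of Theorem~\ref{thm:opt-decomp}, handle DAG components via Theorem~\ref{thm:gen-main}, and build the lower bound from the dual as in Appendix~\ref{sec:gen-lb}---matches the paper. The divergence is entirely in how the unicyclic components are treated, and it is substantive.

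The paper does \emph{not} handle cyclic $G_i$ by deleting an edge and running a three-LP relaxation. Instead, its key step is Lemma~\ref{lemma:dhigh-dag}: when $d^2>L$, every component $G_i$ produced by Corollary~\ref{cor:dhigh-struct} is already a DAG, so the cyclic case never arises and Theorem~\ref{thm:gen-main} (with $p=1$) finishes immediately. The proof of Lemma~\ref{lemma:dhigh-dag} is a perturbation argument on the optimal basic feasible solution $(\vx^*,\vz^*)$: assuming a cycle, Claim~\ref{claim:dhigh-path} locates a path $P=(u,v_1,\dots,v_k,w)$ with $z^*>0$ on its two end-edges and $x^*>0$ on its interior edges; alternately shifting by $\pm\epsilon$ along $P$ yields two new feasible solutions. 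When $|E(P)|$ is odd, one of them has objective ratio $(L/d^2)^\epsilon<1$---this is exactly where $d^2>L$ is spent---contradicting optimality. When $|E(P)|$ is even, one of them zeroes out an edge, contradicting the extremal Property~\ref{assump:dhigh} of the chosen basic solution.

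Your proposed route---adapt the $\LP^{(*)}\to\LP^{(**)}\to\LP^{(+)}$ chain of Appendix~\ref{sec:lp} to the mixed $(x,z)$-LP---is not obviously wrong, but it is not worked out: that chain was built for a single family of directed $\ell_p$ variables, whereas here the $x$'s are undirected and the $z$'s are directed, so $\LP^{(*)}$ and Claim~\ref{claim:opt-lp-gen} would need genuine reformulation. More to the point, your identification of where $d^2>L$ enters (``forces the $\LP^{(+)}$ optimum on $G_i$ to coincide with the true worst case'') does not match the paper: the hypothesis is consumed in the structural no-cycle lemma, not in an LP-relaxation inequality. The paper's route is shorter and sidesteps precisely the obstacle you flagged as the main difficulty.
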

The proof of~\eqref{eq:dhigh-lb} is the same as the proof of ~\eqref{eq:gen-lb} (which in turn is based on the AGM bound~\cite{AGM}) and is in Appendix~\ref{sec:gen-lb}. We prove~\eqref{eq:dhigh-ub} using a structural result similar to Theorem~\ref{thm:opt-decomp} on $\LP^{(+)}$ and argue that each resulting $G_i$ for every $i \in [t]$ is a DAG. Note that we can then invoke Theorem~\ref{thm:gen-main} on each of these $G_i$s to prove~\eqref{eq:dhigh-ub}, as required. The details are deferred to Appendix~\ref{sec:dhigh-ub}.

In conclusion, Theorems~\ref{thm:dsmall} and~\ref{thm:dhigh-main} together imply that we have a worst-case optimal algorithm for computing $\TJ$ for any $G$, $L$ and $d$, answering Question~\ref{ques:panda} in affirmative for Assumption~\ref{assump:deg}. Further, we prove our upper bounds~\eqref{eq:dlow-ub} and~\eqref{eq:dhigh-ub} by making $G$ acyclic, which answers Question~\ref{ques:acyclic} in the affirmative for Assumption~\ref{assump:deg} as well. 
\section{Related Work} \label{sec:related}
% Expand related work.
Worst-case optimal join algorithms and their combinatorial counterpart, worst-case size bounds for conjunctive queries, have seen tremendous research activity in the last few years. The authors of~\cite{GLVV} came up with worst-case size bounds for join queries with functional dependencies (which in our setting is the $\ell_{\infty}$ bound being $1$), which were later extended combinatorially in~\cite{Szymon} and translated algorithmically in~\cite{FAQ-FD}. Computing join queries with degree bounds on the relations, which are a generalization of functional dependencies, have also been studied in~\cite{JR-paper} (in addition to papers by Abo Khamis et al. discussed in the introduction). Decomposing simple graphs (for the arity two case) into specific subgraphs in the context of worst-case optimal join algorithms has previously been studied in~\cite{NPRR}. Exploiting structure in input data (for e.g., data with bounded treewidth, in addition to structure of the input query) for efficient computaton of joins has been studied for at least two decades~\cite{Grohe}.

\section{Conclusions and Open Questions} \label{sec:concl}
In this paper, we have presented a worst-case optimal join algorithm for the case of $p\in (1,2]$ for any join query with relations having arity (at most) two. Our results work for any fixed $p$ as well (as long as the join query graph has large enough girth). Along this way, we have (partially) resolved in the affirmative two open questions (Questions~\ref{ques:panda} and ~\ref{ques:acyclic}) from~\cite{ngo-survey} regarding worst-case optimal join query processing with $\ell_1$ and $\ell_\infty$ bounds. We leave the following questions for future work.\footnote{We summarize the key one here and defer a detailed discussion to Appendix~\ref{sec:app-concl}.}
\begin{ques} \label{ques:gen-arity}
Can we extend our results on simple graphs for $p \in (1, 2]$ to more general hypergraphs?
\end{ques}
Our structural decomposition result for the arity two case limits us from extending our results to general hypergraphs. However, a natural question is if we can extend our results to acyclic hypergraphs. In Appendix~\ref{sec:hyper}, we show that we can recover the AGM bound using a suitable generalization of degrees and degree configuration, and leave the question of extending it to general $\ell_{p}$ (along $\ell_{\infty}$ bounds), as future work. We conjecture that if we replace Algorithm~\ref{algo:gen-ub} with this algorithm for hypergraphs then Algorithm~\ref{algo:generic-ub} would work for hypergraphs with simple degree constraints~\cite{panda-pods20}. We leave the challenging task of proving this conjecture as future work.
\bibliographystyle{acm}
\bibliography{atri}
\addtocontents{toc}{\protect\setcounter{tocdepth}{2}}
%\onecolumn
\appendix
\section*{Acknowledgements}
We are greatly indebted to Szymon Toru\'nczyk for many fruitful discussions during the early stages of this project. We thank Shi Li for giving us the results in Appendix~\ref{app:shi}. We thank Mahmoud Abo Khamis, Oliver Kennedy, Shi Li, Hung Ngo and Dan Suciu for helpful discussions. Finally, we thank NSF for their generous support through the grant CCF-1763481 and Amazon, where a part of this work was done.

\tableofcontents
\section{Missing Details in Section~\ref{sec:norm-intro}} \label{app:intro}

\subsection{Scale-free graphs} \label{app:power-law}
In this section, we consider {\em scale-free} graphs (or graphs that follow the power law)~\cite{power-law} and consider how various size based on $\ell_1,\ell_\infty$ and more generally $\ell_p$ bounds compare with each other. More specifically, we ask the question:
\begin{ques}
	Let $G$ be the query graph.
	If $H$ is a scale-free graph, how do bounds on number of copies of $G$ in $H$ compare based on various $\ell_p$ and $\ell_\infty$ bound.
\end{ques}
In the rest of the section, we will recall the formal definition of scale free graphs and then compare various bounds that we consider in this paper.
%  present quick bounds on their respective $\ell_1,\ell_2$ and $\ell_\infty$ bounds and then present the various relevant bounds on the triangle query for such graphs as outlined in Example~\ref{ex:ex1}. 
We would like to stress that we do not claim that these graphs are prevalent in practice (in fact, by now there is considerable doubt on whether such graphs strongly capture graphs that occur in practice~\cite{no-power-law}) but this section shows a {\em mathematically} natural class of graphs, for which using $\ell_p$ for $p\in (1,2)$ norm bound gives us a win over existing join output size bounds that use $\ell_1$ and $\ell_\infty$ bounds on the input relations.

To make our lives simpler, we will not explicitly talk about the direction of the  tuples in the relation defined by $H$ that follows the power law. Our argument below works for whichever direction we choose for all edges in the graph. In fact, we will use the `undirected' degrees when defining the degree sequence of the relation (which in turn defines the norm bounds). Wherever appropriate, we will point where direction matters and were it does not but by default the reader can assume all the edges in $G$ are directed. %However, these bounds are just as valid under either direction.

Recall that a (bipartite) graph\footnote{The two partitions correspond to the domains of attributes $A$ and $B$ and the tuples in $R(A,B)$ correspond to the edges in the graph.} $H=(\cV,\cE)$ that obeys power law with {\em exponent} $\alpha$ has fraction of vertices with degree $k$ proportional to $k^{-\alpha}$. \textbf{For this section we will assume $\alpha\in (2,3)$}. The main reason for doing so is that this is the most common scale exponent observed in practice~\cite{no-power-law}.

We will posit that the maximum degree of $H$ satisfies
\[d = |\cV|^{1/\alpha}.\]
Now note that the number of edges in the graph (which will be the same as $\ell_1$ bound is {\em proportional to}:
\[\sum_{k=1}^d |\cV|\cdot k^{-\alpha}\cdot k = |\cV|\cdot\sum_{k=1}^d \frac 1{k^{\alpha-1}} = \Theta\inparen{\abs{\cV}},\]
where the last equality follows since we have $\alpha-1>1$.
Similarly, we show that the the $\ell_p$ bound $L_p$ when $p\le \alpha-1$ in this case is {\em proportional to}:
\[\sqrt[p]{\sum_{k=1}^d |\cV|\cdot k^{-\alpha}\cdot k^p} = \sqrt[p]{|\cV|}\cdot\sqrt{\sum_{k=1}^d \frac 1{k^{\alpha-p}}} = \tilde{\Theta}\inparen{\sqrt[p]{\abs{\cV}}},\]
where the last equality follows since $\alpha\ge p$ and hence $\alpha-p\ge 1$ (and the $\tilde{\Theta}$ hides a log factor).

Thus, if we use our usual notation $N$ to denote the $\ell_1$ bound, then we have $|\cV|=\Theta(N)$, and hence we have for $p\in [1,\alpha-1]$:
\[d=\Theta\inparen{\sqrt[\alpha]{N}} \text{  and  } L_p=\tilde{\Theta}\inparen{\sqrt[p]{N}}.\]
%This implies that the existing bounds using $\ell_1$ and $\ell_\infty$ norm bounds, we can bound the join output size as
%\[\min\inset{N^{3/2},Nd}=\Theta\inparen{N^{1+\frac 1\alpha}},\]
%while our bounds give
%\[L^2=\Theta(N).\]
%Thus, our bounds are polynomially better than the existing bounds based on $\ell_1$ and $\ell_\infty$ norm bounds.
With the above basic norm bounds in place, we undertake a more detailed comparison of the join sized bounds considered in this paper.

Before we proceed, we formally setup the join query whose size we will bound. By default we will assume that the query graph $G=(V,E)$ is directed (for each pair of vertices, there is a directed edge in at most one direction). For each edge $\edge{A}{B}$ in $G$, we will assume that the corresponding relation $R_{\edge{A}{B}}$ is exactly the set of edge in $H$ with $\Dom(A)=\Dom(B)=\cV$ (suitably directed from $\edge{A}{B}$). For the rest of the section, unless noted otherwise, fix an $\alpha\in (2,3)$.

\subsubsection{Size bounds based on $\ell_p$ bounds for $p\in (1,2)$}
By Theorem~\ref{thm:lp-main}, the size bound given an $\ell_p$ bound of $L_p$ on $H$, is given by the following LP, which we re-name as $\LP^{(+)}(G,p)$ to emphasize the dependence on $G$ and $p$):
\begin{align*}
& \min \quad \usum{\edge{v}{u} \in E} x_{\edge{v}{u}} \log(L_{p}) \tag{$\LP^{(+)}(G,p)$}\\
& \usum{\edge{v}{u} \in E} x_{\edge{v}{u}} + \usum{\edge{u}{w} \in E} \frac{x_{\edge{u}{w}}}{p} \ge 1 \quad \forall u \in V \numberthis \label{eq:lp-covering-app} \\
& x_{\edge{v}{u}} \ge 0 \quad \forall \edge{v}{u} \in E \numberthis \label{eq:lp-primal-app}.
\end{align*}

More specifically Theorem~\ref{thm:lp-main} states that the join size is bounded by $2^{\LP^{(+)}(G,p)}$, where we overload notation and use $\LP^{(+)}(G,p)$ to also denote the objective value of the above LP.

We first note that the bound improves as $p$ increases (provided $p\le \alpha-1$):
\begin{lemm}
	\label{lem:p-incr}
	Let $\alpha\in (2,3)$. Then for any $p,q\in [1,\alpha-1]$ such that $p\le q$, and for any $G$:
	\[\LP^{(+)}(G,p) \le \LP^{(+)}(G,q).\]
\end{lemm}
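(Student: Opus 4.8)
The plan is to compare the two linear programs $\LP^{(+)}(G,p)$ and $\LP^{(+)}(G,q)$ directly by exhibiting a feasibility-preserving transformation of solutions. Since $p \le q$ and both lie in $[1,\alpha-1]$, the $\ell_p$ bound and $\ell_q$ bound on $H$ satisfy $L_p = \tilde\Theta(N^{1/p})$ and $L_q = \tilde\Theta(N^{1/q})$, so $\log L_p \ge \log L_q$ (up to the lower-order log factors, which I will absorb into the $\tilde\Theta$). Hence the key point is purely about the feasible regions: I claim every feasible $\vx$ for $\LP^{(+)}(G,q)$ is also feasible for $\LP^{(+)}(G,p)$. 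Indeed, the only constraint that depends on the norm parameter is \eqref{eq:lp-covering-app}, namely $\usum{\edge{v}{u} \in E} x_{\edge{v}{u}} + \usum{\edge{u}{w}\in E} \frac{x_{\edge{u}{w}}}{p} \ge 1$ for all $u$. Since $x_{\edge{u}{w}} \ge 0$ and $1/p \ge 1/q$, the left-hand side with parameter $p$ dominates the left-hand side with parameter $q$ pointwise, so feasibility for $q$ implies feasibility for $p$.

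The next step is to chain the two facts together. Let $\vx^{(q)}$ be an optimal solution to $\LP^{(+)}(G,q)$ with objective $\LP^{(+)}(G,q) = \usum{\edge{v}{u}\in E} x^{(q)}_{\edge{v}{u}} \log L_q$. By the feasibility claim, $\vx^{(q)}$ is feasible for $\LP^{(+)}(G,p)$, so
\begin{align*}
\LP^{(+)}(G,p) \le \usum{\edge{v}{u}\in E} x^{(q)}_{\edge{v}{u}} \log L_p.
\end{align*}
Now since $\log L_p \ge \log L_q$ (both are $\tilde\Theta(\log N)$, with $1/p \ge 1/q$ giving the right direction on the dominant term) and $x^{(q)}_{\edge{v}{u}} \ge 0$ termwise, the right-hand side is at most $\usum{\edge{v}{u}\in E} x^{(q)}_{\edge{v}{u}} \log L_q = \LP^{(+)}(G,q)$, as desired. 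Strictly speaking one should either phrase the lemma with the $\tilde\Theta$-normalized value $\log L_p = (1/p)\log N$ (ignoring the log correction, as the paper does throughout this section) or else track the mild $\mathrm{polylog}$ slack; I would adopt the former convention to keep the statement clean, matching how $L_p$ is defined earlier in this subsection.

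The main obstacle — really the only subtle point — is the treatment of the logarithmic correction factor hidden in $L_p = \tilde\Theta(N^{1/p})$: for small $p$ the sum $\sum_{k=1}^d k^{-(\alpha-p)}$ contributes only a constant, but as $p \to \alpha - 1$ the exponent $\alpha - p \to 1$ and this sum becomes $\Theta(\log d) = \Theta(\log N)$, so $L_{\alpha-1}$ genuinely carries an extra $\log$ factor that $L_1$ does not. I would handle this by stating the comparison at the level of the leading exponents (i.e. treating $\log L_p$ as $\frac1p \log N$), which is consistent with the asymptotic, big-$\tilde\Theta$ framing used throughout Appendix~\ref{app:power-law}; a fully rigorous version would note that the monotonicity in the leading term dominates any $\mathrm{polylog}$ wobble for $N$ large enough. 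Everything else is a one-line LP feasibility argument.
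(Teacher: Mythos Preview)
Your argument contains a sign error in the final step. You correctly observe that $\log L_p \ge \log L_q$ (since $L_p \approx N^{1/p}$ and $p \le q$) and that $x^{(q)}_{\edge{v}{u}} \ge 0$. But from these two facts you conclude that ``the right-hand side is at most $\sum_{\edge{v}{u}\in E} x^{(q)}_{\edge{v}{u}} \log L_q$'', which is backwards: nonnegative coefficients times a larger number give a \emph{larger} sum, so in fact $\sum_{\edge{v}{u}} x^{(q)}_{\edge{v}{u}} \log L_p \ge \sum_{\edge{v}{u}} x^{(q)}_{\edge{v}{u}} \log L_q$. Your chain then reads $\LP^{(+)}(G,p) \le (\text{something}) \ge \LP^{(+)}(G,q)$, which yields nothing. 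The two effects you identified --- the feasible region for $p$ is larger, but the objective coefficient $\log L_p$ is also larger --- pull in opposite directions, and your argument does not reconcile them.

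In fact the inequality in the lemma statement is a typo: the surrounding text says ``the bound improves as $p$ increases'', and the next lemma computes $\LP^{(+)}(G,p)=\frac{n}{p+1}\log N$ for an $n$-cycle, which is decreasing in $p$. The paper's proof actually establishes $\LP^{(+)}(G,q) \le \LP^{(+)}(G,p)$ via a different construction: take an optimal $\vx^{(p)}$ for the $p$-LP, scale it to $\tilde\vx = \frac{q}{p}\,\vx^{(p)}$, check that $\tilde\vx$ is feasible for the $q$-LP (the incoming sum picks up the factor $q/p\ge 1$, and the outgoing sum becomes $\tilde x_{\edge{u}{w}}/q = x^{(p)}_{\edge{u}{w}}/p$), and observe that its $q$-objective equals $\frac{\log N}{q}\cdot\frac{q}{p}\sum_{\edge{v}{u}} x^{(p)}_{\edge{v}{u}} = \LP^{(+)}(G,p)$. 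The scaling is exactly what makes the feasibility loss and the objective gain cancel; neither your feasibility inclusion nor the objective comparison alone can close the loop.
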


Note that the above implies that if the scale exponent is $\alpha\in (2,3)$, then the best bound is achieved with the $\ell_{\alpha-1}$ norm bound. We now prove the above lemma.

\begin{proof}[Proof of Lemma~\ref{lem:p-incr}]
	Let $\vx^{(p)}$ be an optimal solution to $\LP^{(+)}(G,p)$. Note that since $\log{L_p}=\frac{1}{p}\cdot\log{N}$,\footnote{Technically there should be an additive $O(\log\log{N})$ factor as well-- however, this is a lower order term that does not change the subsequent argument so we will ignore this additive term for clarity.} we have that
	\[\LP^{(+)}(G,p) =\frac{\log{N}}p\cdot \usum{\edge{v}{u} \in E} x_{\edge{v}{u}}^{(p)}.\]
	Now consider the related vector
	\[\tilde{\vx}=\frac qp\cdot \vx^{(p)}.\]
	We first claim that $\tilde{\vx}$ is a feasible solution for $\LP^{(+)}(G,q)$. Indeed,~\eqref{eq:lp-primal-app} is satisfied since all elements of $\vx^{(p)}$ are non-negative. Next, we show that $\tilde{\vx}$ satisfies~\ref{eq:lp-covering-app} for $\LP^{(+)}(G,q)$:
	\begin{align*}
	\usum{\edge{v}{u} \in E} \tilde{x}_{\edge{v}{u}} + \usum{\edge{u}{w} \in E} \frac{\tilde{x}_{\edge{u}{w}}}{q}
	&=\frac qp\cdot \usum{\edge{v}{u} \in E} x_{\edge{v}{u}}^{(p)} + \usum{\edge{u}{w} \in E} \frac{x_{\edge{u}{w}}^{(p)}}{p}\\
	&\ge \usum{\edge{v}{u} \in E} x_{\edge{v}{u}}^{(p)} + \usum{\edge{u}{w} \in E} \frac{x_{\edge{u}{w}}^{(p)}}{p}\\
	&\ge 1,
	\end{align*}
	where the equality follows from definition of $\tilde{\vx}$, the first inequality follows since $q\ge p$ and the final inequality follows since $\vx^p$ is a feasible solution to $\LP^{(+)}(G,p)$.
	
	Next, note that the objective value obtained by $\tilde{\vx}$ is given by (where the first equality follows from definition of $\tilde{\vx}$):
	\[\frac{\log{N}}q\cdot \usum{\edge{v}{u} \in E} \tilde{x}_{\edge{v}{u}} = \frac{\log{N}}p\cdot \usum{\edge{v}{u} \in E} x_{\edge{v}{u}}^{(p)} = \LP^{(+)}(G,p).\]
	The claim them follows from the fact that the LP has a minimizing objective value
\end{proof}

Next we argue that the inequality in Lemma~\ref{lem:p-incr} can be strict for $p<q$ for certain graphs:
\begin{lemm}
	\label{lem:cycle-scale-free-gap}
	Let $G$ be a cycle on $n$-nodes. Then for any $p\in [1,2]$,
	\[\LP^{(+)}(G,p) = \frac n{p+1}\cdot \log{N}.\]
\end{lemm}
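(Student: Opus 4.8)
The plan is to compute the optimal value of $\LP^{(+)}(G,p)$ exactly when $G$ is the directed $n$-cycle. Since all $\ell_p$-norm bounds equal $L_p$ and $\log L_p = \tfrac1p \log N$, the objective is $\tfrac{\log N}{p}\sum_{e\in E} x_e$, so it suffices to show that the optimal value of $\sum_{e\in E}x_e$ subject to~\eqref{eq:lp-covering-app}--\eqref{eq:lp-primal-app} equals $\tfrac{pn}{p+1}$. First I would exhibit the symmetric feasible solution: on the $n$-cycle every vertex $u$ has exactly one incoming edge and one outgoing edge, so constraint~\eqref{eq:lp-covering-app} at $u$ reads $x_{\mathrm{in}(u)} + \tfrac{1}{p}x_{\mathrm{out}(u)} \ge 1$. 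Setting every $x_e = \tfrac{p}{p+1}$ makes the left-hand side $\tfrac{p}{p+1} + \tfrac1p\cdot\tfrac{p}{p+1} = \tfrac{p}{p+1}\cdot\tfrac{p+1}{p} = 1$, so this is feasible, with objective value $\sum_e x_e = \tfrac{pn}{p+1}$, giving $\LP^{(+)}(G,p)\le \tfrac{n}{p+1}\log N$.

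For the matching lower bound I would use LP duality (or simply sum the constraints). Summing constraint~\eqref{eq:lp-covering-app} over all $u\in V$: each edge $e=\edge{v}{u}$ contributes $x_e$ once as an incoming edge of $u$ (the first sum) and $\tfrac{1}{p}x_e$ once as an outgoing edge of $v$ (the second sum), because in the $n$-cycle each edge has exactly one head and one tail. Hence $\sum_{u\in V}\bigl(\text{LHS}\bigr) = \bigl(1+\tfrac1p\bigr)\sum_{e\in E}x_e \ge n$, so $\sum_{e\in E}x_e \ge \tfrac{n}{1+1/p} = \tfrac{pn}{p+1}$, and therefore $\LP^{(+)}(G,p)\ge \tfrac{n}{p+1}\log N$. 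Combining the two bounds yields $\LP^{(+)}(G,p) = \tfrac{n}{p+1}\log N$ exactly, as claimed. Note this argument does not even need $p\le 2$; the restriction $p\in[1,2]$ presumably comes from where the lemma is later applied together with Theorem~\ref{thm:lp-main}.

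There is essentially no obstacle here — the only mild subtlety is the bookkeeping with the additive $O(\log\log N)$ term coming from the precise value of $\log L_p$, which the paper already declares it will ignore; I would likewise state that I am suppressing this lower-order additive term. The cycle structure makes the incidence counting trivial (every vertex has in-degree and out-degree one, every edge has one head and one tail), so both the primal witness and the dual/summing argument are one-line computations once set up correctly.
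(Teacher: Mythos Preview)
Your proposal is correct and follows essentially the same approach as the paper: exhibit the symmetric feasible solution $x_e=\tfrac{p}{p+1}$ for the upper bound, and sum the covering constraints over all vertices (using that each edge is counted once with coefficient $1$ and once with coefficient $\tfrac1p$) for the lower bound. Your observation that the argument does not actually require $p\le 2$, and your handling of the $O(\log\log N)$ additive term, match the paper's treatment as well.
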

\begin{proof}
	Consider the vector $\vx$ where for each $\edge{u}{n}\in E$, we set
	\[x_{\edge{u}{v}}=\frac p{p+1}.\]
	It is easy to check that since $G$ is a cycle the above is feasible solution. Further, it has an objective value of
	\[\frac{\log{N}}p\cdot\sum_{\edge{u}{n}\in E} x_{\edge{u}{v}} = \frac{\log{N}}p\cdot \frac{np}{p+1} = \frac n{p+1}\cdot \log{N}.\]
	This shows that $\LP^{(+)}(G,p)\le \frac n{p+1}\cdot \log{N}$.
	
	To prove that $\LP^{(+)}(G,p)\ge \frac n{p+1}\cdot \log{N}$, we claim that for any feasible solution $\vx$, we have
	\[\frac{p+1}p\cdot \sum_{\edge{u}{n}\in E} x_{\edge{u}{v}} \ge n,\]
	which prove the claimed lower bound above. The above inequality follows by summing up~\eqref{eq:lp-covering-app} overall vertices $u\in V$ and noting that since $G$ is a cycle each edge $x_{\edge{u}{v}}$ occurs exactly once with a coefficient $1$ in the constraint for $v$ and once with a coefficient of $\frac 1p$ in the constraint for $u$.
\end{proof}

Lemmas~\ref{lem:p-incr} and~\ref{lem:cycle-scale-free-gap} imply the following\footnote{While the results in this section have focused on the case of directed query graphs $G$ (with $H$ correspondingly directed), the result also holds for undirected by directing all edges in $G$ so that the resulting directed graph is a directed cycle and then directing the edges in $H$ correspondingly.}:
\begin{cor}
	\label{cor:cycle-scale-free-bound-p}
	Let $H$ have a scale exponent of $\alpha\in (2,3)$. Then our results imply that $H$ has at most $N^{\frac n\alpha}$ copies of $n$-cycles in it.
\end{cor}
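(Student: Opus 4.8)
The plan is to combine the two preceding lemmas with the scale-free norm computations from the start of this subsection. Recall that for a scale exponent $\alpha \in (2,3)$ and the relation $R_{\edge{v}{u}} = \cE$ (with $\Dom = \cV$ for all attributes), we established $L_p = \tilde{\Theta}(\sqrt[p]{N})$ for every $p \in [1, \alpha - 1]$, where $N = \Theta(|\cV|)$ is the $\ell_1$ bound. So first I would fix $p = \alpha - 1$; since $\alpha \in (2,3)$ we have $p = \alpha - 1 \in (1, 2)$, which is exactly the range covered by Theorem~\ref{thm:lp-main} (the $n$-cycle has girth $n \ge p + 1$ for all $n \ge 3$, so the girth hypothesis is satisfied).

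Next I would invoke Theorem~\ref{thm:lp-main}: the number of copies of the $n$-cycle query $G$ in $H$ is at most $2^{(p+1)|V|} \cdot \prod_{\edge{v}{u} \in E} L_{\edge{v}{u}}^{x^*_{\edge{v}{u}}} = 2^{O(n)} \cdot 2^{\LP^{(+)}(G, p)}$, since all the edge bounds equal the common value $L_p$ and $|V| = |E| = n$ for the cycle. Then I would apply Lemma~\ref{lem:cycle-scale-free-gap} with $p = \alpha - 1$, which gives $\LP^{(+)}(G, p) = \frac{n}{p+1} \log N = \frac{n}{\alpha} \log N$. Exponentiating, $2^{\LP^{(+)}(G,p)} = N^{n/\alpha}$, and absorbing the $2^{O(n)}$ factor (and the polylogarithmic slack hidden in $L_p = \tilde\Theta(\sqrt[p]{N})$, which contributes only a $(\log N)^{O(n)}$ factor) into the asymptotic statement yields the claimed bound $N^{n/\alpha}$ on the number of $n$-cycles in $H$. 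Lemma~\ref{lem:p-incr} is what justifies that $p = \alpha - 1$ is the \emph{optimal} choice among $p \in [1, \alpha-1]$, so no smaller $p$ would do better — this is why the corollary is phrased in terms of the best achievable bound.

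The main subtlety — and the one place I would be careful — is bookkeeping the lower-order factors: the $\tilde\Theta$ in $L_p$ hides a $\mathrm{poly}(\log N)$ factor, and Theorem~\ref{thm:lp-main} carries a $2^{(p+1)|V|}$ factor; for a fixed query $G$ (fixed $n$) both are subpolynomial in $N$, so the bound $N^{n/\alpha}$ is correct as an asymptotic (data-complexity) statement, which is how the corollary is stated. The undirected case flagged in the footnote is handled by orienting the $n$-cycle query $G$ consistently so it becomes a directed cycle and directing the edges of $H$ to match, after which the directed argument above applies verbatim. There is no genuine obstacle here — the corollary is a direct specialization of Theorem~\ref{thm:lp-main} and Lemma~\ref{lem:cycle-scale-free-gap} to the scale-free edge relation — so the "hard part" is merely stating the asymptotics cleanly.
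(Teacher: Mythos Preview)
Your proposal is correct and follows exactly the paper's approach: the corollary is stated immediately after Lemmas~\ref{lem:p-incr} and~\ref{lem:cycle-scale-free-gap} as a direct consequence of them, and you have spelled out the details (choosing $p=\alpha-1$, invoking Theorem~\ref{thm:lp-main}, and plugging in the cycle LP value $\frac{n}{p+1}\log N = \frac{n}{\alpha}\log N$) exactly as intended. Your careful accounting of the lower-order factors and the undirected-case remark also matches the paper's footnote.
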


\subsubsection{Join size bounds based on $\ell_1$ and $\ell_\infty$ norm bounds}

In this section, we focus on the case when $G$ and $H$ are both undirected.

It turns out that our $\ell_p$ based bounds for $p\in (1,2)$ are sometimes better than the $\ell_1+\ell_\infty$ bounds (but can also be worse). At a high level our bounds are better the more `cyclic' $G$ is. We simply focus on the case of $G$ being a cycle:

\begin{cor}
	\label{cor:cycle-scale-free-bound-infty}
	Let $H$ have a scale exponent of $\alpha\in (2,3)$. Then join sized bounds based on $\ell_1+\ell_\infty$ norm bounds imply that $H$ has at most $N^{\frac n\alpha+1-\frac 2\alpha}$ copies of $n$-cycles in it.
\end{cor}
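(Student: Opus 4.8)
The plan is to obtain the claimed bound as an immediate consequence of Theorem~\ref{thm:dsmall} applied to the (directed) $n$-cycle, using the scale-free norm estimates already established in this section. First I would recall that for a scale-free $H$ with exponent $\alpha\in(2,3)$ we have $\ell_1$ bound $N=\Theta(|\cV|)$ and (posited) maximum degree $d=|\cV|^{1/\alpha}=\Theta(N^{1/\alpha})$. Since $\alpha>2$ we have $2/\alpha<1$, hence $d^2=\Theta(N^{2/\alpha})=O(N)=O(L)$, so the $d^2\le L$ regime of Theorem~\ref{thm:dsmall} applies; note also that all input relations equal the edge set of $H$, so they share the same $L$ and $d$, as that theorem requires.

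Next I would orient the undirected $n$-cycle $G$ as a directed cycle and orient $H$ correspondingly (exactly as in the passage following Corollary~\ref{cor:cycle-scale-free-bound-p}), and then read off the structural quantities from Section~\ref{sec:dlow}. A directed cycle is strongly connected, so its condensation is a single SCC that is simultaneously a source; thus $C(G)$ consists of exactly this one component $C_1$ with $|C_1|=n$, while $S(G)=T(G)=\rho(G)=\emptyset$, and therefore also $S_1(G)=S_2(G)=T_1(G)=T_2(G)=\emptyset$ and the optimal star cover is empty. Substituting $|V|=n$, $|C(G)|=1$, and all the empty sets into~\eqref{eq:dlow-ub} collapses the bound to $2^{2(n+1)}\cdot L\,d^{\,n-2}$.

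Finally I would plug in $L=\Theta(N)$ and $d=\Theta(N^{1/\alpha})$ to get
\[
L\,d^{\,n-2}=\Theta\!\left(N^{1+(n-2)/\alpha}\right)=\Theta\!\left(N^{\frac n\alpha+1-\frac 2\alpha}\right),
\]
and for fixed $n$ the factor $2^{2(n+1)}$ (together with the polylog hidden in the $\Theta$'s) is absorbed into the asymptotics, giving the stated upper bound on the number of copies of the $n$-cycle in $H$.

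The argument is essentially bookkeeping; the one point that warrants care is the structural claim that the entire directed $n$-cycle forms a single source SCC, so that all $n$ vertices are charged at the ``$C(G)$'' rate $L\,d^{|C_i|-2}$ rather than the cheaper ``$\rho(G)$'' rate $d$ --- this is precisely what makes the exponent $\frac n\alpha+1-\frac 2\alpha$ rather than $\frac n\alpha$. For the comparison with Corollary~\ref{cor:cycle-scale-free-bound-p} to be meaningful one also wants $L\,d^{\,n-2}$ to be (up to constants for fixed $n$) the exact $\ell_1+\ell_\infty$ bound for the $n$-cycle, which follows from the standard path-unrolling argument --- break one cycle edge, charge at most $N/d$ choices to the first vertex and at most $d$ to each of the remaining $n-1$ vertices, for a product $N\,d^{\,n-2}$ --- together with the matching Cartesian-product lower bound~\eqref{eq:dlow-lb}; but only the upper bound is needed for the corollary as stated.
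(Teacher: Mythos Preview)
Your proposal is correct and follows essentially the same route as the paper: verify $d^2\le L$ from $\alpha>2$, orient the $n$-cycle cyclically so that Theorem~\ref{thm:dsmall} applies with a single source SCC of size $n$, read off the bound $L\cdot d^{\,n-2}$, and substitute $d=\Theta(N^{1/\alpha})$. The paper states this more tersely (it simply asserts the $N\cdot d^{\,n-2}$ outcome of Theorem~\ref{thm:dsmall}), whereas you spell out the $C(G),S(G),T(G),\rho(G)$ bookkeeping explicitly; and your closing remark that optimality over orientations is needed only for the comparison with Corollary~\ref{cor:cycle-scale-free-bound-p}, not for the ``at most'' statement itself, matches what the paper relegates to a footnote.
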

\begin{proof}
	Since we have $\alpha>2$, we have that $d^2< N$ and hence Theorem~\ref{thm:dsmall} applies. Further, we need to orient the edges in $G$-- we again orient them so that the resulting directed graph is a directed cycle for which Theorem~\ref{thm:dsmall} implies a bound of
	\[N\cdot d^{n-2} = N^{1+\frac{n-2}\alpha}=N^{\frac n\alpha +1-\frac 2\alpha}.\]
	The claim follows from the fact that the bound of $N\cdot d^{n-2}$ is the smallest possible bound in~\eqref{eq:dlow-ub} over all possible orientations of a cycle.\footnote{The intuitive argument is as follows. All cyclic orientations are equivalent. So consider any acyclic orientation. If there is exactly one source, then it is easy to see that such an orientation also gives a bound of $N\cdot d^{n-2}$. For any other orientation with more than one source, we replace $d^2$ in the bound of $N\cdot d^{n-2}$ with an $N$, which is worse since $d^2<N$.}
\end{proof}

Note that by Corollary~\ref{cor:cycle-scale-free-bound-p} for the case of a cycle, $\ell_p$ based bounds (for $p=\alpha-1$) given better bounds the the $\ell_1+\ell_\infty$ bounds above. It is easy to see that this gap can be easily extended to the case when $G$ has a disjoint vertex cycle cover (a property that can be checked in polynomial time~\cite{tutte}).

However, in some cases the $\ell_1+\ell_\infty$ bounds can be (much better). Consider the case when $G$ is a star on $n$ vertices: i.e. with $n-1$ leaves. In this case, the best orientation for both the $\ell_1+\ell_\infty$ and the $\ell_p$ based bounds is to direct all the $n-1$ edges away from the center. In this case again for the $\ell_1+\ell_\infty$ norm based bound gives an size bound of $N\cdot d^{n-2}\le N^{\frac n\alpha +1-\frac 2\alpha}$. By contrast, the $\ell_p$ based bound in this case is $L_p^{n-1}=N^{\frac{n-1}p}$, which is minimized at $p=\alpha-1$, leading to a final bound of $N^{\frac{n-1}{\alpha-1}}$, which is clearly a worse bound.

\subsubsection{Lower bound instance for scale-free graphs}

Finally, we make a quick observation that our $\ell_p$ norm based bounds are right for the worst-case input with the given $\ell_p$ norm bound-- however, the hard instance are not scale-free graphs. In this subsection, we note that we can prove lower bounds for scale-free inputs, which are sort of close to our general $\ell_p$ based bound (and even match in some special setup).

We focus on the case of $G$ being an $n$-cycle and due to Corollary~\ref{cor:cycle-scale-free-bound-p}, we pick $p=\alpha-1$.
\begin{lemm}
	\label{lem:scale-free-lb}
	Fix $\alpha\in (2,3)$.
	For large enough $N$, there exists a graph $H$ with exponent $\alpha$ has $\Omega\left(N^{\frac{n+1}{\alpha+1}}\right)$ copies of $n$-cycle in it (in the join query setup of this section).
\end{lemm}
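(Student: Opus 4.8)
The plan is to build, for each large $N$, an explicit graph $H$ that is a disjoint union of cliques of \emph{many different sizes}, chosen so that the common degree of the vertices in each clique lands exactly on the prescribed power-law profile, and then to count the copies of $C_n$ clique by clique. I would start by recalling the parametrization fixed at the beginning of this section: $|\cV|=\Theta(N)$, the number of vertices of degree $k$ is $\Theta(Nk^{-\alpha})$, and the maximum degree is $d=N^{1/\alpha}$. Set $D:=\lfloor c\,N^{1/(\alpha+1)}\rfloor$ for a small constant $c>0$; $D$ is (up to constants) the largest $k$ for which one can still carve out at least one clique $K_{k+1}$ from the $\Theta(Nk^{-\alpha})$ vertices available at degree $k$, since that requires $\Theta(Nk^{-\alpha})\ge k+1$, i.e.\ $k\lesssim N^{1/(\alpha+1)}$.

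The construction I would use: for every integer $k$ with $n\le k\le D$, place $m_k:=\Theta(Nk^{-\alpha-1})$ vertex-disjoint copies of $K_{k+1}$; this consumes $(k+1)m_k=\Theta(Nk^{-\alpha})$ vertices, all of degree exactly $k$, which is precisely the count the power law asks for at degree $k$. For the remaining degrees ($1\le k<n$ and $D<k\le d$) I would add ``filler'': the right number $\Theta(Nk^{-\alpha})$ of degree-$k$ vertices, wired up as a forest of stars (attaching fresh degree-$1$ leaves), so that no new $n$-cycle is created. One then checks that $H$ is an honest scale-free graph of exponent $\alpha$: the clique part contributes $\sum_{k=n}^{D}m_k\binom{k+1}{2}=\sum_k\Theta(Nk^{1-\alpha})=\Theta(N)$ edges (the series converges since $\alpha>2$) and $\sum_k\Theta(Nk^{-\alpha})=\Theta(N)$ vertices (converges since $\alpha>1$), while the filler adds only $o(N)$ further edges and keeps $|\cV|=\Theta(N)$ and maximum degree $d$.

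For the count, the key observation is that each $K_{k+1}$ with $k\ge n$ contains $\binom{k+1}{n}\cdot\frac{(n-1)!}{2}=\Theta(k^n)$ copies of $C_n$ (equivalently $\Theta(k^n)$ closed $n$-walks, so the estimate also bounds $|J^{(I)}_G|$ in the join setup of this section, which only adds a constant automorphism factor). Summing over the disjoint cliques gives
\[ \#\{C_n\text{ in }H\}\;\ge\;\sum_{k=n}^{D}m_k\cdot\Theta(k^n)\;=\;\Theta(N)\cdot\sum_{k=n}^{D}k^{\,n-\alpha-1}. \]
Since $\alpha<3\le n$ we have $n-\alpha-1>-1$, so the inner sum is dominated by its top terms and equals $\Theta(D^{\,n-\alpha})$, whence
\[ \#\{C_n\text{ in }H\}\;=\;\Theta\!\inparen{N\cdot D^{\,n-\alpha}}\;=\;\Theta\!\inparen{N\cdot N^{(n-\alpha)/(\alpha+1)}}\;=\;\Theta\!\inparen{N^{(n+1)/(\alpha+1)}}, \]
which is the desired $\Omega\!\inparen{N^{(n+1)/(\alpha+1)}}$.

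The step I expect to be the main obstacle is not the cycle count but verifying that the object built really is a power-law graph of exponent $\alpha$: the clique part only populates degrees up to $D=N^{1/(\alpha+1)}$, whereas a scale-free graph of exponent $\alpha$ is expected to carry vertices all the way up to $d=N^{1/\alpha}$, so the high-degree tail has to be supplied by filler that simultaneously (i) realizes the per-degree vertex counts $\Theta(Nk^{-\alpha})$ for all $D<k\le d$, (ii) contributes $o(N)$ edges, and (iii) creates no new $n$-cycle — and all of this must stay consistent with $|\cV|=\Theta(N)$ and $|\cE|=\Theta(N)$. A secondary subtlety worth flagging is that one must spread the cliques over \emph{all} integer sizes $k\le D$ rather than over dyadic scales only: it is exactly the continuous-range sum $\sum_{k\le D}k^{\,n-\alpha-1}=\Theta(D^{\,n-\alpha})$ that produces the extra factor $D=N^{1/(\alpha+1)}$; using just the single top scale would yield only the weaker bound $N^{n/(\alpha+1)}$.
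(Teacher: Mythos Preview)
Your proposal is correct and follows essentially the same approach as the paper: define the threshold $\Delta=N^{1/(\alpha+1)}$, pack $\Theta(Nk^{-\alpha-1})$ disjoint complete subgraphs of size $\sim k$ for each $k\le\Delta$ (the paper writes ``$[k]\times[k]$'' where you write $K_{k+1}$), observe each contributes $\Theta(k^n)$ cycles, and sum $\sum_{k\le\Delta}Nk^{n-\alpha-1}=\Theta(N\Delta^{n-\alpha})=\Theta(N^{(n+1)/(\alpha+1)})$. The paper is considerably more casual about the filler for the high-degree tail (it just remarks in a footnote that one adds $\Theta(N)$ dummy nodes contributing no $n$-cycles), so the obstacle you anticipated is not treated as one there; your observation about needing the full range of $k$ rather than a single scale to gain the extra factor $D$ is correct and matches what the paper's sum implicitly uses.
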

\begin{proof}
	Define $\Delta=\sqrt[\alpha+1]{N}$. Note that for every $k\le \Delta$, we have that the number of vertices with degree $k$ (which is proportional to $N/k^\alpha$) is at least $k$. Thus, in $H$, we can add $\left\lfloor \frac N{k^{1+\alpha}}\right\rfloor$ copies of the $[k]\times[k]$ sub-graph.\footnote{We will ignore vertices with degree larger than $\Delta$-- to satisfy the degree distribution for degrees, we need to only add $\Theta(N)$ dummy nodes that do not contribute to any $n$ cycle.} Thus, for any $k\le \Delta$, we get (ignoring the floors)
	\[\frac N{k^{1+\alpha}}\cdot k^n\]
	many $n$-cycles just from nodes with degree $k$. Thus, the overall number of $n$-cycles is at least (again ignoring constant factors):
	\[\sum_{k=1}^\Delta N\cdot k^{n-\alpha-1} \ge N\cdot \Delta^{n-\alpha}= N\cdot N^{\frac{n-\alpha}{\alpha+1}}=N^{\frac{n+1}{\alpha+1}},\]
	as desired.
\end{proof}

Note that for large enough $n$, there is a (polynomial) gap between the above lower bound and the upper bound of Corollary~\ref{cor:cycle-scale-free-bound-p}. However, for $\alpha=2$ the two bounds match for $n=3$, i.e. the triangle query. In this case, note that the tight bound is $\Theta(N)$ (which basically means that the triangles essentially comes from constant degree nodes, e.g. a matching instance gives such a lower bound).

\subsection{Join Queries in Columnar Databases} \label{app:columnar}
In this section, we discuss how join queries on single columns in columnar databases can alternatively be modeled as queries where relations have arity two.  Our goal here is to show that the space of arity two queries (a setting we consider in this paper) captures a non-trivial class of join queries in this setting.

Typically in columnar databases, each relation (irrespective of arity) is padded with an additional identifier column and when a join is computed on say two relations on the column to be joined, it can equivalently be modeled as a join query on relations with arity two, where the columns for each relation are the rowid and the column being joined. The remaining columns in the join output are then obtained using the matching rowids.

\subsection{Streaming Results for $\ell_{p}$-norm} \label{sec:streaming}
In this section, we survey $\ell_{p}$-norm related work in the streaming model. Our goal here is to capture the generality of our $\ell_{p}$-norm statistic (specifically for $p \in [1, 2]$).

$\ell_2$-norm has been well-studied in the streaming model~\cite{Alon}, where the tuples come one-by-one and the goal is to approximate the $\ell_2$-norm. It is known that 
approximating $p \in [1, 2]$ takes only logarithmic space~\cite{Alon} while approximating all other $p \in (2, \infty]$ takes polynomial space~\cite{CKS03}. From the joins point of view, streaming joins (in the context of update queries) with worst-case optimality guarantees have been studied for the triangle case~\cite{Updates-Kara} and we expect $\ell_2$-norms to find applications in this space.

\subsection{Going from Undirected to the Directed Setting} \label{app:orientation}
So far in the paper, we have assumed that the query graph and the input relations are directed i.e., each tuple in each $R_{e}$ for every $e = \edge{v}{u} \in E$ is oriented from $v$ to $u$. As we discussed in Section~\ref{sec:other-results} with the example of $\ell_{\infty}$-norm, if we decide the direction tuple-wise, the resulting $\ell_{p}$-norm could be significantly reduced. We show in Appendix~\ref{app:shi}\footnote{We thank Shi Li for giving us this result.} that the optimal orientation that minimizes the $\ell_{p}$-norm for all $p \in [1, \infty]$ can be achieved in polynomial time.

Next, we show how to handle this scenario from the join computation point of view since Algorithm~\ref{algo:generic-ub} expects the tuples to be directed consistently. It turns out there is 
is a simple way\footnote{We thank Szymon Toru\'{n}czyk for pointing this out.} to handle the more general orientation process mentioned above: we replace $R(A,B)$ by the union of two sub-relations $R\edge{A}{B}$ and $R\edge{B}{A}$ and as a result, the overall join problem reduces to the union of $2^{|E|}$ join problems in which all tuples in a relation are directed the same way. As we will show formally in Appendix~\ref{sec:max-orient}, this adds only a multiplicative factor of $2^{\abs{E}}$ to the runtime of our worst-case optimal algorithm. Note that this factor is only exponential in the query size and as a result, can be ignored in data complexity of computing the join.

\subsubsection{Orienting Relations in One Direction} \label{sec:max-orient}
We state the undirected setting formally here, where we fix $p \ge 1$ upfront and we have relations $R_{\edgeud{v}{u}}$ with $\norm{R_{\edgeud{v}{u}}}_{p} \le L_e$ for every $\edgeud{v}{u} \in E$ (where $E$ is undirected). For any instance $I = \{R_{\edgeud{v}{u}}: \norm{R_{\edgeud{v}{u}}}_{p} \le L_e \}$, we define $\TJ = \ujoin{\edgeud{v}{u} \in E} R_{\edgeud{v}{u}}$. As stated earlier, we decompose $R_{\edgeud{v}{u}}$ into $R_{\edge{v}{u}}$ and $R_{\edge{v}{u}}$ respectively.

Given this setup, our goal is to show that orienting each relation $R_{\edgeud{v}{u}}$ in one direction (i.e., either $\edge{v}{u}$ (or) $\edge{u}{v}$ but not both) changes our upper bound on $|\TJ|$ only by a factor of $2^{\abs{E}}$. For a fixed orientation of all relations, we overload $E$ to denote the orientation.
\begin{lemm} \label{thm:max-orient}
Let $L_{\edge{v}{u}} = L_{\edge{u}{v}} = L_e$ for every $e \in E$. For any undirected $G = (V, E)$, we have
\begin{align*}
|\TJ| &\le  2^{|E|} \umax{\text{all possible orientations of } G} \left|\ujoin{\edge{v}{u} \in E} R_{\edge{v}{u}} \right| \numberthis \label{eq:max-orient-ub}
\end{align*}
and there exists an instance $I$
\begin{align*}
\{R_{\edge{v}{u}}: ||R_{\edge{v}{u}}||_{p} \le L_{\edge{v}{u}}, \edge{v}{u} \in E\}
\end{align*}
such that
\begin{align*}
|\TJ| & \ge  \left|\ujoin{\edge{v}{u} \in E} R_{\edge{v}{u}} \right| \numberthis \label{eq:max-orient-lb},
\end{align*}
where the orientation in~\eqref{eq:max-orient-lb} is the one that achieves the maximum in~\eqref{eq:max-orient-ub}.
\end{lemm}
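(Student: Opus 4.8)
\textbf{Setup and the easy direction.}
The plan is to prove the two inequalities separately. For the lower bound~\eqref{eq:max-orient-lb}, fix the orientation $E^*$ that attains the maximum on the right-hand side of~\eqref{eq:max-orient-ub}. Given any directed instance $\{R_{\edge{v}{u}}\}$ witnessing that maximum, build an undirected instance by setting $R_{\edgeud{v}{u}} = R_{\edge{v}{u}}$ (viewing each directed tuple as an undirected edge). Since $\norm{R_{\edgeud{v}{u}}}_p$ with the undirected degree sequence is at least the $\ell_p$-norm of the one-sided degree sequence (adding incoming degrees can only increase each coordinate), the undirected norm bound $L_e$ is still satisfied — actually we want the reverse: we need the undirected instance to satisfy $\norm{\cdot}_p \le L_e$. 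The cleanest route is to take the directed witness in which the ``reverse'' relation is empty, so the undirected degree sequence of $R_{\edgeud{v}{u}}$ equals the one-sided degree sequence of $R_{\edge{v}{u}}$, hence has the same $\ell_p$-norm $\le L_e$. Then $\TJ = \ujoin{\edgeud{v}{u} \in E} R_{\edgeud{v}{u}} \supseteq \ujoin{\edge{v}{u}\in E^*} R_{\edge{v}{u}}$ (in fact they are equal once the reverse relations are empty), giving~\eqref{eq:max-orient-lb}.

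\textbf{The upper bound.}
For~\eqref{eq:max-orient-ub}, take an arbitrary undirected instance $I$ with $\norm{R_{\edgeud{v}{u}}}_p \le L_e$. For each undirected edge $e = \edgeud{v}{u}$, split $R_{\edgeud{v}{u}} = R_{\edge{v}{u}} \cup R_{\edge{u}{v}}$, where $R_{\edge{v}{u}}$ consists of the tuples oriented from $v$ to $u$ and $R_{\edge{u}{v}}$ of those oriented the other way (an arbitrary but fixed tie-breaking of tuples that appear both ways, or simply: every tuple of the undirected relation is placed in exactly one of the two, arbitrarily). Distributing the join over these unions,
\begin{align*}
\TJ = \ujoin{e \in E}\bigl(R_{\edge{v}{u}} \cup R_{\edge{u}{v}}\bigr) = \bigcup_{E' } \ujoin{e \in E} R^{(E')}_{e},
\end{align*}
where $E'$ ranges over the $2^{|E|}$ orientations of $G$ and $R^{(E')}_e$ denotes the chosen sub-relation for edge $e$ in orientation $E'$. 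Hence $|\TJ| \le \sum_{E'} |\ujoin{e\in E} R^{(E')}_e| \le 2^{|E|} \max_{E'} |\ujoin{e\in E} R^{(E')}_e|$. The remaining point is that for each orientation $E'$, the one-sided relation $R^{(E')}_e$ is a subrelation of $R_{\edgeud{v}{u}}$, so its one-sided $\ell_p$-norm (with degrees taken only in the head-vertex direction prescribed by $E'$) is at most the undirected $\ell_p$-norm of $R_{\edgeud{v}{u}}$, which is $\le L_e$; thus $\{R^{(E')}_e\}_{e}$ is a legal directed instance under the stated norm bound, and its join size is bounded by $\max_{\text{orientations}} |\ujoin{\edge{v}{u}\in E} R_{\edge{v}{u}}|$ over \emph{legal} directed instances. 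Combining the two displays gives~\eqref{eq:max-orient-ub}.

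\textbf{Where the work is.}
The only genuinely delicate step is the norm-monotonicity claim: that restricting an undirected relation to the tuples oriented ``into'' a fixed endpoint, and measuring the $\ell_p$-norm of that endpoint's degree vector, cannot exceed the undirected $\ell_p$-norm. For the \emph{directed} degree vector this is immediate since each coordinate only drops. The subtlety is purely bookkeeping about how the ``undirected $\ell_p$-norm'' is defined (the paper defers the precise definition to Appendix~\ref{app:orientation}); once that definition is in hand — presumably the $\ell_p$-norm of the vector of undirected degrees, or of the best orientation's directed degrees — the inequality is a one-line coordinatewise comparison. I expect essentially no other obstacle: the decomposition of a join over a union of relations is routine, and the factor $2^{|E|}$ falls out of the union bound over the $2^{|E|}$ orientations.
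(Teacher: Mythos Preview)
Your core decomposition for the upper bound --- split each $R_{\edgeud{v}{u}}$ into $R_{\edge{v}{u}}\cup R_{\edge{u}{v}}$, distribute the join over these unions, and bound the resulting sum of $2^{|E|}$ terms by $2^{|E|}$ times the maximum --- is exactly what the paper does. The paper actually proves the set equality
\[
\TJ \;=\; \bigcup_{\text{orientations }E'}\;\ujoin{\edge{v}{u}\in E'} R_{\edge{v}{u}}
\]
by a direct double-inclusion argument, and both \eqref{eq:max-orient-ub} and \eqref{eq:max-orient-lb} fall out of it immediately.

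Where you diverge is in the interpretation of the right-hand side. In the paper's setup the relations $R_{\edge{v}{u}}$ appearing on the right are \emph{the specific directed pieces} of the given undirected instance (the decomposition $R_{\edgeud{v}{u}}=R_{\edge{v}{u}}\cup R_{\edge{u}{v}}$ is fixed upfront), not a worst-case quantity over all legal directed instances. Consequently the norm-monotonicity check you flag as ``the only genuinely delicate step'' is simply not needed in the paper's proof: no norm bound is invoked anywhere. Likewise, your lower-bound argument --- building an undirected instance from a directed witness and worrying about which direction the norm comparison goes --- is unnecessary; the paper just observes that the set equality above gives $\TJ \supseteq \ujoin{\edge{v}{u}\in E'} R_{\edge{v}{u}}$ for every orientation $E'$, in particular for the one achieving the maximum.

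So your argument is not wrong, but it proves a slightly stronger (worst-case) reading of the statement at the cost of extra work that the paper's instance-level formulation avoids entirely.
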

\begin{proof}
We start by recaling that $R_{\edge{v}{u}} \cup R_{\edge{u}{v}} = R_{\edgeud{v}{u}}$ for every $\edgeud{v}{u} \in E$. We claim the following based on definition of $|\TJ|$, which immediately gives~\eqref{eq:max-orient-ub} since there are $2^{|E|}$ orientations of $G$.
\begin{align*}
|\TJ| & \le \usum{\text{all possible orientations of } G}\left|\ujoin{\edge{v}{u} \in E} R_{\edge{v}{u}} \right|
\end{align*}
We prove the statement below and the above follows directly.
\begin{align*}
\TJ = \ucup{\text{all possible orientations of } G} \ujoin{\edge{v}{u} \in E} R_{\edge{v}{u}} \numberthis \label{eq:orient}.
\end{align*}
The proof is by contradiction. Let $\TJ \subset \ucup{\text{all possible orientations of } G} \ujoin{\edge{v}{u} \in E} R_{\edge{v}{u}}$. Then, there exists a tuple $\vt \in \ujoin{\edge{v}{u} \in E} R_{\edge{v}{u}}$ such that $\vt \not \in \TJ$. For each $\edge{v}{u} \in E$, we have $\pi_{\edge{v}{u}}(\vt) \in R_{\edge{v}{u}} \subseteq R_{\edgeud{v}{u}}$. Note that this implies $\vt \in \TJ$ as well, resulting in a contradiction. We argue the reverse direction as well i.e., let $\TJ \supset \ucup{\text{all possible orientations of } G} \ujoin{\edge{v}{u} \in E} R_{\edge{v}{u}}$. Then, there exists a tuple $\vt \in \TJ$ such that $\vt \not \in \ujoin{\edge{v}{u} \in E} R_{\edge{v}{u}}$ for every orientation of $G$. Note that this implies for each relation $R_{\edgeud{v}{u}}$, we have $\pi_{\edgeud{v}{u}}(\vt) \in R_{\edgeud{v}{u}}$, which in turn implies $\pi_{\edgeud{v}{u}}(\vt) = R_{\edge{v}{u}}$ (or) $\pi_{\edgeud{v}{u}}(\vt) = R_{\edge{u}{v}}$. As a result, $\vt$ will be present in the join out of some orientation of $J$, resulting in a contradiction. This proves~\eqref{eq:orient}.
	
To complete the proof, we argue~\eqref{eq:max-orient-lb}, which follows from the proof above as well i.e., for any orientation of $G$, we have 
\begin{align*}
\TJ & \supseteq \ujoin{\edge{v}{u} \in E} R_{\edge{v}{u}},
\end{align*}
which in turns implies~\eqref{eq:max-orient-lb} for the orientation that achieves the maximum in~\eqref{eq:max-orient-ub}.
\end{proof}

\subsubsection{Computing the Optimal Orientation} \label{app:shi}
Give an undirected graph $H=(V,E)$, let $\sigma$ denote an orientation of the edge set $E$ s.t. for every $\edgeu{u}{v}\in E$, we have $\sigma(\edgeu{u}{v})\in \inset{\edge{u}{v},\edge{v}{u}}$, i.e. we pick exactly one of the two possible orientations for the undirected edge. Given an orientation $\sigma$, define the corresponding `degree vector' $\vd_\sigma$ such that $\vd_\sigma(u)$ is the {\em outdegree} of $u\in V$ under this orientation.
We consider the following problem:
\begin{problem}
	\label{prob:orient}
	Given $p\in [1,\infty)$ and $H=(V,E)$, compute a $\sigma$ such that $\norm{\vd_\sigma}_p$ is minimized.
\end{problem}

Note that the problem for $p=1$ is trivial (since all orientations give the same norm value of $|E|$) while the problem for $p=\infty$ is the problem of computing degeneracy of $H$ (which has a well-known linear time algorithm).

In this appendix, we show that the above problem can be solved in polynomial time for any fixed $p\in [1,\infty)$. 
The results in this section are due to Shi Li. We thank him for allowing us to use his proof.

The high level idea is to come up with a convex programming relaxation of the above problem and then we argue that there indeed exists an integral solution (and that one can round a fractional optimal solution into an optimal integral one as well). 

Before we state the convex program, we define a piece-wise linear function that agrees with the $\ell_p$ norm for all integer values. In particular, define:
\[f_p(x) = \inparen{\floor{x}}^p\cdot \inparen{1-x+\floor{x}} + \inparen{\floor{x}+1}^p\cdot \inparen{x-\floor{x}}.\]
Indeed, note that if $x$ is an integer, we have $f(x)=x^p$, as desired. Also note that the function is convex (which in turn follows from the  facts that $x^p$ is convex  for $p\ge 1$  and that $f_p(x)$ is piece-wise linear in between two integral values).

Now consider the following convex program:
\begin{align}
\label{eq:cvx-obj}
& \min\sum_{u\in V} f_p\inparen{d_u}  \\
\label{eq:cvx-direct}
&\text{s.t. } x_{\edge{u}{v}}+x_{\edge{v}{u}}=1 \text{ for every } \edgeu{u}{v}\in E\\
\label{eq:cvx-deg}
& d_u =\sum_{\edgeu{u}{w}\in E} x_{\edge{u}{w}} \text{ for every } u\in V\\
& x_{\edge{u}{v}},x_{\edge{v}{u}}\ge 0 \text{ for every} \edgeu{u}{v}\in E\nonumber
\end{align}

Note that any integral solution must have $x_{\edge{u}{v}},x_{\edge{v}{u}}\in\inset{0,1}$ and corresponds to an orientation $\sigma$ ($x_{\edge{u}{v}}=1$ implies that $\sigma\inparen{\edgeu{u}{v}}=\edge{u}{v}$ and $x_{\edge{u}{v}}=0$ implies that $\sigma\inparen{\edgeu{u}{v}}=\edge{v}{u}$) and $d_u$ then corresponds to the outdegree of $u$ under $\sigma$. Further, by our earlier observation on $f_p(\cdot)$, the objective function corresponds to the objective function of our problem. Thus, if we can compute the optimal integral solution to the above convex program, then we will be done.

First, we recall the well known result that the convex program above can be solved in polynomial time (since the objective is convex and all constraints are linear). However, such a solution is not guaranteed to be integral. We argue next that we can convert an optimal fractional solution into an optimal integral solution in polynomial time (and hence also argue that the convex program also always has an optimal integral solution).

\begin{thm}
	\label{thm:shi}
	Let $\vx=\inparen{x_{\edge{u}{v}},x_{\edge{v}{u}}}_{\edgeu{u}{v}\in E}$ be an optimal solution to the convex program~\eqref{eq:cvx-obj}. Then in polynomial time, we can convert this into an integral optimal solution.
\end{thm}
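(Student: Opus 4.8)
The plan is to take an optimal fractional solution $\vx$ to the convex program~\eqref{eq:cvx-obj} and show that, as long as some variable is non-integral, we can perform a local rerouting operation along a cycle or a path that does not increase the objective and strictly decreases the total fractionality, so that after polynomially many steps we reach an integral optimal solution. First I would observe that the set of fractional edges (those $\edgeu{u}{v}$ with $x_{\edge{u}{v}}\in(0,1)$) either contains a cycle or, if it is a forest, has a leaf. Consider first the case of a cycle $C=w_0\to w_1\to\cdots\to w_k=w_0$ in the fractional subgraph, oriented consistently around the cycle according to $\vx$ (so each edge of $C$ carries a fractional value in its ``forward'' orientation). Let $\epsilon>0$ be a small parameter and define $\vx^{+}$ by pushing $+\epsilon$ around the cycle (i.e.\ increasing $x$ on each forward edge by $\epsilon$ and decreasing the reverse by $\epsilon$) and $\vx^{-}$ by pushing $-\epsilon$; both remain feasible for small $\epsilon$ since~\eqref{eq:cvx-direct} is preserved exactly and~\eqref{eq:cvx-deg} is preserved because each vertex on the cycle has exactly one incoming and one outgoing cycle-edge, so its outdegree $d_u$ is unchanged. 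Hence the objective is unchanged under both perturbations, and we can take $\epsilon$ as large as possible until some edge variable on the cycle hits $0$ or $1$: this reduces the number of fractional edges without changing the objective.

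Next I would handle the case where the fractional subgraph is a forest. Pick a leaf vertex $u$ whose unique fractional edge is $\edgeu{u}{v}$. Now perturbing this single edge changes only $d_u$ and $d_v$. Here the objective is $f_p(d_u)+f_p(d_v)+\text{const}$, which is a convex piecewise-linear function of the single scalar $t=x_{\edge{u}{v}}$ (as $d_u=t+\text{const}$ and $d_v=(1-t)+\text{const}$). A convex piecewise-linear function on $[0,1]$ attains its minimum either at an endpoint or on a subinterval with breakpoints at integer values of $d_u$ and $d_v$; in all cases there is a minimizer with $t\in\{0,1\}$ (moving $t$ to the nearest such value among $\{0,1\}$ and the integer breakpoints can only keep the objective the same or decrease it — and since $\vx$ is optimal it stays the same). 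So we may round this edge without increasing the objective, again reducing the fractional count. Iterating the cycle-cancelling and leaf-rounding steps, each of which is polynomial-time and strictly decreases the number of fractional edges, terminates in at most $|E|$ rounds with an integral optimal solution, which by the correspondence noted before the theorem is an optimal orientation.

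The main obstacle I anticipate is the bookkeeping around the piecewise-linear function $f_p$: I need to argue carefully that in the leaf case the one-dimensional restriction of the objective is genuinely convex (it is, being a sum of convex functions composed with affine maps) and that its set of minimizers over $[0,1]$ always contains an integer endpoint — this requires noting that the breakpoints of $f_p(d_u)$ as a function of $t$ occur where $d_u$ or $d_v$ is an integer, and since $d_u+d_v$ is a constant integer the relevant breakpoints partition $[0,1]$ into unit-length (in the $d_u$ scale) linear pieces, so an endpoint of $[0,1]$ (where $t\in\{0,1\}$) is always among the optimal points. The cycle case is cleaner because the objective is exactly invariant, so the only subtlety there is confirming that $d_u$ is unchanged for every cycle vertex — which holds precisely because a directed cycle contributes one unit to the in-degree and one unit to the out-degree of each of its vertices, leaving the out-degree (hence $f_p(d_u)$) untouched.
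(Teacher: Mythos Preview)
Your overall strategy---cancel cycles in the fractional subgraph, then round leaves of the remaining forest---is sound and arguably cleaner than the paper's, but the leaf case contains a real gap. You assert that $d_u+d_v$ is a constant integer; this is false in general. The leaf $u$ has only one fractional incident edge, so indeed $d_u=t+k_u$ with $k_u\in\mathbb{Z}$, but $v$ may be an interior node of the fractional forest with other fractional incident edges, so $c_v=\sum_{w\ne u}x_{\edge{v}{w}}$ need not be an integer and hence $d_u+d_v=1+k_u+c_v$ need not be either. Your ``unit-length pieces'' picture therefore breaks down, and $g(t)=f_p(d_u(t))+f_p(d_v(t))$ can have an interior breakpoint in $(0,1)$.

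The conclusion you need---that $g$ always has a minimizer in $\{0,1\}$---is nevertheless true, by a different argument. Since $u$ is a leaf, $f_p(d_u(t))$ is linear on $[0,1]$, so $g$ has at most one interior breakpoint (where $d_v$ crosses an integer). Writing $\Delta(m)=(m+1)^p-m^p$ and $M=\lfloor c_v\rfloor$, the slopes of $g$ on the two pieces are $\Delta(k_u)-\Delta(M{+}1)$ and $\Delta(k_u)-\Delta(M)$. A strict interior minimum would require the first to be negative and the second positive; since $\Delta$ is strictly increasing for $p>1$, that forces $k_u\le M$ and $k_u>M$ simultaneously. Hence $g$ is monotone (or constant) on $[0,1]$ and the leaf edge can indeed be rounded to $0$ or $1$ without loss, after which your fractional-edge count drops as claimed.

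For comparison, the paper uses a different case split: whether all out-degrees $d_u$ are already integers. If so, the observation that an integral-degree vertex with one fractional incident edge must have a second yields a cycle, handled exactly as you do. If not, the paper locates two vertices $s,t$ with non-integral degree, finds an $s$--$t$ path of fractional edges, and pushes $\pm\epsilon$ along the whole path; only $d_s$ and $d_t$ change, and a direct computation with $\Delta$ shows one direction does not increase the objective. The paper's progress measure is ``fractional edges plus non-integral-degree vertices.'' Your leaf step is the length-one special case of their path step, and (with the fix above) supports the simpler progress measure you intended.
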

\begin{proof}
	If $\vx$ is integral then we have nothing to argue so we assume there is at least one $\edgeu{u}{v}\in E$ such that $x_{\edge{u}{v}},x_{\edge{v}{u}}\in (0,1)$-- for notational simplicity we will state that the undirected edge $\edgeu{u}{v}$ is {\em fractional}. Let $d_u$ for every $u\in V$ be defined by~\eqref{eq:cvx-deg}.
	
	We make the following simple observation that will be useful later on. Let $u\in V$ and let $d_u$ be integral. Then if there is a fractional edge $\edgeu{u}{w}$, then there has to be at least one more fractional edge $\edgeu{u}{v}$ for $w\ne v$. (This follows because if exactly one edge incident to $u$ is fractional, then by~\eqref{eq:cvx-deg}, $d_u$ cannot be an integer.)
	
	In the rest of the proof, we  present a polynomial time procedure\footnote{We will not explicitly argue the runtime of the procedure below but its description immediately implies the claimed polynomial runtime.} that creates a new optimal solution $\vx'$ to~\eqref{eq:cvx-obj} such that $\vx'$ has strictly fewer number of fractional edges or fractional  degree values $d_u$ (than those for $\vx$). Note that this is enough to prove our claimed result (since the procedure below can be run at most $|V|+|E|$ to get the desired integral optimal solution).
	
	We first consider the case when all $d_u$ values are integers. Let $\edgeu{w}{v}$ be any fractional edge. Then by the observation above there exists an fractional edge $\edgeu{v}{y}$ for $y\ne w$. Thus, we can `move` from $w$ to $y$. We can continue this process by picking a fractional edge to get a new node  till we end up with a cycle $C$ (and such a cycle has to exists since there are $|V|$ vertices). Now define,
	\begin{equation}
	\label{eq:def-eps-shi}
	\eps=\min_{\edgeu{u}{v}\in C} \min\inset{x_{\edge{u}{v}},x_{\edge{v}{u}}}.
	\end{equation}
	Note that by construction of $C$, $\eps>0$. We will construct two related feasible solutions from $\vx$ that both are also optimal and at least one of them now has at least one less fractional edge. Let the vertices in $C$ in order be $u_0,u_1,\dots,u_{c-1},u_0$. Then we define two solution $\vx^+$ and $\vx^-$ as follows. Both these solutions agree with $\vx$ for all edges not in $C$. Otherwise for every $0\le i<c$, we have
	\begin{align*}
	x^+_{\edge{u_i}{u_{(i+1)\mod{c}}}}&=x_{\edge{u_i}{u_{(i+1)\mod{c}}}}+\eps\\
	x^+_{\edge{u_{(i+1)\mod{c}}}{u_i}}&=x_{\edge{u_{(i+1)\mod{c}}}{u_i}}-\eps\\
	x^-_{\edge{u_i}{u_{(i+1)\mod{c}}}}&=x_{\edge{u_i}{u_{(i+1)\mod{c}}}}-\eps\\
	x^-_{\edge{u_{(i+1)\mod{c}}}{u_i}}&=x_{\edge{u_{(i+1)\mod{c}}}{u_i}}+\eps\\
	\end{align*}
	i.e. in $\vx^+$ the `clockwise' directions get increased by $\eps$ and the `counter-clockwise' directions get decreased by $\eps$ (and this gets flipped in $\vx^-$). Let $d^+_u$ and $d^-_u$ be the corresponding degree values defined by~\eqref{eq:cvx-deg}. It can be verified that both $\vx^+$ and $\vx^-$ are still feasible solutions and $d^+_u=d^-_u=d_u$ (and hence, both $\vx^+$ and $\vx^-$ are still optimal). Finally, note that by definition of $\eps$ in~\eqref{eq:def-eps-shi}, in either $\vx^+$ or $\vx^-$ at least one edge in $C$ is no longer fractional.
	
	Finally, we consider the case when not all $d_u$ are integers. Since we have that $\sum_{u\in V} d_u=|E|$ (which in turn follows by summing up~\eqref{eq:cvx-direct} over all $\edgeu{u}{v}\in E$ and then noting each outgoing edge appears exactly once in this sum), which is integer. Hence, this implies there has to be at least two vertices with non-integral $d$ values. Let $s\in V$ such that $d_s$ is not an integer. This implies that there exists an incident edge $\edgeu{s}{u}$ that is not fractional (because if not, $d_s$ will be an integer). Now we keep on adding incident fractional edges as we did in the first case. Here we stop in case we get a cycle with all vertices $u$ in the cycle have an integer $d_u$ (in which case we just run the argument from the first case above) else we end up with another $t\ne s$ such that $d_t$ is not an integer. In other words, we end up with an $s-t$ path $\cP$. Now define
	\begin{equation}
	\label{eq:def-eps-shi-path}
	\eps=\min\inset{d_s-\floor{d_s}, d_t-\floor{d_t}, \ceil{d_s}-d_s,\ceil{d_t}-d_t,\min_{\edgeu{u}{v}\in \cP} \min\inset{x_{\edge{u}{v}},x_{\edge{v}{u}}}}.
	\end{equation}
	Construct $\vx^+$ by increasing the $x_{\edge{u}{v}}$ values (by $\eps$) that are in the directed $s\to t$ path and decreasing the $x_{\edge{v}{u}}$ values (by $\eps$) that are in the directed $t\to s$ path. Similarly we get $\vx^-$ by replacing $\eps$ by $-\eps$ in the definition of $\vx^+$. ($\vx^+,\vx^-$ and $\vx$ are exactly the same outside of edges in $\cP$.)   Let $d^+_u$ and $d-_u$ be the corresponding degree values defined by~\eqref{eq:cvx-deg}. It can be again verified that both $\vx^+$ and $\vx^-$ are still feasible solutions. Further it can be verified that for all vertices $u$ in $\cP$ other than $s$ and $t$ we have $d^+_u=d^-_u=d_u$. The only degree values that can change are those of $s$ and $t$ (and indeed we have $d^+_s=d_s+\eps,d^+_t=d_t-\eps,d^-_s=d_s-\eps,d^-_t=d_t+\eps$). Finally, by the choice of $\eps$ either in $\vx^+$ or $\vx^-$ at least one edge is  no more fractional {\em or} at least one vertex will no longer have a fractional  degree value.
	
	To complete the proof we argue that either $f_p(d^+_s)+f_p(d^+_t)-f_p(d_s)-f_p(d_t)$ or $f_p(d^-_s)+f_p(d^-_t)-f_p(d_s)-f_p(d_t)$ is at most zero (which note will complete the proof). To argue this, we inspect these differences. For notational convenience define $D_s=\floor{d_s}$ and $D_t=\floor{d_t}$. Then by definition of $\eps$ in~\eqref{eq:def-eps-shi-path}, we have $\floor{d^+_s}=\floor{d^-_s}=D_s$ and $\floor{d^+_t}=\floor{d^-_t}=D_t$. Further, note that by definition of $f_p(\cdot)$, we have
	\begin{align*}
	f_p(d^+_s)&= f_p(d_s+\eps)\\
	& = \inparen{D_s}^p\cdot (1-d_s-\eps+D_s) + \inparen{1+D_s}^p\cdot (d_s+\eps-D_s)\\
	& = \inparen{D_s}^p\cdot (1-d_s+D_s) + \inparen{1+D_s}^p\cdot (d_s-D_s) +\eps\cdot\inparen{\inparen{1+D_s}^p - \inparen{D_s}^p}\\
	& = f_p(d_s) +\eps\cdot\inparen{\inparen{1+D_s}^p - \inparen{D_s}^p}.
	\end{align*}
	Similarly, we get
	\begin{align*}
	f_p(d^-_s)&= f_p(d_s) -\eps\cdot\inparen{\inparen{1+D_s}^p - \inparen{D_s}^p}\\
	f_p(d^+_t)&= f_p(d_t) +\eps\cdot\inparen{\inparen{1+D_t}^p - \inparen{D_t}^p}\\
	f_p(d^-_t)&= f_p(d_t) -\eps\cdot\inparen{\inparen{1+D_t}^p - \inparen{D_t}^p}.
	\end{align*}
	Thus, we have
	\begin{align*}
	f_p(d^+_s) - f_p(d_s) + f_p(d^+_t) - f_p(d_t) = \eps\cdot\inparen{\inparen{1+D_s}^p - \inparen{D_s}^p + \inparen{1+D_t}^p - \inparen{D_t}^p}\\
	f_p(d^-_s) - f_p(d_s) + f_p(d^-_t) - f_p(d_t) = -\eps\cdot\inparen{\inparen{1+D_s}^p - \inparen{D_s}^p + \inparen{1+D_t}^p - \inparen{D_t}^p}.
	\end{align*}
	From the above is it easy to see that at least one of $f_p(d^+_s)+f_p(d^+_t)-f_p(d_s)-f_p(d_t)$ or $f_p(d^-_s)+f_p(d^-_t)-f_p(d_s)-f_p(d_t)$ is at most zero, as desired.
\end{proof}

\section{Missing Details in Section~\ref{sec:tech}} \label{app:tech}

\subsection{$\panda$ as a Substitute for Algorithm~\ref{algo:gen-ub}} \label{sec:panda-algo}
In this section, our goal is to show that $\panda$ be used as a subsitute for Algorithm~\ref{algo:gen-ub}. We start by noting that $\panda$~\cite{panda} is traditionally designed for $\ell_1$ and $\ell_\infty$ norms. We note here that with a simple modification, it can be used in place of our Algorithm~\ref{algo:gen-ub} for general $\ell_{p}$-norm constraints as well, reflecting $\panda$'s generality. To see this, recall our runtime upper bound $\cD_{u}(\bard)$ from Theorem~\ref{claim:gen-acyclic}:
\begin{align*}
\min \inset{  \umin{\edge{v}{u} \in E}\inset{d_{\edge{v}{u}}},  \umin{\edge{u}{w} \in E} \inset{\frac{2^{p} \cdot L_{\edged{u}{w}}^{p}}{d_{\edge{u}{w}}^{p}}}   } \numberthis \label{eq:panda-min-expr}.
\end{align*}
While the terms $d_{\edge{v}{u}}$ can be modelled as standard $\ell_{\infty}$ bounds, the terms of the form 
$\frac{2^{p} \cdot L_{\edged{u}{w}}^{p}}{d_{\edge{u}{w}}^{p}}$ can be equivalently represented as a bound on the domain size $|\Dom(u)|$ of $u$ (which can in turn be modeled as an $\ell_{\infty}$ bound on a relation $\edge{\emptyset}{u}$). As a result, in addition to $d_{\edge{v}{u}}$, $\panda$ needs to know the values $L_{\edged{v}{u}}$ for every $\edge{v}{u} \in E$ to construct a proof sequence in this case whereas we do not this knowledge in our Algorithm~\ref{algo:generic-ub}. Assuming that using $\panda$ for each degree configuration $\bd$ achieves the polymatroid bound in this setup, note here that we would still end up losing a multiplicative factor of $O\inparen{\log{N}^{\inparen{\inparen{2^{|V|}}!}^2}}$ in its runtime, which we avoid in our algorithm. Using $\panda$ removes the need for Algorithm~\ref{algo:gen-ub} and Algorithm~\ref{algo:acyclic-ub}. Further, note that in Algorithm~\ref{algo:generic-ub}, we compute $\TJ$ as the union of all $\TJ(\bd)$s and as a result, by using $\panda$ for each $\bd$, we would lose another factor of $\log^{|E|} L$ (by upper bounding the algorithm's runtime with the $\bard$ achieving the maximum upper bound). However, we can ignore it since it is smaller than $O\inparen{\log{N}^{\inparen{\inparen{2^{|V|}}!}^2}}$.

To complete this discussion of using $\panda$ in our setup, we still need to show that the hard instance (with $\ell_1$ and $\ell_{\infty}$ bounds) for each $\bard$ (corresponding to the polymatroid bound) satisfies the $\ell_{p}$-norm bound. It turns out that these hard instances~\cite{panda} always have uniform $\ell_{\infty}$ bounds i.e., for relations $R_{\edge{v}{u}}$ and $R_{\edge{\infty}{u}}$ for every $\edge{v}{u} \in E$, the corresponding $\ell_{\infty}$ bounds are $\umin{\edge{v}{u} \in E}d_{\edge{v}{u}}$ and $\umin{\edge{v}{u} \in E} \left( \frac{L_{\edge{v}{u}}^{p}} {d_{\edge{v}{u}}^{p}} \right)$ (note that here we use $L_{\edge{v}{u}}$ in place of $L_{\edged{v}{u}}$ and this is ok to do since the latter is at most the former). Note that this implies the $\ell_{p}$-norm of each sub-relation $\edge{v}{u} \in E$ in the instance is upper bounded by $\sqrt[p]{ \frac{L_{\edge{v}{u}}^{p}}{d_{\edge{v}{u}}^{p}} d_{\edge{v}{u}}} \le L_{\edge{v}{u}}$, which is at most $L_{\edge{v}{u}}$. Since this instance is valid for any degree configuration $\bd$, it holds for the maximum among $\bd$s as well.

\subsection{Question~\ref{ques:acyclic} for General $\ell_{p}$-norm Constraints} \label{sec:app-acyclic-subgraph}
In this section, we revisit Question~\ref{ques:acyclic} for General $\ell_{p}$-norm constraints, which we considered earlier in Section~\ref{sec:sec2-acyclic}. We present an example below to demonstrate the answer to Question~\ref{ques:acyclic} is {\em no}.
\begin{Example} \label{ex:ell-2-acyclic-subgraph}
	Consider the triangle query $R\edge{A}{B}\Join S\edge{B}{C}\Join T\edge{C}{A}$ 
	where each relation has an $\ell_2$ bound of $L$. Now consider any acyclic-subgraph, e.g., the subquery $R\edge{A}{B}\Join S\edge{B}{C}$. It is easy to see that in the worst-case this subquery can have a join output size of at least $L^{5/2}$-- for e.g.,
	\begin{align*}
	R\edge{A}{B}=[L^2]\times[1] \text{ and } S\edge{B}{C}=[1]\times [\sqrt{L}].
	\end{align*}
	It is easy to verify that both relation instances have an $\ell_2$ bound of $L$ and the join output size for the above instance for $R\Join S$ is $L^{5/2}$. On the other hand, in Example~\ref{ex:triang-cyclic} we have already shown that for the triangle query, the tight bound on the join query size is $\Theta(L^2)$, while this example shows that the best acyclic subgraph join output size bound is $\Omega\inparen{L^{5/2}}$. \demo
\end{Example}

\subsection{Acyclic Triangle Upper Bound using Cauchy-Schwartz} \label{app:cauchy}
In this section, we present an upper bound argument for the case when $G$ is (what we call) an acyclic triangle (i.e., it is cyclic in the undirected sense but acyclic in the directed sense).
\begin{Example} \label{ex:cauchy}
$G$ has $V = \{ A, B, C\}$ and $E = \{ \edge{A}{B}, \edge{B}{C}, \edge{A}{C} \}$ with relations $R_{\edge{A}{B}}, S_{\edge{B}{C}}$ and $T_{\edge{A}{C}}$. Further, we assume the $\ell_2$-norm case and $\umax{\edge{v}{u} \in E}\norm{R_{\edge{v}{u}}}_{2} \le L$. 
	
Our goal is to prove an upper bound of $L^2$ on $R_{\edge{A}{B}} \Join S_{\edge{B}{C}} \Join T_{\edge{A}{C}}$. For any $a \in  \Dom(A)$, let $\deg_{R_{\edge{A}{B}}}(a)$ and $\deg_{T_{\edge{A}{C}}}(a)$ denote the outdegree of $a$ in $R_{\edge{A}{B}}$ and $T_{\edge{C}{A}}$. Then note that the output size of the join is: 
\begin{align*}
& \usum{a \in \Dom(A)}  \deg_{R_{\edge{A}{B}}}(a) \cdot  \deg_{T_{\edge{A}{C}}}(a) \\
& \le  \sqrt{ \usum{a \in \Dom(A)}   \deg_{R_{\edge{A}{B}}}(a)^2 } \cdot \sqrt{ \usum{a \in \Dom(A)}   \deg_{T_{\edge{A}{C}}}(a)^2}  \\
& = L \cdot  L = L^2,
\end{align*}
where the inequality follows from Cauchy-Schwartz. \demo
\end{Example}

\section{Missing Details in Section~\ref{sec:norm-prelims}} \label{app:prelims}
The standard results below will be used in our analysis. 
\begin{lemm} \label{lemma:zero-sum}
Let $n$ be a positive integer, $a_{1}, \dots, a_n$ be non-negative real numbers and let $x_1, \dots, x_n \ge 0$ such that $\usum{i \in [n]}x_i \ge 1$. Then, we have
\begin{align*}
\umin{i \in [n]} a_i \le \uprod{ i \in [n]} a_i^{x_i}.
\end{align*}
\end{lemm}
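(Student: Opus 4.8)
The plan is to reduce the claimed inequality to the weighted AM--GM inequality (equivalently, concavity of the logarithm), after first disposing of a degenerate case. If some $a_i = 0$, then the left-hand side $\min_i a_i = 0$ and the right-hand side is a product of nonnegative reals, hence $\ge 0$, so the inequality holds trivially. So I would assume from now on that every $a_i > 0$, which lets me take logarithms freely.

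Next, let $m = \min_{i \in [n]} a_i > 0$, so that $a_i \ge m$ for every $i$, and therefore $\log a_i \ge \log m$ for every $i$ (the logarithm is increasing). Since each $x_i \ge 0$, multiplying and summing gives
\begin{align*}
\sum_{i \in [n]} x_i \log a_i \;\ge\; \Big(\sum_{i \in [n]} x_i\Big)\log m.
\end{align*}
Here I would split into two subcases according to the sign of $\log m$. If $m \ge 1$, then $\log m \ge 0$, and since $\sum_i x_i \ge 1$ we get $(\sum_i x_i)\log m \ge \log m$. If $m < 1$, then $\log m < 0$, and again using $\sum_i x_i \ge 1$ (so that $\sum_i x_i$ multiplies a negative quantity) we still get $(\sum_i x_i)\log m \ge \log m$. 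In either case,
\begin{align*}
\sum_{i \in [n]} x_i \log a_i \;\ge\; \log m \;=\; \log\Big(\min_{i\in[n]} a_i\Big).
\end{align*}

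Finally, the left-hand side is exactly $\log\big(\prod_{i\in[n]} a_i^{x_i}\big)$, so exponentiating (again using monotonicity of the exponential) yields $\prod_{i\in[n]} a_i^{x_i} \ge \min_{i\in[n]} a_i$, which is the claim. There is no real obstacle here; the only point requiring a moment's care is the case analysis on whether $\min_i a_i \ge 1$ or $< 1$, since the direction in which the constraint $\sum_i x_i \ge 1$ is used depends on the sign of $\log(\min_i a_i)$. (One can avoid the case split entirely by noting that for $t \ge 0$ and $c \in \mathbb{R}$ with $t \ge 1$, one always has $tc \ge c$ iff $c \le 0$ or... — but in fact the uniform statement $tc \ge c$ fails for $c>0$ only if $t<1$, so with $t = \sum_i x_i \ge 1$ it holds for all real $c$; this is the one-line way to see both subcases at once.)
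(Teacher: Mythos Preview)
Your argument in the case $m<1$ is wrong, and in fact the lemma as stated is false in that regime. Concretely, take $n=1$, $a_1=\tfrac12$, $x_1=2$: then $\sum_i x_i=2\ge 1$, $\min_i a_i=\tfrac12$, but $\prod_i a_i^{x_i}=\tfrac14<\tfrac12$. The error in your write-up is the line ``If $m<1$, then $\log m<0$, and again using $\sum_i x_i\ge 1$ \ldots\ we still get $(\sum_i x_i)\log m\ge \log m$.'' That inequality goes the other way: for $t\ge 1$ and $c<0$ one has $tc\le c$, not $tc\ge c$. Your parenthetical at the end (``$tc\ge c$ fails for $c>0$ only if $t<1$'') only covers $c>0$ and does not rescue the $c<0$ case.

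The paper does not supply its own proof of this lemma (it is listed as a ``standard result''), but if you look at the two places where it is invoked, every quantity playing the role of an $a_i$ is at least $1$ (degrees $d_{\edge{v}{u}}\ge 1$ and effective domain sizes $L^p/d^p\ge 1$). So the intended statement carries the additional hypothesis $a_i\ge 1$ for all $i$, and under that hypothesis your $m\ge 1$ case is the whole proof: $\prod_i a_i^{x_i}\ge \prod_i m^{x_i}=m^{\sum_i x_i}\ge m$. Alternatively, with the weaker hypothesis $\sum_i x_i=1$ (rather than $\ge 1$) the lemma holds for all nonnegative $a_i$, since a weighted geometric mean lies between the min and max. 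Either fix is enough for the paper's applications; the stated version, and your proof of it, do not hold in general.
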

\begin{lemm}[H\"{o}lder's inequality~\cite{ineq}] \label{lemma:Holders}
Let $m, n$ be positive integers and let $x_1, \dots, x_n \ge 0$ such that $\usum{i \in [n]} x_i \ge 1$. Let $a_{ij} \ge 0$ be non-negative real numbers for $i \in [m]$ and $j \in [n]$. We have
\begin{align*}
\usum{i \in [m]} \quad \uprod{j  \in [n]} a_{ij}^{x_{i}} \le \uprod{j \in [n]} \left( \usum{i  \in [m]} a_{ij} \right)^{x_i},
\end{align*}
assuming the convention $0^0 = 0$.
\end{lemm}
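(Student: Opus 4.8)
The plan is to prove this as the standard generalized H\"older inequality, via a per-column normalization followed by weighted AM--GM. (Throughout I read the exponents as $x_j$ rather than $x_i$, i.e. the $\ell_p$-type statement used later to ``push in the sums''.)

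First I would dispose of the boundary cases dictated by the $0^0=0$ convention: if some column $j$ has $\sum_{i}a_{ij}=0$ (equivalently $a_{ij}=0$ for all $i$), then every product over $j$ on both sides contains the factor $0^{x_j}$, which is $0$ whether $x_j>0$ or $x_j=0$, so both sides vanish and the inequality is trivial. Hence assume $A_j:=\sum_{i}a_{ij}>0$ for every $j\in[n]$. Now observe that each side is homogeneous of degree $x_j$ in the $j$-th column of $(a_{ij})$: scaling $a_{ij}\mapsto \lambda a_{ij}$ for all $i$ multiplies both the left side and the right side by $\lambda^{x_j}$. Taking $\lambda=1/A_j$ simultaneously for each $j$ divides both sides by the same factor $\prod_j A_j^{x_j}$, so it suffices to prove the inequality under the normalization $\sum_i a_{ij}=1$ for all $j$; in that case the right-hand side equals $1$, every $a_{ij}\le 1$, and the task reduces to showing $\sum_{i\in[m]}\prod_{j\in[n]}a_{ij}^{x_j}\le 1$ when each column of $(a_{ij})$ is a probability vector and $s:=\sum_j x_j\ge 1$.

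For this I would apply weighted AM--GM with weights $x_j/s$ (nonnegative, summing to $1$): for each fixed $i$, $\bigl(\prod_j a_{ij}^{x_j}\bigr)^{1/s}=\prod_j a_{ij}^{x_j/s}\le \sum_j (x_j/s)\,a_{ij}=:t_i$, and since each $a_{ij}\le 1$ we get $t_i\le \sum_j x_j/s=1$. Because $s\ge 1$ and $t_i\in[0,1]$ we have $t_i^{s}\le t_i$, hence $\prod_j a_{ij}^{x_j}\le t_i^{s}\le t_i$. Summing over $i$ and swapping the order of summation gives $\sum_i\prod_j a_{ij}^{x_j}\le \sum_i t_i=\sum_j (x_j/s)\sum_i a_{ij}=\sum_j x_j/s=1$, which is exactly the normalized claim, and unwinding the normalization recovers the general statement.

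The only real subtlety — the ``hard part'', such as it is — is making the reduction step airtight: verifying that the per-column scaling genuinely preserves the inequality (both sides scale by the identical power of $A_j$, including the $x_j=0$ case), and applying the $0^0=0$ convention consistently on degenerate columns. Once the reduction to probability-vector columns with $s\ge 1$ is in place, the rest is a one-line use of weighted AM--GM together with the elementary fact that $t\mapsto t^{s}$ lies below the identity on $[0,1]$ whenever $s\ge 1$ — and this last fact is precisely where the hypothesis $\sum_j x_j\ge 1$ (rather than $=1$) is used.
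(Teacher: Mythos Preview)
Your proof is correct. The paper does not give its own proof of this lemma: it is stated as a cited standard result (the generalized H\"older inequality) and used as a black box. There is therefore nothing to compare against; your normalization-plus-weighted-AM--GM argument is a perfectly valid self-contained derivation, and you correctly identified and fixed the indexing typo (the exponents should be $x_j$, not $x_i$).
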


\subsection{Effective Domain Size Upper Bound based on Outdegree} \label{sec:node-bound}
For a fixed degree configuration $\bd = (d_{\edge{v}{u}})_{d_{\edge{v}{u}} \le L_{\edge{v}{u}}, \edge{v}{u} \in E}$, let $\Dom_{\bard}(v)$ denote the effective domain size of $v$ on $R^{d_{\edge{u}{v}}}_{\edge{u}{v}}$ for every $\edge{u}{v} \in E$ and $R^{d_{\edge{v}{w}}}_{\edge{v}{w}}$ for every $\edge{v}{w} \in E$, for every $v \in V$.
\begin{lemm} \label{lemma:node-bound}
	Fix $p: 1 \le p \le \infty$. We have for every $\edge{v}{u} \in E$:
	\begin{align*}
	\left|\Dom_{\bard}(v) \right| \le \left(\frac{2^{p} \cdot L_{\edged{v}{u}}^{p}}{d_{\edge{v}{u}}^{p}} \right).
	\end{align*} 
\end{lemm}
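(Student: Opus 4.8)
The plan is to reduce the statement to a one-line counting argument on the single outgoing relation $R^{d_{\edge{v}{u}}}_{\edge{v}{u}}$. The crucial observation is that the bucketing rule in~\eqref{eq:subrelation-deg} assigns a tuple $\vt \in R_{\edge{v}{u}}$ to $R^{d_{\edge{v}{u}}}_{\edge{v}{u}}$ purely on the basis of $\deg_{R_{\edge{v}{u}}}(\pi_v(\vt))$, which depends only on the $v$-coordinate of $\vt$. Hence all tuples of $R_{\edge{v}{u}}$ that agree on their $v$-value end up in the same bucket, so for every $a \in \pi_v\!\left(R^{d_{\edge{v}{u}}}_{\edge{v}{u}}\right)$ we have $\deg_{R^{d_{\edge{v}{u}}}_{\edge{v}{u}}}(a) = \deg_{R_{\edge{v}{u}}}(a) \in \left(d_{\edge{v}{u}}/2,\; d_{\edge{v}{u}}\right]$; in particular this degree is strictly larger than $d_{\edge{v}{u}}/2$.

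Next I would note that $\Dom_{\bard}(v)$ is the set of values of $v$ surviving in all relations incident to $v$, so in particular $\Dom_{\bard}(v) \subseteq \pi_v\!\left(R^{d_{\edge{v}{u}}}_{\edge{v}{u}}\right)$ for the fixed outgoing edge $\edge{v}{u}$, and it is enough to upper bound $\abs{\pi_v\!\left(R^{d_{\edge{v}{u}}}_{\edge{v}{u}}\right)}$. Expanding the definition of $L_{\edged{v}{u}} = \norm{R^{d_{\edge{v}{u}}}_{\edge{v}{u}}}_p$ as the $\ell_p$-norm of the degree vector of $R^{d_{\edge{v}{u}}}_{\edge{v}{u}}$ and applying the per-value degree lower bound from the previous step,
\begin{align*}
L_{\edged{v}{u}}^{p} \;=\; \usum{a \in \pi_v\!\left(R^{d_{\edge{v}{u}}}_{\edge{v}{u}}\right)} \inparen{\deg_{R^{d_{\edge{v}{u}}}_{\edge{v}{u}}}(a)}^{p} \;\ge\; \abs{\pi_v\!\left(R^{d_{\edge{v}{u}}}_{\edge{v}{u}}\right)} \cdot \inparen{\frac{d_{\edge{v}{u}}}{2}}^{p},
\end{align*}
and rearranging yields $\abs{\Dom_{\bard}(v)} \le \abs{\pi_v\!\left(R^{d_{\edge{v}{u}}}_{\edge{v}{u}}\right)} \le 2^{p} L_{\edged{v}{u}}^{p} / d_{\edge{v}{u}}^{p}$, which is exactly the claimed inequality for finite $p$.

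For $p = \infty$ the inequality is to be read with the usual convention: if $R^{d_{\edge{v}{u}}}_{\edge{v}{u}}$ is nonempty then $2 L_{\edged{v}{u}} = 2\norm{R^{d_{\edge{v}{u}}}_{\edge{v}{u}}}_\infty > d_{\edge{v}{u}}$ by the same degree bound, so the right-hand side is $+\infty$ and the inequality is trivial (equivalently, it follows by letting $p \to \infty$ in the finite case). I do not expect a genuine obstacle here; the only point requiring care is verifying that the bucketing key depends solely on $\pi_v(\vt)$, since this is precisely what makes the per-value degree lower bound $d_{\edge{v}{u}}/2$ transfer from $R_{\edge{v}{u}}$ to the subrelation $R^{d_{\edge{v}{u}}}_{\edge{v}{u}}$.
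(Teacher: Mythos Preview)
Your proposal is correct and takes essentially the same approach as the paper: both arguments use the per-value degree lower bound $d_{\edge{v}{u}}/2$ coming from the bucketing rule~\eqref{eq:subrelation-deg} and plug it into the definition of $L_{\edged{v}{u}}=\norm{R^{d_{\edge{v}{u}}}_{\edge{v}{u}}}_p$ to bound the number of distinct $v$-values. The only cosmetic difference is that the paper phrases the same computation as a proof by contradiction, whereas you write it directly; your explicit observation that $\Dom_{\bard}(v)\subseteq \pi_v\!\left(R^{d_{\edge{v}{u}}}_{\edge{v}{u}}\right)$ and your separate handling of $p=\infty$ are details the paper leaves implicit.
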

\begin{proof}
	We start by recalling that the subrelation $R^{d_{\edge{v}{u}}}_{\edge{v}{u}}$ satisifes the degree constraint $d_{\edge{v}{u}}$ i.e., each value of $v \in \Dom(v)$ has degree at least $\frac{d_{\edge{v}{u}}}{2} + 1$ and at most $d_{\edge{v}{u}}$.
	
	The proof is by contradiction. We assume
	\begin{align*}
	\left|\Dom_{\bard}(v) \right| > \left( \frac{2^{p} \cdot L_{\edged{v}{u}}^{p}}{d_{\edge{v}{u}}^{p}}\right) \numberthis \label{eq:dom-size}.
	\end{align*}
	Since each tuple in $R^{d_{\edge{v}{u}}}_{\edge{v}{u}}$ satisfies~\eqref{eq:subrelation-deg}, we have
	\begin{align*}
	\norm{R_{\edge{v}{u}}^{d_{\edge{v}{u}}}}^{p}_{p} & \ge  \left|\Dom_{\bd}(v) \right| \cdot \underset{x \in \Dom_{\bd}(u)}{\min} \left| D_{\edge{v}{u}}(x) \right|^{p} \\
	& > \left|\Dom_{\bd}(v) \right| \cdot \frac{d^{p}_{\edge{v}{u}}}{2^p}  \numberthis \label{eq:dom-lb-1} \\
	& > \left(\frac{2^{p} \cdot L_{\edged{v}{u}}^{p}}{d_{\edge{v}{u}}^{p}}\right) \cdot \frac{d_{\edge{v}{u}}^{p}}{2^p} \numberthis \label{eq:dom-lb-2} \\
	&  = L^{p}_{\edged{v}{u}}.
	\end{align*}
	Here,~\eqref{eq:dom-lb-1} follows from the fact stated above i.e., each value in $\Dom(v)$ has degree at least $d_{\edge{v}{u}}$ in $R^{d_{\edge{v}{u}}}_{\edge{v}{u}}$. Then,~\eqref{eq:dom-lb-2} follows from~\eqref{eq:dom-size}. Note that this contradicts $\norm{R_{\edge{v}{u}}^{d_{\edge{v}{u}}}}_{p} \le L_{\edged{v}{u}}$ and as a result, we have shown that $\left|\Dom_{\bd}(v)\right| \le  \left(\frac{2^{p} \cdot L_{\edged{v}{u}}^{p}}{d_{\edge{v}{u}}^{p}}\right)$, completing the proof.
\end{proof}

\subsection{More Details on Data Structures for our Algorithms} \label{app:data-struct}
In this section, we talk a bit more about our B-tree data structure. Recall that we store each relation $R_{\edge{v}{u}}$ as a B-tree-like index structure~\cite{Btree-ref}. For this B-tree, the first level is indexed by $(v)$ (i.e., all values in $\pi_{v}(R_{\edge{v}{u}})$) and the second level is indexed by $(u, \val_{v})$, where $\val_{v} \in \pi_{v}(R_{\edge{v}{u}})$. Note that this B-tree can be constructed in time $O(L_{\edge{v}{u}} \log(L_{\edge{v}{u}}))$. Since we construct such a B-tree for each $\edge{v}{u} \in E$, the total preprocessing time is $O( |E| \cdot L_{\edge{v}{u}} \log(L_{\edge{v}{u}}))$. Note that this time will be superseded by the final runtime of our join algorithm.

\subsection{Decomposing $R_{\edge{v}{u}}$ into buckets based on degree} \label{sec:log-decomp}
In this section, we state and prove a standard result on decomposing each relation $R_{\edge{v}{u}}$ into a logarithmic number of buckets.
\begin{lemm} \label{lemma:log-decomp}
Given any relation $R_{\edge{v}{u}}$, it can be decomposed into a union of $W$ subrelations (some of which can be empty), where each subrelation satisfies a degree constraint $d^{w}_{\edge{v}{u}} = 2^{w}$ for some $w \in[ W]$. Further, we have $W \le \log( L_{\edge{v}{u}}) + 1$.
\end{lemm}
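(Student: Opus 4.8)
The plan is a routine dyadic (``log'') bucketing of the tuples of $R_{\edge{v}{u}}$ by the degree of their $v$-value. First I would note that for each tuple $\vt \in R_{\edge{v}{u}}$ the quantity $\delta(\vt) := \deg_{R_{\edge{v}{u}}}(\pi_v(\vt))$ is a positive integer, and that it is bounded by $\norm{R_{\edge{v}{u}}}_\infty$; since the $\ell_\infty$-norm of the degree vector is dominated by its $\ell_p$-norm for every $p \ge 1$, we get $1 \le \delta(\vt) \le \norm{R_{\edge{v}{u}}}_\infty \le \norm{R_{\edge{v}{u}}}_p \le L_{\edge{v}{u}}$.

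Next I would assign $\vt$ to the dyadic level $w(\vt) := \ceil{\log_2 \delta(\vt)}$, so that $2^{w(\vt)-1} < \delta(\vt) \le 2^{w(\vt)}$ (for the boundary case $\delta(\vt)=1$ this reads $\tfrac12 < 1 \le 1$, giving $w(\vt)=0$). By the definition of the subrelation in~\eqref{eq:subrelation-deg}, this is exactly the assertion $\vt \in R_{\edge{v}{u}}^{2^{w(\vt)}}$. Because the half-open intervals $\bigl(2^{w-1},2^{w}\bigr]$ for $w=0,1,2,\dots$ partition $\bigl(\tfrac12,\infty\bigr) \supseteq \{1,2,3,\dots\}$, every tuple lies in exactly one such subrelation, so $R_{\edge{v}{u}} = \bigcup_{w} R_{\edge{v}{u}}^{2^{w}}$ is the desired decomposition, with the piece at level $w$ satisfying the degree constraint $d^{w}_{\edge{v}{u}} = 2^{w}$.

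Finally I would count the levels: $w(\vt)$ ranges over $\{0,1,\dots,\ceil{\log_2 \norm{R_{\edge{v}{u}}}_\infty}\}$, so at most $\ceil{\log_2\norm{R_{\edge{v}{u}}}_\infty}+1 \le \ceil{\log_2 L_{\edge{v}{u}}}+1$ of the subrelations are nonempty; padding with empty subrelations and reindexing over $[W]$ gives $W \le \log_2(L_{\edge{v}{u}})+1$ subrelations (under the usual rounding convention for $\log$), as claimed. There is no genuine obstacle here; the only points that need a little care are the boundary degree value $\delta(\vt)=1$ and matching the statement's indexing, and this is precisely the decomposition already used implicitly when Algorithm~\ref{algo:generic-ub} iterates over all degree configurations.
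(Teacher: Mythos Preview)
Your proposal is correct and follows essentially the same approach as the paper: both perform the standard dyadic bucketing of tuples by the degree of their $v$-value, use $\norm{\cdot}_\infty \le \norm{\cdot}_p \le L_{\edge{v}{u}}$ to cap the maximum degree, and count the resulting buckets. The only cosmetic differences are that the paper groups by $x\in\Dom(v)$ (via the set $D_{\edge{v}{u}}(x)$) rather than by tuple, and indexes buckets from $w=1$ rather than $w=0$, which is exactly the reindexing you already flagged.
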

\begin{proof}
We start by defining
\begin{align*}
D_{\edge{v}{u}}(x) = \left \{(x, y) \in R_{\edge{v}{u}}: y \in \Dom(u) \right \} \numberthis \label{eq:d-uv-x-app}.
\end{align*}
Next, we define for every $w \in [W]$:
\begin{align*}
R^{d_{\edge{v}{u}}}_{\edge{v}{u}} = \underset{x \in \Dom(v): 2^{ w -1} \le |D_{\edge{v}{u}}(x)| \le 2^{w} } {\cup} D_{\edge{v}{u}}(x) \numberthis \label{eq:sub-relation}, 
\end{align*}
where $D_{\edge{v}{u}}(x)$ is defined above. Note that each $R^{w}_{\edge{v}{u}}$ satisfies the degree constraint $d_{\edge{v}{u}} = 2^{w}$. Further, we have $\ucup{w \in [W]} R^{d_{\edge{v}{u}}}_{\edge{v}{u}} = R_{\edge{v}{u}}$ since for every $x \in \Dom(v)$, $D_{\edge{v}{u}}(x) \in R^{d_{\edge{v}{u}}}_{\edge{v}{u}}$, for some $w \in [W]$.
	
Since $\norm{R_{\edge{v}{u}}}_{p} \le L_{\edge{v}{u}}$, we have
\begin{align*}
2^{W - 1} &  \le \underset{x \in \Dom(v)}{\max} \left|D_{\edge{v}{u}}(x)\right| \\
& \le L_{\edge{v}{u}} = 2^{\log(L_{\edge{v}{u}})},
\end{align*} 
where the final inequality follows by the fact that $\norm{D_{\edge{v}{u}}}_{p} \ge \norm{D_{\edge{v}{u}}}_{\infty}$. This implies $W \le \log \left(L_{\edge{v}{u}} \right) + 1$. 
\end{proof}

\subsection{Computing $|\TJ|$ as the union of $|\TJ(\bd)|$s} \label{sec:join-decomp-size}
Recall that our algorithm (Algorithm~\ref{algo:generic-ub}) computes the final join output as the union of $\TJ(\bd)$ across all degree configurations $\bd$. We prove the following standard result.
 \begin{align*}
|\TJ| \le \usum{\bd = (d_{\edge{v}{u}})_{d_{\edge{v}{u}} \le L_{\edge{v}{u}}, \edge{v}{u} \in E}} |\TJ(\bd)| \numberthis \label{eq:join-decomp-size-app}.
\end{align*}
We argue the following result, which immediately implies the above.
\begin{align*}
\TJ = \ucup{\bd = (d_{\edge{v}{u}})_{d_{\edge{v}{u}} \le L_{\edge{v}{u}}, \edge{v}{u} \in E}} \TJ(\bd), \numberthis \label{eq:join-decomp}
\end{align*}
where recall that
\begin{align*}
\TJ(\bd) = \ujoin{\edge{v}{u} \in E} R^{d_{\edge{v}{u}}}_{\edge{v}{u}}.
\end{align*}
We start by claiming that based on our decomposition above, the following is true for every $\edge{v}{u} \in E$:
\begin{align*}
\ucup{d_{\edge{v}{u}} \le L_{\edge{v}{u}}} R^{d_{\edge{v}{u}}}_{\edge{v}{u}} = R_{\edge{v}{u}}. 
\end{align*}
In particular, this follows from the fact that each $x \in \Dom(v): (x, \cdot) \in R_{\edge{v}{u}}$ belongs to an unique $R^{d_{\edge{v}{u}}}_{\edge{v}{u}}$ (by definition of $R^{d_{\edge{v}{u}}}_{\edge{v}{u}}$). Note that this implies for every output tuple $\mathbf{t} \in \TJ$, we have $\pi_{\edge{v}{u}} (\mathbf{t})$ (which is $\mathbf{t}$ projected down to attributes $v$ and $u$) belongs to a unique $R^{d_{\edge{v}{u}}}_{\edge{v}{u}}$ and gives us~\eqref{eq:join-decomp}, as required.
\section{Proof of Theorem~\ref{thm:opt-decomp}} \label{sec:norm-structural}
We start by recalling $\LP^{(+)}$ on $G$:
\begin{align*}
& \min \quad \usum{\edge{v}{u} \in E} x_{\edge{v}{u}} \log(L_{\edge{v}{u}}) \tag{$\LP^{(+)}$}\\
& \usum{\edge{v}{u} \in E} x_{\edge{v}{u}} + \usum{\edge{u}{w} \in E} \frac{x_{\edge{u}{w}}}{p} \ge 1 \quad \forall u \in V  \numberthis \label{eq:struct-covering} \\
& x_{\edge{v}{u}} \ge 0 \quad \forall \edge{v}{u} \in E \numberthis \label{eq:struct-primal}.
\end{align*}
We also restate Theorem~\ref{thm:opt-decomp} here before proving it.
\begin{thm}
For any directed graph $G = (V, E)$, there exists an optimal solution $\mathbf{x}^* = \left( x^*_{\edge{v}{u}} \right)_{\edge{v}{u} \in E}$ to $\LP^{(+)}$ and a $t$ such that $G$ can be decomposed into a disjoint union of $t$ connected components (in the undirected sense) $G_i = (V_i, E_i)$ with 
\begin{align*}
|V_i| - 1 \le |E(G_i)| \le |V_i|, \text{ where }  x_{\edge{v}{u}} > 0 \quad \forall  \edge{v}{u} \in E(G_i)
\end{align*}
and $\mathbf{x}^*_{i} = \left( x^*_{\edge{v}{u}} \right)_{\edge{v}{u} \in E(G_i)}$ is an optimal basic feasible solution (see Definition~\ref{def:basic-feasible}) for $\LP^{(+)}$ on $G_i$ for every $i \in [t]$. Further, we have $\cup_{i=1}^{t}V(G_i) = V$ and $V(G_i) \cap V(G_j) = \emptyset$ \space $\forall i, j \in [t], i \neq j$. The following is true:
\begin{align*}
\TJ & = \times_{i \in [t]} \TJ(G_i).
\end{align*}
\end{thm}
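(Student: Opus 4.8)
The plan is to follow the classical LP-decomposition argument of~\cite{NPRR}, adapted to the modified covering constraint~\eqref{eq:struct-covering} in which an edge $\edge{v}{u}$ contributes with coefficient $1$ to the constraint of its head $u$ and with coefficient $1/p$ to the constraint of its tail $v$. First I would pick an optimal \emph{basic feasible solution} $\vx^*$ to $\LP^{(+)}$ on $G$; such a solution exists because the LP is feasible (e.g.\ setting every $x_{\edge{v}{u}}$ large enough) and bounded below by $0$, and among the optimal face we can always move to a vertex of the polytope. Let $E^+ = \set{\edge{v}{u} \in E : x^*_{\edge{v}{u}} > 0}$ be the support, and let $H = (V, E^+)$. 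I would then define $G_1,\dots,G_t$ to be the connected components (in the undirected sense) of the graph obtained from $G$ by restricting to the edges in $E^+$, \emph{together with} any vertices left isolated by this restriction (each such vertex forms its own component with $|E_i| = 0$, which is consistent with $|V_i| - 1 \le |E_i|$ only if $|V_i| = 1$, so isolated vertices are fine). The claim $\cup_i V(G_i) = V$ and pairwise-disjoint vertex sets is then immediate from the definition of connected components.

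The heart of the argument is the edge-count bound $|V_i| - 1 \le |E(G_i)| \le |V_i|$. The lower bound is just connectivity of $G_i$. For the upper bound I would use the basic-feasibility of $\vx^*$: a basic feasible solution has at most $|V|$ nonzero coordinates, because a vertex of the polytope defined by the $|V|$ covering inequalities~\eqref{eq:struct-covering} and the $|E|$ sign constraints~\eqref{eq:struct-primal} must make $|E|$ linearly independent constraints tight, and every slack sign constraint $x_{\edge{v}{u}} = 0$ kills a coordinate; hence at most $|V|$ of the covering constraints can be the "active" ones, forcing $|E^+| \le |V|$. Summing over components, $\sum_i |E(G_i)| = |E^+| \le |V| = \sum_i |V_i|$; but this only gives the bound \emph{on average}, so the real work is to argue it holds component-by-component. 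Here I would restrict $\vx^*$ to each $G_i$ and observe that $\vx^*_i$ is feasible for $\LP^{(+)}$ on $G_i$ (the covering constraint of $u \in V_i$ only ever involves edges incident to $u$, all of which lie in $G_i$), is optimal for that sub-LP (otherwise we could improve $\vx^*$ globally by splicing in a better solution on $G_i$, contradicting optimality), and is a basic feasible solution on $G_i$ (a vertex of the global polytope restricts to a vertex of the product-of-components polytope). Then the same "at most $|V_i|$ nonzero coordinates, all of which are positive by definition of the support restricted to $G_i$" argument gives $|E(G_i)| \le |V_i|$ for each $i$.

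Finally, the join identity $\TJ = \times_{i\in[t]} \TJ(G_i)$: since $V(G_i) \cap V(G_j) = \emptyset$ for $i \ne j$ and every edge of $G$ lies \emph{within} some $G_i$ or is \emph{dropped} — wait, that last point needs care, since $G$ may have edges outside $E^+$ connecting two components. I would handle this by defining the $G_i$ as the components of $G$ \emph{itself} projected so that each component carries exactly those edges of $G$ whose endpoints lie in that component; the only subtlety is that an edge $\edge{v}{u}$ of $G$ with $x^*_{\edge{v}{u}} = 0$ and with $v, u$ in different $E^+$-components would violate the structure. I expect this to be the main obstacle, and the standard resolution (as in~\cite{NPRR}) is to first argue that such "cross" edges cannot exist in an optimal basic feasible solution — intuitively, if they did, merging the two components and re-optimizing would not increase the objective, and one can always choose $\vx^*$ so that its support is a spanning subgraph whose components are exactly the components of $G$; any cross edge with zero weight can be absorbed without changing feasibility or objective, so WLOG each $G_i$ is a connected component of $G$ on the nose. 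Given that, every tuple $\vt \in \ujoin_{e\in E} R_e$ decomposes uniquely as $\vt = (\vt_1,\dots,\vt_t)$ with $\vt_i \in \ujoin_{e \in E(G_i)} R_e = \TJ(G_i)$ because the edge constraints across different $G_i$ share no attributes, and conversely any such tuple is in $\TJ$; this gives the Cartesian product identity. The remaining routine check is that the B-tree/data-structure assumptions and the runtime bookkeeping are unaffected by the decomposition, which I would defer.
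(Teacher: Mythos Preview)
Your plan is essentially the paper's: take an optimal basic feasible solution $\vx^*$, delete the zero-weight edges, take connected components, and argue per component that $\vx^*_i$ is feasible, optimal (by the splicing argument you sketch), and basic feasible for $\LP^{(+)}$ on $G_i$. Two points deserve tightening.

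First, your ``a vertex of the global polytope restricts to a vertex of the product-of-components polytope'' is too quick: the global polytope is \emph{not} a product, since the components only emerge after you commit to the support $E^+$. The correct statement is that $\vx^*$ lies in the face $F=\{x:x_e=0 \text{ for } e\notin E^+\}$, that a vertex of the polytope lying in a face is a vertex of that face, and that $F$ \emph{is} a product of the $G_i$-polytopes (the covering constraint at $u\in V_i$ involves only edges in $E_i$ once the $e\notin E^+$ coordinates are fixed at zero). With that face step inserted your argument works. The paper does the equivalent thing by an explicit block-rank computation on the tight-constraint submatrix: it peels off the identity block coming from the tight sign constraints $x_e=0$ and reads off $\rank(C_{S(G_i)})=|E(G_i)|\le |V_i|$ component by component. (Also, isolated vertices cannot occur --- the covering constraint at $u$ forces some incident edge to carry positive weight --- so your singleton-component remark is unnecessary.)

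Second, you are right to flag the join identity $\TJ=\times_i\TJ(G_i)$ as the real obstacle, and the paper's own proof is in fact silent on it. But your proposed fix --- that zero-weight cross edges cannot survive in an optimal BFS, or can be ``absorbed'' so that the $G_i$ are the components of $G$ itself --- is not correct: zero-weight edges, both within a component (a chord across a $4$-cycle) and across components (a bridge between two otherwise disjoint cycles), can certainly appear in an optimal BFS. What actually holds, and all that is used downstream for the size upper bounds, is the containment $\TJ\subseteq\times_i\TJ(G_i)$ (dropping relations can only enlarge a natural join), which gives $|\TJ|\le\prod_i|\TJ(G_i)|$; the matching lower bound is handled separately via the dual.
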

We define some standard preliminaries on linear programs needed to prove this result.
\subsection{Preliminaries and Existing Results}
We start by recalling the definition of a basic feasible solution, specialized to $\LP^{(+)}$.
\begin{defn} [Basic Feasible Solution to LP~\eqref{eq:primal}~\cite{bf-book}] \label{def:basic-feasible}
A basic feasible solution $\mathbf{x} = \left(x_{\edge{v}{u}} \right)_{\edge{v}{u} \in E}$ to $\LP^{(+)}$ is one that satisfies all its $|V| + |E|$ constraints with at least $|E|$ of them satisfied with equality (ones we call {\em tight}). Let $C$ denote the  $(|V| + |E|)\times |E|$ constraint matrix, where the rows are indexed by constraints and columns are indexed by variables and let $S$ denote the set of tight constraints. Then, the matrix $C$ projected down to rows (i.e., constraints) in $S$ has rank exactly $|E|$. 
\end{defn}
We will invoke the following well-known theorem in our arguments.
\begin{thm} [From~\cite{bf-book}] \label{thm:bf-optimal}
There always exists an optimal solution to $\LP^{(+)}$ that is basic feasible. 
\end{thm}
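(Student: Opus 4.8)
The plan is to derive Theorem~\ref{thm:bf-optimal} from the fundamental theorem of linear programming, using the algebraic characterisation of vertices of a polyhedron to match Definition~\ref{def:basic-feasible}. First I would rewrite the $|V|+|E|$ constraints~\eqref{eq:struct-covering} and~\eqref{eq:struct-primal} uniformly as $C\vx\ge\vb$, where $C$ is the $(|V|+|E|)\times|E|$ matrix whose first $|V|$ rows are the covering inequalities (the row indexed by $u$ has entry $1$ in column $\edge{v}{u}$ for each incoming edge and entry $\tfrac1p$ in column $\edge{u}{w}$ for each outgoing edge) and whose last $|E|$ rows form the $|E|\times|E|$ identity (encoding $x_{\edge{v}{u}}\ge 0$), and $\vb$ is the vector whose first $|V|$ entries are $1$ and whose last $|E|$ entries are $0$. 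Write $P=\{\vx\in\R^{|E|}:C\vx\ge\vb\}$ for the feasible region and $c^{\top}\vx=\sum_{\edge{v}{u}\in E}x_{\edge{v}{u}}\log(L_{\edge{v}{u}})$ for the objective.

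Next I would record three elementary facts. First, $P$ is nonempty: setting every $x_{\edge{v}{u}}$ equal to a sufficiently large constant satisfies every covering constraint, provided each vertex of $G$ is incident to at least one edge, which is the only situation in which $\LP^{(+)}$ is feasible at all and is exactly the setting relevant here (in Theorem~\ref{thm:opt-decomp} each $G_i$ is connected); if $\LP^{(+)}$ is infeasible there is nothing to prove. Second, $P$ is pointed (contains no line), because $C$ contains the $|E|\times|E|$ identity block and hence has full column rank $|E|$. Third, $c^{\top}\vx\ge 0$ on $P$, since $\vx\ge\vzero$ there and $\log(L_{\edge{v}{u}})\ge 0$ (relations being nonempty gives $L_{\edge{v}{u}}\ge 1$); thus the objective is bounded below on $P$.

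Then, by the fundamental theorem of linear programming (see~\cite{bf-book}), the last two facts guarantee that $\min_{\vx\in P}c^{\top}\vx$ is attained, and is attained at a vertex $\vx^{*}$ of $P$. Finally I would invoke the standard characterisation of vertices: $\vx^{*}\in P$ is a vertex of $P$ if and only if the rows of $C$ corresponding to the constraints tight at $\vx^{*}$ span $\R^{|E|}$, i.e.\ the submatrix of $C$ given by the tight rows has rank exactly $|E|$. Since $\LP^{(+)}$ has exactly $|E|$ variables, this is precisely the condition defining a basic feasible solution in Definition~\ref{def:basic-feasible}, so $\vx^{*}$ is an optimal basic feasible solution, which proves the theorem.

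As for the main obstacle, there is none of real substance: this is textbook polyhedral combinatorics and the argument merely assembles standard pieces. The only points requiring any care are (a) justifying that the optimum is \emph{attained} — handled by pointedness together with boundedness below, rather than a bare appeal to ``linear programs have optimal vertex solutions,'' which already presumes attainment — and (b) the bookkeeping that a geometric vertex of $P$ coincides with the paper's algebraic notion of a basic feasible solution, which amounts to the observation that $|E|$ linearly independent tight constraints uniquely determine a point of $\R^{|E|}$.
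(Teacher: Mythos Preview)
Your proposal is correct and is the standard textbook argument; the paper does not actually give a proof of this statement but simply cites it as a well-known result from~\cite{bf-book}, so there is nothing to compare against beyond noting that your sketch is exactly the kind of argument that reference contains.
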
 

\subsection{Main Proof}
To prove Theorem~\ref{thm:opt-decomp}, we will use the following lemma.
\begin{lemma} \label{thm:basic-feasible}
For any directed graph $G = (V, E)$ and for {\em every} basic feasible solution $\mathbf{x} = \left(x_{\edge{v}{u}} \right)_{\edge{v}{u} \in E}$ to $\LP^{(+)}$ there exists a $t$ such that $G$ can be decomposed into a disjoint union of $t$ connected components (in the undirected sense) $G_i = (V_i, E_i)$ such that $|V_i| - 1 \le |E_i| \le |V_i|$ and $\mathbf{x}_{i} = \left( x_{\edge{v}{u}} \right)_{\edge{v}{u} \in E_i}$ is a basic feasible solution for LP~\eqref{eq:primal} on $G_i$ for every $i \in [t]$. Further, we have $\cup_{i=1}^{t}V(G_i) = V$ and $V(G_i) \cap V(G_j) = \emptyset$ \space $\forall i, j \in [t], i \neq j$. 
\end{lemma}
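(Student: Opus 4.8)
The plan is to read off the decomposition directly from the support of $\mathbf{x}$ together with the rank condition in Definition~\ref{def:basic-feasible}. Let $E^+ = \{e \in E : x_e > 0\}$ and $E_0 = E \setminus E^+$. Since every covering constraint~\eqref{eq:struct-covering} has right-hand side $1 > 0$, every vertex $u$ is incident to at least one edge of $E^+$, so the support graph $G^+ = (V, E^+)$ has no isolated vertices. Define $G_1 = (V_1,E_1),\dots,G_t = (V_t,E_t)$ to be the connected components (in the undirected sense) of $G^+$; then $\{V_i\}$ partitions $V$, and by maximality no edge of $E^+$ joins distinct $V_i$'s (the zero-weight edges of $E_0$ are simply dropped). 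Write $n_i = |V_i|$, $m_i = |E_i|$. The lower bound $m_i \ge n_i - 1$ is immediate from connectivity of $G_i$, so the real work is to prove $m_i \le n_i$ and that $\mathbf{x}_i := (x_e)_{e \in E_i}$ is a basic feasible solution for $\LP^{(+)}$ on $G_i$.

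For the upper bound on $m_i$, I would use that a tight non-negativity constraint is precisely $x_e = 0$ for $e \in E_0$, so every other tight constraint is a covering constraint; let $T \subseteq V$ be the set of vertices whose covering constraint is tight. Ordering the columns of the constraint matrix as $[E_0 \mid E^+]$, the submatrix of tight rows is block-triangular, $\left(\begin{smallmatrix} I & 0 \\ A & M_T \end{smallmatrix}\right)$, where the top block records the $E_0$ non-negativity rows and $M_T$ is the covering-coefficient matrix with rows indexed by $T$ and columns by $E^+$ (entry $1$ at an edge's head, $1/p$ at its tail). Since the top-left block $I$ is invertible, this matrix has rank $|E_0| + \rank(M_T)$, so the requirement that it have rank $|E|$ forces $\rank(M_T) = |E^+|$. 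Because each column of $M_T$ coming from $E_i$ is supported on rows in $V_i$, $M_T$ is block diagonal across the components; hence $\sum_i \rank(M_T^{(i)}) = |E^+| = \sum_i m_i$ while $\rank(M_T^{(i)}) \le \min(m_i, n_i)$, and term-by-term equality gives $\rank(M_T^{(i)}) = m_i \le n_i$, as desired.

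It remains to certify that $\mathbf{x}_i$ is basic feasible on $G_i$. The key point is that for $u \in V_i$, every edge of $E^+$ incident to $u$ already lies in $E_i$ (maximality of components again), so the covering constraint for $u$ in $\LP^{(+)}$ on $G_i$ has exactly the same left-hand side as in $\LP^{(+)}$ on $G$; hence it holds, and it is tight on $G_i$ iff $u \in T$. Since all $x_e > 0$ on $E_i$, no non-negativity constraint of $G_i$ is tight, so the tight constraints of $\mathbf{x}_i$ are exactly the covering constraints for $T \cap V_i$, and the corresponding coefficient submatrix is precisely $M_T^{(i)}$, which we just showed has full column rank $m_i = |E_i|$. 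That is exactly Definition~\ref{def:basic-feasible} for $G_i$, and together with $\bigcup_i V_i = V$ and the $V_i$ being pairwise disjoint this completes the proof. The step I expect to require the most care is the rank bookkeeping — isolating the tight rows, justifying the block-triangular rank identity, and especially pushing the single global condition $\rank(M_T) = |E^+|$ down to the per-component inequality $m_i \le n_i$ (the cruder global count $|E^+| = \sum_i m_i \le |V| = \sum_i n_i$ does not by itself yield this); the feasibility transfer, the $m_i \ge n_i - 1$ bound, and the partition bookkeeping are routine.
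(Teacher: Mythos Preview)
Your proposal is correct and follows essentially the same approach as the paper: both arguments delete the zero-weight edges to form the support graph, take its connected components, get the lower bound $m_i \ge n_i - 1$ from connectivity, and extract the upper bound $m_i \le n_i$ together with the per-component BFS property from the rank-$|E|$ condition via a block decomposition of the tight-constraint submatrix. Your bookkeeping is in fact slightly cleaner than the paper's, since you work directly with the rank identity $\rank = |E_0| + \rank(M_T)$ rather than counting tight constraints (the paper's phrase ``exactly $|E|$ tight constraints'' tacitly assumes non-degeneracy, whereas your argument does not).
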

Assuming the above lemma is true, we prove Theorem~\ref{thm:opt-decomp}.
\begin{proof} [Proof of Theorem~\ref{thm:opt-decomp}]
Invoking Theorem~\ref{thm:bf-optimal}, there exists an optimal solution $\vx^*$ to $\LP^{(+)}$ that is basic feasible. Now, we can use Lemma~\ref{thm:basic-feasible} on $G$ to decompose $\vx^*$ into $t$ connected components (for some $t > 0$) such that $G_i = (V_i, E_i)$ such that $|V_i| - 1 \le |E_i| \le |V_i|$ and $\mathbf{x}^*_{i} = \left( x_{\edge{v}{u}} \right)_{\edge{v}{u} \in E_i}$ is a basic feasible solution for LP~\eqref{eq:primal} on $G_i$ for every $i \in [t]$. Our goal here is to argue that the solution $\vx^*_{i}$ (defined above) is optimal for every $i \in [t]$.

The proof is by contradiction. In particular, we show that if $\vx^*_{i}$ is not optimal for some $i \in [t]$, then we would contradict the fact that $\vx$ is optimal. Assume otherwise i.e., there exists an alternative optimal basic feasible solution\footnote{Note that there always exists such a solution by Theorem~\ref{thm:bf-optimal}} $\vx'_i$ such that 
\begin{align*}
\uprod{\edge{v}{u} \in E(G_i)} L_{\edge{v}{u}}^{x'_{\edge{v}{u}}} & < \uprod{\edge{v}{u} \in E(G_i)} L_{\edge{v}{u}}^{x^*_{\edge{v}{u}}} \numberthis \label{eq:opt-bf}.
\end{align*} 
We define $\vx'_{\edge{v}{u}} = \vx^*_{\edge{v}{u}}$ for every $\edge{v}{u} \in E \setminus E(G_i)$. The new solution $\vx'$ is (basic) feasible by our assumption for $\vx'_i$ and by construction for the remaining edges. Computing the objective value of $\vx'$, we have
\begin{align*}
\uprod{\edge{v}{u} \in E} L_{\edge{v}{u}}^{x'_{\edge{v}{u}}} & = \left( \uprod{\edge{v}{u} \in E \setminus E(G_i)}  L_{\edge{v}{u}}^{x'_{\edge{v}{u}}} \right) \cdot \uprod{\edge{v}{u} \in E(G_i)}  L_{\edge{v}{u}}^{x'_{\edge{v}{u}}}  \\
& < \left( \uprod{\edge{v}{u} \in E \setminus E(G_i)}  L_{\edge{v}{u}}^{x^*_{\edge{v}{u}}} \right) \cdot \uprod{\edge{v}{u} \in E(G_i)} L_{\edge{v}{u}}^{x^*_{\edge{v}{u}}},
\end{align*}
where the inequality follows from~\eqref{eq:opt-bf}. Since $\vx'$ has a smaller objective value than $\vx^*$, this contradicts the optimality of $\vx^*$ and completes the proof.
\end{proof}
To complete the proof of Theorem~\ref{thm:opt-decomp}, we prove Lemma~\ref{thm:basic-feasible}.
\begin{proof} [Proof of Lemma~\ref{thm:basic-feasible}]
We start by recalling the constraints of $\LP^{(+)}$ on $G$:
\begin{align*}
\usum{\edge{v}{u} \in E} x_{\edge{v}{u}} + \usum{\edge{u}{w} \in E} \frac{x_{\edge{u}{w}}}{p}   \ge 1 \qquad \forall u \in V\\ 
x_{\edge{v}{u}} \ge 0 \qquad \forall \edge{v}{u} \in E \numberthis \label{eq:primal-c}. 
\end{align*}
Then, the $(|V| + |E|) \times |E|$ constraint matrix $C$ is defined as follows (in the order of the constraints). The first $|V|$ rows are indexed by vertices in $V$ and the next $|E|$ rows are indexed by variables in $\LP^{(+)}$. The columns are indexed by variables of LP i.e., $(x_{\edge{v}{u}})_{\edge{v}{u} \in E}$. For every $u \in V$, note that $C \left[u, x_{\edge{v}{u}} \right] = 1$ and $C\left[u, x_{\edge{u}{w}} \right] = \frac{1}{p}$, where $v, w \in V$ and $\edge{v}{u}, \edge{u}{w} \in E$. For every edge $\edge{v}{u} \in E$, note that $C \left[\edge{v}{u} , x_{\edge{v}{u}} \right] = 1$. All the remaining entries in $C$ are $0$.
	
Consider any basic feasible solution $\mathbf{x} = \left(x_{\edge{v}{u}} \right)_{\edge{v}{u} \in E}$ to $\LP^{(+)}$. We remove every edge $\edge{v}{u} \in E$ from $G$ that satisfies $x_{\edge{v}{u}} = 0$. Note that this process does not remove any vertex from $G$ since 
\begin{align*}
\usum{\edge{v}{u} \in E} x_{\edge{v}{u}} + \usum{\edge{u}{w} \in E} \frac{x_{\edge{u}{w}}}{p}  \ge 1
\end{align*}
for every $u \in V(G)$ (by definition of $\mathbf{x}$). 
	
Let the resulting graph after this process be $G_{1}$ and for simplicity, we assume $G_1$ is a single connected component in the undirected sense (else we consider the components individually), which implies 
\begin{align*}
|E(G_1)| \ge |V| - 1 \numberthis \label{eq:bf-conn}
\end{align*}
and we have $x_{\edge{v}{u}} > 0$ for every edge $\edge{v}{u} \in E(G_1)$. Recall that each edge in $E \setminus E(G_1)$ has $x_{\edge{v}{u}} = 0$ and as a result, the solution $\mathbf{x}$ to $\LP^{(+)}$ has $|E \setminus E(G_1)|$ tight constraints among the last $|E|$ rows (denoted by $S(E \setminus E(G_1))$). Recall that $\mathbf{x}$ is a basic feasible solution and it has exactly $|E|$ tight constraints (by Definition~\ref{def:basic-feasible}). In particular, this implies there are $|E| - |E \setminus E(G_1)| = |E(G_1)|$ more tight constraints. Further, note that these constraints are of the form 
\begin{align*}
\usum{\edge{v}{u} \in E} x_{\edge{v}{u}} + \usum{\edge{u}{w} \in E} \frac{x_{\edge{u}{w}}}{p}  = 1
\end{align*}
for some subset of vertices in $V$. We note these tight constraints by $S(G_1)$ and it follows that $|S(G_1)| = |E(G_1)|$ (as discussed above). 
	
Let $S = S(G_1) \cup S(E \setminus E(G_1))$ denote the set of all such tight constraints. Consider the matrix $C_{S}$, which is matrix $C$ projected down to constraints (i.e., rows) in $S$. By definition of the basic feasible solution, $C_{S}$ has rank $|E|$ and we claim that $C_{S}$ can be written as 
\begin{equation*}
\delimitershortfall=0pt
\left(\begin{array}{c|c}
C_{S(G_1)} & \mathbf{0} \\
\underbrace{\mathbf{0}}_{E(G_1)}  &\underbrace{C_{S(E \setminus E(G_1))}}_{E \setminus E(G_1)}
\end{array}
\right)
\end{equation*}
Here, $C_{S(G_1)}$ and $C_{S(E \setminus E(G_1))}$ denote the submatrices corresponding to tight constraints in $S(G_1)$ and $S(E \setminus E(G_1))$ respectively. The block structure above follows from the fact that the edges in $E \setminus E(G_1)$ do not belong in $E(G_1)$ (by construction) and as a result, the constraints in $S(G_1)$ do not contain edges in $E \setminus E(G_1)$. Further, by definition of constraints in $S(E \setminus E(G_1))$, $C_{S(E \setminus E(G_1))}$ forms an identity matrix, resulting in
\begin{align*}
C_{S} = 
\begin{bmatrix}
C_{S(G_1)} & \mathbf{0} \\
\mathbf{0} & I_{S(E \setminus E(G_1))} 
\end{bmatrix}.
\end{align*}
Thus, we can write 
\begin{align*}
\rank(C_{S}) & = \rank \left(C_{S(G_1)} \right) + \rank(I_{S(E \setminus E(G_1))})   \\
& = \rank \left(C_{S(G_1)} \right) + |E \setminus E(G_1)|.
\end{align*}
Since $\rank(C_{S}) = |E|$, we have
\begin{align*}
\rank \left(C_{S(G_1)} \right) = |E| - |E \setminus E(G_1)| = |E(G_1)|. 
\end{align*}
Further, $C_{S(G_1)}$ is a $|S(G_1)| \times |E(G_1)|$ matrix, which implies
\begin{align*}
|E(G_1)| = \rank \left( C_{S(G_1)} \right) \le \min(|S(G_1)|, |E(G_1)|) \le |S(G_1)| \le |V|. 
\end{align*}
Combining this with~\eqref{eq:bf-conn}, we get
\begin{align*}
|V| - 1 \le |E(G_1)| \le |V|,
\end{align*}
as required. Since $\vx_1 = \left( x_{\edge{v}{u}} \right)_{\edge{v}{u} \in E(G_1)}$ is a feasible solution to $\LP^{(+)}$ on $G_1$, where $|E(G_1)|$ of them (of the following kind) are tight:
\begin{align*}
\left( \sum_{\edge{v}{u} \in E(G_1)} x_{\edge{v}{u}} \right) + \left( \sum_{\edge{u}{w} \in E(G_1)} \frac{x_{\edge{u}{w}}}{p} \right) > 1,
\end{align*}
 which follows from the fact that $\vx$ was basic feasible to start with. Further, the matrix $C_{S(G_1)}$ has rank $E(G_1)$, as shown earlier. Thus, we have that $\vx_1$ is basic feasible.

When there are multiple components, we consider each (undirected) connected component individually after edges satisfying $x_{\edge{v}{u}} = 0$ are removed. Since the (undirected) connected components are pairwise vertex disjoint (if not, we can collapse them into a single connected component, again in the undirected sense), the connected components are maximal. Applying a similar argument as above, we would have that the matrices $C_{S(G_i)}$ (along with $I_{S(E \setminus \ucup{i \in [t]}E(G_i))}$) are all block diagonal and we have
\begin{align*}
\rank(C_{S}) & = \usum{i  \in [t]} \left( \rank ((C_{S(G_i)}) \right) + \rank \left(I_{S(E \setminus \ucup{i \in [t]}E(G_i))} \right),
\end{align*}
which in turn implies (using $\rank(C_{S}) = |E|$)
\begin{align*}
\usum{i  \in [t]} \left( \rank (C_{S(G_i)}) \right) & = |E| - | E \setminus \ucup{i \in [t]} E(G_i)| \\
& = \usum{i \in [t]}E(G_i).
\end{align*}
Since $\rank(C_{S(G_i)}) \le \min(|E(G_i)|, |V(G_i)|)$ (by definition), the above implies $\rank(C_{S(G_i)}) = |E(G_i)|$ for every $i \in [t]$. To complete the proof, we argue that the solution $\mathbf{x}_{i} = \left( x_{\edge{v}{u}} \right)_{\edge{v}{u} \in E(G_i)}$ is basic feasible for $\LP^{(+)}$ on $G_i$ for every $i \in [t]$. In particular, there are $|E(G_i)|$ variables and $|V(G_i)|$ constraints in it, of which $|E(G_i)|$ are tight. The remaining $|V(G_i)| - |E(G_i)|$ constraints are of the form
\begin{align*}
\left( \sum_{\edge{v}{u} \in E(G_i)} x_{\edge{v}{u}} \right) + \left( \sum_{\edge{u}{w} \in E(G_i)} \frac{x_{\edge{u}{w}}}{p} \right) > 1,
\end{align*}
which follows from (basic) feasibility of $\vx$. Thus, we have shown that $\mathbf{x}_{i}$ is a feasible solution to $\LP^{(+)}$ on $G_i$ and it is basic feasible since $\rank(C_{S(G_i)}) = |E(G_i)|$. This completes the proof.
\end{proof}

Next, we extend this result to the case when we are given both $\ell_{p}$ (for a fixed $p$) and $\ell_{\infty}$ constraints for the same relation as well. 

\subsection{Proof of Corollary~\ref{cor:dhigh-struct}} \label{sec:dhigh-struct}
We restate $\LP^{(+)}$ for this case and Corollary~\ref{cor:dhigh-struct} before proving the latter.
\begin{align*}
& \min \left( \usum{(v, u) \in E} x_{\edgeud{v}{u}} \log(L) +  \usum{\edge{v}{u} \in E} z_{\edge{v}{u}} \log(d) \right) \tag{$\LP^{(+)}$}\\
& \text{ s.t. } \left( \usum{e = (v, u) \ni u} x_{v, u}  \right) + \left( \usum{\edge{v}{u} \in E} z_{\edge{v}{u}} \right) \ge 1 \quad \forall u \in V \numberthis \label{eq:struct-gen-covering} \\
& x_{\edge{v}{u}}, z_{\edge{v}{u}} \ge 0 \quad \forall \edge{v}{u} \in E \numberthis \label{eq:struct-gen-primal}.
\end{align*}
\begin{corollary}
For any $G = (V, E)$, there exists an optimal solution $(\vx^*, \vz^*) = (x^*_{\edge{v}{u}}, z^*_{\edge{v}{u}})_{\edge{v}{u} \in E}$ to $\LP^{(+)}$ on $G$ that can be decomposed into a disjoint union of $t$ (some $t > 0$) connected components (in the undirected sense) $G_i = (V_i, E_i)$ with 
\begin{align*}
|V_i| - 1 \le |Q(E(G_i))| \le |V_i|, \text{ where } 
\end{align*}
%\yell{I am going to try redefining $Q(E(G_i))$ here and the proof should go through.
\begin{align*}
Q(E(G_i))  = \{x_{\edgeud{v}{u}}: x_{\edgeud{v}{u}} \neq 0 , \edgeud{v}{u} \in E(G_i)\} \cup \{z_{\edge{v}{u}}: z_{\edge{v}{u}} \neq 0, \edge{v}{u} \in E(G_i) \}.
\end{align*}
%}
%\begin{align*}
%Q(E(G_i)) = \{x_{\edgeud{v}{u}} \neq 0, z_{\edge{v}{u}} \neq 0: \edgeud{v}{u} \in E(G_i), \}
%\end{align*}
and $(\mathbf{x}^*_{i}, \mathbf{z}^*_{i}) = \left( x_{\edgeud{v}{u}}, z_{\edge{v}{u}} \right)_{\edgeud{v}{u} \in E(G_i)}$ is an optimal basic feasible solution for $\LP^{(+)}$ on $G_i$ for every $i \in [t]$. Further, we have $\cup_{i=1}^{t}V(G_i) = V$ and $V(G_i) \cap V(G_j) = \emptyset$ \space $\forall i, j \in [t], i \neq j$. The following is true:
\begin{align*}
\TJ & = \times_{i \in [t]} \TJ(G_i).
\end{align*}
\end{corollary}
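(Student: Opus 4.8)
The plan is to mimic the proof of Theorem~\ref{thm:opt-decomp} almost verbatim, replacing the single family of variables $(x_{\edge{v}{u}})_{\edge{v}{u} \in E}$ by the two families $(x_{\edge{v}{u}}, z_{\edge{v}{u}})_{\edge{v}{u} \in E}$ and tracking the set of ``active'' (nonzero) variables $Q(E)$ in place of the edge set. First I would invoke the analog of Theorem~\ref{thm:bf-optimal}: since $\LP^{(+)}$ in this setting is again a linear program in standard form, it admits an optimal solution $(\vx^*, \vz^*)$ that is basic feasible, i.e.\ it satisfies at least $2|E|$ of its $|V| + 2|E|$ constraints with equality and the submatrix of tight rows of the $(|V| + 2|E|) \times 2|E|$ constraint matrix $C$ has rank exactly $2|E|$.

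Next I would prove the structural lemma that is the exact counterpart of Lemma~\ref{thm:basic-feasible}. Starting from any basic feasible $(\vx, \vz)$, delete every edge $\edge{v}{u}$ for which \emph{both} $x_{\edge{v}{u}} = 0$ and $z_{\edge{v}{u}} = 0$. No vertex of $V$ is removed, since the covering constraint~\eqref{eq:struct-gen-covering} at each $u$ forces some incident variable to be positive. Let $G_1, \dots, G_t$ be the (pairwise vertex-disjoint, hence maximal) undirected connected components of the surviving graph; on each $G_i$ every surviving edge carries at least one nonzero variable, so $|Q(E(G_i))| \ge |E(G_i)| \ge |V_i| - 1$. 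For the upper bound, note that the zero variables contribute $2|E| - |Q(E)|$ tight ``$\ge 0$'' constraints, whose columns form an identity block in $C$ restricted to the tight rows; since this restricted matrix has rank $2|E|$, the remaining rank $|Q(E)|$ must be supplied by the covering constraints restricted to the columns of the nonzero variables, and after grouping by component this block is block-diagonal across $G_1, \dots, G_t$. Counting rank component-by-component exactly as in the proof of Lemma~\ref{thm:basic-feasible} yields $\rank(C_{S(G_i)}) = |Q(E(G_i))|$, and since $C_{S(G_i)}$ has at most $|V_i|$ rows we get $|Q(E(G_i))| \le |V_i|$; the same computation shows each restricted solution $(\vx_i, \vz_i)$ is basic feasible for $\LP^{(+)}$ on $G_i$.

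With this lemma in hand, optimality of each $(\vx^*_i, \vz^*_i)$ follows by the same exchange argument used to deduce Theorem~\ref{thm:opt-decomp} from Lemma~\ref{thm:basic-feasible}: if some component admitted a strictly cheaper basic feasible solution, splicing it into $(\vx^*, \vz^*)$ would contradict global optimality, because the objective $\usum{\edge{v}{u} \in E}\left(x_{\edge{v}{u}} \log L + z_{\edge{v}{u}} \log d\right)$ decomposes additively over components. Finally, the product formula $\TJ = \times_{i \in [t]} \TJ(G_i)$ is purely combinatorial: the sets $V(G_i)$ partition $V$, and no relation of $G$ connects a vertex of $G_i$ to a vertex of $G_j$ for $i \neq j$ (such an edge would have merged the two components), so a tuple lies in $\TJ$ iff its projection onto each $V(G_i)$ lies in $\TJ(G_i)$.

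The main obstacle I expect is purely bookkeeping in the rank argument: with two variables per edge the constraint matrix grows to $(|V| + 2|E|) \times 2|E|$, and one must carefully verify (i) that deleting only the doubly-zero edges is the correct notion, so that $|Q(E(G_i))| \ge |E(G_i)|$ (and hence $|E(G_i)| \le |V_i|$, i.e.\ at most one cycle per component) still holds, and (ii) that the identity block coming from the zero variables — which may now be only ``half'' of an edge's variable pair — still separates cleanly from the covering-constraint block so that the per-component rank computation goes through. Everything else is a direct transcription of the arguments already given for Theorem~\ref{thm:opt-decomp}.
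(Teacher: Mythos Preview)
Your proposal is correct and follows essentially the same route as the paper, which also reduces to an analog of Lemma~\ref{thm:basic-feasible} (there stated as Lemma~\ref{thm:basic-feasible-gen}): delete the doubly-zero edges, take undirected components, and do the identical rank bookkeeping on the $(|V|+2|E|)\times 2|E|$ constraint matrix---splitting the tight rows into an identity block coming from the zero-variable constraints and a covering-constraint block with at most $|V|$ rows---followed by the same exchange argument for per-component optimality. One small caveat: your justification of $\TJ = \times_i \TJ(G_i)$ (``such an edge would have merged the two components'') is not quite right, since a \emph{removed} doubly-zero edge of $G$ can cross components; the paper does not prove this line either, and only the containment $\TJ \subseteq \times_i \TJ(G_i)$ is actually needed downstream.
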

Before proving this corollary, we first state an extremal property on optimal basic feasible solutions for each cyclic $G_i: i \in [t]$ (from Corollary~\ref{cor:dhigh-struct}). 
\begin{property} \label{assump:dhigh}
Every optimal basic feasible solution $$(\vx^*_i, \vz^*_i) = (x^*_{\edgeud{v}{u}}, z^*_{\edge{v}{u}})_{\edgeud{v}{u} \in E(G_i)}$$ to $\LP^{(+)}$ on $G_i: i \in [t]$ is such that for every cyclic $G_i$, exactly one of $x^*_{\edgeud{v}{u}}$ (or) $z^*_{\edge{v}{u}}$ is non-zero for every (undirected) edge $\edge{v}{u} \in E(G_i)$.
\end{property}
\begin{proof}[Proof of Property~\ref{assump:dhigh}]
Among all optimal basic feasible solutions (OBFS) $(\vx^*, \vz^*)$ to $\LP^{(+)}$ on $G$, we are ruling out every OBFS, where 
\begin{align*}
\exists \edgeud{v}{u}, \edge{v}{u} \in E(G_i) \text{ s.t. } x^*_{\edgeud{v}{u}} = 0 \text { and } z^*_{\edge{v}{u}} = 0.
\end{align*}
For the remaining OBFS, we have
\begin{align*}
\forall \edgeud{v}{u}, \edge{v}{u} \in E(G_i), x^*_{\edgeud{v}{u}} > 0 \text { or } z^*_{\edge{v}{u}} > 0.
\end{align*}
Note that there always exists a OBFS of this kind using Corollary~\ref{cor:dhigh-struct} and as a result, the property can always be satisfied.
%\ar{I think the first part of the line above is incorrect: it should say it rules out {\em both} being non-zero? Also last part seems to be the argument for why the ``Assumption" holds so fold this into the argument/proof that I mentioned in my last comment.}
\end{proof}

We start by stating the corresponding version of Lemma~\ref{thm:basic-feasible} for this case.
\begin{lemma} \label{thm:basic-feasible-gen}
For any $G = (V, E)$ and for {\em every} basic feasible solution $(\mathbf{x}, \mathbf{z}) = \left(x_{\edgeud{v}{u}}, z_{\edge{v}{u}} \right)_{\edge{v}{u} \in E}$ to $LP^{(+)}$ on $G$, there exists a $t$ such that $G$ can be decomposed into a disjoint union of $t$ connected components (in the undirected sense) $G_i = (V_i, E_i)$ such that $|V_i| - 1 \le |E_i| \le |V_i|$ and $(\vx_{i}, \vz_{i}) = \left( x_{\edgeud{v}{u}}, z_{\edge{v}{u}}, z_{\edge{u}{v}} \right)_{\edge{v}{u} \in E_i}$ is a basic feasible solution to $LP^{(+)}$ on $G_i$ for every $i \in [t]$. Further, we have $\cup_{i=1}^{t}V(G_i) = V$ and $V(G_i) \cap V(G_j) = \emptyset$ \space $\forall i, j \in [t], i \neq j$. 
\end{lemma}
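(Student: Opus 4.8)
The plan is to mirror the proof of Lemma~\ref{thm:basic-feasible} almost verbatim, the only substantive change being that each undirected edge of $G$ now carries several variables (the $\ell_p$-variable $x_{\edgeud{v}{u}}$ together with one or two $\ell_\infty$-variables $z_{\edge{v}{u}}, z_{\edge{u}{v}}$) rather than a single one. Consequently the identity ``\#edges $=$ \#nonzero variables on a surviving component'' from the single-variable case weakens to the inequality ``\#edges $\le$ \#nonzero variables'', which is exactly what is needed for the upper bound $|E_i|\le |V_i|$.

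First I would write out the $(|V|+N)\times N$ constraint matrix $C$ of $\LP^{(+)}$, where $N$ is the total number of variables: the first $|V|$ rows are the covering constraints~\eqref{eq:struct-gen-covering}, the remaining $N$ rows are the non-negativity constraints~\eqref{eq:struct-gen-primal}. Given a basic feasible solution $(\mathbf{x},\mathbf{z})$, delete from $G$ every undirected edge all of whose associated variables are $0$. As in the original proof, no vertex is deleted: the covering constraint at $u$ has right-hand side $1$ and every variable on an edge incident to $u$ appears in it, so at least one such variable is nonzero. Hence the surviving graph decomposes into maximal, pairwise vertex-disjoint (undirected) connected components $G_1,\dots,G_t$ with $\cup_i V(G_i)=V$, and $|E_i|\ge |V_i|-1$ for each $i$ by connectivity. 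This gives the lower bound; the work is the upper bound $|E_i|\le |V_i|$ and basic feasibility on each $G_i$.

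Next comes the rank/counting step. Let $k$ be the number of nonzero variables, $Z$ the set of zero variables ($|Z|=N-k$), and $S$ the set of tight constraints. Every member of $S$ is either a covering constraint or the non-negativity constraint of a variable in $Z$; the latter contribute an $I_{|Z|}$ block supported on the $Z$-columns, so eliminating those rows and columns yields $\rank(C_S)=|Z|+\rank(C')$, where $C'$ is the submatrix of tight covering rows restricted to the nonzero-variable columns. Basic feasibility gives $\rank(C_S)=N$, so $\rank(C')=k$. Since a covering constraint at $u\in V_i$ only involves variables on edges of $G_i$, $C'$ is block-diagonal with blocks $C'_i$; writing $k_i$ for the number of nonzero variables in $G_i$ we get $\sum_i\rank(C'_i)=\sum_i k_i=k$ while $\rank(C'_i)\le k_i$, forcing $\rank(C'_i)=k_i$. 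On the other hand $C'_i$ has at most $|V_i|$ rows, so $k_i=\rank(C'_i)\le|V_i|$. Finally each surviving edge of $G_i$ carries at least one nonzero variable and distinct edges own disjoint variable sets, so $|E_i|\le k_i\le|V_i|$, giving $|V_i|-1\le|E_i|\le|V_i|$. Basic feasibility of $(\vx_i,\vz_i)$ for $\LP^{(+)}$ on $G_i$ then follows as in the original: feasibility is inherited by restriction, the covering constraint at any $u\in V_i$ in the sub-LP coincides with the global one (maximality of $G_i$), and repeating the $I$-elimination shows the tight-constraint submatrix of the sub-LP has rank $(N_i-k_i)+\rank(C'_i)=N_i$, the number of its variables.

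I expect the main obstacle to be the bookkeeping around multiple variables per edge: making the inequality $|E_i|\le k_i$ airtight (each surviving edge must be charged to a distinct nonzero column), and checking that the $I_{|Z|}$-elimination and the block-diagonalization of $C'$ behave exactly as in the single-variable case even when the $\ell_\infty$-variables come in pairs $z_{\edge{v}{u}},z_{\edge{u}{v}}$ on one edge. Everything else is a routine transcription of the argument for Lemma~\ref{thm:basic-feasible}.
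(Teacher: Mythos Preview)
Your proposal is correct and follows essentially the same approach as the paper's proof: delete edges whose attached variables are all zero, use connectivity for the lower bound $|E_i|\ge|V_i|-1$, and use the rank/block-diagonal decomposition of the tight-constraint submatrix (splitting off the identity block coming from the zero-variable non-negativity rows) to get $k_i\le|V_i|$, then conclude $|E_i|\le k_i$ since each surviving edge owns at least one nonzero variable. Your $k_i$ is exactly the paper's $|Q(E(G_i))|$, and your abstract treatment with $N$ total variables cleanly subsumes the paper's case-by-case bookkeeping with $2|E|$ variables.
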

Assuming the above lemma is true, we can do the same proof as the Proof of Theorem~\ref{thm:opt-decomp} to complete the proof (and is omitted). We would like to mention here that the proof of Lemma~\ref{thm:basic-feasible-gen} is very similar to the proof of Lemma~\ref{thm:basic-feasible}.
\begin{proof} [Proof of Lemma~\ref{thm:basic-feasible-gen}]
We start by recalling the constraints of $\LP^{(+)}$ on $G$:
\begin{align*}
& \left( \usum{e = (v, u) \ni u} x_{v, u}  \right) + \left( \usum{\edge{v}{u} \in E} z_{\edge{v}{u}} \right) \ge 1 \quad \forall u \in V  \\
& x_{\edge{v}{u}}, z_{\edge{v}{u}}, z_{\edge{u}{v}} \ge 0 \quad \forall \edge{v}{u} \in E 
\end{align*}
We define the $(|V| + 2 \cdot |E|) \times (2 \cdot |E|)$ constraint matrix $C$ as follows (in the order of the constraints). The first $|V|$ rows are indexed by vertices in $V$ and the next $2 \cdot |E|$ rows are indexed by variables in lexicographic order in the LP. The columns are lexicographically indexed by variables of the LP i.e., $(x_{\edgeud{v}{u}}, z_{\edge{v}{u}})_{\edgeud{v}{u} \in E}$, consistent with the order of $2 \cdot |E|$ rows discussed above. For every $u \in V$, note that $C \left[u, x_{\edgeud{v}{u}} \right] = C \left[u, z_{\edge{v}{u}}\right] = 1$ and $C\left[u, x_{\edgeud{u}{w}} \right] = 1$, where $v, w \in V$ and $\edgeud{v}{u}, \edgeud{u}{w} \in E$ \footnote{Note here that for a given $u$, we have a non-zero entry in only $C[u, z_{\edge{v}{u}}] = 1$ (and not $C[v, z_{\edge{v}{u}}]$).}. For every edge $\edgeud{v}{u} \in E$, note that $C \left[\edgeud{v}{u} , x_{\edgeud{v}{u}} \right] = 1, C \left[\edge{v}{u} , z_{\edge{v}{u}} \right] = 1$. All the remaining entries in $C$ are $0$.

Consider any basic feasible solution $(\mathbf{x}, \vz) = \left(x_{\edgeud{v}{u}}, z_{\edge{v}{u}} \right)_{\edgeud{v}{u} \in E}$ to $\LP^{(+)}$. We remove every edge $\edgeud{v}{u} \in E$ from $G$ such that $x_{\edgeud{v}{u}} = z_{\edge{v}{u}} = 0$. Note that this process does not remove any vertex from $G$ since 
\begin{align*}
\left( \usum{e = (v, u) \ni u} x_{v, u}  \right) + \left( \usum{\edge{v}{u} \in E} z_{\edge{v}{u}} \right) \ge 1 
\end{align*}
for every $u \in V$ (by definition of $\mathbf{x}$ and $\mathbf{z}$). 

Let the resulting graph after this process be $G_{1}$ and for simplicity, we assume $G_1$ is a single connected component in the undirected sense (else we consider the components individually), which implies 
\begin{align*}
|E(G_1)| \ge |V| - 1 \numberthis \label{eq:bf-conn-2}
\end{align*}
and we have either $x_{\edgeud{v}{u}} > 0$ or $z_{\edge{v}{u}} > 0$ for every edge $\edgeud{v}{u} \in E(G_1)$. Recall that each edge in $E \setminus E(G_1)$ has $x_{\edgeud{v}{u}} = z_{\edge{v}{u}} = 0$ and as a result, the solution $(\mathbf{x}, \vz)$ to $\LP^{(+)}$ has $2 \cdot |E \setminus E(G_1)|$ tight constraints for each edge not in $E(G_1)$ and we denote this set of tight constraints by $S_{\vx, \vz}(E \setminus E(G_1))$. Since $(\mathbf{x}, \vz)$ is a basic feasible solution and it has exactly $2 \cdot |E|$ tight constraints. In particular, this implies there are exactly
\begin{align*}
2 \cdot |E| - 2 \cdot |E \setminus E(G_1)| = 2 \cdot |E(G_1)|
\end{align*}
more tight constraints. We argue that $\rank \left( C_{S_{\vx, \vz}(G_1)}\right)  \le 2 \cdot |E(G_1)|$, where the submatrix $C_{S_{\vx, \vz}(G_1)}$ corresponds to the tight constraints in $E(G_1)$

We note that the tight constraints in $C_{S_{\vx, \vz}(G_1)}$ can be of two types: the first one is of the form
\begin{align*}
\left( \usum{e = (v, u) \ni u} x_{v, u}  \right) + \left( \usum{\edge{v}{u} \in E} z_{\edge{v}{u}} \right) = 1
\end{align*}
for some subset of vertices in $V$. We denote these tight constraints by $S^{1}_{\vx, \vz}(G_1)$. The second type is of the form $x_{\edgeud{v}{u}} = 0$ or $z_{\edge{v}{u}} = 0$ (both cannot be true simultaneously since we would have removed the edge when that is the case), which we denote by $S^{2}_{\vx, \vz}(G_1)$. Note that by definition, for each edge $\edgeud{v}{u} \in E(G_1)$, we have either $x_{\edgeud{v}{u}} > 0$ (or) $z_{\edge{v}{u}} > 0$. It follows that $|S^{1}_{\vx, \vz}(G_1)|+ |S^{2}_{\vx, \vz}(G_1)| = 2 \cdot |E(G_1)|$ (as discussed above).

Consider the matrix $C_{S_{\vx, \vz}(G_1)}$, which is $C$ projected down to constraints in $S^{1}_{\vx, \vz}(G_1) \cup S^{2}_{\vx, \vz}(G_1)$. By definition of the basic feasible solution, $C_{S_{\vx, \vz}(G_1)}$ can be written as 
\begin{equation*}
\left(\begin{array}{cc}
C_{S^{1}_{\vx, \vz}(G_1)} \\
C_{S^{2}_{\vx, \vz}(G_1)}
\end{array}
\right).
\end{equation*}
Note that this implies
\begin{align*}
\rank \left(C_{S_{\vx, \vz}(G_1)} \right) \le \rank \left(C_{S^{1}_{\vx, \vz}(G_1)} \right) + 
\rank  \left(C_{S^{2}_{\vx, \vz}(G_1)} \right) \numberthis \label{eq:rank-s1-s2} \\
\end{align*}
Recall that $S^{2}_{\vx, \vz}(G_1)$ contains only constraints of the form $x_{\edgeud{v}{u}} = 0$ (or) $z_{\edge{v}{u}} = 0$ for a subset of edges $\edgeud{v}{u} \in E$ and as a result, $C_{S^{2}_{\vx, \vz}(G_1)}$ forms an identity matrix. 
Since $\rank \left(C_{S_{\vx, \vz}(G_1)} \right) = 2 \cdot |E(G_1)|$ (since $(\vx,\vz)$ is a basic feasible solution) and 
%\yell{
\begin{align*}
\rank  \left(C_{S^{2}_{\vx, \vz}(G_1)} \right) = |S^{2}_{\vx, \vz}(G_1)|  = 2 \cdot |E(G_1)| - |Q(E(G_1))|,
\end{align*}
which follows from our definition of $|Q(E(G_1))|$. More specifically there are $2 \cdot |E(G_1)|$ variables $x_{\edgeud{v}{u}}$  $z_{\edge{v}{u}}$ over all $\edgeud{u}{v}\in E(G_1)$ and the set of non-zero ones (which cannot result in tight constraints in $S^{2}_{\vx, \vz}(G_1)$) are exactly captured by $Q(E(G_1))$. Thus,~\eqref{eq:rank-s1-s2} implies that
%}
%$\rank  \left(C_{S^{2}_{\vx, \vz}(G_1)} \right) = |S^{2}_{\vx, \vz}(G_1)| = 2 \cdot |E(G_1)| - |Q(E(G_1))|$,
%\ar{it's probably late at night but I'm missing why the last equality is true? To me it seems like it would half of that? Here's a (most likely incorrect argument for the latter). There are $2|E(G_1)|$ variables $x_{u,v}$ and $z_{\edge{v}{u}}$ left over for edges $(u,v)\in E(G_1)$. Out of which $|Q(E(G_1))$ both $x_{u,v}$ and $z_{\edge{v}{u}}$ non-zero. This means there are $2 \cdot |E(G_1)| - |Q(E(G_1))|$ many $x_{u,v}$ and $z_{\edge{v}{u}}$ left over of which {\em exactly} one is zero, so the number halves.} \yell{we have
\begin{align*}
\rank \left(C_{S^{1}_{\vx, \vz}(G_1)} \right) & \ge 2 \cdot |E(G_1)| - \left(2 \cdot |E(G_1)| - |Q(E(G_1))|\right) \\
& = |Q(E(G_1))|.
\end{align*}
Since $C_{S^{1}_{\vx, \vz}(G_1)}$ is at most a $|V| \times |S^{1}_{\vx, \vz}(G_1)|$ matrix, we have
\begin{align*}
|Q(E(G_1))| \le \rank \left(C_{S^{1}_{\vx, \vz}(G_1)} \right) \le \min \left(|V|,  |S^{1}_{\vx, \vz}(G_1)| \right) \le |V| \numberthis \label{eq:rank-s1}.
\end{align*}
%\yell{
Finally, we have
\begin{align*}
|Q(E(G_1))| \ge |E(G_1)| \ge |V| - 1 \numberthis \label{eq:rank-s1-lb},
\end{align*}
where the first inequality follows from the definition of $Q(E(G_1))$ as a union of sets $\{x_{\edgeud{v}{u}}: x_{\edgeud{v}{u}} \neq 0 , \edgeud{v}{u} \in E(G_i)\}$ and $\{z_{\edge{v}{u}}: z_{\edge{v}{u}} \neq 0, \edge{v}{u} \in E(G_i) \}$, which in turn implies that every edge $\edgeud{u}{v}\in E(G_1)$ gets counted at least once in $Q(E(G_1))$ (either as $x_{\edgeud{u}{v}}\ne 0$ or $z_{\edge{u}{v}}\ne 0$).
%$Q(E(G_1)) \supseteq E(G_1)$\ar{This statement technically is wrong. $Q(E(G_1))$ is a set of variables and $E(G_1)$ is a set of edge. But I do not see an argument for why $|Q(E(G_1))| \ge |E(G_1)|$.} 
%}
In particular, we can now combine~\eqref{eq:rank-s1-lb} with~\eqref{eq:rank-s1} to show that
\begin{align*}
|V| - 1 \le |Q(E(G_1))| \le |V|.
\end{align*}
Note that~\eqref{eq:rank-s1-lb} implies there exists at most one edge $\edge{v}{u} \in E_1$ such that {\em both} $x_{\edgeud{u}{v}}\ne 0$ and $z_{\edge{u}{v}}\ne 0$. 

%\yell{
When there are multiple components $G_i: i \in [t]$, we consider each (undirected) connected component individually after edges $\edge{v}{u} \in E$ of the form $x_{\edgeud{v}{u}} = z_{\edge{v}{u}} = 0$ are removed from $G$. Note that the (undirected) connected components are pairwise vertex-disjoint (if not, we can collapse them into one connected component in the undirected sense) and as a result, the connected components we obtained are maximal. Applying a similar argument as above, we would have that the matrices $C_{S^{1}_{\vx, \vz}(G_i)}$ and $C_{S^{2}_{\vx, \vz}(G_i)}$ for every $i \in [t]$ (along with $I_{S(E \setminus \ucup{i \in [t]}E(G_i))}$) are all block diagonal and we have
\begin{align*}
\rank(C_{S}) & = \usum{i  \in [t]} \left( \rank (C_{S^{1}_{\vx, \vz}(G_i)}) + \rank (C_{S^{2}_{\vx, \vz}(G_i)}) \right) + \rank \left(I_{S(E \setminus \ucup{i \in [t]}E(G_i))} \right),
\end{align*}
which in turn implies (using $\rank(C_{S}) = 2|E|$ and $\rank (C_{S^{2}_{\vx, \vz}(G_i)}) = 2 |E(G_i)| - |Q(E(G_i))|$)
\begin{align*}
\usum{i  \in [t]} \left(\rank (C_{S^{1}_{\vx, \vz}(G_i)}) \right) & = 2|E| - | E \setminus \ucup{i \in [t]} E(G_i)| - \usum{i \in [t]} \left(  2 |E(G_i)| - |Q(E(G_i))|  \right) \\
& = \usum{i \in [t]} |Q(E(G_i))|.
\end{align*}
Since $\rank (C_{S^{1}_{\vx, \vz}(G_i)}) \le \min(|Q(E(G_i))|, |V(G_i)|)$ (by definition), the above implies $\rank(C_{S^{1}_{\vx, \vz}(G_i)}) = |Q(E(G_i))|$ for every $i \in [t]$.
%} 

To complete the proof, we argue that the solution $(\mathbf{x}_{i} \vz_{i}) = \left( x_{\edgeud{v}{u}}, z_{\edge{v}{u}} \right)_{\edge{v}{u} \in E(G_i)}$ is basic feasible for $\LP^{(+)}$ on $G_i$ for every $i \in [t]$.  Note that there are $2 \cdot |E(G_i)|$ variables and $|V(G_i)| + 2 \cdot |E(G_i)|$ constraints in it of which $|Q(E(G_i))|$ are tight i.e., we have either
\begin{align*}
\left( \usum{e \ni u: e = (v, u) \in E} x_{v, u}  \right) + \left( \usum{\edge{v}{u} \in E(G_i)} z_{\edge{v}{u}} \right) > 1 \text{ for some } u \in V \\
\text { (or) }x_{\edgeud{v}{u}} > 0 \\
\text { (or) } z_{\edge{v}{u}} > 0,
\end{align*}
which follows from basic feasibility of $(\vx, \vz)$. Thus, we have shown that $(\mathbf{x}_{i}, \vz_{i})$ is a feasible solution to $\LP^{(+)}$ on $G_i$ and it is basic feasible since $C_{S_{\vx, \vz}(G_i)}$ has rank exactly $2 \cdot |E(G_1)|$. This completes the proof.
\end{proof}

\section{Proof of Theorem~\ref{claim:gen-acyclic}} \label{sec:gen-acyclic}
%\yell{Needs more work..}
%Assuming the theorem is true (the proof is in Appendix~\ref{sec:gen-acyclic}), we run Algorithm~\ref{algo:gen-ub} on each degree configuration $\bd = (d_{\edge{v}{u}})_{d_{\edge{v}{u}} \le \min( L_{\edgeinfty{v}{u}}, L_{\edge{v}{u}}) , \edge{v}{u} \in E}$ to compute $\TJ(\bd)$. Then, we compute $\TJ$ as the union of all such $\TJ(\bd)$s (we present this formally in Appendix~\ref{sec:j-d-union}).

\begin{proof}[Proof of Theorem~\ref{claim:gen-acyclic}]
We first prove the following result using Algorithm~\ref{algo:gen-ub} (without assuming anything about its runtime):
\begin{equation} \label{eq:J-d-bound}
\abs{\TJ(\bd)} = |J_{n}|\le \B(\bd, G).
	\end{equation}
	In particular, we will argue that for each $i\in [n]$ and $\vt\in J_{i-1}$, we have at the end of the iteration $i$:
	\begin{align*}
	\abs{P_i(\vt)}\le \cD_{u_i}(\bd).
	\end{align*}
	Note that the above combined with the defnition of $P_i(\vt)$ implies
	\begin{align*}
	|J_{i}| \le \uprod{(u_1, \dots u_i)} \cD_{u_i}(\bd).
	\end{align*}
	When $i = n$, we have
	\begin{align*}
	|J_{n}| \le \uprod{u \in V} \cD_{u}(\bard).
	\end{align*}
	Recall that the RHS is exactly $\B(\bd, G$).
	
	To prove~\eqref{eq:J-d-bound}, it suffices (by definition of $\cD_{u_i}(\bd)$) to prove for every $i\in [n]$ and $\vt\in \TJ_{i-1}$:
	\begin{equation} \label{eq:L-in-bound}
	\abs{P_{\text{in}}(i,\vt)} \le \umin{\edge{v}{u_i} \in E} d_{\edge{v}{u_i}}
	\end{equation}
	and
	\begin{equation} \label{eq:L-out-bound}
	\abs{P_{\text{out}}(i)} \le \umin{\edge{u_i}{w} \in E} \frac{2^{p} \cdot L_{{\edged{u_i}{w}}}^{p}}{d_{\edge{u_i}{w}}^{p}}.
	\end{equation}
	Indeed,~\eqref{eq:L-in-bound} just follows from the definition of the degree configuration $\bd$. Finally,~\eqref{eq:L-out-bound} follows by applying Lemma~\ref{lemma:node-bound} on each edge $\edge{u_i}{w} \in E$.
	
	Next, we argue the correctness of our algorithm by showing that $\TJ(\bd) = J_{n} = \ujoin{\edge{v}{u} \in E} R^{d_{\edge{v}{u}}}_{\edge{v}{u}}$. Let $J_{n} \subset\ujoin{\edge{v}{u} \in E} R^{d_{\edge{v}{u}}}_{\edge{v}{u}}$. Note that this implies there exists a tuple $\vt \in \ujoin{\edge{v}{u} \in E} R^{d_{\edge{v}{u}}}_{\edge{v}{u}}$ such that $\vt \notin J_{n}$. Based on Algorithm~\ref{algo:gen-ub}, this would imply there exists at least one attribute $u_i: i \in [n]$ such that $\pi_{u_i}(\vt) \notin P_{i}(\vt_{u_1, \dots, u_{i -1}})$, which implies $\vt \notin \ujoin{\edge{v}{u} \in E} R^{d_{\edge{v}{u}}}_{\edge{v}{u}}$, contradicting our earlier assumption. To complete the argument, we argue the reverse as well i.e., $J_{n} \supset \ujoin{\edge{v}{u} \in E} R^{d_{\edge{v}{u}}}_{\edge{v}{u}}$. Assume there exists a tuple $\vt \in J_{n}$ such that $\vt \notin \ujoin{\edge{v}{u} \in E} R^{d_{\edge{v}{u}}}_{\edge{v}{u}}$. Note that there exists at least one attribute $u_i: i \in [n]$ and a relation $R_{\edge{u_i}{\cdot}}^{d_{\edge{u_i}{\cdot}}}/R_{\edge{\cdot}{u_i}}^{d_{\edge{\cdot}{u_i}}}$ for some $\edge{u_i}{\cdot}, \edge{\cdot}{u_i} \in E$ such that $\pi_{u_i}(\vt) \not \in \pi_{u_i} \left(R_{\edge{u_i}{\cdot}}^{d_{\edge{u_i}{\cdot}}} \right), \pi_{u_i} \left(R_{\edge{\cdot}{u_i}}^{d_{\edge{\cdot}{u_i}}} \right)$. This results in a contradiction since $P_{i}(\vt'_{u_1, \dots, u_{i - 1}})$ (by definition) for any tuple $\vt' \in J_{i - 1}$ has only values from $\pi_{u_i} \left(R_{\edge{u_i}{\cdot}}^{d_{\edge{u_i}{\cdot}}} \right)$ and $\pi_{u_i} \left(R_{\edge{\cdot}{u_i}}^{d_{\edge{\cdot}{u_i}}} \right)$. Thus, we have $\TJ(\bd) = J_{n} = \ujoin{\edge{v}{u} \in E} R^{d_{\edge{v}{u}}}_{\edge{v}{u}}$, as required.
	
	Finally, we argue that Algorithm~\ref{algo:gen-ub} runs in time $O(\B(\bd, G))$. In order to do this, we argue that for every $i \in [n]$ and $\vt \in J_{i - 1}$, $\abs{P_i(\vt)}$ (defined above) can computed in time $O(|E| \cdot \abs{P_i(\vt)})$ and the final runtime follows as a result. Recall that each relation is stored in a two level B-tree-like index structure consistent with the topological ordering $u_1, \dots, u_n$. In particular, for any relation $R_{\edge{u_i}{u_j}}$ (with $j > i$), the first level of its B-tree is indexed by $(u_i)$ (i.e., all values ) and the second level is indexed by $(u_j, \val_{u_j})$, where $\val_{u_j} \in \Dom(u_j)$. As a result, for a fixed $i \in [n]$ and $\vt \in J_{i - 1}$, note that we can access the set
	\begin{align*}
	S_{(\vt[v] , u_i)} = \inset{y: (\vt[v],y)\in R_{\edge{v}{u_i}}^{d_{\edge{v}{u_i}}}}
	\end{align*}
	directly using the B-tree index on $(v, \vt[v])$ and similarly, we can access the set
	\begin{align*}
	S_{\edge{u_i}{w}} = \pi_{u_i}\left(R_{\edge{u_i}{w}}^{d_{\edge{u_i}{w}}}\right)
	\end{align*}
	directly using the B-tree index on $(u_i)$. In particular, we are now computing a set intersection on sorted sublists 
	\begin{align*}
	(S_{(\vt[v] , u_i)}, S_{\edge{u_i}{w}}),
	\end{align*}
	whose size
	\begin{align*}
	\left | \left(\cap_{(\vt[v] , u_i): \edge{v}{u_ i} \in E} S_{(\vt[v] , u_i)} \right) \cap \left( \cap_{\edge{u_i}{w} \in E }  S_{(u_i, w)}\right)  \right|  \numberthis \label{eq:set-intersection}
	\end{align*}
	is exactly $|P_{i}(\vt)|$. Since~\eqref{eq:set-intersection} can be computed in time $|E| \cdot |P_{i}(\vt)|$ (where recall that we are working in the RAM model), this completes the proof.
\end{proof}
\section{Missing Details in Section~\ref{sec:g-acyclic}} \label{app:g-acyclic}

\subsection{Proof of Lemma~\ref{thm:acyclic-bound-lp}} \label{app:acyclic-bound-lp}
\begin{proof} [Proof of Lemma~\ref{thm:acyclic-bound-lp}]
Invoking Theorem~\ref{claim:gen-acyclic}, we have
\begin{align*}
\B(\bard, G) & = \uprod{u \in V} \min \left(  (d_{\edge{v}{u}})_{\edge{v}{u} \in E},  \left( \frac{2^{p} \cdot L_{{\edged{u}{w}}}^{p}}{d_{\edge{u}{w}}^{p}} \right)_{\edge{u}{w} \in E}   \right) \numberthis \label{eq:gen-holder-1} \\ 
& \le 2^{p |V|} \uprod{u \in V} \min \left(  (d_{\edge{v}{u}})_{\edge{v}{u} \in E},  \left( \frac{ L_{{\edged{u}{w}}}^{p}}{d_{\edge{u}{w}}^{p}} \right)_{\edge{u}{w} \in E}   \right) \\
& \le 2^{p |V|} \cdot \uprod{u \in V} \left( \left( \uprod{\edge{v}{u} \in E} d_{\edge{v}{u}}^{x_{\edge{v}{u}} + z_{\edge{v}{u}}} \right) \cdot \left(\uprod{\edge{u}{w} \in E} \left( \frac{L_{{\edged{u}{w}}}^{p}}{d_{\edge{u}{w}}^{p}} \right)^{\frac{x_{\edge{u}{w}}}{p}}  \right) \right) \numberthis \label{eq:gen-holder-2} \\ 
& =  2^{p |V|} \cdot \uprod{u \in V} \left( \uprod{\edge{v}{u} \in E}  d_{\edge{v}{u}}^{x_{\edge{v}{u}}} \cdot d_{\edge{v}{u}}^{z_{\edge{v}{u}}} \cdot \uprod{\edge{u}{w} \in E} \frac{L_{{\edged{u}{w}}}^{x_{\edge{u}{w}}}}{d_{\edge{u}{w}}^{x_{\edge{u}{w}}}} \right) \\
& = 2^{p |V|} \cdot \uprod{\edge{v}{u} \in E}   \left( d_{\edge{v}{u}}^{z_{\edge{v}{u}}} \cdot L_{\edged{v}{u}}^{x_{\edge{v}{u}}} \right) \numberthis \label{eq:gen-holder-last}.
\end{align*}
In the above,~\eqref{eq:gen-holder-1} follows by definition of $\B(\bard, G)$. We argue~\eqref{eq:gen-holder-2} next. Note that
\begin{align*}
& \min \left(  (d_{\edge{v}{u}})_{\edge{v}{u} \in E},  \left( \frac{L_{{\edged{u}{w}}}^{p}}{d_{\edge{u}{w}}^{p}}  \right)_{\edge{u}{w} \in E}   \right) \\
& \le  \left( \uprod{\edge{v}{u} \in E} d_{\edge{v}{u}}^{x_{\edge{v}{u}} + z_{\edge{v}{u}}} \right) \cdot \left(\uprod{\edge{u}{w} \in E} \left( \frac{L_{{\edged{u}{w}}}^p}{d_{\edge{u}{w}}^{p}} \right)^{\frac{x_{\edge{u}{w}}}{p}} \right) \quad \forall u \in V
\numberthis \label{eq:zero-sum-1}
\end{align*}
for any $\left( (x_{\edge{v}{u}}, z_{\edge{v}{u}})_{\edge{v}{u} \in E}, (x_{\edge{u}{w}})_{\edge{u}{w} \in E}\right)$ such that 
\begin{itemize}
	\item{$x_{\edge{v}{u}}, z_{\edge{v}{u}} \ge 0$ for every $\edge{v}{u} \in E$ follows from~\eqref{eq:gen-primal}.}
	\item{
		\begin{align*}
		\usum{\edge{v}{u} \in E} \left( x_{\edge{v}{u}} + z_{\edge{v}{u}} \right) +\usum{\edge{u}{w} \in E} \frac{x_{\edge{u}{w}}}{p} \ge 1. \numberthis \label{eq:gen-holder-cons}
		\end{align*}
		Note that this is the same as~\eqref{eq:gen-covering}.}
\end{itemize}
We can now invoke Lemma~\ref{lemma:zero-sum} to get~\eqref{eq:zero-sum-1}. Further,~\eqref{eq:gen-holder-last} follows by noting that since $G$ is a DAG, each edge $\edge{u}{w} \in E$ occurs exactly twice -- once as a bound of $w$ and the other as a bound for $u$ (both following from Theorem~\ref{claim:gen-acyclic}). As a result, we can cancel the $d_{\edge{u}{w}}$ term to get~\eqref{eq:gen-holder-last}. 
\end{proof}

\subsection{Proof of ~\eqref{eq:gen-ub} for $p \in (|V| - 1, \infty]$} \label{sec:genp-ub}
In this section, we prove the following upper bound for $|\TJ|$ for any $p$ in $(|V| - 1, \infty]$:
\begin{align*}
|\TJ| & \le 2^{p |V|} \cdot \inparen{(p |V|)^2}^{(p |V|)^2} \cdot c^{|E|} \cdot \uprod{\edge{v}{u} \in E} \left(L_{\edge{v}{u}}^{x^*_{\edge{v}{u}}} \cdot L_{\edgeinfty{v}{u}}^{z^*_{\edge{v}{u}}}\right) \numberthis \label{eq:genp-ub},
\end{align*}
where $c$ is a small constant independent of $G$. 

Invoking Lemma~\ref{thm:acyclic-bound-lp}, and summing up $\B(\bard, G)$ over all possible degree configurations $\bd = (d_{\edge{v}{u}})_{d_{\edge{v}{u}} \le \min( L_{\edge{v}{u}}, L_{\edgeinfty{v}{u}}), \edge{v}{u} \in E}$, we get (for some small constant $c$ that we will pick later)
\begin{align*}
& \usum{\bard = (d_{\edge{v}{u}})_{d_{\edge{v}{u}} \le L_{\edgeinfty{v}{u}}, \edge{v}{u} \in E}} \B(\bard, G) \\
& \le \usum{\bard = (d_{\edge{v}{u}})_{d_{\edge{v}{u}} \le L_{\edgeinfty{v}{u}}, \edge{v}{u} \in E}} 2^{p |V|} \quad \uprod{\edge{v}{u} \in E}  d_{\edge{v}{u}}^{z^*_{\edge{v}{u}}} \cdot L_{\edged{v}{u}}^{x^*_{\edge{v}{u}}} \numberthis \label{eq:genp-holder-5} \\
& \le 2^{p |V|} \left( \usum{\bard = (d_{\edge{v}{u}})_{d_{\edge{v}{u}} \le L_{\edgeinfty{v}{u}}, \edge{v}{u} \in E}} \quad \uprod{\edge{v}{u} \in E}  d_{\edge{v}{u}}^{z^*_{\edge{v}{u}}}  \right) \\
& \quad \cdot \left(  \usum{\bard = (d_{\edge{v}{u}})_{d_{\edge{v}{u}} \le L_{\edge{v}{u}}, \edge{v}{u} \in E}} \quad \uprod{\edge{v}{u} \in E}  L_{\edged{v}{u}}^{x^*_{\edge{v}{u}}} \right) \\
& = 2^{p |V|} \left( \usum{\bard = (d_{\edge{v}{u}})_{d_{\edge{v}{u}} \le L_{\edgeinfty{v}{u}}, \edge{v}{u} \in E}} \quad \cdot \uprod{\edge{v}{u} \in E}  d_{\edge{v}{u}}^{z^*_{\edge{v}{u}}}  \right) \\
& \quad \cdot \left(  \usum{\bard = (d_{\edge{v}{u}})_{d_{\edge{v}{u}} \le L_{\edge{v}{u}}, \edge{v}{u} \in E}} \quad\cdot  \uprod{\edge{v}{u} \in E}  \left(L_{\edged{v}{u}}^{p}\right)^{\frac{x^*_{\edge{v}{u}}}{p}} \right) \\
& \le 2^{p |V|} \left( \usum{\bard = (d_{\edge{v}{u}})_{d_{\edge{v}{u}} \le L_{\edgeinfty{v}{u}}, \edge{v}{u} \in E}} \quad \uprod{\edge{v}{u} \in E}  d_{\edge{v}{u}}^{z^*_{\edge{v}{u}}}  \right) \\
& \quad \cdot \uprod{\edge{v}{u} \in E} \quad \left( \usum{\bard = (d_{\edge{v}{u}})_{d_{\edge{v}{u}} \le L_{\edge{v}{u}}, \edge{v}{u} \in E}} L_{\edged{v}{u}}^{p}  \right)^{\frac{x^*_{\edge{v}{u}}}{p}} \numberthis \label{eq:genp-holder-6} \\
& =  2^{p |V|} \left(\uprod{\edge{v}{u} \in E} \quad \usum{d_{\edge{v}{u}}: d_{\edge{v}{u}} \le L_{\edgeinfty{v}{u}}, \edge{v}{u} \in E} d_{\edge{v}{u}}^{z^*_{\edge{v}{u}}} \right) \cdot  \left( \uprod{\edge{v}{u} \in E}  L_{\edge{v}{u}}^{x^*_{\edge{v}{u}}} \right) \numberthis \label{eq:genp-holder-7} \\
&  \le 2^{p |V|} \cdot 3^{|E|} \uprod{\edge{v}{u} \in E} \frac{1}{z_{\edge{v}{u}}} \quad \uprod{\edge{v}{u} \in E} L_{\edgeinfty{v}{u}}^{z^*_{\edge{v}{u}}} \cdot L_{\edge{v}{u}}^{\frac{x^*_{\edge{v}{u}}}{p}} \numberthis \label{eq:genp-holder-8} \\
& \le 2^{p |V|} \cdot \inparen{(p |V|)^2}^{(p|V|)^2} \cdot  3^{|E|} \uprod{\edge{v}{u} \in E} L_{\edgeinfty{v}{u}}^{z^*_{\edge{v}{u}}} \cdot L_{\edge{v}{u}}^{x^*_{\edge{v}{u}}} \numberthis \label{eq:genp-holder-9}.
\end{align*}
Here, ~\eqref{eq:genp-holder-6} follows by a direct application of H\"{o}lder's inequality (assuming the following is true):
\begin{align*}
\usum{\edge{v}{u} \in E} \frac{x^*_{\edge{v}{u}}}{p} \ge 1. \numberthis \label{eq:genp-sum}
\end{align*}
We prove~\eqref{eq:genp-sum} here (assuming $s$ is the source vertex in $G$):
\begin{align*}
\usum{\edge{v}{u} \in E} \frac{x^*_{\edge{v}{u}}}{p} = \left(\usum{\edge{s}{w} \in E}\frac{x_{\edge{s}{w}}}{p} \right) + \left( \usum{\edge{v}{u} \in E \setminus \{\edge{s}{w} \in E\}} \frac{x^*_{\edge{v}{u}}}{p} \right) \ge 1,
\end{align*}
where the inequality follows from~\eqref{eq:gen-covering} for $s$ and the fact that $x_{\edge{v}{u}} \ge 0$ for every $\edge{v}{u} \in E$.x

Further,~\eqref{eq:genp-holder-7} follows by
\begin{align*}
\usum{1 \le d_{\edge{u}{w}} \le L_{{\edge{u}{w}}}} L_{{\edged{u}{w}}}^{p} = L_{\edge{u}{w}}^{p}
\end{align*}
and pushing the sum inside on $L_{\edged{v}{u}}$ for each $d_{\edge{v}{u}}$. We prove~\eqref{eq:genp-holder-8} as follows:
\begin{align*} 
\usum{d_{\edge{v}{u}} : d_{\edge{v}{u}} \le L_{\edgeinfty{v}{u}}, \edge{v}{u} \in E} \quad d_{\edge{v}{u}}^{z^*_{\edge{v}{u}}} & \le \frac{1}{z_{\edge{v}{u}}} L_{\edgeinfty{v}{u}}^{z^*_{\edge{v}{u}}}.
\end{align*}
Recall our assumption that the $d_{\edge{v}{u}}$ values are powers of two and $z_{\edge{v}{u}} > 0$ and as a result, we have (assuming $f = \log(L_{\edgeinfty{v}{u}})$)
\begin{align*}
1^{z^*_{\edge{v}{u}}} + 2^{z^*_{\edge{v}{u}}} + 2^{2 z^*_{\edge{v}{u}}} + \dots + 2^{z^*_{\edge{v}{u}} \cdot f} 
&  = \frac{2^{z^*_{\edge{v}{u}} (f + 1)} - 1}{2^{z^*_{\edge{v}{u}}} - 1} \\
& \le 2^{z^*_{\edge{v}{u}}} \frac{2^{z^*_{\edge{v}{u}} f}}{2^{z^*_{\edge{v}{u}}} - 1} \\
& = \frac{2^{z^*_{\edge{v}{u}} f}}{1 - 2^{-z^*_{\edge{v}{u}}}} \\
& \le 3 \cdot \frac{2^{z^*_{\edge{v}{u}} f}}{z^*_{\edge{v}{u}}} \\ 
& = \left( \frac{1}{z^*_{\edge{v}{u}}} \right) \cdot 3 \cdot L_{\edgeinfty{v}{u}}^{z^*_{\edge{v}{u}}}.
\end{align*}
In the above, the first equation follows by treating the left hand side expression as a geometric progression with first term $1$ and common difference $2^{z^*_{\edge{v}{u}}}$. The final inequality follows by noting that for small enough $z^*_{\edge{v}{u}} > 0$, $1 - 2^{-z^*_{\edge{v}{u}}}$ is at least $\frac{z^*_{\edge{v}{u}}}{c}$ for any $c\ge 3$.

Finally, we prove~\eqref{eq:genp-holder-9} by invoking a standard result in linear programming, which we state in the language of $\LP^{(+)}$ below. Recall that $\LP^{(+)}$ has rational coefficients for every constraint and each entry in the constraint matrix has only values in $[0, 1]$. Consider an optimal basic feasible solution $(\vx^*, \vz^*) = \left( x^*_{\edge{v}{u}}, z^*_{\edge{v}{u}} \right)$ to $\LP^{(+)}$ on $G$ such that for every $\edge{v}{u} \in E$ with $z^*_{\edge{v}{u}} > 0$ and $x^*_{\edge{v}{u}} > 0$, we have applying Cramer's rule, we get $z^*_{\edge{v}{u}} \ge \frac{1}{(p (2|E| + |V|))!} \ge \frac{1}{(p |V|^2)!}  \ge \frac{1}{\inparen{(p |V|)^2}^{(p |V|)^2}}$.\footnote{We would like to note here that these bounds hold for any basic feasible solution as well. For optimal basic feasible solutions, we can achieve a better bound than this one but we stick to this since it is sufficient for our arguments.}

In other words, this implies that the non-zero values of an optimal basic feasible solution of any linear program 
with rational coefficients are polynomially bounded in the size of its input. The above theorem immediately gives us~\eqref{eq:genp-holder-9} since
\begin{align*}
\uprod{\edge{v}{u} \in E} \frac{1}{z^*_{\edge{v}{u}}} & \le  \inparen{(p |V|)^2}^{(p |V|)^2},
\end{align*}
as required. Note that this combinatorial result implies the runtime of Algorithm~\ref{algo:gen-ub} as well.
This completes the proof.

\subsection{Proof of~\eqref{eq:gen-ub} for $p \le |V| - 1$} \label{sec:gen-deg-main}
Invoking Lemma~\ref{thm:acyclic-bound-lp} and summing up $\B(\bard, G)$ over all possible degree configurations$$\bd = (d_{\edge{v}{u}})_{d_{\edge{v}{u}} \le \min\inset{L_{\edge{v}{u}},L_{\edgeinfty{v}{u}}}, \edge{v}{u} \in E},$$ we get
\begin{align*}
\usum{\bard} \B(\bard, G) & \le \usum{\bard} 2^{p |V|} \uprod{\edge{v}{u} \in E}  \left( d_{\edge{v}{u}}^{z_{\edge{v}{u}}} \cdot L_{\edged{v}{u}}^{x_{\edge{v}{u}}} \right) \numberthis \label{eq:gen-holder-5} \\
& =  2^{p |V|} \usum{\bard} \uprod{\edge{v}{u} \in E} \left(  d_{\edge{v}{u}}^{z_{\edge{v}{u}}} \cdot \left( L_{\edged{v}{u}}^{p} \right)^{\frac{x_{\edge{v}{u}}}{p}} \right) \\
& \le 2^{p |V|} \uprod{\edge{v}{u} \in E} \left( \usum{\bard} d_{\edge{v}{u}} \right)^{z_{\edge{v}{u}}} \left( \usum{\bard} L_{\edged{v}{u}}^{p} \right)^{\frac{x_{\edge{v}{u}}}{p}} \numberthis \label{eq:gen-holder-6}  \\
& = 2^{p |V|} \uprod{\edge{v}{u} \in E} \left(  \left( \usum{\bard} d_{\edge{v}{u}}\right)^{z_{\edge{v}{u}}}  L_{\edge{v}{u}}^{x_{\edge{v}{u}}} \right)  \numberthis \label{eq:gen-holder-7} \\
& \le 2^{p |V|} \uprod{\edge{v}{u} \in E} \left(  \left(2 \cdot L_{\edgeinfty{v}{u}} \right)^{z_{\edge{v}{u}}} \cdot L_{\edge{v}{u}}^{x_{\edge{v}{u}}} \right) \numberthis \label{eq:gen-holder-8} \\
& \le 2^{(p + 1) |V|} \uprod{\edge{v}{v} \in E} \left( L_{\edgeinfty{v}{u}}^{z_{\edge{v}{u}}} \cdot L_{\edge{v}{u}}^{x_{\edge{v}{u}}} \right) \numberthis \label{eq:gen-holder-9}.
\end{align*}
Here,~\eqref{eq:gen-holder-6} follows by a direct application of H\"{o}lder's inequality assuming the following:
\footnote{The proof for the case when $p > |V| - 1$ actually diverges at this point and we push the sums differently instead of doing it through~\eqref{eq:gen-sum}.} 
\begin{align*}
\usum{\edge{v}{u} \in E} \left( z_{\edge{v}{u}} + \frac{x_{\edge{v}{u}}}{p} \right) \ge 1. \numberthis \label{eq:gen-sum}
\end{align*}
We prove~\eqref{eq:gen-sum} below:
\begin{align*}
\usum{\edge{v}{u} \in E} \left( z_{\edge{v}{u}} + \frac{x_{\edge{v}{u}}}{p} \right) & \ge \usum{\edge{v}{u}} \left( \frac{z_{\edge{v}{u}}}{p + 1} +  \frac{x_{\edge{v}{u}}}{p} \right) \\
& \ge \frac{|V|}{p + 1} \\
& \ge 1,
\end{align*}
where the first inequality follows from~\eqref{eq:gen-primal} and the fact that $p + 1 > 0$. The second inequality follows by summing up~\eqref{eq:gen-covering} for every $u \in V$ and the final inequality follows from our assumption that $p \le |V| - 1$. To complete the proof, we prove~\eqref{eq:gen-holder-7} and~\eqref{eq:gen-holder-8}, where the former follows from 
\begin{align*}
\usum{1 \le d_{\edge{u}{w}} \le L_{{\edge{u}{w}}}} L_{{\edged{u}{w}}}^{p} = L_{\edge{u}{w}}^{p}
\end{align*}
and the latter follows by definition of $d_{\edge{v}{u}}$ values being powers of two. This completes the proof. 

To complete the proof of Theorem~\ref{thm:gen-main}, we need to prove~\eqref{eq:gen-lb} as well, which we do using the dual of $\LP^{(+)}$ in Appendix~\ref{sec:gen-lb}.
\subsection{Proof of~\eqref{eq:gen-lb},~\eqref{eq:lp-lb} and~\eqref{eq:dhigh-lb}} \label{sec:gen-lb}
In this section, we prove a lower bound for $|\TJ|$ by constructing an instance $I = \{R_{\edge{v}{u}}: ||R_{\edge{v}{u}}||_{p} \le L_{\edge{v}{u}}, ||R_{\edge{v}{u}}||_{\infty} \le L_{\edgeinfty{v}{u}}, \edge{v}{u} \in E \}$ using the dual of $\LP^{(+)}$. We would like to note here that our lower bound holds for any $G$, any $p \in [1, \infty)$ and also for the case when there are no $\ell_{\infty}$ constraints.

We start by stating the dual of $\LP^{(+)}$.
\begin{align*}
& \max \quad \usum{u \in V} y_{u}  \\
& \frac{y_v}{p} + y_u \le \log(L_{\edge{v}{u}}) \quad \forall \edge{v}{u} \in E \numberthis \label{eq:gen-dual-ub} \\
& y_{u} \le \log(L_{\edgeinfty{v}{u}}) \quad \forall u \in V \numberthis \label{eq:gen-dual-ub-2} \\
& y_u \ge 0 \quad \forall u \in V \numberthis \label{eq:gen-dual}.
\end{align*}
We will now construct a join instance $\{R_{\edge{v}{u}}: \edge{v}{u} \in E\}$ based on an optimal dual solution $\vy^* = (y^*_u)_{u \in V}$ such that
\begin{align*}
||R_{\edge{v}{u}}||_{p} \le L_{\edge{v}{u}},  ||R_{\edge{v}{u}}||_{\infty} \le L_{\edgeinfty{v}{u}}, \quad \forall \edge{v}{u} \in E
\end{align*}
and 
\begin{align*}
|\TJ| \ge \frac{1}{2^{|V|}} \uprod{\edge{v}{u} \in E} \left( L_{\edge{v}{u}}^{x^*_{\edge{v}{u}}} L_{\edgeinfty{v}{u}}^{z^*_{\edge{v}{u}}} \right),
\end{align*}
where recall that $(\vx^*, \vz^*) = (x^*_{\edge{v}{u}}, z^*_{\edge{v}{u}})_{\edge{v}{u} \in E}$ denotes an optimal solution to $\LP^{(+)}$.

Given $\vy^*$, we define $\Dom(u) = \left[ \floor{2^{y^*_u}} \right]$ for every $u \in V$ and $R_{\edge{v}{u}} = \Dom(v) \times \Dom(u)$ for every $\edge{v}{u} \in E$. For each $\edge{v}{u} \in E$, we have
\begin{align*}
||R_{\edge{v}{u}}||_p & = \sqrt[p]{ \usum{\val_{v} \in \Dom(v)}| \Dom(u)|^p } \\
& = \sqrt[p]{ |\Dom(v)| \cdot |\Dom(u)|^p} \\
& \le \sqrt[p]{ 2^{y^*_v} \cdot 2^{p \cdot y^*_u}} \\
& = 2^{\frac{y^*_v}{p}} 2^{y^*_u} \\
& \le 2^{\log(L_{\edge{v}{u}})} \quad \text{ follows from~\eqref{eq:gen-dual-ub}}  \\
& = L_{\edge{v}{u}}
\end{align*}
and
\begin{align*}
||R_{\edge{v}{u}}||_{\infty} & = \umax{\val_{v} \in \Dom(v)} |\Dom(u)| \\
& = \floor{2^{y^*_u}} \\
&  \le L_{\edgeinfty{v}{u}} \quad \text{ follows from~\eqref{eq:gen-dual-ub-2}}. 
\end{align*}
Based on this instance, we obtain a lower bound of
\begin{align*}
|\TJ| & \ge \uprod{u \in V} |\Dom(u)| \\
& = \uprod{u \in V}  \floor{2^{y^*_u}} \\
& \ge \frac{1}{2^{|V|}} \uprod{u \in V} 2^{y^*_u} \\
& = \frac{1}{2^{|V|}} \uprod{\edge{v}{u} \in E} \left( L_{\edge{v}{u}}^{x^*_{\edge{v}{u}}} \cdot L_{\edgeinfty{v}{u}}^{z^*_{\edge{v}{u}}} \right),
\end{align*}
where the first inequality follows by our definition of $|\Dom(u)|$ and the final equality follows from strong duality. In particular, at optimality, $\LP^{(+)}$ and its dual have the same objective value and since $(\vx^*, \vz^*)$ and $\vy^*$ are optimal solutions to $\LP^{(+)}$ and its dual respectively, this completes the proof.

A similar construction as above holds for any orientation of $G$ even when only $\ell_{p}$-norm constraints are given (and no $\ell_{\infty}$ constraints are given), leading to the following corollary.
\begin{corollary} \label{cor:gen-lb}
For any $G$, any $p \in [1, \infty)$ and an optimal solution $\vx^* = (x^*_{\edge{v}{u}})_{\edge{v}{u} \in E}$ to $\LP^{(+)}$ on $G$, there exists an instance $I = \{R_{\edge{v}{u}}: ||R_{\edge{v}{u}}||_{p} \le L_{\edge{v}{u}}, \edge{v}{u} \in E\}$ such that
\begin{align*}
|\TJ| & \ge \frac{1}{2^{|V|}} \cdot \uprod{\edge{v}{u} \in E} \left( L_{\edge{v}{u}}^{x^*_{\edge{v}{u}}}  \right).
\end{align*}
\end{corollary}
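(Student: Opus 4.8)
The plan is to exhibit a single worst-case instance that is a full Cartesian product of one–dimensional domains, with the domain sizes read off from an optimal solution to the dual of $\LP^{(+)}$. This is the classical AGM-style lower bound~\cite{AGM} transported to the $\ell_p$ setting, and the construction is insensitive to the orientation of $G$.

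First I would write down the dual of $\LP^{(+)}$ in the $\ell_p$-only version (i.e.\ dropping the $z_{\edge{v}{u}}$ variables): it has one nonnegative variable $y_u$ per vertex $u \in V$, objective $\max \sum_{u\in V} y_u$, and one constraint $\tfrac{y_v}{p} + y_u \le \log L_{\edge{v}{u}}$ per edge $\edge{v}{u}\in E$. Fix an optimal dual solution $\vy^*=(y^*_u)_{u\in V}$ (these exist and are nonnegative). Set $\Dom(u) = \left[\floor{2^{y^*_u}}\right]$ for every $u\in V$ and $R_{\edge{v}{u}} = \Dom(v)\times\Dom(u)$ for every $\edge{v}{u}\in E$.

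Next I would check feasibility and evaluate the join. For each edge, $\norm{R_{\edge{v}{u}}}_p = \sqrt[p]{|\Dom(v)|\cdot|\Dom(u)|^p} \le \sqrt[p]{2^{y^*_v}\cdot 2^{p\,y^*_u}} = 2^{y^*_v/p + y^*_u} \le L_{\edge{v}{u}}$, where the last step is precisely the dual edge constraint. Since each $R_{\edge{v}{u}}$ is a complete bipartite relation, every tuple of $\prod_{u\in V}\Dom(u)$ projects into $R_{\edge{v}{u}}$ on every edge, so $\TJ = \prod_{u\in V}\Dom(u)$ and hence $|\TJ| = \prod_{u\in V}\floor{2^{y^*_u}} \ge \tfrac{1}{2^{|V|}}\prod_{u\in V}2^{y^*_u} = \tfrac{1}{2^{|V|}}\,2^{\sum_{u\in V}y^*_u}$. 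Finally, by strong LP duality $\sum_{u\in V} y^*_u$ equals the optimal value $\sum_{\edge{v}{u}\in E} x^*_{\edge{v}{u}}\log L_{\edge{v}{u}}$ of $\LP^{(+)}$, so $2^{\sum_u y^*_u} = \prod_{\edge{v}{u}\in E} L_{\edge{v}{u}}^{x^*_{\edge{v}{u}}}$, which yields the claimed bound.

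There is no genuine obstacle here; the argument is a routine transfer of the AGM lower bound, and the only bookkeeping is the floor estimate $\floor{2^{y^*_u}} \ge \tfrac12\, 2^{y^*_u}$ (which accounts for the $\tfrac{1}{2^{|V|}}$ slack) and noting $y^*_u\ge 0$ so the domains are nonempty. The one point worth stating explicitly is orientation-independence: the construction never used a particular direction of an edge beyond deciding which endpoint appears with coefficient $\tfrac1p$ in the dual constraint $\tfrac{y_v}{p}+y_u\le \log L_{\edge{v}{u}}$, so reorienting an edge merely swaps the roles of its endpoints in that single constraint, and the same Cartesian-product instance (with the same domains) still witnesses the bound for the reoriented LP.
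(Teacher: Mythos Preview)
Your proposal is correct and follows essentially the same approach as the paper: construct a Cartesian-product instance from an optimal dual solution $\vy^*$ with $\Dom(u)=[\lfloor 2^{y^*_u}\rfloor]$, verify the $\ell_p$ bound via the dual constraint $\tfrac{y_v}{p}+y_u\le\log L_{\edge{v}{u}}$, and conclude via strong duality and the floor estimate. The paper in fact proves the slightly more general version with additional $\ell_\infty$ constraints and then notes that the $\ell_p$-only corollary follows by the same construction, which is exactly what you wrote out.
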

\section{Missing Details in Section~\ref{sec:lp-main}} \label{sec:lp}
In this section, our goal is to prove~\eqref{eq:lp-ub}. As discussed in Section~\ref{sec:lp-main}, we invoke Theorem~\ref{thm:opt-decomp} on $G$ and consider the $G_i$s one-by-one -- if $G_i$ is a DAG, we can invoke Theorem~\ref{thm:gen-main} without any $\ell_{\infty}$ constraints to prove~\eqref{eq:lp-ub}. We fall back to the case when $G_i$ contains at least one cycle $C$ and that will be the main focus of this section. Since $|E| = |V|$, there are $|C|$ edge disjoint (directed) trees each one rooted at a unique $A_i \in C$. Note that this is a standard graph theory result and we prove it in Appendix~\ref{sec:cycle-trees}. 

Consider the topological ordering for each of the $|C|$ edge (directed) disjoint trees denoted by $\cT_{A_i}$ for every $A_i \in C$.  Let $\cT_{A_i^{\inc}}$ and $\cT_{A_i^{\out}}$ denote an ordering {\em before} and {\em after} $A_i$ in $\cT_{A_i}$. The following result is true.
\begin{corollary} \label{cor:cycle-trees}
\begin{align*}
\mathcal{T} = \left( \left(\cT_{A_i^{\inc}} \right)_{A_i \in C},  C, \left(\cT_{A_i^{\out}} \right)_{A_i \in C}  \right) \numberthis \label{eq:top-cycle}
\end{align*}
is a valid topological ordering for $G_i'$, which is essentially $G_i$ and we treat the cycle $C$ as a single vertex.
\end{corollary}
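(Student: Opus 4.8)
The plan is to verify that the proposed sequence in~\eqref{eq:top-cycle} is a topological ordering of the contracted graph $G_i'$ obtained from $G_i$ by collapsing the cycle $C$ into a single supernode. Recall that $G_i$ satisfies $|E(G_i)| = |V(G_i)|$, so it has exactly one cycle $C$, and by the graph-theoretic fact invoked just above (edge-disjoint directed trees $\cT_{A_i}$ rooted at the vertices $A_i \in C$, proved in Appendix~\ref{sec:cycle-trees}), every vertex of $G_i$ not on $C$ lies in exactly one such tree $\cT_{A_i}$ and is reachable from $A_i$ along directed tree edges. Contracting $C$ to a point removes precisely the cycle edges, and the remaining edge set is the disjoint union of the tree edges; so $G_i'$ is itself a DAG (a forest of out-trees all rooted at the single supernode), and the question is simply whether the stated linear order is consistent with all its edges.

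First I would recall what edges $G_i'$ has: (i) edges internal to some tree $\cT_{A_i}$, which are either entirely within the ``incoming'' part $\cT_{A_i^{\inc}}$, entirely within the ``outgoing'' part $\cT_{A_i^{\out}}$, or go from the supernode (i.e., from $A_i$) into $\cT_{A_i^{\out}}$, or come into $A_i$ (the supernode) from $\cT_{A_i^{\inc}}$; and (ii) nothing else, since cycle edges have been contracted away. I would then check each case against the ordering $\left(\left(\cT_{A_i^{\inc}}\right)_{A_i \in C},\, C,\, \left(\cT_{A_i^{\out}}\right)_{A_i \in C}\right)$. For an edge inside $\cT_{A_i^{\inc}}$: $\cT_{A_i^{\inc}}$ was defined as a topological ordering of the part of the tree $\cT_{A_i}$ lying before $A_i$, so the edge respects that local order, and local order is preserved in the concatenation. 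Symmetrically for an edge inside $\cT_{A_i^{\out}}$, which sits after the block $C$ in the global order, again consistent. For an edge from $A_i$ (supernode) into $\cT_{A_i^{\out}}$: the supernode $C$ precedes the entire $\cT_{A_i^{\out}}$ block, so this edge points forward. For an edge from a vertex of $\cT_{A_i^{\inc}}$ into $A_i$ (i.e., into the supernode $C$): the $\cT_{A_i^{\inc}}$ blocks all precede the block $C$, so again forward. Since these exhaust the edge types of $G_i'$, every edge points forward in the ordering, which is exactly the definition of a topological ordering.

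The one point requiring a little care — and the main (mild) obstacle — is to be sure that the decomposition into ``incoming'' and ``outgoing'' parts of each tree $\cT_{A_i}$ is coherent with the orientation of the tree edges, i.e., that there are no tree edges crossing from $\cT_{A_i^{\out}}$ back into $\cT_{A_i^{\inc}}$ or from one tree $\cT_{A_i}$ into another $\cT_{A_j}$ for $i \neq j$. This follows because the trees are edge-disjoint and each is an out-tree rooted at its $A_i$ (so all its edges are directed away from the root along the tree), and because $\cT_{A_i^{\inc}}$ and $\cT_{A_i^{\out}}$ are defined as the parts of a single topological ordering of $\cT_{A_i}$ respectively before and after $A_i$; any tree edge is then either wholly within the ``before'' segment, wholly within the ``after'' segment, or goes from the ``before'' side through $A_i$, never backward. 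I would also note for completeness that distinct blocks $\cT_{A_i^{\inc}}$ and $\cT_{A_j^{\inc}}$ (resp. $\cT_{A_i^{\out}}$, $\cT_{A_j^{\out}}$) are vertex-disjoint and carry no edges between them, so their relative ordering within the big concatenation is irrelevant. With these observations in place the corollary follows immediately, and this justifies running Algorithm~\ref{algo:gen-ub} on the spanning acyclic subgraph of $G_i$ obtained by deleting one edge of $C$, processing vertices in the order~\eqref{eq:top-cycle}.
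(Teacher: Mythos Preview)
Your approach is essentially the paper's: the paper simply asserts that Corollary~\ref{cor:cycle-trees} ``follows directly'' from Claim~\ref{claim:cycle-trees} (the edge-disjoint tree decomposition), and your edge-by-edge forward-pointing check is the natural way to unpack that. Two small imprecisions worth cleaning up, though neither is fatal. First, the trees $\cT_{A_i}$ are \emph{not} out-trees rooted at $A_i$; if they were, $\cT_{A_i^{\inc}}$ would always be empty. They are directed trees (DAGs with tree-shaped underlying graph) in which $A_i$ can have both incoming and outgoing non-cycle edges, which is precisely why the paper splits each $\cT_{A_i}$ at $A_i$ into a ``before'' part and an ``after'' part of its topological order. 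Your later appeal to the topological ordering of $\cT_{A_i}$ is the correct justification; the out-tree remark should simply be dropped. Second, your edge enumeration omits the possibility of an edge going from some $u\in\cT_{A_i^{\inc}}$ directly to some $w\in\cT_{A_i^{\out}}$ without touching $A_i$ (this can happen depending on which topological order of $\cT_{A_i}$ is chosen), but such an edge is still forward in the global order since all $\cT_{\cdot^{\inc}}$ blocks precede $C$ which precedes all $\cT_{\cdot^{\out}}$ blocks, so the conclusion stands.
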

For simplicity of notation, we will assume $G = G_i$ for the rest of this argument. Since the $G_i$s are a disjoint union by Theorem~\ref{thm:opt-decomp}, we can apply the same argument on each $G_i$. We define some notation. Let $\cA_i^{\inc}, \cA_i^{\out}$ denote the set of vertices with incoming/outgoing edges to $A_i$ for every $A_i \in C$ and $L_{{\edged{v}{u}}}$ denotes the $\ell_{p}$-norm constraint corresponding to $d_{\edge{v}{u}}$. We follow the same proof structure as in Section~\ref{sec:overview} and in particular, Section~\ref{sec:g-acyclic}. 

We first state the corresponding version of Theorem~\ref{claim:gen-acyclic}.
\begin{cor} \label{claim:cyclic}
For any $G$ with $|E| = |V|$ and every $\bd = (d_{\edge{v}{u}})_{d_{\edge{v}{u}} \le L_{\edge{v}{u}}, \edge{v}{u} \in E}$, we have
\begin{align*}
|\TJ(\bd)| & \le \B'(\bard, G),
\end{align*}
where	
\begin{align*}
\B'(\bard, G) = \uprod{u \in \cT_{A_i^{\inc}}: A_i \in C} \cD_{u}\left(\bard \right) \cdot \underset{A_i \in C}{\min} \left \{ \cD'_{i} \left(\bd \right) \cdot \prod_{A_j \in C, j \neq i}  \cD_{j} \left(\bard \right) \right \} \cdot \uprod{u \in \cT_{A_i^{\out}}, A_i \in C} \cD_{u} \left(\bard \right)  \numberthis \label{eq:cyc-bdg-ub}
\end{align*}
with
\begin{align*}
\cD'_{i} \left(\bd \right) & =  \min \left \{ \left \{ d_{\edge{A_i^{\inc}}{A_i}} \right \}_{A_i^{\inc} \in \cA_i^{\inc}}, \left \{ \frac{L_{\edged{A_i}{A_i^{\out}}}^{p}}{d_{\edge{A_i}{A_i^{\out}}}^{p}} \right\}_{A_i^{\out} \in \cA_i^{\out}}, \frac{L_{\edged{A_i}{A_{i + 1}}}^{p}}{d_{\edge{A_i}{A_{i + 1}}}^{p}}  \right \} \quad A_i \in C \numberthis \label{eq:cyc-cd-prime} \\
& \cD_{i}\left(\bd \right) = \min \left \{ \cD'_{i}(\bard) , d_{\edge{A_{i - 1}}{A_i}}  \right \} \quad A_i \in C \numberthis \label{eq:cyc-cd} \\
& \cD_{u}\left(\bd \right) = \min \left \{  \{ d_{\edge{v}{u}} \}_{\edge{v}{u} \in E},  \left \{ \frac{L_{\edged{u}{w}}^{p}}{d_{\edge{u}{w}}^p}  \right \}_{\edge{u}{w} \in E}   \right \} \quad u \in V \setminus V(C). \numberthis \label{eq:cyc-cd-all}
\end{align*}
Further, $|\TJ(\bd)|$ can be computed in time $O(\B'(\bard, G))$.
\end{cor}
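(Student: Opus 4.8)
\textbf{Proof proposal for Corollary~\ref{claim:cyclic}.}

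The plan is to mimic the proof of Theorem~\ref{claim:gen-acyclic} but carried out along the topological ordering $\mathcal{T}$ of $G_i'$ given by Corollary~\ref{cor:cycle-trees}, where we treat the cycle $C$ as a single ``super-vertex'' that we process in one block. First I would run Algorithm~\ref{algo:gen-ub} (or rather, the natural extension of it to this ordering) and argue correctness exactly as in the proof of Theorem~\ref{claim:gen-acyclic}: processing the vertices in $\cT_{A_i^{\inc}}$ in topological order, each such $u$ contributes a factor $\cD_u(\bard)$ by the same argument --- the incoming relations restrict its effective domain to $\umin{\edge{v}{u}\in E} d_{\edge{v}{u}}$ (degree configuration) and the outgoing relations restrict it to $\umin{\edge{u}{w}\in E} L_{\edged{u}{w}}^p/d_{\edge{u}{w}}^p$ by Lemma~\ref{lemma:node-bound}. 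The same holds verbatim for the vertices in $\cT_{A_i^{\out}}$ processed after the cycle block. Thus the ``tree parts'' give the $\uprod{u\in\cT_{A_i^{\inc}}}\cD_u(\bard)$ and $\uprod{u\in\cT_{A_i^{\out}}}\cD_u(\bard)$ factors in~\eqref{eq:cyc-bdg-ub} directly.

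The crux is bounding the contribution of the cycle block $C=(A_1,\dots,A_{|C|})$. Here the key observation is that, since $|E(G_i)|=|V(G_i)|$ and $C$ is the unique cycle, every edge incident to $C$ except the $|C|$ cycle edges themselves belongs to one of the edge-disjoint trees, so the only constraints linking the $A_i$'s to each other are the cycle edges $\edge{A_{i-1}}{A_i}$. I would argue that we may choose to ``break'' the cycle at one vertex $A_i\in C$: namely, we drop the back-edge $\edge{A_{i-1}}{A_i}$ into $A_i$ (equivalently, we do not use the incoming-degree bound $d_{\edge{A_{i-1}}{A_i}}$ for $A_i$, only its outgoing bounds and the remaining incoming tree-edges), while for every other $A_j\in C$ with $j\ne i$ we keep all its bounds, including the cycle edge $\edge{A_{j-1}}{A_j}$. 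Processing the cycle in the order $A_i, A_{i+1},\dots,A_{i-1}$ then yields the product $\cD'_i(\bard)\cdot\prod_{A_j\in C,\,j\ne i}\cD_j(\bard)$, where $\cD'_i$ is exactly the min over $A_i$'s incoming tree-degrees and outgoing bounds (including the bound from the cycle edge $\edge{A_i}{A_{i+1}}$, which is outgoing from $A_i$), and $\cD_j$ additionally includes the incoming cycle-edge degree $d_{\edge{A_{j-1}}{A_j}}$. Since this bound is valid for \emph{every} choice of the break-vertex $A_i$, we may take the minimum over $i$, which is precisely the middle factor of~\eqref{eq:cyc-bdg-ub}. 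The definitions~\eqref{eq:cyc-cd-prime}--\eqref{eq:cyc-cd} are then just recording which degree/norm terms enter each factor; combining all three blocks multiplicatively gives $\abs{\TJ(\bd)}\le \B'(\bard,G)$ since the overall join projected down is a Cartesian-product-style refinement step-by-step, exactly as in~\eqref{eq:J-d-bound}.

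For the runtime claim, I would reuse the data-structure argument at the end of the proof of Theorem~\ref{claim:gen-acyclic} essentially unchanged: each $P_i(\vt)$ is computed as an intersection of sorted sublists accessible in $O(1)$ per element via the two-level B-tree indices (Appendix~\ref{app:data-struct}), so the total work telescopes to $O(\B'(\bard,G))$; the only new subtlety is that when processing the cycle block we run the iteration for all $|C|$ choices of break-vertex ``in parallel'' (as in Algorithm~\ref{algo:acyclic-ub}) and stop at the first that finishes, which adds only a factor $|C|\le|V|$, absorbed into the linear-time statement. The main obstacle I anticipate is making the ``break the cycle at $A_i$, valid for all $i$'' argument fully rigorous --- specifically, verifying that dropping the single back-edge $\edge{A_{i-1}}{A_i}$ genuinely leaves an acyclic spanning subgraph whose Prefix-Join output is a superset of $\TJ(\bd)$, and that the effective-domain bound for $A_i$ really is $\cD'_i(\bard)$ and not something larger because some other incoming edge was also dropped; this requires carefully tracking, using $|E(G_i)|=|V(G_i)|$, that no vertex loses more than the one intended incoming constraint.
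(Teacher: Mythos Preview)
Your proposal is correct and is essentially the paper's own argument: for each $A_i\in C$ delete the single back-edge $\edge{A_{i-1}}{A_i}$ to obtain an acyclic spanning subgraph, invoke Theorem~\ref{claim:gen-acyclic} on that subgraph (which yields the factor $\cD'_i(\bard)\cdot\prod_{j\ne i}\cD_j(\bard)$ for the cycle block together with the $\cD_u(\bard)$ factors for the tree vertices), and then take the minimum over $i$; the runtime claim follows by running Algorithm~\ref{algo:acyclic-ub} over these $|C|$ acyclic subgraphs. Your residual worry is a non-issue here: since $|E(G_i)|=|V(G_i)|$ with a unique cycle $C$, removing one cycle edge leaves a connected spanning subgraph that is genuinely acyclic, and every non-cycle edge (in particular every incoming tree edge to $A_i$) is retained, so the effective-domain bound at $A_i$ is exactly $\cD'_i(\bard)$.
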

We would like to mention here that $\B'(\bard, G)$ in the claim above is different from the upper bound $\B(\bard, G)$ for the acyclic case. To prove Corollary~\eqref{claim:cyclic}, we consider all possible acyclic subgraphs (each one obtained by deleting a back edge from the cycle) and invoke Theorem~\ref{claim:gen-acyclic} on each of these acyclic subgraphs. The upper bound follows as a result. From the algorithmic aspect, this translates to running Algorithm~\ref{algo:acyclic-ub} over all these acyclic subgraphs and we can in fact, with the knowledge of $p$ and the corresponding $\ell_{p}$-norm bound compute the ordering achieving the minimum as well.

Next, we restate our primal $\LP^{(+)}$ for this scenario.
 \begin{align*}
 & \min \quad \usum{\edge{v}{u} \in E} x_{\edge{v}{u}} \log(L_{\edge{v}{u}}) \tag{$\LP^{(+)}$}\\
 & \usum{\edge{v}{u} \in E} x_{\edge{v}{u}} + \usum{\edge{u}{w} \in E} \frac{x_{\edge{u}{w}}}{p} \ge 1 \quad \forall u \in V \numberthis \label{eq:lp-covering} \\
 & x_{\edge{v}{u}} \ge 0 \quad \forall \edge{v}{u} \in E \numberthis \label{eq:lp-primal}.
 \end{align*}
 We are now ready to state the corresponding version of Lemma~\ref{thm:acyclic-bound-lp}.
\begin{lemm} \label{thm:cyclic-bound-lp}
For any $G$ satisfying $|E| = |V|$ with girth at least $p + 1$, any feasible solution $\vx = (x_{\edge{v}{u}})_{\edge{v}{u} \in E}$ to $\LP^{(+)}$ on $G$ and degree configuration $\bd$, we have 
\begin{align*}
\B'(\bard, G)  \le \uprod{\edge{v}{u} \in E} L_{\edged{v}{u}}^{x_{\edge{v}{u}}} \numberthis \label{eq:imp-eq-in-app},
\end{align*}
where $\B'(\bard, G)$ is defined in~\eqref{eq:cyc-bdg-ub}.
\end{lemm}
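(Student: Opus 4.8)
The plan is to adapt the proof of Lemma~\ref{thm:acyclic-bound-lp} to the present setting. Since $|E|=|V|$, the (connected) graph $G$ contains a unique cycle $C=(A_1,\dots,A_m)$, and by hypothesis $m\ge p+1$. The new feature relative to the acyclic case is the outer $\min_i$ in the definition~\eqref{eq:cyc-bdg-ub} of $\B'(\bard,G)$: for each $i$, the factor $\cD'_i(\bd)\prod_{j\ne i}\cD_j(\bard)$ is precisely what one obtains by deleting the cycle edge $\edge{A_{i-1}}{A_i}$, so that $A_i$ loses its incoming cycle-edge degree but keeps the effective-domain bound from its outgoing cycle edge (this is what $\cD'_i$ records), while every other cycle vertex $A_j$ still pays its incoming cycle degree $d_{\edge{A_{j-1}}{A_j}}$ inside $\cD_j$. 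Hence each of the $m$ terms inside the $\min$ has exactly the shape of a DAG bound from Theorem~\ref{claim:gen-acyclic}, for the spanning acyclic subgraph $G-\edge{A_{i-1}}{A_i}$.

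For a fixed $i$ I would bound that term vertex-by-vertex exactly as in~\eqref{eq:gen-holder-1}--\eqref{eq:gen-holder-last}: at each vertex $u$ I replace $\cD_u$ (respectively $\cD'_i$) by the weighted geometric mean of its terms using Lemma~\ref{lemma:zero-sum}, the exponents coming from the feasible solution $\vx$ and summing to at least $1$ by constraint~\eqref{eq:lp-covering} at $u$. As in the acyclic case, every edge $\edge{v}{u}$ produces a factor $d_{\edge{v}{u}}$ (an incoming-degree bound at $u$) that cancels against the factor $d_{\edge{v}{u}}^{-1}$ it produces (an effective-domain bound at $v$), leaving only $L_{\edged{v}{u}}^{x_{\edge{v}{u}}}$. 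The only edge whose $d$-factors fail to cancel is $\edge{A_{i-1}}{A_i}$: its incoming-degree factor is absent at $A_i$, so a residual power of $d_{\edge{A_{i-1}}{A_i}}$ equal to $-x_{\edge{A_{i-1}}{A_i}}$ plus the slack of~\eqref{eq:lp-covering} at $A_i$ is left over.

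The heart of the argument, which I expect to be the main obstacle, is absorbing these $m$ residuals simultaneously by playing the outer $\min_i$ against them. I would bound $\min_i\bigl(\cD'_i\prod_{j\ne i}\cD_j\bigr)$ by the $m$-th root of $\prod_i\bigl(\cD'_i\prod_{j\ne i}\cD_j\bigr)=\bigl(\prod_i\cD'_i\bigr)\bigl(\prod_j\cD_j\bigr)^{m-1}$, and then on each factor $\cD_j^{(m-1)/m}$ split the exponent between $\cD'_j$ (via $\cD_j\le\cD'_j$) and the incoming cycle degree $d_{\edge{A_{j-1}}{A_j}}$ (via $\cD_j\le d_{\edge{A_{j-1}}{A_j}}$); the split is chosen so that the power left on $d_{\edge{A_{j-1}}{A_j}}$ is exactly what is needed to kill the deficit from the previous paragraph, while the power kept on $\cD'_j$ is still large enough for Lemma~\ref{lemma:zero-sum} to be applied with the $\vx$-exponents. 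Such a split exists exactly when every cycle edge carries $\LP^{(+)}$-weight at most $(m-1)/m$, and since on a cycle the optimal weight is $p/(p+1)$ (cf.\ Lemma~\ref{lem:cycle-scale-free-gap}), the condition $p/(p+1)\le(m-1)/m$, i.e.\ $m\ge p+1$, is precisely what makes this possible; this is where the girth hypothesis is used. Because $\prod_{\edge{v}{u}\in E}L_{\edged{v}{u}}^{x_{\edge{v}{u}}}$ is minimized over the feasible polytope at a vertex, it suffices to prove~\eqref{eq:imp-eq-in-app} for a basic feasible $\vx$, whose cycle-edge weights are controlled through Lemma~\ref{thm:basic-feasible} (all vertex constraints tight). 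To keep the bookkeeping manageable I would present the whole estimate as the advertised chain of three LPs: a ``natural'' packing relaxation $\mathcal{L}_1$ of $\B'(\bard,G)$ with $\log_2\B'(\bard,G)\le\mathrm{OPT}(\mathcal{L}_1)$; the dual of $\LP^{(+)}$, which dominates $\mathcal{L}_1$ exactly under $m\ge p+1$; and $\LP^{(+)}$ itself, via strong duality.

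Finally, reassembling: the tree-vertex factors telescope cleanly as in Lemma~\ref{thm:acyclic-bound-lp}; the cycle-vertex factors contribute $\prod_{j}L_{\edged{A_j}{A_{j+1}}}^{x_{\edge{A_j}{A_{j+1}}}}$ once the cancellations are carried out; and every surviving $d$-factor has nonpositive exponent and base at least $1$, hence may be dropped. This yields $\B'(\bard,G)\le\prod_{\edge{v}{u}\in E}L_{\edged{v}{u}}^{x_{\edge{v}{u}}}$, establishing~\eqref{eq:imp-eq-in-app}. The routine steps are the per-vertex applications of Lemma~\ref{lemma:zero-sum} and the telescoping over the trees; the delicate steps are the exponent-splitting around the cycle, checking that it is feasible exactly under the girth bound, and a short separate treatment of the degenerate case where some term of a $\cD$ falls below $1$ (which forces $\TJ(\bd)=\emptyset$).
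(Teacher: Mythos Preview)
Your route is genuinely different from the paper's, and the high-level idea --- bound the outer $\min$ by a weighted geometric mean, then telescope --- can be made to work. But there is a real gap in the version you wrote down.

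\textbf{The gap.} You commit to the \emph{uniform} weight $\lambda_j=1/m$ when bounding $\min_i(\cD'_i\prod_{j\ne i}\cD_j)$ by the $m$-th root of the product. Tracing through your splitting of $\cD_j^{(m-1)/m}$ and applying Lemma~\ref{lemma:zero-sum} at $A_j$, the constraint you actually need is $S_j\ge 1/m$, where $S_j$ is the sum of the $\vx$-exponents over the terms present in $\cD'_j$; for a BFS (all vertex constraints tight) this is $S_j=1-x_{\edge{A_{j-1}}{A_j}}$, i.e.\ exactly your stated condition $x_{\edge{A_{j-1}}{A_j}}\le (m-1)/m$. You then justify that condition by invoking Lemma~\ref{lem:cycle-scale-free-gap}, but that lemma concerns only the pure cycle with equal $L$'s. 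With trees attached the cycle weights in the optimal BFS need not be $p/(p+1)$. Concretely, take $p=2$, the $3$-cycle $A\!\to\!B\!\to\!C\!\to\!A$ together with one outgoing tree edge $A\!\to\!D$ (so $|E|=|V|=4$, girth $3=p+1$). The unique all-positive BFS is $x_{AB}=5/9$, $x_{BC}=8/9$, $x_{CA}=2/9$, $x_{AD}=1$, and $x_{BC}=8/9>2/3=(m-1)/m$: your uniform split is infeasible here. (Your paragraph~2 also treats the ``residual'' at $A_i$ too loosely: the $\vx$-exponents on the terms of $\cD'_i$ sum only to $S_i<1$ when $x_{\edge{A_{i-1}}{A_i}}>0$, so Lemma~\ref{lemma:zero-sum} cannot be applied there with those exponents.)

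\textbf{How to fix it, and what the paper does instead.} The fix in your framework is to use non-uniform weights $\lambda_j$ with $0\le\lambda_j\le S_j$ and $\sum_j\lambda_j=1$; this is possible iff $\sum_j S_j\ge 1$, and one can check that $\sum_j S_j\ge\max\bigl(m-\sum_{\text{cycle}}x_e,\ \tfrac1p\sum_{\text{cycle}}x_e\bigr)\ge 1$ whenever $m\ge p+1$, for \emph{any} feasible $\vx$. The paper avoids this bookkeeping entirely: it introduces an auxiliary LP $\LP^{(*)}$ (with variables $y'_{A_i}$, $z_{\edge{A_i}{A_{i+1}}}$, $z_C$) whose optimum upper-bounds $\log_2\B'(\bard,G)$ (Lemma~\ref{claim:gen-lp}), and then shows $\LP^{(*)}\le\LP^{(**)}$ via an exchange/perturbation argument (Claim~\ref{claim:opt-lp-gen}): if $\min_i(z_{\edge{A_i}{A_{i+1}}}-y'_{A_i})=\epsilon>0$, shift every $z$ down by $p\epsilon/m$ and every $y'$ up by $\epsilon/m$; feasibility and non-decrease of the objective both reduce to the single inequality $p\le m-1$. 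So the girth hypothesis enters at exactly one clean place in the paper's proof, rather than as a global existence condition on convex weights.
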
 
We note here that our assumption of $G$ having girth at least $p+1$ in Theorem~\ref{thm:lp-main} stems from this result.
For computing $|\TJ(\bd)|$, we first identify the $u \in V(C)$ that achieves the $\min$ in $\B'(\bard, C)$. Since $D'_{u}$ does not contain the incoming degree constraint $d_{\edge{v}{u}}$ (where $\edge{v}{u} \in E(C)$), we can run Algorithm~\ref{algo:gen-ub} on a topological ordering $(u, \dots, v)$ of $C$ and it would run in time $O(\B'(\bard, G))$.

Finally, we can use these two results to prove~\eqref{eq:lp-ub} assuming Lemma~\ref{thm:cyclic-bound-lp} is true.
\begin{proof} [Proof of~\eqref{eq:lp-ub}]
Invoking Lemma~\ref{thm:cyclic-bound-lp} and summing $\B(\bd, G)$ over all possible degree configurations and using the upper bound above, we have
\begin{align*}
	& \usum{\bd = \left(d_{\edge{v}{u}} \right)_{d_{\edge{v}{u}} \le L_{\edge{v}{u}}, \edge{v}{u} \in E}} \B'(\bd, G) \\
	& \le \usum{\bd = \left(d_{\edge{v}{u}} \right)_{d_{\edge{v}{u}} \le L_{\edge{v}{u}}, \edge{v}{u} \in E}} \quad \uprod{\edge{v}{u} \in E} \left(L_{\edged{v}{u}}^{p}\right)^{\frac{x_{\edge{v}{u}}}{p}} \\
	& \le \uprod{\edge{v}{u} \in E} \left( \usum{\bd = \left(d_{\edge{v}{u}} \right)_{d_{\edge{v}{u}} \le L_{\edge{v}{u}}, \edge{v}{u} \in E}} L_{\edged{v}{u}}^{p} \right)^{\frac{x_{\edge{v}{u}}}{p}} \\
	& =  \uprod{\edge{v}{u} \in E} \left( L_{\edge{v}{u}}^{p} \right)^{\frac{x_{\edge{v}{u}}}{p}} = \uprod{\edge{v}{u} \in E} L_{\edge{v}{u}}^{x_{\edge{v}{u}}}. 
	\end{align*}
	Here, the second inequality follows by applying H\"{o}lder's inequality assuming $\usum{\edge{v}{u} \in E} \frac{x_{\edge{v}{u}}}{p} \ge 1$ (which we argue below). The second equation follows from the fact that $$\usum{d_{\edge{v}{u}} \le L_{\edge{v}{u}}} L_{\edged{v}{u}}^p = L_{\edge{v}{u}}^p$$
	(by definition of $L_{\edged{v}{u}}$).
	
	We argue $\usum{\edge{v}{u} \in E} \frac{x_{\edge{v}{u}}}{p} \ge 1$ as follows:
	\begin{align*}
	& \usum{\edge{v}{u} \in E} \frac{x_{\edge{v}{u}}}{p} \\
	& = \frac{1}{p + 1} \usum{u \in V} \left(  \usum{\edge{v}{u} \in E} x_{\edge{v}{u}} + \usum{\edge{u}{w} \in E} \frac{x_{\edge{u}{w}}}{p} \right) \\
	& \ge \frac{|V|}{p + 1} \ge 1.
	\end{align*}
	Here, the first inequality follows from~\eqref{eq:lp-covering} for every $u \in V$ and the second inequality follows from $p \le |V| - 1$. Note that from the algorithm point of view (li.e., Algorithm~\ref{algo:generic-ub}), to compute $J$, Algorithm~\ref{algo:acyclic-ub} needs to go over all possible acyclic subgraphs in the cycle in $G$. The remaining runtime arguments follow through with additional linear factors in $|V|$ and $|E|$.	

	This completes the proof.
\end{proof}

Our remaining work in this section is to prove Lemma~\ref{thm:cyclic-bound-lp}, which we do in two broad steps. We define 
some notation:
\begin{align*}
E_{\inc} = \left \{ \edge{A_i^{\inc}}{A_i}: A_i^{\inc} \in \cA_i^{\inc}, A_i \in C \right \} \\  
E_{\out} = \left \{ \edge{A_i}{A_i^{\out}}: A_i^{\out} \in \cA_i^{\out}, A_i \in C \right \} \\ 
E_{R} = E \setminus \left( E(C) \cup E_{\inc} \cup E_{\out} \right).
\end{align*}
Through the rest of this section, when we use $(i + 1)$ and $(i - 1)$, we mean $(i + 1) \bmod k$ and $(i - 1) \bmod k$, where $k$ is the length of the cycle in $G$.
\subsection{Proof of Lemma~\ref{thm:cyclic-bound-lp}} \label{sec:cycle-w-trees}
We start by modeling the computation of the logarithmic version of $\B'(\bard, G)$ as a linear program relaxation (which we call $\LP^{(*)}$) and defined below:
\begin{align*}
\max \quad \left(\usum{u \in V \setminus V(C)} y_{u}  \right) + z_{C}  \tag{$\LP^{(*)}$} \\ 
\text{ s.t. } z_{C} \le z_{\edge{A_i}{A_{i + 1}}} + \sum_{A_j \in C \setminus A_i} y_{A_j}'   \quad \forall A_i \in C \numberthis \label{eq:ct27} \\
\frac{z_{\edge{A_i}{A_{i + 1}}}}{p} + y_{A_{i + 1}}' \le \log(L_{{\edged{A_i}{A_{i + 1}}}})  \quad \forall A_i \in C \numberthis \label{eq:ct21}\\
\frac{y_{A_i^{\inc}}}{p} + z_{\edge{A_i}{A_{i + 1}}}  \le \log(L_{{\edged{A_i^{\inc}}{A_i}}})  \quad \forall \edge{A_i^{\inc}}{A_i} \in E_{\inc} \numberthis \label{eq:ct23} \\
\frac{z_{\edge{A_i}{A_{i + 1}}}}{p} + y_{A_i^{\out}} \le \log(L_{{\edged{A_i}{A_i^{\out}}}}) \quad \forall \edge{A_i}{A_i^{\out}} \in E_{\out} \numberthis \label{eq:ct22} \\ 
\frac{y_v}{p} + y_{u} \le \log(L_{{\edged{v}{u}}}) \quad \forall \edge{v}{u} \in E_R \numberthis \label{eq:extra21} \\
y_{A_i}' \le z_{\edge{A_i}{A_{i + 1}}} \quad \forall A_i \in C \numberthis \label{eq:ct24} \\
z_{\edge{A_i}{A_{i + 1}}} \ge 0 \quad \forall A_i \in C \numberthis \label{eq:ct25} \\
z_{C}, y_{A_i}' \ge 0 \quad \forall A_i \in C \numberthis \label{eq:ct26} \\
y_{u} \ge 0 \quad \forall u \in V \setminus V(C). \numberthis \label{eq:extra22}
\end{align*}
We prove the following two results, which when combined together prove Lemma~\ref{thm:cyclic-bound-lp}.
\begin{lemm} \label{claim:gen-lp}
	\begin{align*}
	\B'(\bard, G) & \le 2^{\LP^{(*)}}.
	\end{align*}
\end{lemm}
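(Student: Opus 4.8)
The goal of Lemma~\ref{claim:gen-lp} is to show that the logarithm of the combinatorial upper bound $\B'(\bard, G)$ that we extracted from Corollary~\ref{claim:cyclic} is dominated by the optimum value of the auxiliary linear program $\LP^{(*)}$. The plan is to carry this out by direct construction: starting from the definition of $\B'(\bard, G)$ in~\eqref{eq:cyc-bdg-ub}, I would produce an explicit feasible solution to $\LP^{(*)}$ whose objective value equals $\log \B'(\bard, G)$. Concretely, for each vertex $u \in V \setminus V(C)$ I would set $y_u = \log \cD_u(\bard)$ (the ``off-cycle'' contribution), for each edge $\edge{A_i}{A_{i+1}} \in E(C)$ I would set $z_{\edge{A_i}{A_{i+1}}} = \log d_{\edge{A_i}{A_{i+1}}}$, set $y'_{A_i}$ to track the min-term $\cD'_i(\bard)$ (or the relevant portion of it along the cycle), and set $z_C = \log \left( \umin{A_i \in C} \set{ \cD'_i(\bard)\cdot \prod_{A_j\in C, j\neq i} \cD_j(\bard)}\right)$, which is exactly the logarithm of the middle factor of~\eqref{eq:cyc-bdg-ub}.

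First I would unwind the three-part product structure of $\B'(\bard, G)$: the incoming-tree factor $\uprod{u \in \cT_{A_i^{\inc}}} \cD_u(\bard)$, the cyclic min-factor, and the outgoing-tree factor $\uprod{u \in \cT_{A_i^{\out}}} \cD_u(\bard)$, and verify that each of the three corresponds after taking logs to a sum of the $y$-variables (and the $z_C$ term) in the objective of $\LP^{(*)}$. Then the work is entirely in checking feasibility of the proposed assignment: constraints~\eqref{eq:ct21}--\eqref{eq:extra21} should follow from Lemma~\ref{lemma:node-bound} applied edge-by-edge (each $\cD_u(\bard)$ term is a $\min$ over exactly the quantities $d_{\edge{v}{u}}$ and $2^p L^p_{\edged{u}{w}}/d^p_{\edge{u}{w}}$, which, once $\min$ is replaced by any single term and then combined, reproduce the bilinear $\ell_p$ bounds in the constraints — dropping the $2^p$ factors is harmless up to the $2^{p|V|}$ slack already present in Corollary~\ref{claim:cyclic}). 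Constraint~\eqref{eq:ct27} for $z_C$ holds because $z_C$ is defined as a $\min$ over $A_i$, so for every fixed $A_i$, $z_C \le \log \cD'_i(\bard) + \sum_{A_j \neq A_i} \log \cD_j(\bard)$, and $\log \cD'_i(\bard) \le z_{\edge{A_i}{A_{i+1}}}$-type decompositions give the $y'_{A_j}$ sum; constraint~\eqref{eq:ct24}, $y'_{A_i} \le z_{\edge{A_i}{A_{i+1}}}$, and the nonnegativity constraints~\eqref{eq:ct25}--\eqref{eq:extra22} are immediate from the definitions (all degrees and norms are at least $1$, so logs are nonnegative).

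The main obstacle I anticipate is getting the bookkeeping exactly right along the cycle $C$: the term $\cD'_i(\bard)$ in~\eqref{eq:cyc-cd-prime} is a $\min$ over the incoming degrees $d_{\edge{A_i^{\inc}}{A_i}}$, the outgoing effective-domain terms, \emph{and} the ``wrap-around'' edge $\edge{A_i}{A_{i+1}}$ of the cycle itself, whereas $\cD_i(\bard)$ in~\eqref{eq:cyc-cd} additionally includes the back edge $d_{\edge{A_{i-1}}{A_i}}$. Matching which of these min-terms maps to which linear constraint — in particular making sure the cyclic edge $\edge{A_i}{A_{i+1}}$ is ``charged'' to the $z$-variable rather than double-counted as both an outgoing bound for $A_i$ and an incoming bound for $A_{i+1}$ — is the delicate part; this is precisely where the girth-at-least-$(p+1)$ hypothesis enters, since it guarantees the cycle is long enough that the LP~\eqref{eq:ct27} constraint (which effectively ``spreads'' the $z_C$ budget over $|C|$ vertices with the $1/p$ factors) is not vacuous. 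Once the assignment is pinned down and each constraint verified, the inequality $\B'(\bard, G) \le 2^{\LP^{(*)}}$ follows because $\LP^{(*)}$ is a maximization problem and we have exhibited a feasible point achieving objective $\log \B'(\bard, G)$. I would then defer to the companion lemma (bounding $2^{\LP^{(*)}}$ by $\uprod{\edge{v}{u}\in E} L_{\edged{v}{u}}^{x_{\edge{v}{u}}}$ via LP duality against $\LP^{(+)}$) to finish Lemma~\ref{thm:cyclic-bound-lp}.
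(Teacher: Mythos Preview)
Your overall plan---exhibit a feasible point of $\LP^{(*)}$ whose objective equals $\log \B'(\bard,G)$---is exactly the paper's approach, and is the right way to prove this lemma. However, your concrete assignment along the cycle is wrong and would not be feasible.

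You propose $z_{\edge{A_i}{A_{i+1}}} = \log d_{\edge{A_i}{A_{i+1}}}$ and $y'_{A_i} = \log \cD'_i(\bard)$. This fails both~\eqref{eq:ct23} and~\eqref{eq:ct24}. For~\eqref{eq:ct23}, you would need $\log d_{\edge{A_i}{A_{i+1}}} \le \log L_{\edged{A_i^{\inc}}{A_i}}$ (after dropping the nonnegative $y_{A_i^{\inc}}/p$ term), but $d_{\edge{A_i}{A_{i+1}}}$ is the degree on a \emph{different} edge than the one whose norm bound appears on the right, so there is no such relation. For~\eqref{eq:ct24}, you would need $\cD'_i(\bard) \le d_{\edge{A_i}{A_{i+1}}}$; but $\cD'_i$ is a minimum over quantities like $L^p_{\edged{A_i}{A_{i+1}}}/d^p_{\edge{A_i}{A_{i+1}}}$, which can be arbitrarily larger than $d_{\edge{A_i}{A_{i+1}}}$ when the latter is small. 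The point you are missing is that, despite its edge-indexed name, $z_{\edge{A_i}{A_{i+1}}}$ plays the role of a \emph{vertex} domain-size variable for $A_i$ (with the back-edge $\edge{A_{i-1}}{A_i}$ dropped): it appears with coefficient $1/p$ in constraints where $A_i$ is a tail and with coefficient $1$ where $A_i$ is a head. The correct assignment, as in the paper, is therefore $\tilde z_{\edge{A_i}{A_{i+1}}} = \log \cD'_i(\bard)$ and $\tilde y'_{A_i} = \log \cD_i(\bard)$; with these, \eqref{eq:ct24} holds because $\cD_i = \min(\cD'_i, d_{\edge{A_{i-1}}{A_i}}) \le \cD'_i$, and~\eqref{eq:ct21}--\eqref{eq:ct23} each reduce, after choosing the matching term of the $\min$, to the identity $\frac{1}{p}\log\frac{L^p}{d^p} + \log d = \log L$.

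One more correction: the girth hypothesis is \emph{not} used anywhere in Lemma~\ref{claim:gen-lp}. It enters only in the companion step (Lemma~\ref{lem:gen-2}, via Claim~\ref{claim:opt-lp-gen}), where one needs $|C| \ge p+1$ to show that an optimal solution of $\LP^{(*)}$ can be taken with $\min_i (z_{\edge{A_i}{A_{i+1}}} - y'_{A_i}) = 0$.
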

\begin{lemm} \label{lem:gen-2}
	\begin{align*}
	2^{\LP^{(*)}} & \le \uprod{\edge{v}{u} \in E} L_{\edged{v}{u}}^{x_{\edge{v}{u}}},
	\end{align*}
	where $\vx = (x_{\edge{v}{u}})_{\edge{v}{u} \in E}$ is a feasible solution to $\LP^{(+)}$ on $G$ with values $(L_{\edged{v}{u}})_{\edge{v}{u} \in E}$.
\end{lemm}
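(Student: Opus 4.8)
\textbf{Proof proposal for Lemma~\ref{lem:gen-2}.}

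The plan is to exhibit, for a given optimal solution to $\LP^{(*)}$, a corresponding feasible point of $\LP^{(+)}$ on $G$ whose objective value is at least the value of $\LP^{(*)}$; then the inequality $2^{\LP^{(*)}} \le \uprod{\edge{v}{u}\in E} L_{\edged{v}{u}}^{x_{\edge{v}{u}}}$ follows for \emph{every} feasible $\vx$ simply because $\LP^{(+)}$ is a minimization problem and the exhibited point is one particular feasible solution. Concretely, since $\LP^{(*)}$ is a linear maximization over a polytope, I would work with its dual. The dual of $\LP^{(*)}$ will have one variable per primal constraint: call $x_{\edge{A_i}{A_{i+1}}}$ the dual variable for the chain-edge constraint~\eqref{eq:ct21}, $x_{\edge{A_i^{\inc}}{A_i}}$ for~\eqref{eq:ct23}, $x_{\edge{A_i}{A_i^{\out}}}$ for~\eqref{eq:ct22}, $x_{\edge{v}{u}}$ for~\eqref{eq:extra21}, and auxiliary variables $w_i$ for~\eqref{eq:ct27} and $q_i$ for~\eqref{eq:ct24}. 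The dual objective is $\usum{A_i\in C} \log(L_{\edged{A_i}{A_{i+1}}}) x_{\edge{A_i}{A_{i+1}}} + \usum{\edge{A_i^{\inc}}{A_i}} \log(L_{\edged{A_i^{\inc}}{A_i}}) x_{\edge{A_i^{\inc}}{A_i}} + \dots$, which is exactly $\usum{\edge{v}{u}\in E} x_{\edge{v}{u}} \log(L_{\edged{v}{u}})$ once we check that the dual variables attached to the log-coefficient constraints are precisely indexed by the edges of $G$ (the chain edges of the trees $\cT_{A_i}$, the in/out edges at cycle vertices, the remaining tree edges $E_R$, and the cycle edges). So the dual objective of $\LP^{(*)}$ coincides with the $\LP^{(+)}$ objective.

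The core of the argument is then to verify that the dual feasibility constraints of $\LP^{(*)}$ imply (after eliminating the auxiliary variables $w_i,q_i$) exactly the covering constraints~\eqref{eq:lp-covering} of $\LP^{(+)}$: for each vertex $u\in V$, $\usum{\edge{v}{u}\in E} x_{\edge{v}{u}} + \usum{\edge{u}{w}\in E} \frac{x_{\edge{u}{w}}}{p} \ge 1$. For a non-cycle vertex $u\in V\setminus V(C)$ this should fall out of the dual constraint coming from the primal variable $y_u$, where $y_u$ appears with coefficient $1$ in the objective, with coefficient $1$ in each constraint~\eqref{eq:ct22} or~\eqref{eq:extra21} where $u$ is the head, and with coefficient $\frac1p$ in each~\eqref{eq:extra21} where $u$ is the tail; this yields precisely the $\LP^{(+)}$ constraint at $u$ with right-hand side $1$. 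For the cycle: the primal variables $y'_{A_i}$ and $z_{\edge{A_i}{A_{i+1}}}$ and $z_C$ are coupled through~\eqref{eq:ct27} and~\eqref{eq:ct24}, and the plan is to show that summing/combining the dual constraints for these coupled variables (using the fact that $\usum_i w_i$ of the~\eqref{eq:ct27} constraints is forced to $1$ by the $z_C$ variable, whose objective coefficient is $1$) reproduces the $\LP^{(+)}$ constraints at each cycle vertex $A_i$ --- the incoming cycle edge $\edge{A_{i-1}}{A_i}$ contributes its full weight, the outgoing cycle edge $\edge{A_i}{A_{i+1}}$ contributes weight $\frac1p$, matching~\eqref{eq:lp-covering}. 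This is where the girth hypothesis should not even be needed (it was used in Lemma~\ref{claim:gen-lp}); here it is pure LP duality bookkeeping. By weak duality, the dual of $\LP^{(*)}$ has optimum equal to $\LP^{(*)}$, and since the constructed $\vx$ is feasible for $\LP^{(+)}$ with objective equal to that optimum, and $\LP^{(+)}$ minimizes, the given feasible $\vx$ of the lemma statement has objective at least $\LP^{(*)}$, i.e. $\uprod{\edge{v}{u}\in E} L_{\edged{v}{u}}^{x_{\edge{v}{u}}} \ge 2^{\LP^{(*)}}$.

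The main obstacle I anticipate is the careful handling of the auxiliary variables $z_C$, $y'_{A_i}$ and the constraints~\eqref{eq:ct27},~\eqref{eq:ct24}: these encode a ``$\min$ over which cycle edge gets dropped'' via the standard trick of a single shared variable $z_C$ bounded by every branch, and dualizing this correctly --- so that the cycle edge that is ``cut'' gets exponent reflecting one endpoint while the others get the full $\frac1p + 1$ pattern --- requires either a clean case analysis or an averaging/convex-combination argument over the $|C|$ acyclic orientations obtained by deleting each cycle edge in turn. A cleaner alternative, which I would try first, is to avoid formal dualization entirely: directly take an optimal primal solution $(y_u, z_C, z_{\edge{A_i}{A_{i+1}}}, y'_{A_i})$ of $\LP^{(*)}$, read off the ``effective'' $y$-values, and plug them into the dual of $\LP^{(+)}$ (the max-LP with constraints $\frac{y_v}{p} + y_u \le \log L_{\edge{v}{u}}$), checking term by term that each $\LP^{(*)}$ constraint certifies the corresponding $\LP^{(+)}$-dual constraint with the same or larger objective; then strong duality for $\LP^{(+)}$ closes the loop. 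Either route reduces the lemma to a finite, mechanical verification of constraint-matrix correspondences, so the ``hard part'' is really just setting up the indexing so that the two LPs line up edge-for-edge and vertex-for-vertex.
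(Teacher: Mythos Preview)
Your ``cleaner alternative'' is exactly the route the paper takes: map an optimal $\LP^{(*)}$ solution to a feasible point of $\LP^{(**)}$ (the dual of $\LP^{(+)}$) by setting $y_{A_i}:=y'_{A_i}$ on cycle vertices and keeping $y_u$ elsewhere, then verify the $\LP^{(**)}$ constraints edge by edge. So the overall plan is right. The genuine gap is your claim that ``the girth hypothesis should not even be needed here'' and that it was already used in Lemma~\ref{claim:gen-lp}. It is the other way around: Lemma~\ref{claim:gen-lp} is a pure feasibility check with no girth assumption, and the girth condition $|C|\ge p+1$ is used precisely in the present lemma.

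Here is where it bites. After the mapping, the $\LP^{(**)}$ objective is $\sum_{u\notin C} y_u + \sum_{A_i\in C} y'_{A_i}$, while the $\LP^{(*)}$ optimum is $\sum_{u\notin C} y_u + z_C$ with $z_C=\min_{A_i}\bigl(z_{\edge{A_i}{A_{i+1}}}-y'_{A_i}\bigr)+\sum_{A_j}y'_{A_j}$. Constraint~\eqref{eq:ct24} forces $z_{\edge{A_i}{A_{i+1}}}-y'_{A_i}\ge 0$, so a priori $z_C \ge \sum_{A_j} y'_{A_j}$ --- the mapped solution can have \emph{smaller} objective, which is the wrong direction. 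The paper closes this gap via Claim~\ref{claim:opt-lp-gen}: there exists an optimal $\LP^{(*)}$ solution with $\min_{A_i}(z_{\edge{A_i}{A_{i+1}}}-y'_{A_i})=0$, and its proof (shifting each $z$ down by $p\epsilon/k$ and each $y'$ up by $\epsilon/k$) requires $p\le k-1$. Your first route has the same dependency: after dualizing $\LP^{(*)}$ and eliminating the auxiliary $w_i,q_i$, extending an $\LP^{(+)}$-feasible $\vx$ to a dual-$\LP^{(*)}$-feasible point needs $w_i\ge 0$ with $\sum_i w_i\ge 1$ and $w_i\le \frac{x_{\edge{A_i}{A_{i+1}}}}{p}+\sum_{\inc}x_{\edge{A_i^{\inc}}{A_i}}+\sum_{\out}\frac{x_{\edge{A_i}{A_i^{\out}}}}{p}$; for a bare cycle this forces $\sum_i \frac{x_{\edge{A_i}{A_{i+1}}}}{p}\ge 1$, which from the summed covering constraints only yields $\frac{k}{p+1}\ge 1$, i.e.\ $k\ge p+1$. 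So the girth is not bookkeeping --- it is the content of the lemma.
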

The proof of Lemma~\ref{claim:gen-lp} follows from standard linear programming techniques to convert $\min$ bounds into linear programming relaxations and as a result, we defer it to Appendix~\ref{sec:gen-lp}. We prove Lemma~\ref{lem:gen-2} here using the following claim (the proof is Section~\ref{sec:opt-lp-gen}).
\begin{claim} \label{claim:opt-lp-gen}
	There exists an optimal solution to $\LP^{(*)}$ such that
	\begin{align*} 
	\underset{A_i \in C}{\min} \left(z_{\edge{A_i}{A_{i + 1}}} - y'_{A_i}\right) = 0 \numberthis \label{eq:zero-opt-gen}.
	\end{align*}
\end{claim}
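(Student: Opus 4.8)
## Proof Proposal for Claim~\ref{claim:opt-lp-gen}

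\textbf{Overall approach.} The plan is to start from an arbitrary optimal solution $(\vy, z_C, \vy', \vz)$ to $\LP^{(*)}$ and show that if $\min_{A_i \in C}(z_{\edge{A_i}{A_{i+1}}} - y'_{A_i}) = \delta > 0$, then we can perturb the solution to strictly decrease this minimum gap (or make the constraint set ``tighter'') without decreasing the objective value and while preserving feasibility. Iterating this perturbation (and taking a limit / a final case split) yields an optimal solution with $\delta = 0$. Since the objective of $\LP^{(*)}$ is $\left(\usum{u \in V \setminus V(C)} y_u\right) + z_C$, and $z_C$ is constrained by~\eqref{eq:ct27} as $z_C \le z_{\edge{A_i}{A_{i+1}}} + \sum_{A_j \in C \setminus A_i} y'_{A_j}$ for every $A_i \in C$, the quantities $z_{\edge{A_i}{A_{i+1}}} - y'_{A_i}$ directly govern how large $z_C$ can be: summing~\eqref{eq:ct27} over all $A_i$ gives $k \cdot z_C \le \sum_{A_i}\left(z_{\edge{A_i}{A_{i+1}}} - y'_{A_i}\right) + \sum_{A_j} y'_{A_j}$, so shrinking these gaps is exactly what we want for optimality.

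\textbf{Key steps.} First, I would observe that at optimality the objective-relevant constraint~\eqref{eq:ct27} must be tight for the index $A_i$ achieving the minimum of $z_{\edge{A_i}{A_{i+1}}} - y'_{A_i}$ — otherwise $z_C$ could be increased. Second, suppose for contradiction that $\delta := \min_{A_i \in C}(z_{\edge{A_i}{A_{i+1}}} - y'_{A_i}) > 0$. The idea is to increase every $y'_{A_i}$ by the same amount $\epsilon \le \delta$ (so the gaps all shrink by $\epsilon$, and the new minimum gap is $\delta - \epsilon \ge 0$). I need to check feasibility: constraint~\eqref{eq:ct24} $y'_{A_i} \le z_{\edge{A_i}{A_{i+1}}}$ is preserved since the gap stays $\ge 0$; constraint~\eqref{eq:ct21} $\frac{z_{\edge{A_i}{A_{i+1}}}}{p} + y'_{A_{i+1}} \le \log(L_{\edged{A_i}{A_{i+1}}})$ is the one that could be violated by raising $y'_{A_{i+1}}$, and~\eqref{eq:ct26} $y'_{A_i}\ge 0$ only gets easier. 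So the binding obstruction is~\eqref{eq:ct21}. If~\eqref{eq:ct21} has slack for every $A_i$, pick $\epsilon$ small enough to absorb it; then $z_C$ in~\eqref{eq:ct27} can be raised (the RHS for the minimizing $A_i$ went up by $\epsilon$, and for non-minimizing $A_i$ the slack was already positive), strictly improving the objective — contradicting optimality. Hence at least one~\eqref{eq:ct21} constraint, say for index $A_j$, must be tight.

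\textbf{Handling the tight case.} When~\eqref{eq:ct21} is tight for $A_j$, raising $y'_{A_j}$ is blocked unless we simultaneously lower $z_{\edge{A_{j-1}}{A_j}}$. So the more careful perturbation is: lower $z_{\edge{A_{j-1}}{A_j}}$ by $p\epsilon$ and raise $y'_{A_j}$ by $\epsilon$ to keep~\eqref{eq:ct21} tight for $A_j$ — but lowering $z_{\edge{A_{j-1}}{A_j}}$ decreases the gap $z_{\edge{A_{j-1}}{A_j}} - y'_{A_{j-1}}$, which is good if that index was not the minimizer, and I need to verify the other constraints involving $z_{\edge{A_{j-1}}{A_j}}$ (namely~\eqref{eq:ct27},~\eqref{eq:ct23},~\eqref{eq:ct22},~\eqref{eq:ct24},~\eqref{eq:ct25}) only relax or stay feasible: \eqref{eq:ct23} and~\eqref{eq:ct22} have $z_{\edge{A_{j-1}}{A_j}}$ on the LHS so decreasing it helps; \eqref{eq:ct25} $z_{\edge{A_{j-1}}{A_j}} \ge 0$ caps how far we can go; \eqref{eq:ct27} for index $A_{j-1}$ has $z_{\edge{A_{j-1}}{A_j}}$ on the RHS, so it could become binding on $z_C$. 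I expect this chain-of-dependencies bookkeeping around the cycle — tracking which of~\eqref{eq:ct21},~\eqref{eq:ct24},~\eqref{eq:ct25} becomes the active bound as we push $\epsilon$ — to be the main obstacle, essentially a potential-function / extremal argument showing that some perturbation direction always either decreases $\delta$ or contradicts optimality. The clean way to finish is to argue: among all optimal solutions, pick one minimizing $\sum_{A_i}(z_{\edge{A_i}{A_{i+1}}} - y'_{A_i})$ (this minimum exists since the feasible region of optimal solutions is a nonempty polytope and the function is linear), and then show the perturbations above force this minimizing optimal solution to have $\min_{A_i}(z_{\edge{A_i}{A_{i+1}}} - y'_{A_i}) = 0$, since otherwise one of the two perturbation moves strictly decreases the sum while staying optimal and feasible, a contradiction.
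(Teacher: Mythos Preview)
Your proposal identifies the right general direction (a perturbation argument showing that a positive minimum gap $\delta$ contradicts optimality), but it has a genuine gap: you never find a perturbation that actually closes the argument, and you never invoke the girth hypothesis $k\ge p+1$ (i.e., $p\le k-1$), which is essential here.

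Concretely, your first move (raise all $y'_{A_i}$ by $\epsilon$) breaks~\eqref{eq:ct21}, as you note. Your fallback (lower $z_{\edge{A_{j-1}}{A_j}}$ by $p\epsilon$ and raise $y'_{A_j}$ by $\epsilon$ at a single tight index $j$) triggers exactly the cascade you flag as ``the main obstacle,'' and you do not resolve it; the extremal ``minimize the sum'' argument is asserted but not carried out. The paper's proof sidesteps this entire chain with a single clean observation: perform the \emph{uniform} perturbation
\[
z^*_{\edge{A_i}{A_{i+1}}}=z_{\edge{A_i}{A_{i+1}}}-\frac{p\epsilon}{k},\qquad y^*_{A_i}=y'_{A_i}+\frac{\epsilon}{k}\qquad\text{for all }A_i\in C.
\]
This leaves~\eqref{eq:ct21} \emph{exactly invariant} (the change is $-\epsilon/k+\epsilon/k=0$), so there is no tight-constraint case split at all. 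Constraints~\eqref{eq:ct22},~\eqref{eq:ct23} only improve since $z$ drops. The gap in~\eqref{eq:ct24} shrinks by $(p+1)\epsilon/k$, and the RHS of~\eqref{eq:ct27} changes by $(k-1-p)\epsilon/k$; both remain nonnegative precisely because $p\le k-1$. The same girth inequality is what makes the objective non-decreasing under this move. Without using $p\le k-1$ your argument cannot close, and indeed the claim is part of Lemma~\ref{thm:cyclic-bound-lp}, which explicitly requires girth at least $p+1$.
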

\subsubsection{Proof of Lemma~\ref{lem:gen-2}} \label{sec:gen-2-app}
Our proof will use the dual of $\LP^{(+)}$ (which we call $\LP^{(**)}$) and weak duality. We start by restating the dual here.
\begin{align*}
& \max \quad \usum{u \in V} y_u  \\
& \frac{y_v}{p} + y_u \le \log(L_{\edge{v}{u}} \quad \forall \edge{u}{v} \in E \numberthis \label{eq:dual-restate-g1}\\
& y_u \ge 0 \quad \forall u \in V \numberthis \label{eq:dual-restate-g3}
\end{align*}
We will now restate the dual in a language closer to the notation in this section.
\begin{align*}
& \max \quad \left ( \usum{u \in \cT_{A_i^{\inc}}: A_i \in C} y_{u} \right) + \left(\usum{u \in C} y_{u} \right) + \left( \usum{u \in \cT_{A_i^{\out}}: A_i \in C} y_{u} \right)  \tag{$\LP^{(**)}$}\\
& \frac{y_{A_i}}{p} + y_{A_{i + 1}} \le \log(L_{{\edged{A_i}{A_{i + 1}}}})  \quad \forall A_i \in C \numberthis \label{eq:dual-g1}\\
& \frac{y_{A_i^{\inc}}}{p} + y_{A_i}  \le \log(L_{{\edged{A_i^{\inc}}{A_i}}})  \quad \forall \edge{A_i^{\inc}}{A_i} \in E_{\inc} \numberthis \label{eq:dual-g2} \\
& \frac{y_{A_i}}{p} + y_{A_i^{\out}} \le \log(L_{{\edged{A_i}{A_i^{\out}}}}) \quad \forall \edge{A_i}{A_i^{\out}} \in E_{\out} \numberthis \label{eq:dual-g3} \\ 
& \frac{y_v}{p} + y_u \le \log(L_{\edged{v}{u}}) \quad \forall \edge{v}{u} \in E_{R} \numberthis \label{eq:dual-g4} \\
& y_u \ge 0 \quad \forall u \in V \numberthis \label{eq:dual-g5}.
\end{align*}
Here, we decompose the objective value based on Corollary~\ref{cor:cycle-trees} and we decompose the constraints based on definitions of $E_{\inc}, E_{\out}$ and $E_{R}$ respectively.

\begin{proof}[Proof of Lemma~\ref{lem:gen-2}]
	We first argue that there exists an optimal solution to $\LP^{(*)}$ on $G$ that can be converted to a feasible solution to $\LP^{(**)}$ on $G$. Since the objective values of both $\LP^{(*)}$ and $\LP^{(**)}$ are maximizing, this completes the proof.
	
	Consider a feasible solution to $\LP^{(*)}$:
	\begin{align*} 
	\vy = \left( \left(y_{u}\right)_{u \in \cT_{A_i^{\inc}, A_i}, A_i \in C},  \left( z_{\edge{A_i}{A_{i + 1}}}  \right)_{A_i \in C}, \left( y'_{A_i}  \right)_{A_i \in C},  z_{C} , \left(y_{u}\right)_{u \in \cT_{A_i, A_i^{\out}}, A_i \in C}\right)
	\end{align*}
	We construct the following solution to $\LP^{(**)}$, where we have (with a slight abuse of notation)
	\begin{align*}
	\vy' = \left( \left(y_{u} = y_{u}\right)_{u \in \cT_{A_i^{\inc}, A_i}, A_i \in C},  \left( y_{A_i} = y'_{A_i}  \right)_{A_i \in C},\left(y_{u} = y_{u}\right)_{u \in \cT_{A_i, A_i^{\out}}, A_i \in C}\right) 
	\end{align*}
	With this assignment, note that the only difference between the linear programs $\LP^{(*)}$ and $\LP^{(**)}$ is that the set of variables $(z_{\edge{A_i}{A_{i + 1}}})_{A_i \in C}$ and $z_{C}$ in $\LP^{(*)}$ are not present in $\LP^{(**)}$. 
	
	Next, we argue that the solution $\vy'$ we constructed is feasible for $\LP^{(**)}$ (i.e., satisfies constraints~\eqref{eq:dual-g1}--\eqref{eq:dual-g5}). Recall from definition of $\LP^{(**)}$, we have 
	\begin{align*}
	E_{A_i^{\inc}} = \left \{ \edge{A_i^{\inc}}{A_i}: A_i^{\inc} \in \cA_i^{\inc}, A_i \in C \right \} \\ 
	E_{A_i^{\out}} = \left \{ \edge{A_i}{A_i^{\out}}: A_i^{\out} \in \cA_i^{\out}, A_i \in C \right \} \\
	E_{R} = E \setminus \left( E(C) \cup E_{A_i^{\inc}} \cup E_{A_i^{\out}} \right).
	\end{align*} 
	We start with~\eqref{eq:dual-g1}.
	\begin{align*}
	\frac{y_{A_i}}{p} + y_{A_{i + 1}} \le \log(L_{{\edged{A_i}{A_{i + 1}}}})  \quad \forall A_i \in C \numberthis \label{eq:y-z-g}.
	\end{align*}
	In particular, we have (by our construction) for every $A_i \in C$:
	\begin{align*}
	\frac{y_{A_i}}{p} + y_{A_{i + 1}} & = \frac{y_{A_i}'}{p} + y_{A_{i + 1}}'  \\
	& \le \frac{z_{\edge{A_i}{A_{i + 1}}}}{p} + y_{A_{i + 1}}' \\
	& \le  \log(L_{{\edged{A_i}{A_{i + 1}}}}),
	\end{align*}
	where the equality follows by definition, first inequality follows from~\eqref{eq:ct24} and the final inequality follows from~\eqref{eq:ct21}. We can make similar arguments for~\eqref{eq:dual-g2}. For edges in $E_{A_i^{\inc}}$ and $E_{A_i^{\out}}$, we can do a similar argument  as above using~\eqref{eq:ct22},~\eqref{eq:ct23} and~\eqref{eq:ct24}. Finally, for edges in $E_R$, we have (from~\eqref{eq:extra21}):
	\begin{align*}
	\frac{y_v}{p} + y_u\le \log(L_{{\edged{v}{u}}}).
	\end{align*} 
	The remaining constraints in $\LP^{(**)}$ are satisfied as they directly follow from~\eqref{eq:ct26} and~\eqref{eq:extra22} (of $\LP^{(*)}$). We restate them here for the sake of completeness. 
	\begin{align*}
	y_{A_i} = y_{A}' \ge 0 \quad \forall A_i \in C \\
	\left(y_{u} = y_{u}\right)_{u \in \cT_{A_i^{\inc}, A_i}, A_i \in C} \\
	\left(y_{u} = y_{u}\right)_{u \in \cT_{A_i, A_i^{\out}}, A_i \in C}.
	\end{align*}
	
	To complete the proof, we argue that the optimal objective value of $\LP^{(*)}$ is equal to the 
	objective value of $\vy'$. We first claim that the following is true at optimally for $\LP^{(*)}$:
	\begin{align*}
	z_{C} = \underset{A_i \in C}{\min} \left(z_{\edge{A_i}{A_{i + 1}}} + \sum_{A_j \in C \setminus A_i} y_{A_j}' \right) \numberthis \label{eq:zc-opt}.
	\end{align*}
	Assuming this is true, we can take an optimal solution $(\vy_u, z_c)$ to $\LP^{(*)}$ that satisfies Claim ~\ref{claim:opt-lp-gen} and rewrite the optimal objective value of $\LP^{(*)}$ as follows:
	\begin{align*}
	& \left(\usum{u \in \cT_{A_i^{\inc}, A_i}, A_i \in C} y_{u} \right) + z_{C} + \left(\usum{u \in \cT_{A_i, A_i^{\out}}, A_i \in C} y_{u}\right) \\
	& = \left(\usum{u \in \cT_{A_i^{\inc}, A_i}, A_i \in C} y_{u} \right)  + \umin{A_i \in C} \left(z_{\edge{A_i}{A_{i + 1}}} + \sum_{A_j \in C \setminus A_i} y_{A_j}' \right) +  \left(\usum{u \in \cT_{A_i, A_i^{\out}}, A_i \in C} y_{u}\right)  \\
	& = \left(\usum{u \in \cT_{A_i^{\inc}, A_i}, A_i \in C} y_{u} \right) + \left( \underset{A_i \in C}{\min} \left(z_{\edge{A_i}{A_{i + 1}}} - y_{A_i}' \right) + \sum_{A_i \in C} y_{A_i}' \right) +  \left(\usum{u \in \cT_{A_i, A_i^{\out}}, A_i \in C} y_{u}\right)  \\
	& =  \left(\usum{u \in \cT_{A_i^{\inc}, A_i}, A_i \in C} y_{u} \right) +  \left( \sum_{A_i \in C} y_{A_i}'  \right) + \left(\usum{u \in \cT_{A_i, A_i^{\out}}, A_i \in C} y_{u}\right),
	\end{align*}
	where the final equation follows from Claim~\ref{claim:opt-lp-gen}. Note that this shows the optimal objective value of $\LP^{(*)}$ is equal to the objective value of $\vy'$. To complete this argument, we prove~\eqref{eq:zc-opt}. Recall from~\eqref{eq:ct27} that $z_{C}$ is upper bounded by $z_{\edge{A_i}{A_{i + 1}}} + \sum_{A_j \in C \setminus A_i} y_{A_j}'$ for every $A_i \in C$. In particular, this implies
	\begin{align*}
	z_{C} \le \underset{A_i \in C}{\min} \left(z_{\edge{A_i}{A_{i + 1}}} + \sum_{A_j \in C \setminus A_i} y_{A_j}' \right).
	\end{align*}
	Recall that the objective value of $\LP^{(*)}$ is maximizing and as a result, if $\underset{A_i \in C}{\min} \left(z_{\edge{A_i}{A_{i + 1}}} + \sum_{A_j \in C \setminus A_i} y_{A_j}' \right)$ is greater than $z_{C}$, we can always increase $z_{C}$ to make it equal to $\underset{A_i \in C}{\min} \left(z_{\edge{A_i}{A_{i + 1}}} + \sum_{A_j \in C \setminus A_i} y_{A_j}' \right)$. Note that this can only increase the objective value of $\LP^{(*)}$ resulting solution is still feasible since $z_{C}$ is not involved in any other constraint except $z_{C} \ge 0$.

	This shows $\LP^{(*)} \le \LP^{(**)}$ and using strong duality, we have that the optimal objective values of both $\LP^{(**)}$ and $\LP^{(+)}$ are equal. This completes the proof, showing that
	\begin{align*}
	2^{\LP^{(*)}} & \le 2^{\LP^{(**)}} \\
	& = 2^{\LP^{(+)}} \\
	& \le \uprod{\edge{v}{u} \in E} L_{\edged{v}{u}}^{x_{\edge{v}{u}}},
	\end{align*}
	as required.
\end{proof}

\subsubsection{Proof of Claim~\ref{claim:opt-lp-gen}} \label{sec:opt-lp-gen}
At a high level, we show how to convert any optimal solution to $\LP^{(*)}$ to the form stated in this claim. We start by assuming that there exists an optimal solution to $\LP^{(*)}$
\begin{align*}
\vy^* = \left( \left(y_{u}\right)_{u \in \cT_{A_i^{\inc}, A_i}, A_i \in C},  \left( z_{\edge{A_i}{A_{i + 1}}}  \right)_{A_i \in C}, \left( y'_{A_i}  \right)_{A_i \in C},  z_{C} , \left(y_{u}\right)_{u \in \cT_{A_i, A_i^{\out}}, A_i \in C}\right)
\end{align*}
such that
\begin{align*}
\underset{A_i \in C}{\min} \left(z_{\edge{A_i}{A_{i + 1}}} - y_{A_i}' \right) = \epsilon > 0 \numberthis \label{eq:opt-eps-gen}.
\end{align*}
Then, we construct a related solution 
\begin{align*}
\mathbf{z}^{*} = \left( \left(y_{u}\right)_{u \in \cT_{A_i^{\inc}, A_i}, A_i \in C},  \left( z^*_{\edge{A_i}{A_{i + 1}}}  \right)_{A_i \in C}, \left( y^*_{A_i}  \right)_{A_i \in C},  z_{C} , \left(y_{u}\right)_{u \in \cT_{A_i, A_i^{\out}}, A_i \in C}\right)
\end{align*}
to $\LP^{(*)}$ using $\mathbf{y}^{*}$, where we set (assuming $k = |C|$)
\begin{align*}
z^*_{\edge{A_i}{A_{i + 1}}} = z_{\edge{A_i}{A_{i + 1}}} - \frac{p \cdot \epsilon}{k}, \quad y^*_{A_i} = y_{A_i}' + \frac{\epsilon}{k} \quad \forall A_i \in C \numberthis \label{eq:zy-opt-gen}.
\end{align*}

Note that the remaining values $\left(\left(y_{u}\right)_{u \in \cT_{A_i^{\inc}, A_i}, A_i \in C}, \left(y_{u}\right)_{u \in \cT_{A_i, A_i^{\out}}, A_i \in C}\right)$ remain the same. Note if we argue $\mathbf{z}^{*}$ is an optimal solution to $\LP^{(*)}$, then that completes the proof of Claim~\ref{claim:opt-lp-gen}. We argue this in two steps.
\begin{lemm} \label{lemma:z-opt-gen}
	The objective value of \space $\mathbf{z}^{*}$ is at least the optimal objective value of $\LP^{(*)}$.
\end{lemm}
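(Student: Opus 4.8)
The plan is to observe that the objective function of $\LP^{(*)}$ does not involve any of the coordinates that are perturbed in passing from $\vy^*$ to $\mathbf{z}^*$. Recall that the objective of $\LP^{(*)}$ is $\left(\usum{u \in V \setminus V(C)} y_{u}\right) + z_{C}$, so it depends only on the tree variables $y_u$ for $u \in V \setminus V(C)$ --- those attached to the vertices of $\cT_{A_i^{\inc}}$ and $\cT_{A_i^{\out}}$ for $A_i \in C$ --- together with the single cycle variable $z_C$. On the other hand, by the definition of $\mathbf{z}^*$ in~\eqref{eq:zy-opt-gen}, the only coordinates in which $\mathbf{z}^*$ differs from $\vy^*$ are the $z_{\edge{A_i}{A_{i+1}}}$ and the $y'_{A_i}$ over $A_i \in C$; every $y_u$ with $u \in V \setminus V(C)$, as well as the value $z_C$, is copied over unchanged.

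First I would simply evaluate the objective at $\mathbf{z}^*$: since $\mathbf{z}^*$ agrees with $\vy^*$ on every variable that occurs in the objective, the objective value of $\mathbf{z}^*$ equals that of $\vy^*$, and the latter is by hypothesis the optimal value of $\LP^{(*)}$. Hence the objective value of $\mathbf{z}^*$ is exactly --- and in particular at least --- the optimal objective value of $\LP^{(*)}$, which is precisely the assertion of the lemma.

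There is essentially no obstacle in this step; it is pure bookkeeping once the objective of $\LP^{(*)}$ and the definition~\eqref{eq:zy-opt-gen} of $\mathbf{z}^*$ are placed side by side, the one point worth double-checking being that $z_C$ is held fixed (it is). The real work around this lemma lies in its companion statement, which must verify that $\mathbf{z}^*$ is still \emph{feasible} for $\LP^{(*)}$: there the hypothesis $\epsilon = \umin{A_i \in C}\left(z_{\edge{A_i}{A_{i+1}}} - y'_{A_i}\right)$ is what keeps $z^*_{\edge{A_i}{A_{i+1}}} \ge y^*_{A_i} \ge 0$, and the girth bound $|C| \ge p+1$ is exactly what makes the perturbed constraints~\eqref{eq:ct27} and~\eqref{eq:ct24} still hold (the $+\epsilon/|C|$ shifts distributed over the cycle variables offset the $-p\epsilon/|C|$ change introduced by dividing by $p$). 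Combining the present lemma with that feasibility check, together with the fact that the perturbation strictly decreases $\umin{A_i \in C}\left(z_{\edge{A_i}{A_{i+1}}} - y'_{A_i}\right)$, then yields Claim~\ref{claim:opt-lp-gen}.
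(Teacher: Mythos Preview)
Your argument is correct and is in fact cleaner than the paper's own proof. You observe that the objective of $\LP^{(*)}$ is $\sum_{u \in V\setminus V(C)} y_u + z_C$, and by the definition of $\mathbf{z}^*$ in~\eqref{eq:zy-opt-gen} every one of these coordinates is copied verbatim from the optimal $\vy^*$; hence the objective value of $\mathbf{z}^*$ is \emph{equal} to the optimum, and the lemma is immediate.

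The paper takes a more roundabout route: instead of reading off the unchanged $z_C$, it replaces the cycle contribution of $\mathbf{z}^*$ by the expression $\min_{A_i\in C}\bigl(z^*_{\edge{A_i}{A_{i+1}}}-y^*_{A_i}\bigr)+\sum_{A_i\in C} y^*_{A_i}$ (the tight value of constraint~\eqref{eq:ct27} for the perturbed variables) and then uses the girth hypothesis $p\le k-1$ to show this dominates the original $z_C$. That computation is essentially the verification of constraint~\eqref{eq:ct27} and is redone anyway in the feasibility lemma, so your shortcut loses nothing. What the paper's calculation does buy is an explicit formula showing that one could even \emph{increase} $z_C$ after the perturbation, but for Lemma~\ref{lemma:z-opt-gen} as stated this is unnecessary.
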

\begin{lemm} \label{lemma:z-feasible-gen}
	$\mathbf{z}^{*}$ is a feasible solution to $\LP^{(*)}$.
\end{lemm}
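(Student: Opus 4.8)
The plan is to verify directly that the perturbed vector $\mathbf{z}^{*}$ produced from the optimal solution $\mathbf{y}^{*}$ by the rule \eqref{eq:zy-opt-gen} satisfies every inequality \eqref{eq:ct27}--\eqref{eq:extra22} of $\LP^{(*)}$, treating the constraint families one at a time. Two facts are used throughout: $\epsilon>0$ by \eqref{eq:opt-eps-gen}, and $k=|C|\ge p+1$, because $C$ is a (directed) cycle of $G$ and $G$ has girth at least $p+1$ by the hypothesis of Theorem~\ref{thm:lp-main}. First I would record the trivial cases: the only coordinates changed in passing from $\mathbf{y}^{*}$ to $\mathbf{z}^{*}$ are the cycle coordinates $\inparen{z_{\edge{A_i}{A_{i+1}}}}_{A_i\in C}$ and $\inparen{y'_{A_i}}_{A_i\in C}$, while all tree coordinates $y_u$ and the variable $z_C$ are untouched; hence \eqref{eq:extra21}, \eqref{eq:extra22}, and the part $z_C\ge 0$ of \eqref{eq:ct26} are inherited verbatim, and $y^{*}_{A_i}=y'_{A_i}+\tfrac{\epsilon}{k}\ge 0$ disposes of the rest of \eqref{eq:ct26}.

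Next I would handle the norm constraints touching the cycle. In \eqref{eq:ct21}, the left-hand side for the edge $\edge{A_i}{A_{i+1}}$ is $\tfrac1p\inparen{z_{\edge{A_i}{A_{i+1}}}-\tfrac{p\epsilon}{k}}+\inparen{y'_{A_{i+1}}+\tfrac{\epsilon}{k}}=\tfrac{z_{\edge{A_i}{A_{i+1}}}}{p}+y'_{A_{i+1}}$, so it is literally unchanged; this exact cancellation (the $-\epsilon/k$ coming from the $z$-term via the coefficient $1/p$, the $+\epsilon/k$ coming from the $y'$-term) is precisely why the step sizes were chosen as $p\epsilon/k$ on the $z$-coordinates and $\epsilon/k$ on the $y'$-coordinates. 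In \eqref{eq:ct23} and \eqref{eq:ct22} the left-hand side is a tree variable ($y_{A_i^{\inc}}$, respectively $y_{A_i^{\out}}$) plus $z^{*}_{\edge{A_i}{A_{i+1}}}=z_{\edge{A_i}{A_{i+1}}}-\tfrac{p\epsilon}{k}<z_{\edge{A_i}{A_{i+1}}}$, so it strictly decreases and the constraint is preserved.

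The three remaining families are exactly where $k\ge p+1$ is used. For \eqref{eq:ct24}, the inequality $y^{*}_{A_i}\le z^{*}_{\edge{A_i}{A_{i+1}}}$ rearranges to $z_{\edge{A_i}{A_{i+1}}}-y'_{A_i}\ge\tfrac{(p+1)\epsilon}{k}$; the left side is at least $\epsilon$ by the definition of $\epsilon$ as the minimum of $z_{\edge{A_j}{A_{j+1}}}-y'_{A_j}$ over $A_j\in C$, and $\tfrac{(p+1)\epsilon}{k}\le\epsilon$ since $k\ge p+1$. For \eqref{eq:ct25}, combining $z_{\edge{A_i}{A_{i+1}}}\ge y'_{A_i}+\epsilon\ge\epsilon$ (from \eqref{eq:ct24}, the definition of $\epsilon$, and $y'_{A_i}\ge 0$) with $p<k$ gives $z^{*}_{\edge{A_i}{A_{i+1}}}=z_{\edge{A_i}{A_{i+1}}}-\tfrac{p\epsilon}{k}\ge\inparen{1-\tfrac pk}\epsilon>0$. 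For \eqref{eq:ct27}, the new right-hand side $z^{*}_{\edge{A_i}{A_{i+1}}}+\sum_{A_j\in C\setminus A_i}y^{*}_{A_j}$ equals the old right-hand side plus $-\tfrac{p\epsilon}{k}+(k-1)\tfrac{\epsilon}{k}=\tfrac{(k-1-p)\epsilon}{k}\ge 0$, once more by $k\ge p+1$, so it still dominates $z_C$.

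I do not expect a real obstacle here: once the two step sizes are unwound against the coefficient $1/p$ carried by the cycle variables in \eqref{eq:ct21}, all of the work collapses to the single arithmetic inequality $|C|\ge p+1$, which — together with the companion objective-value statement (Lemma~\ref{lemma:z-opt-gen}) and Claim~\ref{claim:opt-lp-gen} — is exactly what pins down the girth hypothesis of Theorem~\ref{thm:lp-main}. The one point to be careful about is the cyclic bookkeeping: each $A_j\in C$ is simultaneously the head $A_{i+1}$ of the edge $\edge{A_{j-1}}{A_j}$ and the tail $A_i$ of $\edge{A_j}{A_{j+1}}$, so one must check that a single decrement/increment of $y'_{A_j}$ is accounted for both in the cancellation of \eqref{eq:ct21} on $\edge{A_{j-1}}{A_j}$ and in the sums of \eqref{eq:ct27} for the various $A_i$; this is routine index-chasing.
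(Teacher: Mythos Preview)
Your proposal is correct and follows essentially the same approach as the paper's proof: both arguments verify the constraints of $\LP^{(*)}$ one family at a time, using the exact cancellation in \eqref{eq:ct21}, the monotone decrease of the $z$-coordinate in \eqref{eq:ct22}--\eqref{eq:ct23}, and the girth inequality $k\ge p+1$ for \eqref{eq:ct27}, \eqref{eq:ct24}, and \eqref{eq:ct25}. The only cosmetic differences are organizational (you group the constraints by ``trivial/cancellation/girth-using'' whereas the paper runs through them in numerical order) and in the nonnegativity check \eqref{eq:ct25}, where the paper chains $z^{*}_{\edge{A_i}{A_{i+1}}}\ge y^{*}_{A_i}\ge y'_{A_i}+\tfrac{\epsilon}{k}\ge 0$ via the already-verified \eqref{eq:ct24}, while you bound $z^{*}_{\edge{A_i}{A_{i+1}}}\ge (1-\tfrac{p}{k})\epsilon>0$ directly.
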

The latter proof follows from standard techniques and we defer it to Appendix~\ref{sec:z-feasible-gen}. We prove Lemma~\ref{lemma:z-opt-gen} here.
\begin{proof}[Proof of Lemma~\ref{lemma:z-opt-gen}]
	Recall that $\mathbf{y}^*$ is an optimal solution to $\LP^{(*)}$ with objective value 
	\begin{align*}	\left(\usum{u \in \cT_{A_i^{\inc}, A_i}, A_i \in C} y_{u} \right) + z_{C} + \left(\usum{u \in \cT_{A_i, A_i^{\out}}, A_i \in C} y_{u}\right).
	\end{align*}
	and $\mathbf{z}^*$ has objective value
	\begin{align*}
	\left(\usum{u \in \cT_{A_i^{\inc}, A_i}, A_i \in C} y_{u} \right) + \underset{A_i \in C}{\min} \left(z^*_{\edge{A_i}{A_{i + 1}}} - y^*_{A_i} \right) + \left(\sum_{A_i \in C} y^*_{A_i} \right) +  \left(\usum{u \in \cT_{A_i, A_i^{\out}}, A_i \in C} y_{u}\right).
	\end{align*}
	Our goal here is to prove that the objective value of $\vz^*$ is at least the objective value of $\vy^*$. More formally, we prove
	\begin{align*}
	& \left(\usum{u \in \cT_{A_i^{\inc}, A_i}, A_i \in C} y_{u} \right) + \underset{A_i \in C}{\min} \left(z^*_{\edge{A_i}{A_{i + 1}}} - y^*_{A_i} \right) + \left(\sum_{A_i \in C} y^*_{A_i} \right) +  \left(\usum{u \in \cT_{A_i, A_i^{\out}}, A_i \in C} y_{u}\right) \\
	&  \ge \left(\usum{u \in \cT_{A_i^{\inc}, A_i}, A_i \in C} y_{u} \right) + z_{C} + \left(\usum{u \in \cT_{A_i, A_i^{\out}}, A_i \in C} y_{u}\right) \\
	&  = \left(\usum{u \in \cT_{A_i^{\inc}, A_i}, A_i \in C} y_{u} \right) + \umin{A_i \in C}  \left(z_{\edge{A_i}{A_{i + 1}}} + \sum_{A_j \in C \setminus A_i} y_{A_j}' \right) + 
	\left(\usum{u \in \cT_{A_i, A_i^{\out}}, A_i \in C} y_{u}\right) \\
	& = \left(\usum{u \in \cT_{A_i^{\inc}, A_i}, A_i \in C} y_{u} \right) +  \umin{A_i \in C}  \left( z_{\edge{A_i}{A_{i + 1}}}  - y'_{A_i} \right) + \left(\sum_{A_i \in C} y'_{A_i}  \right) + \left(\usum{u \in \cT_{A_i, A_i^{\out}}, A_i \in C} y_{u}\right) \\
	& \ge \left(\usum{u \in \cT_{A_i^{\inc}, A_i}, A_i \in C} y_{u} \right) +  \mathbf{\epsilon} + \left(\sum_{A_i \in C} y'_{A_i}  \right) + \left(\usum{u \in \cT_{A_i, A_i^{\out}}, A_i \in C} y_{u}\right) \numberthis \label{eq:obj-opt-gen},
	\end{align*}
	where the first inequality follows from definition of $\vy^*$ and the second inequality follows from ~\eqref{eq:opt-eps-gen}. Since the terms $\left(\usum{u \in \cT_{A_i^{\inc}, A_i}, A_i \in C} y_{u} \right)$ and $\left(\usum{u \in \cT_{A_i, A_i^{\out}}, A_i \in C} y_{u}\right)$ are common to both sides of the above expression, we can ignore them for the rest of the argument. We start by considering the left-hand side of~\eqref{eq:obj-opt-gen}:
	\begin{align*}
	& \underset{A_i \in C}{\min} \left(z^*_{\edge{A_i}{A_{i + 1}}} - y^*_{A_i} \right) + \left(\sum_{A_i \in C} y^*_{A_i} \right) \\
	&= \underset{A_i \in C}{\min} \left(z_{\edge{A_i}{A_{i + 1}}} - \frac{p \cdot \epsilon}{k} -  y'_{A_i}  - \frac{\epsilon}{k} \right) +  \sum_{A_i \in C} \left(y_{A_i}' + \frac{\epsilon}{k} \right) \\
	& =  \underset{A_i \in C}{\min} \left(z_{\edge{A_i}{A_{i + 1}}} - y_{A_i}  - \frac{(p + 1)\epsilon}{k} \right) +  \sum_{A_i \in C} \left(y'_{A_i} + \frac{\epsilon}{k} \right) \\
	& =  \underset{A_i \in C}{\min} \left(z_{\edge{A_i}{A_{i + 1}}} - y_{A_i} \right) - \frac{p + 1}{k} \cdot \epsilon  + \epsilon + \sum_{A_i \in C} y'_{A_i} \\
	& = \frac{k - 1 - p}{k} \cdot \epsilon + \underset{A_i \in C}{\min} \left(z_{\edge{A_i}{A_{i + 1}}} - y_{A_i} \right) + \left(\sum_{A_i \in C} y'_{A_i}  \right) \\
	& \ge  \frac{k - 1 - p}{k} \cdot \epsilon  + \epsilon + \left(\sum_{A_i \in C} y'_{A_i}  \right) \\
	& =  \frac{2 \cdot k - 1 - p}{k} \cdot \epsilon + \left(\sum_{A_i \in C} y'_{A_i}  \right) \\
	& \ge \epsilon + \left(\sum_{A_i \in C} y'_{A_i}  \right), \text{ since } p \le k - 1.
	\end{align*}
	Here, the first equality follows by substituting the values of $z^*_{\edge{A_i}{A_{i + 1}}}$ and $y^*_{A_i}$ (from~\eqref{eq:zy-opt-gen}) for every $A_i \in C$ and the first inequality follows by substituting $\underset{A_i \in C}{\min} \left(z_{\edge{A_i}{A_{i + 1}}} - y_{A_i} \right)  \ge \epsilon$ (from~\eqref{eq:opt-eps-gen}). This proves~\eqref{eq:obj-opt-gen}, as desired.
\end{proof}

\subsection{Proof of Lemma~\ref{lemma:z-feasible-gen}} \label{sec:z-feasible-gen}
\begin{proof}[Proof of Lemma~\ref{lemma:z-feasible-gen}]
	To argue the feasibility of $\mathbf{z}^{*}$, we go over the constraints of $\LP^{(*)}$ one-by-one and show how each one is satisfied. Throughout the proof, we will use the values $z^*_{\edge{A_i}{A_{i + 1}}}$ and $y^*_{A_i}$ for every $A_i \in C$ from~\eqref{eq:zy-opt-gen}.
	
	We start by proving
	\begin{align*}
	z_{C} \le z^*_{\edge{A_i}{A_{i + 1}}} +  \sum_{A_j \in C \setminus A_i} y^*_{A_j}
	\end{align*}
	below. We have
	\begin{align*}
	z^*_{\edge{A_i}{A_{i + 1}}} +  \sum_{A_j \in C \setminus A_i} y^*_{A_j} & =  z_{\edge{A_i}{A_{i + 1}}}  - \frac{p \epsilon}{k} + \sum_{A_j \in C \setminus A_i} \left(y_{A_j}' + \frac{\epsilon}{k}  \right) \quad \forall A_i \in C \\
	& = \left( \frac{(k - 1) \epsilon}{k} - \frac{p \epsilon}{k}  \right) + z_{\edge{A_i}{A_{i + 1}}}  + \sum_{A_j \in C \setminus A_i}  y_{A_j}' \quad \forall A_i \in C \\
	& \ge z_{\edge{A_i}{A_{i + 1}}}  + \sum_{A_j \in C \setminus A_i}  y_{A_j}' \quad \forall A_i \in C \\
	& \ge z_{C}.
	\end{align*}
	Here, the first equation follows by direct substitution. Further, the first inequality follows from our assumption that $p \le k - 1$ and the second inequality follows from~\eqref{eq:ct27}. This shows that $\vz^*$ satisfies~\eqref{eq:ct27}. 
	
	Next, we argue 
	\begin{align*}
	\frac{z^*_{\edge{A_i}{A_{i + 1}}}}{p} + y^*_{A_{i + 1}} \le \log(L_{\edge{A_i}{A_{i + 1}}})  \quad \forall A_i \in C \numberthis \label{eq:cs-lp-2-gen}. 
	\end{align*}
	As in the earlier case, we substitute these values from~\eqref{eq:zy-opt-gen}
	\begin{align*}
	\frac{z^*_{\edge{A_i}{A_{i + 1}}}}{p} + y^*_{A_{i + 1}}  & = \frac{z_{\edge{A_i}{A_{i + 1}}} - \frac{p \cdot \epsilon}{k}}{p} +  y^*_{A_{i + 1}} \quad \forall A_i \in C\\
	& = \frac{z_{\edge{A_i}{A_{i + 1}}}}{p} - \frac{\epsilon}{k} +  y'_{A_{i + 1}} + \frac{\epsilon}{k} \quad \forall A_i \in C\\
	& = \frac{z_{\edge{A_i}{A_{i + 1}}}}{p} + y'_{A_{i + 1}} \quad \forall A_i \in C\\
	& \le \log(L_{{\edged{A_i}{A_{i + 1}}}}) \quad \forall A_i \in C,
	\end{align*}
	where the final inequality follows from~\eqref{eq:ct21} and proves~\eqref{eq:cs-lp-2-gen}, as required. Recall from definition of $\LP^{(**)}$, we have 
	\begin{align*}
	E_{A_i^{\inc}} = \left \{ \edge{A_i^{\inc}}{A_i}: A_i^{\inc} \in \cA_i^{\inc}, A_i \in C \right \} \\ 
	E_{A_i^{\out}} = \left \{ \edge{A_i}{A_i^{\out}}: A_i^{\out} \in \cA_i^{\out}, A_i \in C \right \} \\
	E_{R} = E \setminus \left( E(C) \cup E_{A_i^{\inc}} \cup E_{A_i^{\out}} \right).
	\end{align*} 
	We can do a similar argument for arguing the following two inequalities as well:
	\begin{align*}
	\frac{z^*_{\edge{A_i}{A_{i + 1}}}}{p} + y_{A_i^{\out}} \quad \le \quad \log \left(L_{{\edged{A_i}{A_i^{\out}}}} \right) \quad \forall \edge{A_i}{A_i^{\out}} \in E_{A_i^{\out}} \\
	\frac{y_{A_i^{\inc}}}{p}  + z^*_{\edge{A_i}{A_{i + 1}}}  \quad \le \quad \log \left(L_{{\edged{A_i^{\inc}}{A_i}}} \right)  \quad \forall \edge{A_i^{\inc}}{A_i} \in E_{A_i^{\inc}},
	\end{align*}
	which we argue below:
	\begin{align*}
	\frac{z^*_{\edge{A_i}{A_{i + 1}}}}{p} + y_{A_i^{\out}} & = \frac{z_{\edge{A_i}{A_{i + 1}}}}{p} - \frac{\epsilon}{k} + y_{A_i^{\out}} \\ 
	& \le \quad \frac{z_{\edge{A_i}{A_{i + 1}}}}{p} + y_{A_i^{\out}} \\
	& \le \quad \log \left(L_{{\edged{A_i}{A_i^{\out}}}} \right) \quad \forall \edge{A_i}{A_i^{\out}} \in E_{A_i^{\out}} \\ 
	\frac{y_{A_i^{\inc}}}{p}  + z^*_{\edge{A_i}{A_{i + 1}}} & = \frac{y_{A_i^{\inc}}}{p} + z_{\edge{A_i}{A_{i + 1}}}  - \frac{p \cdot \epsilon}{k} \\
	& \le  \quad \frac{y_{A_i^{\inc}}}{p} + z_{\edge{A_i}{A_{i + 1}}} \\
	& \le \quad \log \left(L_{{\edged{A_i^{\inc}}{A_i}}} \right)  \quad \forall \edge{A_i^{\inc}}{A_i} \in E_{A_i^{\inc}}.
	\end{align*}
	Note that the final inequality in both constraints come from~\eqref{eq:ct22} and~\eqref{eq:ct23} respectively. For all the remaining edges (i.e., in $E_R$), we have (directly by definition)
	\begin{align*} 
	\frac{y_{v}}{p} + y_u \le \log \left(L_{{\edged{v}{u}}} \right) \quad \forall \edge{v}{u} \in E_R.
	\end{align*}
	
	Next, we show
	\begin{align*}
	y^*_{A_i} \le z^*_{\edge{A_i}{A_{i + 1}}} \quad \forall A_i \in C,
	\end{align*}
	which we rewrite by substituting $z^*_{\edge{A_i}{A_{i + 1}}}$ and $y^*_{A_i} $ for every $A_i \in C$ as follows:
	\begin{align*}
	z^*_{\edge{A_i}{A_{i + 1}}} - y^*_{A_i}  & = z_{\edge{A_i}{A_{i + 1}}} - \frac{p \cdot \epsilon}{k} - \left(y'_{A_i} + \frac{\epsilon}{k} \right) \\
	& = z_{\edge{A_i}{A_{i + 1}}} - y'_{A_i} - \frac{(p + 1) \epsilon}{k} \\
	& \ge \epsilon - \frac{(p + 1) \cdot \epsilon}{k} \\
	& = \frac{\epsilon (k - 1 - p)}{k} \\
	& \ge 0.
	\end{align*}
	
	Finally, we have 
	\begin{align*}
	z^*_{\edge{A_i}{A_{i + 1}}} & \ge  y^*_{A_i} \quad \forall A_i \in C \\
	& \ge y_{A_i}' + \frac{\epsilon}{k}  \quad \forall A_i \in C \\
	& \ge 0,
	\end{align*}
	where the final inequality follows from the fact that $y_{A_i}', \epsilon \ge 0$. Note that this implies $y^*_{A_i} \ge 0$ as well for every $A_i \in C$. Since $z_{\edge{A_i}{A_{i + 1}}}, y'_{A_i} \ge 0$ for every $A_i \in C$, this implies $z_{C} \ge 0$. Further, we already have $y_{u} \ge 0 \quad \forall u \in \cT_{A_i^{\inc}, A_i} \cup \cT_{A_i, A_i^{\out}}, \forall A_i \in C$ (by definition). Thus, $\mathbf{z}^*$ satisfies all constraints of $\LP^{(*)}$ and is feasible, completing the proof.
\end{proof}

\subsection{Proof of Lemma~\ref{claim:gen-lp}} \label{sec:gen-lp}
We start by restating $\LP^{(*)}$ below.
\begin{align*}
\max \quad \left(\usum{u \in V \setminus V(C)} y_{u}  \right) + z_{C}  \tag{$\LP^{(*)}$} \\ 
\text{ s.t. } z_{C} \le z_{\edge{A_i}{A_{i + 1}}} + \sum_{A_j \in C \setminus A_i} y_{A_j}'   \quad \forall A_i \in C \numberthis \label{eq:ct27-app} \\
\frac{z_{\edge{A_i}{A_{i + 1}}}}{p} + y_{A_{i + 1}}' \le \log(L_{{\edged{A_i}{A_{i + 1}}}})  \quad \forall A_i \in C \numberthis \label{eq:ct21-app}\\
\frac{y_{A_i^{\inc}}}{p} + z_{\edge{A_i}{A_{i + 1}}}  \le \log(L_{{\edged{A_i^{\inc}}{A_i}}})  \quad \forall \edge{A_i^{\inc}}{A_i} \in E_{\inc} \numberthis \label{eq:ct23-app} \\
\frac{z_{\edge{A_i}{A_{i + 1}}}}{p} + y_{A_i^{\out}} \le \log(L_{{\edged{A_i}{A_i^{\out}}}}) \quad \forall \edge{A_i}{A_i^{\out}} \in E_{\out} \numberthis \label{eq:ct22-app} \\ 
\frac{y_v}{p} + y_{u} \le \log(L_{{\edged{v}{u}}}) \quad \forall \edge{v}{u} \in E_R \numberthis \label{eq:extra21-app} \\
y_{A_i}' \le z_{\edge{A_i}{A_{i + 1}}} \quad \forall A_i \in C \numberthis \label{eq:ct24-app} \\
z_{\edge{A_i}{A_{i + 1}}} \ge 0 \quad \forall A_i \in C \numberthis \label{eq:ct25-app} \\
z_{C}, y_{A_i}' \ge 0 \quad \forall A_i \in C \numberthis \label{eq:ct26-app} \\
y_{u} \ge 0 \quad \forall u \in V \setminus V(C). \numberthis \label{eq:extra22-app}
\end{align*}
\begin{proof} [Proof of Lemma~\ref{claim:gen-lp}]
	We start by showing that for every degree configuration $$\bard = \left(d_{\edge{v}{u}} \right)_{d_{\edge{v}{u}} \le L_{{\edge{v}{u}}}, \edge{v}{u} \in E},$$ there exists a feasible solution to $\LP^{(*)}$ with objective value $\B(\bard, G)$. Since $\B(G)$ is the maximum among all degree configurations and $\LP^{(*)}$ has a maximizing objective value, the proof immediately follows.
	
	Recall from~\eqref{eq:cyc-bdg-ub} that
	\begin{align*}
	\B'(\bard, G) & =  \left( \uprod{u \in \cT_{A_i^{\inc}, A_i}: A_i \in C} \cD_{u}\left(\bard \right) \right) \cdot \underset{A_i \in C}{\min} \left \{ \cD'_{i} \left(\bd \right) \cdot \prod_{A_j \in C, j \neq i}  \cD_{j} \left(\bard \right) \right \} \cdot \left( \uprod{u \in \cT_{A_i, A_i^{\out}}, A_i \in C} \cD_{u}\left(\bard \right) \right),
	\end{align*}
	where $\cD_{i} \left(\bard \right)$, $\cD'_{i} \left(\bard \right)$ and $\cD_{u} \left(\bard \right)$ are defined in~\eqref{eq:cyc-cd},~\eqref{eq:cyc-cd-prime} for every $A_i \in C$ and~\eqref{eq:cyc-cd-all} for every $u \in V \setminus C$.
	
	We now construct a feasible solution $\tilde{\vy}$ for $\LP^{(*)}$ with objective value $\B(\bard, G)$, starting with some notation. We define $\tilde{z}_{\edge{A_i}{A_{i + 1}}}(\bd)$ as
	\begin{align*}
	\min \left \{ \left( \log \left(d_{\edge{A_i^{\inc}}{A_i}} \right) \right)_{A_i^{\inc} \in \cA_i^{\inc}}, \left(\log \left(\frac{L_{{\edged{A_i}{A_i^{\out}}}}^{p}}{d_{\edge{A_i}{A_i^{\out}}}^{p}} \right) \right)_{A_i^{\out} \in \cA_i^{\out}}, \log \left(\frac{L_{{\edged{A_i}{A_{i + 1}}}}^{p}}{d_{\edge{A_i}{A_{i + 1}}}^{p}}\right)  \right \} \quad \forall A_i \in C, \numberthis \label{eq:gen-z} 
	\end{align*}
 followed by $\tilde{y}'_{A_i}(\bd)$ (for every $A_i \in C$)
\begin{align*}
	\min \left \{ \left( \log \left \{d_{\edge{A_i^{\inc}}{A_i}} \right \} \right)_{A_i^{\inc} \in \cA_i^{\inc}}, \left \{ \log \left(\frac{L_{{\edged{A_i}{A_i^{\out}}}}^{p}}{d_{\edge{A_i}{A_i^{\out}}}^{p}} \right) \right \}_{A_i^{\out} \in \cA_i^{\out}}, \log \left( d_{\edge{A_{i - 1}}{A_i}} \right), \log \left(\frac{L_{{\edged{A_i}{A_{i + 1}}}}^{p}}{d_{\edge{A_i}{A_{i + 1}}}^{p}} \right) \right \}  \numberthis \label{eq:gen-yprime} \\
	\end{align*}
and
\begin{align*}
	\tilde{z}_{C}(\bd) & = \underset{A_i \in C}{\min} \left \{ \tilde{z}_{\edge{A_i}{A_{i + 1}}}(\bd) + \sum_{A_j \in C \setminus A_i} \tilde{y}'_{A_j}(\bd) \right \} \numberthis \label{eq:gen-zc} \\ 
	\tilde{y}_{u}(\bd) & = \min\left \{\left \{ \log \left(d_{\edge{v}{u}} \right) \right \}_{\edge{v}{u} \in E}, \left \{ \frac{L_{{\edged{u}{w}}}^{p}}{d_{\edge{u}{w}}^p}  \right \}_{\edge{u}{w} \in E}   \right \} \quad \forall u \in V \setminus C \numberthis \label{eq:gen-yu}.
	\end{align*}
	We are now ready to define $\tilde{\vy}$:
	\begin{align*}
	\tilde{\vy} = \left( \left(\tilde{y}_{u}(\bd)\right)_{u \in \cT_{A_i^{\inc}, A_i}, A_i \in C},  \left( \tilde{z}_{\edge{A_i}{A_{i + 1}}} (\bd) \right)_{A_i \in C}, \left( \tilde{y}'_{A_i} (\bd) \right)_{A_i \in C},  \tilde{z}_{C} (\bd), \left(\tilde{y}_{u}(\bd)\right)_{u \in \cT_{A_i, A_i^{\out}}, A_i \in C} \right)
	\end{align*}
	
	Next, we show that $\bar{y}$ satisfies all constraints in $\LP^{(*)}$.
	\begin{itemize}
		\item {We start with~\eqref{eq:ct27-app}:
			\begin{align*}
			\tilde{z}_{C}(\bd) & = \underset{A_i \in C}{\min} \left \{ \tilde{z}_{\edge{A_i}{A_{i + 1}}} (\bd)+ \sum_{A_j \in C \setminus A_i} \tilde{y}'_{A_j}(\bd) \right \} \\
			& \le \tilde{z}_{\edge{A_i}{A_{i + 1}}}(\bd) + \sum_{A_j \in C \setminus A_i} \tilde{y}_{A_j}'(\bd) \quad \forall A_i \in C,
			\end{align*}
			where the equation follows from~\eqref{eq:gen-zc} and the inequality follows from the definition of $\min$.}
		\item {Constraint~\eqref{eq:ct21-app}:
			\begin{align*}
			& \frac{\tilde{z}_{\edge{A_i}{A_{i + 1}}}(\bd)}{p} +  \tilde{y}'_{A_i}(\bd)  \\
			& = \frac{\min \left \{ \left \{\log \left(d_{\edge{A_i^{\inc}}{A_i}} \right) \right \}_{A_i^{\inc} \in \cA_i^{\inc}}, \left \{\log \left(\frac{L_{{\edged{A_i}{A_i^{\out}}}}^{p}}{d_{\edge{A_i}{A_i^{\out}}}^{p}} \right) \right \}_{A_i^{\out} \in \cA_i^{\out}}, \log \left(\frac{L_{{\edged{A_i}{A_{i + 1}}}}^{p}}{d_{\edge{A_i}{A_{i + 1}}}^{p}}\right)  \right \}}{p} \\
			& + \min \left \{\left \{ \log \left (d_{\edge{A_i^{\inc}}{A_i}} \right) \right \}_{A_i^{\inc} \in \cA_i^{\inc}}, \left \{\log \left(\frac{L_{{\edged{A_i}{A_i^{\out}}}}^{p}}{d_{\edge{A_i}{A_i^{\out}}}^{p}} \right) \right \}_{A_i^{\out} \in \cA_i^{\out}}, \log \left( d_{\edge{A_{i - 1}}{A_i}} \right), \log \left(\frac{L_{{\edged{A_i}{A_{i + 1}}}}^{p}}{d_{\edge{A_i}{A_{i + 1}}}^{p}} \right) \right \} \quad \forall A_i \in C \\
			&  \le  \frac{\log \left(\frac{L_{{\edged{A_i}{A_{i + 1}}}}^{p}}{d_{\edge{A_i}{A_{i + 1}}}^{p}}\right)} {p} +  \log \left(d_{\edge{A_i}{A_{i + 1}}}\right)   \quad \forall A_i \in C \\
			& = \log \left( L_{\edged{A_i}{A_{i + 1}}} \right) \quad \forall A_i \in C,
			\end{align*}
			where the first equation follows by direct substitution and the inequality follows from the definition of $\min$.}
		\item {Constraint~\eqref{eq:ct22-app}: For every $\edge{A_i^{\inc}}{A_i} \in E_{\inc}$, we have
			\begin{align*}
			& \frac{\tilde{y}_{A_{i}^{\inc}}(\bd)}{p} + \tilde{z}_{\edge{A_i}{A_{i + 1}}}(\bd) \\
			& = \frac{\min \left \{  (d_{\edge{v}{A_i^{\inc}}})_{\edge{v}{A_i^{\inc}} \in E},  \left( \frac{L_{{\edged{A_i^{\inc}}{w}}}^{p}}{d_{\edge{A_i^{\inc}}{w}}^p}  \right)_{\edge{A_i^{\inc}}{w} \in E}   \right \}}{p} \\
			& +  \min \left \{ \left( \log \left(d_{\edge{A_i^{\inc}}{A_i}} \right) \right)_{A_i^{\inc} \in \cA_i^{\inc}}, \left(\log \left(\frac{L_{{\edged{A_i}{A_i^{\out}}}}^{p}}{d_{\edge{A_i}{A_i^{\out}}}^{p}} \right) \right)_{A_i^{\out} \in \cA_i^{\out}}, \log \left(\frac{L_{{\edged{A_i}{A_{i + 1}}}}^{p}}{d_{\edge{A_i}{A_{i + 1}}}^{p}}\right)  \right \}  \\
			& \le  \frac{\log \left(\frac{L_{{\edged{A_i^{\inc}}{A_i}}}^{p}}{d_{\edge{A_i^{\inc}}{A_i}}^p} \right)}{p}  +  \log \left(d_{\edge{A_i^{\inc}}{A_i}} \right)  \quad \forall \edge{A_i^{\inc}}{A_i} \in E_{\inc} \\
			& =  \log \left(L_{{\edged{A_{i}^{\inc}}{A_i}}} \right) \quad \forall \edge{A_i^{\inc}}{A_i} \in E_{\inc},
			\end{align*}
			where the first equation follows by direct substitution and the inequality follows from the definition of $\min$.}
		\item {Constraint~\eqref{eq:ct23-app}:
			\begin{align*}
		& \frac{\tilde{z}_{\edge{A_i}{A_{i + 1}}}(\bd)}{p} + \tilde{y}_{A_{i}^{\out}}(\bd) \\
			& = \frac{\min \left \{ \left( \log \left(d_{\edge{A_i^{\inc}}{A_i}} \right) \right)_{A_i^{\inc} \in \cA_i^{\inc}}, \left(\log \left(\frac{L_{{\edged{A_i}{A_i^{\out}}}}^{p}}{d_{\edge{A_i}{A_i^{\out}}}^{p}} \right) \right)_{A_i^{\out} \in \cA_i^{\out}}, \log \left(\frac{L_{{\edged{A_i}{A_{i + 1}}}}^{p}}{d_{\edge{A_i}{A_{i + 1}}}^{p}}\right)  \right \}}{p} \\
			& +  \min \left \{ \left( \log \left( d_{\edge{v}{A_{i}^{\out}}} \right) \right)_{\edge{v}{A_i^{\out} \in E} }, \left( \frac{L_{{\edged{A_i^{\out}}{w}}}^{p}}{d_{\edge{A_i^{\out}}{w}}^p}  \right)_{\edge{A_i^{\out}}{w} \in E} \right \} \quad \forall \edge{A_i}{A_i^{\out}} \in E_{\out} \\
			& \le  \frac{\log \left(\frac{L_{{\edged{A_i}{A_{i}^{\out}}}}^{p}}{d_{\edge{A_i}{A_{i}^{\out}}}^{p}}\right)} {p} + \log \left(d_{\edge{A_i}{A_i^{\out}}} \right) \quad \forall \edge{A_i}{A_i^{\out}} \in E_{\out} \\ 
			& = \log \left(L_{{\edged{A_i}{A_{i}^{\out}}}} \right) \quad \forall \edge{A_i}{A_i^{\out}} \in E_{\out},
			\end{align*}
			where the first equation follows by direct substitution and the inequality follows from the definition of $\min$.}
		\item {Constraint~\eqref{eq:extra21-app}:
			\begin{align*}
			& \frac{\ty_v(\bd)}{p} + \ty_u(\bd) \\
			& = \frac{\min\left \{\left \{ \log \left(d_{\edge{v'}{v}} \right) \right \}_{\edge{v'}{v} \in E}, \left \{ \frac{L_{{\edged{v}{w}}}^{p}}{d_{\edge{v}{w}}^p}  \right \}_{\edge{v}{w} \in E}   \right \}}{p}  \\ 
			& + \min\left \{\left \{ \log \left(d_{\edge{v}{u}} \right) \right \}_{\edge{v}{u} \in E}, \left \{ \frac{L_{{\edged{u}{w}}}^{p}}{d_{\edge{u}{w}}^p}   \right \}_{\edge{u}{w} \in E}   \right \} \quad \forall \edge{v}{u} \in E_R \\
			& \le \frac{ \log \left( \frac{L_{{\edged{v}{u}}}^{p}}{d_{\edge{v}{u}}^{p}}\right)}{p} + \log \left( d_{\edge{v}{u}}\right) \quad \forall \edge{v}{u} \in E_R \\
			& = \log \left( L_{\edge{v}{u}} \right) \quad \forall \edge{v}{u} \in E_R,
			\end{align*}
			where the first equation follows by direct substitution and the inequality follows from the definition of $\min$.
		}
		\item {Constraint~\eqref{eq:ct24-app}: $\tilde{y}_{A_i}'(\bd)$ (for every $A_i \in C$) equals
			\begin{align*}
			 \min \left \{ \left \{ \log \left(d_{\edge{A_i^{\inc}}{A_i}} \right) \right \}_{A_i^{\inc} \in \cA_i^{\inc}}, \left(\log \left(\frac{L_{{\edged{A_i}{A_i^{\out}}}}^{p}}{d_{\edge{A_i}{A_i^{\out}}}^{p}} \right) \right)_{A_i^{\out} \in \cA_i^{\out}}, \log \left( d_{\edge{A_{i - 1}}{A_i}} \right), \log \left(\frac{L_{{\edged{A_i}{A_{i + 1}}}}^{p}}{d_{\edge{A_i}{A_{i + 1}}}^{p}} \right) \right \},
			 \end{align*}
		 which in turn is upper bounded by
		 \begin{align*}
			& \min \left \{\left( \log \left(d_{\edge{A_i^{\inc}}{A_i}} \right) \right)_{A_i^{\inc} \in \cA_i^{\inc}}, \left(\log \left(\frac{L_{\edge{A_i}{A_i^{\out}}}^{p}}{d_{\edge{A_i}{A_i^{\out}}}^{p}} \right) \right)_{A_i^{\out} \in \cA_i^{\out}}, \log \left(\frac{L_{\edge{A_i}{A_{i + 1}}}^{p}}{d_{\edge{A_i}{A_{i + 1}}}^{p}}\right)  \right \} \quad \forall A_i \in C \\
			& =  \tilde{z}_{\edge{A_i}{A_{i + 1}}}(\bd) \quad \forall A_i \in C,
			\end{align*}
			where the first inequality follows by direct substitution and the inequality follows from the definition of $\min$.}
		\item {Finally, the constraints~\eqref{eq:ct25-app}~\eqref{eq:ct26-app} and~\eqref{eq:extra22-app} are satisfied directly by definition. We state them here for completeness:
			\begin{align*}
			\tilde{z}_{\edge{A_i}{A_{i + 1}}}(\bd), \tilde{y}_{A_i}'(\bd), \tilde{z}_{C} \ge 0 \quad \forall A_i \in C  \\
			\tilde{y}_{u}(\bd) \ge 0 \quad \forall u \in V \setminus V(C). 
			\end{align*}} 
	\end{itemize}
	This completes the proof.
\end{proof}

\subsection{Proof of Corollary~\ref{cor:cycle-trees}} \label{sec:cycle-trees}
The proof of Corollary~\ref{cor:cycle-trees} follows directly from the proof of the claim below.
\begin{claim} \label{claim:cycle-trees}
	Let $G$ be a connected graph (in the undirected sense) with a directed cycle $C$ such that $|E| = |V|$. Consider the graph without edges in the cycle $C$ i.e., $G' = (V(G), E(G) \setminus E(C))$ and for every vertex $A_i \in C$, let $\cA_i^{\inc}$ and $\cA_i^{\out}$ denote the set of incoming and outgoing edges from $A_i$ that are not in $E(C)$. Then, $G'$ has $|C|$ edge disjoint trees with each containing a unique node $A_i$ in $C$.
\end{claim}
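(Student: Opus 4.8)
\textbf{Proof proposal for Claim~\ref{claim:cycle-trees}.}

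The plan is to exploit the fact that a graph $G$ that is connected (in the underlying undirected sense) and satisfies $|E(G)| = |V(G)|$ is \emph{unicyclic}: it contains exactly one simple cycle. This is a standard fact (a connected graph on $n$ vertices with $n$ edges has first Betti number $|E(G)| - |V(G)| + 1 = 1$, so its cycle space is one-dimensional and hence has a unique nonzero element, forcing a unique simple cycle). In particular, the given directed cycle $C$ is \emph{the} unique simple cycle of $G$.

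First I would show that $G' = (V(G),\, E(G)\setminus E(C))$ is a forest. Any simple cycle contained in $G'$ would be a simple cycle of $G$ using no edge of $C$, hence a simple cycle distinct from $C$, contradicting unicyclicity. So $G'$ is acyclic. A forest on $|V(G)|$ vertices with $|E(G')| = |E(G)| - |E(C)| = |V(G)| - |C|$ edges has exactly $|V(G)| - |E(G')| = |C|$ connected components, each of which is a tree.

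Next I would argue that no two vertices of $C$ lie in the same component of $G'$. Suppose $A_i$ and $A_j$ with $i\neq j$ both belonged to some tree $T$ of $G'$; let $P$ be the unique $A_i$–$A_j$ path in $T$ and let $Q$ be one of the two arcs of $C$ joining $A_j$ to $A_i$. Every edge of $P$ lies outside $C$ while every edge of $Q$ lies in $C$, so $P$ and $Q$ are edge-disjoint and $P \cup Q$ is a closed walk with no repeated edge; such a walk contains a simple cycle, and that cycle must use at least one edge of $P$ (an arc $Q$ by itself contains no cycle), hence at least one edge outside $C$, contradicting uniqueness of $C$. Therefore the $|C|$ vertices of $C$ occupy $|C|$ pairwise distinct components of $G'$; since $G'$ has exactly $|C|$ components, each component is a tree containing exactly one vertex $A_i \in C$ (which we designate its root, and whose incident edges in $G'$ are precisely $\cA_i^{\inc}\cup\cA_i^{\out}$). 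These are the $|C|$ edge-disjoint trees claimed; Corollary~\ref{cor:cycle-trees} then follows immediately by concatenating topological orders of the pieces $\cT_{A_i^{\inc}}$, the cycle $C$ contracted to a point, and the pieces $\cT_{A_i^{\out}}$.

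The main obstacle is purely a matter of care rather than depth: in the last step one must be precise that combining a tree path with a cycle arc produces a genuine second \emph{simple} cycle and not merely a closed walk, and one must invoke the unicyclic structure of $G$ at exactly the right two points (ruling out cycles in $G'$, and ruling out the second cycle); the remainder is bookkeeping with the identity $(\text{components}) = (\text{vertices}) - (\text{edges})$ for forests.
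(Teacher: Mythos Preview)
Your proposal is correct and reaches the same conclusion as the paper, but the two arguments are organized differently. Both ultimately rest on the count $|E(G')|=|V|-|C|$. You first invoke unicyclicity of $G$ explicitly, conclude $G'$ is acyclic (any cycle in $G'$ would be a second simple cycle of $G$), read off that a forest with $|V|-|C|$ edges has exactly $|C|$ components, and then use a second-cycle contradiction to show no two vertices of $C$ share a component. The paper instead runs a direct edge-counting contradiction: assuming an extra edge between the putative trees $\cT_{A_i}$ forces $|E|=|V|+1$; for the ``unique node'' part it observes that a tree containing no vertex of $C$ would disconnect $G$. Your version is more self-contained---you do not presuppose the trees $\cT_{A_i}$ before establishing they exist, and the role of unicyclicity is explicit---while the paper's version is terser but leaves the reader to supply why $G'$ is a forest in the first place. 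The pigeonhole step is also dual: you show at most one $A_i$ per tree (two would create a second cycle), whereas the paper shows at least one $A_i$ per tree (none would disconnect $G$); combined with the count $|C|$ either direction suffices.
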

\begin{proof} [Proof of Claim~\ref{claim:cycle-trees}]
	We prove this claim in two steps -- first, we argue that the graph $G' = (V(G), E(G) \setminus E(C))$ is a forest with $|C|$-edge disjoint trees. Then, we argue that all these trees contain an unique node in $C$. 
	
	The proof is by contradiction. Assume that the $|C|$ trees are not edge disjoint i.e., there exists at least one edge $\edge{v}{u}$ between these trees. We now argue that this would imply $|E(G)| > |V(G)|$ and start by counting the number of edges in $G'$. In particular, we have
	\begin{align*}
	|E(G')| & = 1 + \usum{A_i \in C} \left(|\cT_{A_i}| - 1\right) \\
	& = 1 - |C| + \usum{A_i \in C} |\cT_{A_i}| \\
	& = 1 - |C| + |V|,
	\end{align*}
	where the last equation follows from the fact that $\ucup{A_i \in C} |\cT_{A_i}| = |V|$. Note that this implies $|E| =  |E(G')| + |E(C)| = |V| - |C| + 1 + |C| = |V| + 1$, which contradicts our earlier assumption that $|E| = |V|$. 
	
	To complete the proof, we argue that each of these trees contains an unique node in $C$. Note that if otherwise i.e., the tree contains no nodes from $C$, then that implies the original graph $G$ was not connected in the undirected sense.
\end{proof}
\section{Missing Details in Section~\ref{sec:panda}} \label{app:panda}

\subsection{Missing Details in Section~\ref{sec:dlow}} \label{app:dlow}
We first restate Theorem~\ref{thm:dsmall} from the main paper.
\begin{thm}
For any $G$, $L$ and $d$ with $d^2 \le L$ satisfying Assumption~\ref{assump:deg} and an optimal star cover $E(S_1(G), T_1(G))$ and $E(S_2(G), T_{2}(G))$,  Algorithm~\ref{algo:generic-ub} runs in time linear in
\begin{align*}
2^{2(|V| + |C(G)| + |S_1(G)| + |T_1(G)|)} \left( \uprod{C_i \in C(G)} L d^{|C_i| - 2}  \right) \cdot \left(  \left(\frac{L}{d}\right)^{|S_1(G)|} \cdot d^{|T_1(G)|} \right) \cdot L^{|S_2(G)|} \cdot d^{|\rho(G)|} \numberthis \label{eq:dlow-ub-app} 
\end{align*}
for instances $\cI = \{R_{\edge{v}{u}}: ||R_{\edge{v}{u}}||_{1} \le L, ||R_{\edge{v}{u}}||_{\infty} \le d, \edge{v}{u} \in E\}$. Further, $|\TJ|$ is at most~\eqref{eq:dlow-ub-app}. Finally, there exists an instance $I \in \cI$ such that
\begin{align*}
|\TJ| &\ge \frac{1}{2^{|V|}} \left( \uprod{C_i \in C(G)} L d^{|C_i| - 2}  \right)  \cdot \left( \left(\frac{L}{d}\right)^{|S_1(G)|} \cdot d^{|T_1(G)|} \right) \cdot L^{|S_2(G)|} \cdot d^{|\rho(G)|} \numberthis \label{eq:dlow-lb-app}.
\end{align*}	
\end{thm}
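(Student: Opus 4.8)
\textbf{Proof plan for Theorem~\ref{thm:dsmall}.}
The plan is to prove the two parts separately. For the lower bound~\eqref{eq:dlow-lb-app}, I would construct an explicit hard instance $I \in \cI$ that decomposes the query graph into two pieces: the union of the non-trivial source SCCs $C(G)$, and the rest $G \setminus C(G)$. For each $C_i \in C(G)$, I would use a disjoint union of Cartesian-product gadgets: pick one distinguished edge $\edge{v}{u}$ in $C_i$ where the source $v$ has domain of size $L/d$ and $u$ has domain of size $d$; for every other edge of $C_i$ use a domain of size $d$ on the remaining $|C_i|-2$ vertices, so that each relation has $\le L$ tuples and $\ell_\infty \le d$, while each SCC copy contributes $(L/d)\cdot d \cdot d^{|C_i|-2} = L d^{|C_i|-2}$ join tuples. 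Taking the product over $C_i \in C(G)$ gives the first factor. For the remaining graph, I would use a \emph{single} Cartesian-product instance: assign domain $L/d$ to each $s \in S_1(G)$ (a star center, with $\le d$ leaves hanging off it so $\ell_1 \le L$), domain $d$ to each $t \in T_1(G)$ and each vertex of $\rho(G)$, and domain $L$ to each $s \in S_2(G)$ (these are leaves of $T_2$-centered stars, so the edge relation $R_{\edge{s_2}{t_2}}$ with $|\Dom(s_2)|=L$, $|\Dom(t_2)|=1$ still has $\le L$ tuples and $\ell_\infty \le 1 \le d$). The product of domain sizes over $V \setminus \bigcup C_i$ is then $(L/d)^{|S_1(G)|}\cdot d^{|T_1(G)|}\cdot L^{|S_2(G)|}\cdot d^{|\rho(G)|}$, and the $2^{-|V|}$ loss is the usual floor rounding. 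I would verify that \emph{every} relation in the full instance simultaneously satisfies $\ell_1 \le L$ and $\ell_\infty \le d$ (using $d^2 \le L$ to check the star-center relations), and that the join output is exactly the product of the component outputs, giving~\eqref{eq:dlow-lb-app}. The details would be deferred to Appendix~\ref{app:dlow-lb}.

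For the upper bound and the runtime, I would exhibit a specific spanning acyclic subgraph of $G$ and analyze Algorithm~\ref{algo:gen-ub} on it for a fixed degree configuration $\bd$, then sum over all $\bd$. The subgraph construction is: for each non-trivial source SCC $C_i$, fix an arbitrary edge $\edge{v}{u}$ of $C_i$ and delete all edges of $G$ entering $v$ and all edges entering $u$ except $\edge{v}{u}$; process the rest of $G$ in topological order (the induced subgraph on $S(G)\cup T(G)$ is acyclic), and whenever a non-source non-trivial SCC is encountered, drop its back edges. This yields a DAG spanning all of $V$. I would then invoke Theorem~\ref{claim:gen-acyclic}: for a fixed $\bd$, the size of the output of Algorithm~\ref{algo:gen-ub} on this subgraph is $\prod_{u \in V}\cD_u(\bd)$, and I bound $\cD_u(\bd)$ vertex-class by vertex-class. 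For a source $s$, $|\Dom_{\bd}(s)| \le \min_{\edge{s}{w}\in E}\frac{L_{\edge{s}{w}}}{d_{\edge{s}{w}}}$ (using Lemma~\ref{lemma:node-bound} with $p=1$); for every other vertex $u$, $|\Dom_{\bd}(u)| \le \min_{\edge{v}{u}\in E} d_{\edge{v}{u}}$. Taking the product, using that sources in $C_i$ and in $S_1(G)$ contribute an $L/d$ factor, that $T_1(G)$, $T_2(G)$, $\rho(G)$ and the non-source SCC vertices contribute a $d$ factor, and that the $S_2(G)$ leaves contribute an $L$ factor, gives the stated bound for each $\bd$; summing over the $O(\log^{|E|}L)$ configurations is absorbed into the $2^{O(|V|+|C(G)|+|S_1(G)|+|T_1(G)|)}$ constant by a geometric-series argument as in~\eqref{eq:genp-holder-8}. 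I would also need the remark (deferred to Appendix~\ref{sec:agm-edge-cover}) that minimizing the star cover size $|T_1(G)|+|S_2(G)|$ minimizes this bound, since each $S_2$-leaf trades a $d$ for an $L$ (bad, as $L \ge d^2 \ge d$) — wait, actually each vertex that is \emph{not} covered as an $S_1$-source or a $T_1/\rho$ vertex pays $L$ instead of $L/d$ or $d$, so a smaller star cover is better; I would make this precise via the AGM LP.

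\textbf{Main obstacle.} I expect the delicate part to be showing that the single chosen spanning acyclic subgraph actually achieves the claimed exponents \emph{uniformly} over all degree configurations, i.e., that the vertex-classification into $C(G), S_1(G), T_1(G), S_2(G), \rho(G)$ is exactly the right partition so that the product of the per-vertex bounds $\prod_u \cD_u(\bd)$ telescopes to $(\prod_{C_i} Ld^{|C_i|-2})(L/d)^{|S_1|}d^{|T_1|}L^{|S_2|}d^{|\rho|}$ — in particular correctly handling the non-source SCCs (which need their own internal acyclic orientation and a $d^{|C_i|-2}$-type bound, or collapse into $\rho(G)$), and ensuring the $L/d$ savings at sources propagate correctly through the topological order without being lost when a source has multiple outgoing edges. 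Matching this upper bound to the lower-bound instance — so that the exponents on $L$ and $d$ coincide and the star cover being \emph{optimal} is exactly what makes them equal — is where the argument is tightest, and is the reason the theorem restricts attention to uniform $L$ and $d$.
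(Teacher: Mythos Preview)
Your upper-bound plan is essentially the paper's: the same spanning acyclic subgraph (fix an edge in each source SCC, drop the other incoming edges, break back edges of non-source SCCs), the same per-vertex bounds from Theorem~\ref{claim:gen-acyclic}, and H\"older to push the sum over degree configurations inside. The star-cover minimization argument is also the same.

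Your lower-bound construction for the non-trivial source SCCs is broken, however. You describe a single Cartesian product per $C_i$ in which the distinguished source $v$ has domain $L/d$ and every other vertex has domain $d$. But $C_i$ is a directed cycle containing $v$, so there is an edge $\edge{w}{v}$ inside $C_i$ with $w\ne u$. On this edge your relation is $\Dom(w)\times\Dom(v)=[d]\times[L/d]$, whose $\ell_\infty$-norm (max out-degree from $w$) is $L/d$; since $d^2\le L$ this is at least $d$, and strictly greater whenever $d^2<L$. So the instance is not in $\cI$. The phrase ``disjoint union of Cartesian-product gadgets'' suggests you sense something block-structured is needed, but the parameters you give describe a single block, and that block is infeasible.

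The paper's fix is to make \emph{all} vertices of $C_i$ symmetric: take $L'/(d')^2$ vertex-disjoint blocks, and within each block set every vertex's domain to $[d']$ and every relation to the full $[d']\times[d']$ product. Then every edge in $C_i$ has $\ell_\infty=d'$ and $\ell_1=(L'/(d')^2)\cdot (d')^2=L'$, and the join over $C_i$ has size $(L'/(d')^2)\cdot (d')^{|C_i|}=L'(d')^{|C_i|-2}$, as required. Note also that with this construction every vertex of $C_i$ has total domain $L'/d'$, which you must keep in mind when verifying the crossing edges from $C_i$ into $T_1(G)$, $T_2(G)$, and $\rho(G)$; e.g.\ an edge from $C_i$ into a $T_1$-vertex of domain $d'$ has $\ell_1=(L'/d')\cdot d'=L'$ and $\ell_\infty=d'$, which is fine, but this check does not go through with your asymmetric domains.
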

We first prove~\eqref{eq:dlow-lb-app},  followed by proving an upper bound of~\eqref{eq:dlow-ub-app} for $|\TJ|$ and the runtime of Algorithm~\ref{algo:generic-ub} follows as a corollary.
\subsubsection{Proof of~\eqref{eq:dlow-lb}} \label{app:dlow-lb}
We construct our lower bound instance for $G$. We define $L'$ and $d'$ to be a power of two in $[\frac{L}{2}, L]$ and $[\frac{d}{2}, d]$ respectively.\footnote{We would like note there that in our arguments in the section, we use the undirected edge $\edgeud{v}{u}$ instead of the standard directed version $\edge{v}{u}$ used in the rest of the paper. Note that this doesn't break correctness since for $\ell_{1}$-norm bounds, the direction does not matter and we (implicitly) assume that all tuples in the relation $R_{\edgeud{v}{u}}$ are directed from $v$ to $u$ (i.e., in the direction of the $\ell_{\infty}$ constraint).}

For each $\edge{v}{u}$ in $C(G)$, we define $R_{\edgeud{v}{u}}$ to be (note we used $E$ here and not $E(S_1,T_1)$):
\begin{align*}
\left \{ [ (j-1) \cdot d' + 1,  j \cdot d']  \times [ (j - 1) \cdot d' + 1, j \cdot d' ] : j \in \left[\frac{L'}{(d')^2} \right]  \right \}. \numberthis \label{eq:scc-lb}
\end{align*}
Further, for every $\edge{s_1}{t_1} \in E: s_1 \in S_1(G), t_1 \in T_1(G)$, we define 
\begin{align*}
R_{\edgeud{s_1}{t_1}} & = \left \{ \left[\frac{L'}{d'} \right] \times [d'] \right \} \numberthis \label{eq:s1-t1}
\end{align*}
and for every edge $\edge{s_2}{t_2} \in E: s_2 \in S_2(G), t_2 \in T_2(G)$ and
\begin{align*}
R_{\edgeud{s_2}{t_2}} & = \{ [L'] \times [1] \} \numberthis \label{eq:s2-t2}.
\end{align*}
All nodes $u$ that have not been assigned a domain yet get $\Dom(u)=[d']$ and all the unasigned relations are set as
%For all the remaining edges, we define 
\begin{align*}
R_{\edgeud{v}{u}} & =\Dom(v)\times \Dom(u)  \numberthis \label{eq:rem-lb}.
%\{ [d'] \times [d']\} \numberthis \label{eq:rem-lb}.
\end{align*}
We make the following assumption based on our construction above:
\begin{align*}
||R_{\edgeud{v}{u}}||_{1} & \le L \quad \forall \edgeud{v}{u} \in E \numberthis \label{eq:r-edge-1}\\
||R_{\edge{v}{u}}||_{\infty} & \le d \quad \forall \edge{v}{u} \in E \numberthis \label{eq:r-edge-2}.
\end{align*}
Assuming the above is true, we argue our lower bound. For each $C_i \in C(G)$, we have a lower bound of
\begin{align*}
\frac{L'}{(d')^2} \uprod{u \in V(C_i)} d' & = \frac{L'}{(d')^2} (d')^{|C_i|} \\
& \ge \frac{1}{2^{|C_i|}} L d^{|C_i| - 2}
\end{align*}
on the size of the join projected down to relations in $C_i$. Further, note that these bounds are `independent' for each $C_i$ and once we fix the projection of output tuples in $C(G)$, the rest is just a Cartesian product of the domain size for each vertex (which are $L'/d',d',L',1$ and $d$ for vertices in $S_1(G), T_1(G), S_2(G), T_2(G)$ and $\rho(G)$ respectively). Combining these bounds together, we get the desired lower bound of
\begin{align*}
\frac{1}{2^{|V|}} \left(  \uprod{C_i \in C(G)} L d^{|C_i| - 2}  \right)  \left(\frac{L}{d}\right)^{|S_1(G)|}  d^{|T_1(G)|} L^{|S_2(G)|} d^{|\rho(G)|},
\end{align*}
as required.

To complete the proof , we still need to argue~\eqref{eq:r-edge-1} and~\eqref{eq:r-edge-2}. For each $C_i \in C(G)$, we have (by definition) for every $\edge{v}{u} \in E(C_i)$:
\begin{align*}
||R_{\edgeud{v}{u}}||_{1} & = \usum{j \in \left[\frac{L'}{(d')^2} \right]} (d')^2 \le L \\
||R_{\edge{v}{u}}||_{\infty} & = \umax{j \in \left[\frac{L'}{(d')^2} \right]} d' \le  d.
\end{align*}
Further, for every $\edge{s_1}{t_1} \in E: s_1 \in S_1(G), t_1 \in T_1(G)$, we have
\begin{align*}
||R_{\edgeud{s_1}{t_1}}||_{1} = \frac{L'}{d'} \times d' \le L, \quad ||R_{\edge{s_1}{t_1}}||_{\infty}  = \umax{j \in \left[\frac{L'}{d'}\right]} d'  \le d
\end{align*}
and for every edge $\edge{s_2}{t_2} \in E: s_2 \in S_2(G), t_2 \in T_2(G)$, we have
\begin{align*}
|| R_{\edgeud{s_2}{t_2}} ||_{1}  \le L, \quad || R_{\edge{s_2}{t_2}} ||_{\infty} = 1 \le d.
\end{align*}
Next, we reason about the crossing edges. For edges $\edge{v}{t_1}$ between $C_i$ (for some $C_i \in C(G)$) and $T_1(G)$, we have
\begin{align*}
||R_{\edgeud{v}{t_1}}||_{1} & = \usum{j \in \left[\frac{L'}{(d')^2} \right]} d'  =  \frac{L'}{d'} \le L \\ ||R_{\edge{v}{t_1}}||_{\infty} &= \umax{j \in \left[\frac{L'}{(d')^2} \right]} d' \le d
\end{align*}
and $\edge{v}{t_2}$ between $C_i$ (for some $C_i \in C(G)$) and $T_2(G)$, we have
\begin{align*}
||R_{\edgeud{v}{t_2}}||_{1} = \usum{j \in \left[\frac{L'}{(d')^2} \right]} 1  \le L, \quad ||R_{\edge{v}{t_2}}||_{\infty} = \umax{j \in \left[\frac{L'}{(d')^2} \right]} 1  \le d.
\end{align*}
Finally, for each remaining edge, we have
\begin{align*}
||R_{\edgeud{v}{u}}||_{1} = d'^2\le d^2 \le L, \quad ||R_{\edge{v}{u}}||_{\infty} = \umax{j \in [d']}d' \le d.
\end{align*}
This completes the proof.

\subsubsection{Proof of~\eqref{eq:dlow-ub}} \label{app:dlow-ub}
We start by recalling the notation we defined for this section. Recall that each edge $\edgeud{v}{u}$ has $\ell_{\infty}$-norm constraints of the form $L_{\edgeinfty{v}{u}}$ and as a result, we consider $G$ to be the directed graph on each $\edge{v}{u} \in E$. Now, we decompose vertices in $V(G)$ into four buckets -- $(1)$ Set of non-trivial source Strongly Connected Components (SCCs) (i.e., source SCCs with at least two vertices) $C(G)$, $(2)$ The remaining sources (source SCCs with one vertex) $S(G)$, $(3)$ Set of vertices $T(G)$, where each vertex is connected by at least one vertex in $S(G)$ (through an incoming edge) and $(4)$ The remaining set of vertices $\rho(G)$. We consider the induced subgraph on vertices $S(G) \cup T(G)$, which we use to further decompose it into induced subgraphs $(S_1(G), T_1(G))$ and $(S_2(G), T_2(G))$ respectively.

We consider the induced subgraph on vertices $S(G) \cup T(G)$. We further partition $S$ into $S_1(G)$ and $S_2(G)$ and $T$ into $T_1(G)$ and $T_2(G)$ as follows. We choose a subset $E(S_1(G), T_1(G)) \subset E(G)$ to be a (disjoint) set of stars\footnote{A star with $n$ vertices is where one vertex has degree $n-1$ (which we call the {\em center}) and the remaining vertices have degree $1$ (which we call {\em leaves}).} with each $s_1 \in S_1$ as the center (in the undirected sense) and each $t_1 \in T_1(G)$ s.t. $\edge{s_1}{t_1} \in E(S_1(G), T_1(G))$ (for the fixed $s_1$) as a leaf. Similarly, we define $E(S_2, T_2) \subset E$ to be another (disjoint) set of disjoint stars such with each $t_2 \in T(G_2)$ is a center (in the undirected sense) and each $s_2 \in S_{2}(G)$ with $\edge{s_2}{t_2} \in E(S_2(G), T_2(G))$ (for the fixed $t_2$) as a leaf. We pick $S_i(G),T_i(G)$ and $E(S_i,T_i)$ for $i\in [2]$ that {\em minimizes} the size of this star cover, i.e. minimizes $\abs{E(S_1,T_1)}+\abs{E(S_2,T_2)}=\abs{T_1(G)}+\abs{S_2(G)}$. We show in the section below as to how to compute an optimal star cover of this kind using the AGM LP and also argue why minimizing the star cover size also minimizes our bound in Theorem~\ref{thm:dsmall}.

For each $C_i \in C(G)$, we fix an arbitrary edge $\edge{v_i}{u_i} \in E(C_i)$ and drop all incoming/outgoing edges from $v_i$ and all incoming edges from $u_i$. Note that each $C_i$ has a single source vertex. For non-source non-trivial SCCs, we can drop the minimal subset of edges to make them acyclic. For each $C_i$, we will be treating the edge $\edge{v_i}{u_i}$ differently compared to the other ones. Since we are working with a spanning subgraph of $G$ to prove our upper bound~\eqref{eq:dlow-ub-app}, note that we are answering Question~\ref{ques:acyclic} in affirmative in this process, in this setting.

We are now ready to restate~\eqref{eq:dlow-ub} here and we would like to prove $|\TJ|$ is at most
\begin{align*}
2^{2(|V| + |C(G)| + |S_1(G)| + |T_1(G)|)} \left( \uprod{C_i \in C(G)} L d^{|C_i| - 2}  \right) \cdot \left(  \left(\frac{L}{d}\right)^{|S_1(G)|} \cdot d^{|T_1(G)|} \right) \cdot L^{|S_2(G)|} \cdot d^{|\rho(G)|}.
\end{align*}
Invoking Theorem~\ref{claim:gen-acyclic} on $G$, we get
\begin{align*}
& \B(\bard, G) \\
& \le  \left( \uprod{\edge{v_i}{u_i}: C_i \in C(G)} \frac{2 \cdot L_{\edged{v_i}{u_i}}}{d_{\edge{v_i}{u_i}}} d_{\edge{v_i}{u_i}} \cdot \uprod{u \in V(C_i) \setminus \{v_i, u_i\}} \quad \umin{\edge{v}{u} \in E(C_i)} d_{\edge{v}{u}} \right) \\
&  \quad \cdot \left( \uprod{s_1 \in S_1(G)} 
\umin{\edge{s_1}{t_1} \in E(S_1, T_1)} \frac{2 \cdot L_{\edge{s_1}{t_1}}}{d_{\edged{s_1}{t_1}}} \right)  \cdot \left( \uprod{t_1 \in T_1(G)} \umin{\edge{s_1}{t_1} \in E(S_1, T_1)} d_{\edge{s_1}{t_1}}  \right) \\
& \quad \cdot \left( \uprod{s_2 \in S_2(G)} \umin{\edge{s_2}{t_2} \in E(S_2, T_2)} \frac{2 \cdot L_{\edged{s_2}{t_2}}}{d_{\edge{s_2}{t_2}}} \right) \cdot \left( \uprod{t_2 \in T_2(G)} \umin{\edge{s_2}{t_2} \in E(S_2, T_2)} d_{\edge{s_2}{t_2}}  \right) \\
& \quad  \left( \uprod{u \in \rho(G)} \umin{\edge{v}{u} \in E} d_{\edge{v}{u}} \right) \\
& \le 2^{|C(G)| + |S_1(G)| + |S_2(G)|}   \cdot \left( \uprod{\edge{v_i}{u_i}: C_i \in C(G)}  L_{\edged{v_i}{u_i}} \cdot \uprod{u \in V(C_i) \setminus \{v_i, u_i\}} \sqrt[|\inc(u)|] { \uprod{\edge{v}{u} \in E(C_i)} d_{\edge{v}{u}} } \right) \\
& \quad \cdot \left( \uprod{s_1 \in S_1(G)} \frac{L_{\edged{s_1}{t_{s_1}}}}{d_{\edge{s_1}{t_{s_1}}}} \cdot d_{\edge{s_1}{t_{s_1}}}  
\cdot \uprod{t_1 \in T_1(G) \setminus t_{s_1}}\sqrt[|\inc(t_1)|]{ \uprod{\edge{s_1}{t_1} \in E(S_1, T_1)} d_{\edge{s_1}{t_1}}}   \right) \\
& \quad \cdot \left( \uprod{s_2 \in S_2(G)}  \umin{\edge{s_2}{t_2} \in E(S_2, T_2)} \frac{L_{\edged{s_2}{t_2}}}{d_{\edge{s_2}{t_2}}} \uprod{t_2 \in T_2(G)} \umin{\edge{s_2}{t_2} \in E(S_2, T_2)} d_{\edge{s_2}{t_2}}  \right) \\
& \quad \cdot \left( \uprod{u \in \rho(G)} \sqrt[\inc(u)|]{ \uprod{\edge{v}{u} \in E} d_{\edge{v}{u}}  } \right) \\
& \le 2^{|C(G)| + |S_1(G)| + |S_2(G)|}  \cdot  \left( \uprod{\edge{v_i}{u_i}: C_i \in C(G)} L_{\edged{v_i}{u_i}} \cdot \uprod{u \in V(C_i) \setminus \{v_i, u_i\}} \sqrt[|\inc(u)|] { \uprod{\edge{v}{u} \in E(C_i)} d_{\edge{v}{u}} } \right) \\
& \quad \cdot \left( \uprod{s_1 \in S_1(G)}  L_{\edged{s_1}{t_{s_1}}}\right) \cdot \left( \uprod{t_1 \in T_1(G) \setminus \{t_{s_1} : s_1 \in S_1(G)\} }\sqrt[\inc(t_1)|]{ \uprod{\edge{s_1}{t_1} \in E(S_1, T_1)} d_{\edge{s_1}{t_1}}}  \right)  \\
& \quad \cdot \left( \uprod{\edge{s_2}{t_2} \in (S_2(G), T_2(G))}  L_{\edged{s_2}{t_2}} \right) \cdot \left( \uprod{u \in \rho(G)} \sqrt[\inc(u)|]{ \uprod{\edge{v}{u} \in E} d_{\edge{v}{u}}  } \right).
\end{align*}
Here, the second inequality follows from the fact $|\inc(u)| \ge 1$ for every $u \in V \setminus ( \ucup{C_i \in C(G)} \{v_i\} \cup S_1(G) \cup S_2(G))$ (note that all these sets of vertices are source vertices by definition and our contruction). For each $s_1 \in S_1(G)$, note that there is at least one vertex $t_1 \in T_1(G)$ such that $\edge{s_1}{t_1} \in E(S_1, T_1)$ (by definition of the latter). We define $T_{S_1(G)} = \{t_{s_1} : s_1\in S(G), \edge{s_1}{t_1} \in E(S_1, T_1) \}$. It follows that $|T_{S_1}(G)| = |S_1(G)|$.

Summing up $\B(\bard, G)$ over all possible degree configurations $\bd$, we get
\begin{align*}
& \usum{\bard = (d_{\edge{v}{u}})_{\edge{v}{u} \in E(G)}} \B(\bard, G) \\
& \le \usum{\bard = (d_{\edge{v}{u}})_{\edge{v}{u} \in E(G)}}  2^{|C(G)| + |S_1(G)| + |S_2(G)|} \\
& \quad \cdot \left( \uprod{\edge{v_i}{u_i}: C_i \in C(G)} L_{\edged{v_i}{u_i}} \cdot \uprod{u \in V(C_i) \setminus \{v_i, u_i\}} \sqrt[|\inc(u)|] { \uprod{\edge{v}{u} \in E(C_i)} d_{\edge{v}{u}} } \right) \\
& \quad \left( \uprod{s_1 \in S_1(G)} L_{\edged{s_1}{t_{s_1}}}\right) \cdot \left( \uprod{t_1 \in T_1(G) \setminus T_{S_1(G)}}\sqrt[\inc(t_1)|]{ \uprod{\edge{s_1}{t_1} \in E(S_1, T_1)} d_{\edge{s_1}{t_1}}}  \right)   \\
& \quad  \cdot \left( \uprod{\edged{s_2}{t_2} \in E(S_2, T_2)}  L_{\edged{s_2}{t_2}} \right) \cdot \left( \uprod{u \in \rho(G)} \sqrt[\inc(u)|]{ \uprod{\edge{v}{u} \in E} d_{\edge{v}{u}}  } \right) \\
& \le 2^{|C(G)| + |S_1(G)| + |S_2(G)|}  \\
& \quad \cdot \left( \uprod{\edge{v_i}{u_i}: C_i \in C(G)} \left( \usum{d_{\edge{v_i}{u_i}} \le L_{\edged{v_i}{u_i}}} L_{\edged{v_i}{u_i}} \right) \cdot \uprod{u \in V(C_i) \setminus \{v_i, u_i\}} \quad \uprod{\edge{v}{u} \in E(C_i)} \left( \usum{d_{\edge{v}{u}} \le d} d_{\edge{v}{u}} \right)^{\frac{1}{|\inc(u)|}} \right)  \\
& \quad \cdot  \left( \usum{d_{\edge{s_1}{t_1}} \le L: \edge{s_1}{t_1} \in E(S_1, T_1)} \quad \uprod{s_1 \in S_1(G)}  L_{\edged{s_1}{t_{s_1}}} \right) \\
& \quad \cdot \left( \uprod{t_1 \in T_1(G) \setminus T_{S_1}(G)} \quad \uprod{\edge{s_1}{t_1} \in E(S_1, T_1)} \left(\usum{d_{\edge{s_1}{t_1}} \le d}d_{\edge{s_1}{t_1}} \right)^{\frac{1}{|\inc(t_1)|}}  \right) \\
& \quad \cdot \left( \usum{d_{\edge{s_2}{t_2}} \le L: \edge{s_2}{t_2} \in E(S_2, T_2)} \uprod{\edge{s_2}{t_2} \in E(S_2, T_2)}  L_{\edged{s_2}{t_2}} \right) \cdot \left( \uprod{u \in \rho(G)}  \uprod{\edge{v}{u} \in E} \left( \usum{d_{\edge{v}{u}} \le d} d_{\edge{v}{u}} \right)^{\frac{1}{|\inc(u)|}} \right) \numberthis \label{eq:dsmall-holder-1}  \\
& \le 2^{|C(G)| + |S_1(G)| + |S_2(G)|} \cdot \left( \uprod{\edge{v_i}{u_i}: C_i \in C(G)} L_{\edge{v_i}{u_i}} \cdot \left( \uprod{u \in V(C_i) \setminus \{v_i, u_i\}} \quad \uprod{\edge{v}{u} \in E(C_i)} (2d)^{\frac{1}{|\inc(u)|}} \right) \right) \\
& \quad \cdot \left( \usum{d_{\edge{s_1}{t_1}} \le L: \edge{s_1}{t_1} \in E(S_1, T_1)} \quad \uprod{s_1 \in S_1(G)} L_{\edged{s_1}{t_1}} \right) \cdot  \left( \uprod{t_1 \in T_1(G) \setminus T_{S}(G)} \quad \uprod{\edge{s_1}{t_1} \in E(S_1, T_1)}  (2d)^{\frac{1}{|\inc(t_1)|}} \right) \\
& \quad  \cdot \left( \usum{d_{\edge{s_2}{t_2}} \le L: \edgeud{s_2}{t_2} \in E(S_2, T_2)}  \uprod{\edge{s_2}{t_2} \in E(S_2, T_2)}  L_{\edged{s_2}{t_2}} \right) \cdot \left( \uprod{u \in \rho(G)}  \uprod{\edge{v}{u} \in E} (2d)^{\frac{1}{|\inc(u)|}}   \right) \numberthis \label{eq:dsmall-holder-2} \\
& = 2^{|C(G)| + |S_1(G)| + |S_2(G)|} \cdot \left( \uprod{\edge{v_i}{u_i}: C_i \in C(G)} L_{\edge{v_i}{u_i}} \cdot \left( \uprod{u \in V(C_i) \setminus \{v_i, u_i\}} 2d \right)  \right) \\
& \quad \cdot \left( \uprod{s_1 \in S_1(G)} \usum{d_{\edge{s_1}{t_1}} \le L: \edge{s_1}{t_1} \in E(S_1, T_1)} L_{\edge{s_1}{t_1}} \right) \cdot  \left( \uprod{t_1 \in T_1(G)}  2d \right) \\
& \quad \cdot  \quad  \left( \usum{d_{\edge{s_2}{t_2}} \le L: \edge{s_2}{t_2} \in E(S_2, T_2)}  \uprod{\edge{s_2}{t_2} \in E(S_2, T_2)}  L_{\edged{s_2}{t_2}} \right) \cdot \left( \uprod{u \in \rho(G)}  2d \right)  \\
& \le 2^{|V| + |C(G)| + |S_1(G)| + |S_2(G)|} \left( \uprod{C_i \in C(G)} L d^{|C_i| - 2} \right) \cdot  
\left( L^{|S_1(G)|} \cdot d^{|T_1(G)|- |T_{S_1(G)}|} \right) \\
& \quad \cdot  \left( \usum{d_{\edge{s_2}{t_2}} \le L: \edge{s_2}{t_2} \in E(S_2, T_2)}  \uprod{\edge{s_2}{t_2} \in E(S_2, T_2)}  L_{\edged{s_2}{t_2}} \right) \cdot d^{|\rho(G)|} \\
& \le 2^{|V| + |C(G)| + |S_1(G)| + |S_2(G)|} \left( \uprod{C_i \in C(G)} L d^{|C_i| - 2} \right) \cdot  \left( L^{|S_1(G)|} d^{|T_1(G)| - |T_{S_1}(G)|} \right) \\
& \quad \cdot \left(\uprod{\edge{s_2}{t_2} \in E(S_2, T_2)}  \usum{d_{\edge{s_2}{t_2}} \le L} L_{\edged{s_2}{t_2}}  \right) d^{\rho(G)} \\
& \le 2^{2(|V| + |C(G)| + |S_1(G)| + |S_2(G)|)} \left( \uprod{C_i \in C(G)} L d^{|C_i| - 2} \right) \cdot  L^{|S_1(G)|} d^{ |T_1(G)| - |S_1(G)|} \cdot L^{|S_2(G)|} \cdot d^{\rho(G)} \numberthis \label{eq:dsmall-holder-4}. 
\end{align*}
Here~\eqref{eq:dsmall-holder-1} follows by a direct application of H\"{o}lder's inequality using the fact that $|\inc(u)| \ge 1$ for every non-source vertex $u \in V$. As a result, we can push the sums in and then,~\eqref{eq:dsmall-holder-2} follows from the definition of $d_{\edge{v}{u}}$s as powers of two and the sum from $1$ to $d$ is at most $2d$. Finally,~\eqref{eq:dsmall-holder-4} follows by applying $\usum{d_{\edge{s_2}{t_2}} \le L} L_{\edged{s_2}{t_2}} = L$ and noting that $|T_{1}(G)|- |T_{S_1(G)}| = |T_{1}(G)|- |S_{1}(G)|$. This completes the proof.

\subsubsection{Mapping Edge Cover of our Decomposition to Optimal Fractional Edge Cover} \label{sec:agm-edge-cover}
We first restate the fractional covering linear program for (the undirected version) of $E(S, T)$ and with a slight abuse of notation assume $E(S, T)$ to be containing the undirected set.
\begin{align*}
\min \quad  \usum{e \in E(S, T)} x_{e} \tag{($\LP'$)}\\
\usum{e \ni v} x_{e} \ge 1 \quad \forall v \in V \\
x_{e} \ge 0 \quad \forall e \in E.
\end{align*}
We state a well-known result based on the LP above (where we use the fact that $E(S,T)$ is bipartite):
\begin{thm}[Implicit in~\cite{NPRR}]
	There exists an optimal solution for $\LP'$ on $(S, T)$ that can decomposed into a union of disjoint stars, where $x_{e} = \{0, 1\}$ for every $e \in E(S, T)$. Further, the set of edges $e$ with $x_e=1$ forms the disjoint set of stars.
\end{thm}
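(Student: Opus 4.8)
The claim is a standard integrality result for the fractional edge cover LP on a bipartite graph, so the plan is to invoke bipartiteness of $E(S,T)$ directly. First I would recall that $(S,T)$ is a bipartite graph with all edges going between the side $S$ and the side $T$ (this is exactly how we partitioned vertices). The fractional covering LP $\LP'$ asks for the minimum weight fractional edge cover. The plan is to show that its optimal value is attained at a $\{0,1\}$-solution and that such a solution decomposes into a disjoint union of stars.

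The key steps, in order, would be: (i) Observe that the constraint matrix of $\LP'$ is the vertex--edge incidence matrix of the bipartite graph $(S,T)$, which is totally unimodular; hence every vertex (basic feasible solution) of the polytope $\{\vx \ge \vzero : \sum_{e \ni v} x_e \ge 1\ \forall v\}$ is integral. Since the objective has a minimizing direction and the polytope's recession cone is the nonnegative orthant, an optimal basic feasible solution exists and is integral, so $x_e \in \{0,1\}$ for all $e$ (any $x_e \ge 2$ could be lowered to $1$ while remaining feasible and not increasing the objective, since each endpoint is still covered — actually feasibility forces nothing larger, but I would just note integrality plus the fact that lowering a $\ge 1$ value helps the objective). (ii) Let $F = \{ e : x_e = 1\}$ be the chosen edge set. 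I would argue $F$ induces a subgraph of maximum degree... here I need care: a minimal integral edge cover of a bipartite graph is a disjoint union of stars. The standard argument: take any optimal integral edge cover $F$ and make it minimal (remove redundant edges greedily — an edge $e=\{s,t\}$ is redundant if both $s$ and $t$ are covered by other edges of $F$). A minimal edge cover has no path of length $3$ (if $s_1 - t_1 - s_2 - t_2$ all lay in $F$, the middle edge $\{t_1,s_2\}$ would be redundant since $t_1$ is covered by $\{s_1,t_1\}$ and $s_2$ by $\{s_2,t_2\}$), hence each connected component of $F$ has diameter at most $2$, i.e. is a star. (iii) Conclude that this minimal $F$ is a disjoint union of stars, it is still an optimal solution to $\LP'$ because removing redundant edges only decreased the objective, and it has $x_e \in \{0,1\}$, as required.

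I expect the main obstacle — or rather the main thing to get exactly right — to be step (ii): arguing that we may pass from an arbitrary optimal integral solution to a \emph{minimal} one without increasing the objective, and then that minimality forces the star structure. This is where bipartiteness is genuinely used (in a general graph a minimal edge cover can contain triangles, which are not stars). I would make sure to state the no-$P_4$ lemma cleanly and handle the degenerate cases (isolated vertices cannot occur since every vertex needs covering; a component that is a single edge is a star with two vertices). The rest — total unimodularity of a bipartite incidence matrix, existence of an optimal BFS — is textbook (and indeed the paper already cites~\cite{NPRR} and~\cite{bf-book} for exactly these ingredients), so I would cite rather than reprove. Finally I would remark that this matches the star-cover structure used in Section~\ref{sec:dlow}, so the $E(S_1,T_1)$ and $E(S_2,T_2)$ minimizing $\abs{T_1(G)} + \abs{S_2(G)}$ can be read off from such an optimal $\{0,1\}$-solution of $\LP'$.
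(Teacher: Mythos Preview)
Your proposal is correct. The paper does not actually prove this theorem; it states it as a ``well-known result'' implicit in~\cite{NPRR}, explicitly noting that the argument ``use[s] the fact that $E(S,T)$ is bipartite.'' Your plan supplies exactly such a proof: total unimodularity of the bipartite incidence matrix gives an integral optimal BFS, and passing to a minimal integral edge cover (which has no $P_4$, hence---in a bipartite graph, where there are no triangles---every component is a star) yields the claimed decomposition.

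For comparison, the route~\cite{NPRR} takes (and which the present paper mirrors in its own Theorem~\ref{thm:opt-decomp}) is the basic-feasible-solution counting argument: in an optimal BFS the support has $|V_i|-1 \le |E_i| \le |V_i|$ in each connected component, forcing each component to be a tree or unicyclic; one then checks which of these are compatible with all covering constraints being tight, obtaining stars (and odd cycles with $x_e=\tfrac12$ in the non-bipartite case). Your TU-plus-minimality argument is a cleaner shortcut available precisely because $(S,T)$ is bipartite, and it lands on the same conclusion. Either way the corollary you draw---that $E(S_1,T_1)$ and $E(S_2,T_2)$ can be read off from this optimal $\{0,1\}$ solution---is exactly how the paper uses the theorem.
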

We then pick $E(S_1, T_2)$ and $E(S_2, T_2)$ from the stars above. The following corollary holds.
\begin{cor}
	For each edge $\edge{v}{u} \in E(S_1, T_2), E(S_2, T_2)$, we have $x_{\edge{v}{u}} = 1$.
\end{cor}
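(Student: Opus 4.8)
The plan is to prove the corollary essentially by \emph{defining} the star covers $E(S_1(G),T_1(G))$ and $E(S_2(G),T_2(G))$ to be precisely the support of the $0/1$ optimum of $\LP'$ supplied by the theorem (\emph{Implicit in}~\cite{NPRR}); once that identification is made, ``$x_{\edge{v}{u}}=1$'' for every edge of the two star covers is true by construction, and all the remaining (light) work is in checking that this choice is a legitimate decomposition of the sort used in Section~\ref{sec:dlow} and that it minimizes the star-cover size.

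First I would check that $\LP'$ is feasible. All of $E(S,T)$ runs between the two sides $S(G)$ and $T(G)$, so this graph is bipartite; every $t\in T(G)$ has an incoming edge from some vertex of $S(G)$ by the definition of $T(G)$, and every $s\in S(G)$ is a source of the (connected) query graph, hence has an outgoing edge whose head $w\neq s$ then carries an incoming inter-SCC edge and so lands in $T(G)$. Thus every vertex of $S(G)\cup T(G)$ is incident to an edge of $E(S,T)$, and $\LP'$ has a feasible point. Applying the cited theorem yields an optimal $\vx^*$ with $x^*_e\in\{0,1\}$ and support $E^*:=\{e:x^*_e=1\}$ a disjoint union of stars that (being a feasible cover) touches every vertex of $S(G)\cup T(G)$.

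Next I would read off the partition. In a bipartite graph every star is ``one-sided'': its center lies on one side and all its leaves on the other. Let $S_1(G)$ (resp.\ $T_2(G)$) be the centers of the stars of $E^*$ lying in $S(G)$ (resp.\ $T(G)$), let $T_1(G)$ and $S_2(G)$ be the corresponding leaf sets, and take $E(S_1(G),T_1(G))$, $E(S_2(G),T_2(G))$ to be the edge sets of the two families. Since the stars are vertex-disjoint and jointly cover $S(G)\cup T(G)$, each vertex of $S(G)$ is either its own star's center or a leaf of a $T$-centered star, so $S(G)=S_1(G)\sqcup S_2(G)$, and likewise $T(G)=T_1(G)\sqcup T_2(G)$ — exactly the disjoint star decomposition Section~\ref{sec:dlow} asks for. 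Then any $\edge{v}{u}\in E(S_1(G),T_1(G))\cup E(S_2(G),T_2(G))$ is by definition an edge of some star in $E^*$, so $x_{\edge{v}{u}}=x^*_{\edge{v}{u}}=1$, which is the corollary.

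The only place any genuine argument is required is optimality: conversely, \emph{any} disjoint star cover of $S(G)\cup T(G)$ of this shape gives a feasible $0/1$ point of $\LP'$ (value $1$ on its edges, $0$ elsewhere — every vertex lies in some star, hence is covered) whose objective is its number of edges, which equals its number of leaves, namely $|T_1(G)|+|S_2(G)|$; and $\vx^*$ attains $\sum_e x^*_e=|T_1(G)|+|S_2(G)|$ for our chosen cover. So minimizing $\LP'$ is the same as minimizing $|E(S_1,T_1)|+|E(S_2,T_2)|=|T_1(G)|+|S_2(G)|$, the quantity Section~\ref{sec:dlow} wants minimized; I expect the few remaining lines to be the routine book-keeping of this equivalence together with the (easy, using $d\le\sqrt{L}$) observation that a smaller star cover yields a smaller value of~\eqref{eq:dlow-ub}. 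In summary, the statement is a corollary by construction, and the only care needed is the bipartite ``one-sidedness'' of the stars and the $\LP'$-feasibility check.
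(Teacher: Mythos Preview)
Your proposal is correct and takes essentially the same approach as the paper, which gives no explicit proof at all: it simply states ``We then pick $E(S_1,T_1)$ and $E(S_2,T_2)$ from the stars above'' immediately before the corollary, making $x_{\edge{v}{u}}=1$ hold by construction. You spell out in detail what the paper leaves implicit (feasibility of $\LP'$, the bipartite one-sidedness of the stars, the resulting partition of $S$ and $T$), and your final paragraph on optimality is in fact the content of the paragraph \emph{following} the corollary in the paper rather than part of its proof.
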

This implies $E(S_1, T_2)$ and $E(S_2, T_2)$ are minimum integral edge coverings for $S, T$.

To complete this section, we argue that the min star cover size of $(S, T)$ is the same as the min bound achieved by our upper bound in Theorem~\ref{thm:dsmall}. For this, we start by restating the portion of upper bound for the $(S, T)$ part.
\begin{align*}
& \left(  \left(\frac{L}{d}\right)^{|S_1(G)|} \cdot d^{|T_1(G)|} \right) \cdot L^{|S_2(G)|} \\
& = \inparen{\frac{L}{d}}^{\abs{S_1(G)}+\abs{S_2(G)}}\cdot d^{\abs{T_1(G)}+\abs{S_2(G)}} \\
& = \inparen{\frac{L}{d}}^{\abs{S(G)}} \cdot d^{\abs{T_1(G)}+\abs{S_2(G)}}. 
\end{align*}
In our upper bound, $\abs{S(G)}$ is fixed so miniming the above part is the same as minimizing $\abs{T_1(G)}+\abs{s_2(G)}$, which is exactly minimizing the size of star/integral edge cover, as desired.

\subsection{Missing Details in Section~\ref{sec:dhigh}} \label{app:dhigh}
In this section, our goal is to prove~\ref{eq:dhigh-ub}. 
\subsubsection{Notation and Existing Results}
We restate $\LP^{(+)}$ for this scenario, where $x_{\edgeud{v}{u}}$ corresponds to the $\ell_1$-norm bounds\footnote{Note here that we use $\edgeud{v}{u}$ here instead of the standard $x_{\edge{v}{u}}$. We would like to note here that this is fine since for $\ell_{1}$-norm bounds, the direction does not matter and we (implicitly) assume that all tuples in the relation $R_{\edgeud{v}{u}}$ are directed from $v$ to $u$.} and $z_{\edge{v}{u}}$ correspond to the $\ell_{\infty}$-norm bounds.
\begin{align*}
& \min \left( \usum{(v, u) \in E} x_{\edgeud{v}{u}} \log(L) +  \usum{\edge{v}{u} \in E} z_{\edge{v}{u}} \log(d) \right) \tag{$\LP^{+}$} \\
& \text{ s.t. } \left( \usum{e = (v, u) \ni u} x_{v, u}  \right) + \left( \usum{\edge{v}{u} \in E} z_{\edge{v}{u}} \right) \ge 1 \quad \forall u \in V \numberthis \label{eq:dhigh-covering} \\
& x_{\edgeud{v}{u}}, z_{\edge{v}{u}} \ge 0 \quad \forall \edge{v}{u} \in E.
\end{align*}
We will be using a result similar to Theorem~\ref{thm:opt-decomp} on $\LP^{(+)}$ (the proof is in Appendix~\ref{sec:dhigh-struct}).
\begin{corollary} \label{cor:dhigh-struct}
For any $G = (V, E)$, there exists an optimal basic feasible solution $(\vx^*, \vz^*) = (x^*_{\edgeud{v}{u}}, z^*_{\edge{v}{u}})_{\edgeud{v}{u}, \edge{v}{u} \in E}$ to $\LP^{(+)}$ on $G$ that can be decomposed into a disjoint union of $t$ (some $t > 0$) connected components (in the undirected sense) $G_i = (V_i, E_i)$ with 
\begin{align*}
|V_i| - 1 \le |Q(E(G_i))| \le |V_i|, \text{ where } 
\end{align*}
$Q(E(G_i))$ is the set of non-zero values $x_{\edgeud{v}{u}}$ and $z_{\edge{v}{u}}$ for every $\edgeud{v}{u}$ and $\edge{v}{u}$ in $E(G_i)$. Further, $$(\mathbf{x}^*_{i}, \mathbf{z}^*_{i}) = \left( x_{\edgeud{v}{u}}, z_{\edge{v}{u}} \right)_{\edgeud{v}{u}, \edge{v}{u} \in E(G_i)}$$ is an optimal basic feasible solution for $\LP^{(+)}$ on $G_i$ for every $i \in [t]$ and $\cup_{i=1}^{t}V(G_i) = V$ and $V(G_i) \cap V(G_j) = \emptyset$ \space $\forall i, j \in [t], i \neq j$. The following is true -- $\TJ = \times_{i \in [t]} J_{G_i}^{(I)}$.
\end{corollary}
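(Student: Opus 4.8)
The plan is to mirror the proof of Theorem~\ref{thm:opt-decomp}, adapting the linear-algebraic bookkeeping to accommodate both the $\ell_1$-variables $x_{\edgeud{v}{u}}$ and the $\ell_\infty$-variables $z_{\edge{v}{u}}$ attached to each (undirected) edge. First I would invoke the standard fact (Theorem~\ref{thm:bf-optimal}, applied to $\LP^{(+)}$ in its present form) that $\LP^{(+)}$ on $G$ admits an optimal solution $(\vx^*,\vz^*)$ that is basic feasible in the sense of Definition~\ref{def:basic-feasible}. The bulk of the work is the purely structural Lemma~\ref{thm:basic-feasible-gen}: for \emph{every} basic feasible solution $(\vx,\vz)$, the graph obtained by deleting every edge $\edgeud{v}{u}$ with $x_{\edgeud{v}{u}} = z_{\edge{v}{u}} = 0$ breaks into vertex-disjoint connected components $G_i=(V_i,E_i)$ covering $V$, each satisfying $|V_i|-1 \le |Q(E(G_i))| \le |V_i|$, and such that $(\vx_i,\vz_i)$ restricted to $G_i$ is itself a basic feasible solution for $\LP^{(+)}$ on $G_i$.

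To prove that lemma I would write down the $(|V|+2|E|)\times 2|E|$ constraint matrix $C$ of $\LP^{(+)}$ — rows indexed first by the vertex covering constraints~\eqref{eq:struct-gen-covering}, then by the $2|E|$ nonnegativity constraints, and columns by the $x_{\edgeud{v}{u}}$ and $z_{\edge{v}{u}}$ variables — and, following the pattern of Lemma~\ref{thm:basic-feasible}, delete the edges on which both variables vanish. No vertex disappears, since every covering constraint forces one of its incident variables to be positive. Fixing a connected component $G_1$, I would split the $2|E(G_1)|$ tight constraints witnessed by $(\vx,\vz)$ into those of the first kind $S^{1}_{\vx,\vz}(G_1)$ (tight covering constraints) and those of the second kind $S^{2}_{\vx,\vz}(G_1)$ (variables forced to $0$). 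Since $C_{S^{2}_{\vx,\vz}(G_1)}$ is an identity block of size $2|E(G_1)|-|Q(E(G_1))|$, subadditivity of rank combined with $\rank(C_{S_{\vx,\vz}(G_1)})=2|E(G_1)|$ yields $\rank\big(C_{S^{1}_{\vx,\vz}(G_1)}\big) \ge |Q(E(G_1))|$; bounding this rank above by $\min(|V|,|S^{1}_{\vx,\vz}(G_1)|)\le |V|$ gives $|Q(E(G_1))|\le |V|$, while the observation that each surviving edge contributes at least one nonzero variable gives $|Q(E(G_1))| \ge |E(G_1)| \ge |V|-1$. For several components the matrices $C_{S^{1}}$, $C_{S^{2}}$ and the residual identity block of deleted edges are block diagonal, so summing ranks across components propagates $\rank(C_{S^{1}_{\vx,\vz}(G_i)})=|Q(E(G_i))|$, and each restricted solution is basic feasible on $G_i$ because the leftover covering constraints are strict.

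With Lemma~\ref{thm:basic-feasible-gen} in hand, optimality transfers by the contradiction argument used in Theorem~\ref{thm:opt-decomp}: if $(\vx^*_i,\vz^*_i)$ were suboptimal for $\LP^{(+)}$ on some $G_i$, replace it with a cheaper basic feasible solution (which exists by Theorem~\ref{thm:bf-optimal}) while leaving $(\vx^*,\vz^*)$ unchanged on $E\setminus E(G_i)$; feasibility is preserved because the components are independent, and the objective strictly decreases since it factors as a product of per-edge terms $L^{x_{\edgeud{v}{u}}} d^{z_{\edge{v}{u}}}$, contradicting optimality of $(\vx^*,\vz^*)$. The identity $\TJ = \times_{i\in[t]} J_{G_i}^{(I)}$ then follows because the $V(G_i)$ are pairwise disjoint with $\bigcup_i V(G_i)=V$ and no relation straddles two components, so an output tuple on $V$ decomposes uniquely into independent outputs on the $G_i$. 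The step I expect to be the main obstacle is the rank computation for $S^{2}_{\vx,\vz}(G_i)$: one must argue carefully that $Q(E(G_i))$, the set of \emph{nonzero} variables, is exactly the complement inside the $2|E(G_i)|$ variables of $G_i$ of the $x=0$/$z=0$ tight constraints, and — as recorded by Property~\ref{assump:dhigh} and needed by the downstream use of Corollary~\ref{cor:dhigh-struct} — that when $G_i$ is cyclic at most one undirected edge can have both $x_{\edgeud{v}{u}}\ne 0$ and $z_{\edge{v}{u}}\ne 0$.
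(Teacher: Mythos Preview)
Your proposal is correct and follows essentially the same route as the paper: the paper also reduces to a structural Lemma~\ref{thm:basic-feasible-gen}, proved by writing down the $(|V|+2|E|)\times 2|E|$ constraint matrix, removing edges with $x_{\edgeud{v}{u}}=z_{\edge{v}{u}}=0$, splitting the tight constraints into covering ($S^1$) and nonnegativity ($S^2$) types, and combining the identity-block rank $2|E(G_1)|-|Q(E(G_1))|$ of $S^2$ with subadditivity and the bound $\rank(C_{S^1})\le |V|$ to sandwich $|Q(E(G_i))|$; optimality is then transferred component-wise exactly as in Theorem~\ref{thm:opt-decomp}. Your identification of the key step (the complement relation between $Q(E(G_i))$ and the $S^2$ tight constraints) and of the ``at most one edge with both variables nonzero'' consequence matches the paper as well.
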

For our arguments, in addition to the above, we need an extremal property (Property~\ref{assump:dhigh}) as well, where consider a specific subclass of optimal basic feasible solutions to $\LP^{(+)}$ on $G$. 

\subsubsection{Proof of~\eqref{eq:dhigh-ub}} \label{sec:dhigh-ub}
We make the following claim.
\begin{lemm} \label{lemma:dhigh-dag}
Each $G_i: i \in [t]$ is a DAG.
\end{lemm}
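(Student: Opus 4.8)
\textbf{Proof proposal for Lemma~\ref{lemma:dhigh-dag}.}
The plan is to argue by contradiction: suppose some $G_i$ produced by Corollary~\ref{cor:dhigh-struct} is not a DAG, so it contains a directed cycle $C$. By the structural guarantee of Corollary~\ref{cor:dhigh-struct} we have $|V_i|-1 \le |Q(E(G_i))| \le |V_i|$, and by Property~\ref{assump:dhigh} (which applies exactly to the optimal basic feasible solution we are working with) every undirected edge $\edgeud{v}{u} \in E(G_i)$ has exactly one of $x^*_{\edgeud{v}{u}}$ or $z^*_{\edge{v}{u}}$ nonzero when $G_i$ is cyclic. The key leverage is the hypothesis $d^2 > L$: I will show this forces the objective to be strictly improvable along the cycle, contradicting optimality.

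The main computation is a local perturbation around the cycle $C = (A_0, A_1, \dots, A_{k-1}, A_0)$. For each edge of $C$, Property~\ref{assump:dhigh} tells us the ``active'' variable is either an $x$ (an $\ell_1$ variable, contributing $\log L$ to the objective per unit) or a $z$ (an $\ell_\infty$ variable in the direction of the cycle edge, contributing $\log d$ per unit). First I would observe that the covering constraints~\eqref{eq:dhigh-covering} have the shape: the variable $x_{\edgeud{v}{u}}$ on edge $\edgeud{v}{u}$ appears in the constraint for \emph{both} endpoints $v$ and $u$ (coefficient $1$ each, since it is an undirected $\ell_1$ variable appearing in $\usum{e \ni u} x_{v,u}$), whereas $z_{\edge{v}{u}}$ appears only in the constraint for its head $u$. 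I would then define two perturbed solutions: $\vx^+$ which increases each cycle-edge's active variable ``clockwise'' and decreases ``counter-clockwise'' (in the spirit of the cycle-cancelling argument in the proof of Theorem~\ref{thm:shi}), and $\vx^-$ which does the reverse, choosing the perturbation size $\eps$ small enough that all variables stay nonnegative and no feasibility constraint is violated. The crucial point is that along a directed cycle, each $z$-variable has its head covered exactly once, so a consistent rotation of the $z$-values around $C$ changes the left-hand side of each vertex constraint by a net amount that can be made zero (each vertex in $C$ gains $\eps$ on one incident cycle $z$ and we must balance it) — more carefully, I expect the right bookkeeping is that rotating $z$'s around the cycle keeps every vertex constraint satisfied because the in-degree contribution is invariant, while rotating $x$'s is even more clearly constraint-preserving since each $x$ hits both its endpoints. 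The objective changes linearly in $\eps$ with opposite signs for $\vx^+$ and $\vx^-$, so one of them is a feasible solution with strictly smaller objective unless the linear coefficient is exactly zero.

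The place where $d^2 > L$ enters, and what I expect to be the main obstacle, is ruling out the ``coefficient exactly zero'' case and, more fundamentally, showing that having any directed cycle at all is suboptimal rather than merely non-unique. The cleanest route I foresee: a directed cycle of length $k$ in $G_i$ with all active variables being $z$'s would, by summing~\eqref{eq:dhigh-covering} over the $k$ cycle vertices, let us set each $z$ on the cycle to $1/k$ and ``pay'' $\frac{k}{k}\log d = \log d$; but we could instead delete one cycle edge and route through an $x$-variable, and because a single $x$ covers two vertices, the comparison of $\log L$ versus $2\log d$ is governed precisely by $d^2$ versus $L$ — with $d^2 > L$ meaning $2\log d > \log L$, so the $x$-route is cheaper, i.e.\ the all-$z$ cycle is \emph{not} optimal. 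The subtlety is that the active variables on $C$ need not all be $z$'s, so I would need a case analysis (or better, an aggregate argument summing constraints around $C$ and comparing the total objective cost of the cycle against the cost of the acyclic alternative obtained by dropping the back edge and adjusting). I would carry this out by: (i) fixing the cyclic $G_i$ and cycle $C$; (ii) writing the portion of the objective attributable to $E(C)$ under the basic feasible solution; (iii) constructing an explicit alternative feasible solution on $G_i$ in which $C$ is broken (drop the minimal back edge, re-solve the resulting acyclic LP which by Theorem~\ref{thm:gen-main}/the DAG analysis has a clean value); (iv) invoking $d^2 > L$ to show the alternative has strictly smaller objective, contradicting optimality of $(\vx^*_i, \vz^*_i)$. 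Hence no cyclic $G_i$ can occur, and each $G_i$ is a DAG, after which Theorem~\ref{thm:gen-main} applies to each $G_i$ and yields~\eqref{eq:dhigh-ub}.
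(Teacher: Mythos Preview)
Your overall strategy—perturb an optimal basic feasible solution using $d^2>L$ to derive a contradiction—matches the paper's, and your intuition that the comparison comes down to $\log L$ versus $2\log d$ is exactly right. But the specific perturbation you propose does not work, and this is the genuine gap.

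The flaw is in the claim that ``rotating $z$'s around the cycle keeps every vertex constraint satisfied because the in-degree contribution is invariant.'' In the LP~\eqref{eq:dhigh-covering}, the variable $z_{\edge{v}{u}}$ appears \emph{only} in the constraint for the head $u$. On a directed cycle $A_0\to A_1\to\cdots\to A_{k-1}\to A_0$, the constraint at $A_{i+1}$ contains $z_{\edge{A_i}{A_{i+1}}}$ but \emph{not} $z_{\edge{A_{i+1}}{A_{i+2}}}$. So any alternating $\pm\eps$ perturbation of the $z$'s around $C$ changes each vertex constraint by exactly the change on its single incoming cycle-$z$; nothing cancels. There is no feasibility-preserving ``rotation'' of $z$-variables around a directed cycle in the sense of the flow argument in Theorem~\ref{thm:shi}. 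Your fallback plan~(iii)—drop an edge and re-solve the acyclic LP—does not give you a direct objective comparison without an explicit alternative feasible point.

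What the paper does instead is first establish a structural claim (Claim~\ref{claim:dhigh-path}): in any cyclic $G_i$ one can find a \emph{path} $P=(u,v_1,\dots,v_k,w)$ whose two end-edges are $z$-type (directed \emph{into} $v_1$ and $v_k$ respectively) and whose interior edges are $x$-type. Along such a path the alternating $\pm\eps$ perturbation \emph{does} preserve every vertex constraint: the $z$ end-edges do not appear in the constraints for their tails $u,w$, and each interior $x$ hits both endpoints so the alternating signs cancel at every $v_i$. The paper then splits on the parity of $|E(P)|$: when it is odd, the net objective change is $\eps(\log L-2\log d)<0$, contradicting optimality; when it is even, the two perturbations $\pm\eps$ leave the objective unchanged but one of them drives some edge's sole nonzero variable to $0$, contradicting Property~\ref{assump:dhigh}. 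The existence of such a path (and the residual all-$z$-cycle corner cases) requires its own argument, carried out in Section~\ref{sec:dhigh-path}; this is the missing ingredient in your proposal.
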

If the above result holds, note that we can invoke Theorem~\ref{thm:gen-main} on each $G_i$ to prove~\eqref{eq:dhigh-ub}. The rest of this section will focus on proving Lemma~\ref{lemma:dhigh-dag} and we assume $G_i$ is cyclic and since $|V(G_i)| - 1 \le |E(G_i) \le |V(G_i)|$, we have that it has exactly one cycle, which we denote by $C$.
\subsubsection{Proof of Lemma~\ref{lemma:dhigh-dag}} \label{app:dhigh-dag}
We make the following claim on $G_i$ on $(\vx^*, \vz^*)$.
\begin{claim} \label{claim:dhigh-path}
For some $k \ge 2$, there exists a path (in the undirected sense) $P = \{u, \{v_i\}_{i \in [k - 1]}, w\}$ such that (for all $\in [k - 1]$ and $u, v_1 \in V(C)$)
\begin{align*}
x^*_{\edgeud{v_i}{v_{i + 1}}} > 0, z^*_{\edge{u}{v_1}} > 0, z^*_{\edge{v_k}{w}} > 0 .
\end{align*}
Further, for edges $\edgeud{u}{v_1}$ and $\edgeud{w}{v_k}$, we have $x_{\edgeud{u}{v_1}} = x_{\edgeud{w}{v_k}} = 0$ and for every $\edgeud{v_i}{v_{i + 1}}: i \in [t]$, we have $z_{\edge{v_i}{v_{i + 1}}} = 0$. 
\end{claim}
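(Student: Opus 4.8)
Looking at Claim~\ref{claim:dhigh-path}, this is a statement about the structure of an optimal basic feasible solution $(\vx^*,\vz^*)$ to $\LP^{(+)}$ restricted to a cyclic connected component $G_i$ (with $|V(G_i)|-1 \le |E(G_i)| \le |V(G_i)|$, so $G_i$ has exactly one cycle $C$). We are told in Property~\ref{assump:dhigh} that for every undirected edge $\edgeud{v}{u} \in E(G_i)$, exactly one of $x^*_{\edgeud{v}{u}}$ or $z^*_{\edge{v}{u}}$ is nonzero. The goal of the claim is to extract a particular kind of path witnessing a ``bad'' configuration, which will then be used (presumably in the proof of Lemma~\ref{lemma:dhigh-dag}, by contradicting optimality via a local swap along $P$) to conclude $G_i$ is actually a DAG.

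\textbf{Proof plan.} The plan is to argue by contradiction with the assumption that $G_i$ is cyclic, i.e. that the cycle $C$ exists. First I would observe that since $d^2 > L$, the ``$\ell_1$ along a cycle'' option is strictly cheaper in the objective than using $\ell_\infty$ constraints around the whole cycle; more precisely, I would look at the dual or at a direct exchange argument to show that an optimal solution cannot afford to set $z^*_{\edge{v}{u}}>0$ on \emph{every} edge of $C$. Concretely: if $z^*_{\edge{v}{u}}>0$ for all $\edgeud{v}{u}\in E(C)$, then by Property~\ref{assump:dhigh} we have $x^*_{\edgeud{v}{u}}=0$ on all cycle edges, and I would show we can decrease each such $z^*$ slightly while increasing the $x^*$ on cycle edges by a compensating amount, keeping all covering constraints~\eqref{eq:dhigh-covering} satisfied (each vertex of $C$ has two incident cycle edges, and $x$ on an edge $(v,u)$ helps the constraint for $u$; going around the cycle the contributions balance) and strictly decreasing the objective since $\log L < 2\log d$ appropriately weighted — contradicting optimality. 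Hence there is at least one edge on $C$, say incident to a vertex I will call $u$, with $x^*>0$ on it. Following the cycle edge from $u$ on which $x^*>0$, I reach $v_1\in V(C)$; I then walk \emph{outward} from $v_1$ into the tree hanging off the cycle (recall $|E(G_i)|\le|V(G_i)|$ forces the non-cycle part to be a forest of trees rooted on $C$), at each step following an edge with $x^*>0$, until I am forced to stop at a vertex $v_k$ whose next edge has $x^*=0$ and hence (by Property~\ref{assump:dhigh}) $z^*>0$; I similarly extend backward from $u$ to find the edge $\edge{u'}{u}$ with... — actually I would reorganize so that the ``source'' end $u$ is chosen as a cycle vertex whose incoming cycle edge $\edge{u}{v_1}$ carries $x^*=0$ (hence $z^*_{\edge{u}{v_1}}>0$), which must exist by the exchange argument above since not all cycle edges can carry $x^*>0$ either (the cycle has $|C|$ edges but satisfying the $|C|$ cycle-vertex covering constraints tightly with only $x$ variables requires a specific fractional pattern, and an optimal BFS with $|Q(E(G_i))|\le |V_i|$ nonzero variables cannot have all of them be $x$'s plus the needed slack). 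This gives the path $P=\{u,\{v_i\}_{i\in[k-1]},w\}$ with the stated sign pattern.

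\textbf{Key steps, in order.} (1) Set up notation: $C$ is the unique cycle of $G_i$, and $G_i\setminus E(C)$ is a forest of trees rooted on $C$ (this is essentially Claim~\ref{claim:cycle-trees}/Corollary~\ref{cor:cycle-trees}). (2) Prove the exchange lemma: not all edges of $C$ can carry $x^*>0$, and not all can carry $z^*>0$, using $d^2>L$ and basic-feasibility (the nonzero support has size $\le |V_i|$). (3) Pick $u\in V(C)$ so that its cycle-incoming edge $\edge{u}{v_1}$ has $z^*_{\edge{u}{v_1}}>0$ (exists by step 2 and Property~\ref{assump:dhigh}), and pick the other cycle edge at some vertex to start a walk with $x^*>0$. (4) Greedily extend the walk following $x^*>0$ edges; since along a tree this walk cannot loop and $G_i$ is finite, it terminates at a leaf-direction vertex $v_k$ whose next edge $\edge{v_k}{w}$ has $x^*=0$, hence $z^*_{\edge{v_k}{w}}>0$ by Property~\ref{assump:dhigh}. (5) Record the resulting path $P$ and check the sign conditions: $x^*_{\edgeud{v_i}{v_{i+1}}}>0$ forces $z^*_{\edge{v_i}{v_{i+1}}}=0$ (Property~\ref{assump:dhigh}), and $z^*_{\edge{u}{v_1}}>0$, $z^*_{\edge{v_k}{w}}>0$ force $x^*_{\edgeud{u}{v_1}}=x^*_{\edgeud{w}{v_k}}=0$.

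\textbf{Main obstacle.} The delicate part is step~(2)–(3): precisely pinning down, from optimality and basic-feasibility together with the hypothesis $d^2>L$, that an optimal BFS must use an $x^*>0$ edge somewhere on the cycle \emph{and} a $z^*>0$ edge adjacent to it, rather than a degenerate all-$x$ or all-$z$ pattern. I expect this will require carefully combining the counting bound $|Q(E(G_i))|\le |V_i|$ (so there is essentially no slack), the tightness structure of $\LP^{(+)}$ on $G_i$, and a one-parameter family of feasible perturbations supported on the cycle whose objective derivative has a definite sign because $\log L < 2\log d$. The walk-extension argument in steps~(4)–(5) is then routine graph-theory bookkeeping given the forest structure of $G_i\setminus E(C)$.
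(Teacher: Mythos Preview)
Your overall strategy—use an exchange argument on the cycle (exploiting $d^2>L$, i.e.\ $\log L < 2\log d$) to locate a cycle edge with $z^*>0$, then greedily walk along $x^*>0$ edges until hitting another $z^*>0$ edge—matches the paper's approach. There are, however, two genuine gaps.

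First, your step~(2) argument that ``not all cycle edges can carry $x^*>0$'' from the count $|Q(E(G_i))|\le|V_i|$ does not go through: with $|E(G_i)|=|V_i|$ and each edge contributing exactly one nonzero variable (Property~\ref{assump:dhigh}), the count is exactly $|V_i|$, so the bound is met with equality and yields no contradiction. The paper does not argue this by counting; rather, its proof of the claim is not a pure existence proof of $P$ but a case analysis: whenever the initial subpath $P'=(u,v_1,v_2)$ (a cycle $z^*$-edge followed by an $x^*$-edge out of $v_1$) fails to exist, it directly constructs an alternative feasible solution that either strictly improves the objective (using $L/d^2<1$) or violates Property~\ref{assump:dhigh}. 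In other words, the paper proves ``either $P$ exists or we already have the contradiction Lemma~\ref{lemma:dhigh-dag} needs'', not ``$P$ always exists''. Your plan to cleanly separate ``the starting configuration exists'' from ``now walk'' is tidier in spirit, but the first half requires a different argument than counting.

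Second, your step~(4) termination is too coarse. You say the walk stops at $v_k$ whose ``next edge'' has $z^*>0$, but the subsequent perturbation argument (Properties~\ref{prop:et-odd} and~\ref{prop:et-even}) requires this terminal $z^*$-edge to be \emph{incoming} at $v_k$ so that the covering constraint at $v_k$ is preserved under the $\pm\epsilon$ shift. The paper rules out the two bad terminations—(a) $v_k$ has only \emph{outgoing} $z^*$-edges, and (b) $v_k$ is a leaf—by invoking tightness of~\eqref{eq:dhigh-covering}: since $(\vx^*,\vz^*)$ is basic feasible on $G_i$ with $|Q(E(G_i))|=|V_i|$, all $|V_i|$ vertex constraints must be tight; in either bad case this forces $x^*_{\edgeud{v_{k-1}}{v_k}}=1$, which together with the already-positive edge entering $v_{k-1}$ overflows the (tight) constraint at $v_{k-1}$. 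You mention tightness only in the context of steps~(2)--(3), not in the walk extension where it is the decisive tool.
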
 
Assuming the above claim is true, we prove Lemma~\ref{lemma:dhigh-dag} by a contradiction. In particular, we will construct an alternative optimal solution to $\LP^{(+)}$ on $G_i$ with one of the following properties:
\begin{property} \label{prop:et-odd}
When $|E(P)|$ is odd (i.e., $k$ is even), our solution has a smaller objective value than the optimal solution $(\vx^*_i, \vz^*_i)$ we started with, contradicting the optimality of $(\vx^*_i, \vz^*_i)$.
\end{property} 
\begin{property} \label{prop:et-even}
When $|E(P)|$ is even (i.e., $k$ is odd), there exists an optimal basic feasible solution $(\vx',\vz')$ and at least one edge $\edgeud{v}{u} \in E$ such that $x'_{\edgeud{v}{u}} = z'_{\edgeud{v}{u}} = 0$, contradicting Assumption~\ref{assump:dhigh}.
\end{property}
As discussed above, in both cases, we would end up in a contradiction, proving Lemma~\ref{lemma:dhigh-dag}, as required.

In the remaining section, we  will sketch the proofs of Properties~\ref{prop:et-odd} and~\ref{prop:et-even}. Let $\epsilon$ be defined as the minimum of the following two expressions:
\begin{align*}
& \min \left( z^*_{\edge{u}{v_1}}, 1 - z_{\edge{u}{v_1}}, z^*_{\edge{w}{v_k}}, 1 - z^*_{\edge{w}{v_k}} \right) \\
& \umin{\edgeud{v_i}{v_{i + 1}} \in E(P), i \in [k - 1]} \left(x^*_{\edgeud{v}{u}}, 1 - x^*_{\edgeud{v}{u}} \right)
\end{align*}
The following results are true:
\begin{align*}
& z^*_{\edge{u}{v_1}} \pm \epsilon \in [0, 1], z^*_{\edge{w}{v_k}} \pm \epsilon \in [0, 1] \numberthis \label{eq:z-e-1} \\
& x^*_{\edgeud{v_i}{v_{i + 1}}} \pm \epsilon \in [0, 1] \quad \forall i \in [k - 1] \numberthis \label{eq:x-e-1}
\end{align*}
and at least one of the following is true:
\begin{align*}
z^*_{\edge{u}{v_1}} = \epsilon \text{ (or) } z^*_{\edge{u}{v_1}} = 1 - \epsilon \\
x^*_{\edgeud{v_i}{v_{i + 1}}} = \epsilon \text{ (or) } x^*_{\edgeud{v_i}{v_{i + 1}}}  = 1 - \epsilon \text{ for some } i \in [k - 1] \\
z^*_{\edge{w}{v_k}} = \epsilon \text{ (or) } z^*_{\edge{w}{v}} = 1 - \epsilon.
\end{align*}
Here~\eqref{eq:z-e-1},~\eqref{eq:x-e-1} and the above follow by the definition of $(\vx^*, \vz^*)$ 
%\begin{align*}
%\left(z^*_{\edge{u}{v_1}}, (x^*_{\edgeud{v_i}{v_{i + 1}}})_{i \in [k - 1]},  z^*_{\edge{w}{v_k}} \right) 
%\end{align*}
and $\epsilon$.

We construct two alternative solutions, starting with some notation for all $i \in [k - 1]$ and $\alpha\in [-1,1]$:
\begin{align*}
%& x^{- \epsilon}_{\edgeud{v_i}{v_{i + 1}}} =  x^*_{\edgeud{v_i}{v_{i + 1}}} - \epsilon,  x^{+ \epsilon}_{\edgeud{v_i}{v_{i + 1}}} =  x^*_{\edgeud{v_i}{v_{i + 1}}} + \epsilon
& x^{\alpha}_{\edgeud{v_i}{v_{i + 1}}} =  x^*_{\edgeud{v_i}{v_{i + 1}}} +\alpha % - \epsilon,  x^{+ \epsilon}_{\edgeud{v_i}{v_{i + 1}}} =  x^*_{\edgeud{v_i}{v_{i + 1}}} + \epsilon
\end{align*}
and
\begin{align*}
& z^{\alpha}_{\edge{u}{v_1}} = z^*_{\edge{u}{v_1}} + \alpha % , z^{+ \epsilon}_{\edge{u}{v_1}} = z^{*}_{\edge{u}{v_1}} + \epsilon \\
%& z^{- \epsilon}_{\edge{w}{v_k}} = z^{*}_{\edge{w}{v_k}}  - \epsilon, z^{+ \epsilon}_{\edge{w}{v_k}} = z^{*}_{\edge{w}{v_k}} + \epsilon.
,& z^{\alpha}_{\edge{w}{v_k}} = z^{*}_{\edge{w}{v_k}}  + \alpha %, z^{+ \epsilon}_{\edge{w}{v_k}} = z^{*}_{\edge{w}{v_k}} + \epsilon.
\end{align*}
The first solution is where we decrease $z^{*}_{\edge{u}{v_1}}$ by $\epsilon$ and subsequently, increase $x_{\edgeud{v_1}{v_2}}$ by $\epsilon$ and so on. In particular, we would have a solution of the form
\begin{align*}
z^{- \epsilon}_{\edge{u}{v_1}},  \left( x^{(-1)^{i - 1} \epsilon}_{\edgeud{v_i}{v_{i + 1}}} \right)_{i \in [k - 1]}, z^{(-1)^{k - 1} \epsilon}_{\edge{w}{v_k}} \numberthis \label{eq:dhigh-sol1}.
\end{align*}
The second solution is where we increase $z^{*}_{\edge{u}{v_1}}$ by $\epsilon$ and we have a solution of the form.
\begin{align*}
z^{+ \epsilon}_{\edge{u}{v_1}},  \left( x^{(-1)^{i} \epsilon}_{\edge{v_i}{v_{i + 1}}} \right)_{i \in [k - 1]},
z^{(-1)^{k} \epsilon}_{\edge{w}{v_k}} \numberthis \label{eq:dhigh-sol2}.
\end{align*}
We claim the following based on~\eqref{eq:dhigh-sol1} and~\eqref{eq:dhigh-sol2}. 
\begin{claim} \label{claim:dhigh-bf}
The constraint~\eqref{eq:dhigh-covering} is tight on each vertex in $P = (u, v_1, \dots, v_k, w)$ in both~\eqref{eq:dhigh-sol1} and~\eqref{eq:dhigh-sol2}. 
\end{claim}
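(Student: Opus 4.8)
Recall that in this subcase $G_i$ is cyclic, so $|E(G_i)|=|V(G_i)|$, and $(\vx^*_i,\vz^*_i)$ is an optimal basic feasible solution satisfying Property~\ref{assump:dhigh}. The plan is to prove Claim~\ref{claim:dhigh-bf} in two steps: first show that~\eqref{eq:dhigh-covering} is already tight at \emph{every} vertex of $G_i$ under $(\vx^*_i,\vz^*_i)$, and then show that neither~\eqref{eq:dhigh-sol1} nor~\eqref{eq:dhigh-sol2} changes the left-hand side of~\eqref{eq:dhigh-covering} at any vertex of $P$. Combining the two gives the claim, since then the left-hand side at each vertex of $P$ equals its value under $(\vx^*_i,\vz^*_i)$, which is $1$.

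For the first step I would invoke Property~\ref{assump:dhigh}: since $G_i$ is cyclic, exactly one of $x^*_{\edgeud{v}{u}},z^*_{\edge{v}{u}}$ is nonzero on each edge of $G_i$, so $|Q(E(G_i))|=|E(G_i)|=|V(G_i)|$. Feeding this into the rank identity established in the course of the proof of Lemma~\ref{thm:basic-feasible-gen} — namely $\rank(C_{S^{1}_{\vx^*,\vz^*}(G_i)})=|Q(E(G_i))|$, where $C_{S^{1}_{\vx^*,\vz^*}(G_i)}$ is the submatrix of tight covering constraints of $G_i$ — forces that matrix to have rank $|V(G_i)|$. As it has at most $|V(G_i)|$ rows, one per covering constraint of $G_i$, all of them must be tight; in particular~\eqref{eq:dhigh-covering} holds with equality at $u,v_1,\dots,v_k,w$.

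For the second step I would use that~\eqref{eq:dhigh-sol1} perturbs only $z_{\edge{u}{v_1}}$, the $x$-variables $x_{\edgeud{v_i}{v_{i+1}}}$ for $i\in[k-1]$, and the boundary $z$-variable of the edge $\{v_k,w\}$, and leaves the terminal $x$-variables $x_{\edgeud{u}{v_1}},x_{\edgeud{w}{v_k}}$ and all interior-edge $z$-variables untouched — those last ones being zero by Claim~\ref{claim:dhigh-path}. Then for each vertex of $P$ I would track which perturbed variables enter its covering constraint, using that a $z$-variable appears only in the constraint of the head of its edge. At the endpoint $u$ (resp.\ $w$) the only incident $P$-edge is the terminal edge, whose $x$-variable is not perturbed and whose perturbed $z$-variable is headed at $v_1$ (resp.\ $v_k$), so its constraint's left-hand side does not move. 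At an interior vertex $v_j$ exactly two perturbed variables enter the constraint, the two $P$-variables incident to $v_j$: at $v_1$ they are $z_{\edge{u}{v_1}}$ (changed by $-\epsilon$) and $x_{\edgeud{v_1}{v_2}}$ (changed by $+\epsilon$); at $v_k$ they are $x_{\edgeud{v_{k-1}}{v_k}}$ (changed by $(-1)^{k-2}\epsilon$) and the boundary $z$-variable of $\{v_k,w\}$ (changed by $(-1)^{k-1}\epsilon$); at an intermediate $v_j$ they are $x_{\edgeud{v_{j-1}}{v_j}}$ and $x_{\edgeud{v_j}{v_{j+1}}}$ (changed by $(-1)^{j-2}\epsilon$ and $(-1)^{j-1}\epsilon$). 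In every case the two changes cancel, so by the first step the left-hand side still equals $1$, i.e.\ the constraint is tight. The identical bookkeeping with all signs reversed proves it for~\eqref{eq:dhigh-sol2}.

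The only substantive step is the first one, where cyclicity of $G_i$ together with Property~\ref{assump:dhigh} is what upgrades ``at least $|V(G_i)|-1$ covering constraints tight'' to ``all of them tight''; the second step is pure sign bookkeeping, whose pitfalls are the head-of-edge convention for the $z$-variables and the two transition vertices $v_1,v_k$ (the small case $k=2$, where there is no intermediate vertex, being covered by those same two checks).
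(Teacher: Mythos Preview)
Your proposal is correct and follows essentially the same two-step approach as the paper: first establish that~\eqref{eq:dhigh-covering} is tight at every vertex of $P$ under $(\vx^*,\vz^*)$, then verify that along $P$ each perturbation in~\eqref{eq:dhigh-sol1} and~\eqref{eq:dhigh-sol2} cancels in pairs (using that $z_{\edge{v}{u}}$ appears only in $u$'s constraint). The one difference is that the paper simply asserts the initial tightness ``follows from the basic feasibility of $(\vx^*,\vz^*)$,'' whereas you spell out the rank argument via Property~\ref{assump:dhigh} and the identity $\rank(C_{S^1_{\vx^*,\vz^*}(G_i)})=|Q(E(G_i))|$ from the proof of Lemma~\ref{thm:basic-feasible-gen}; your justification is more careful here, but the overall argument is the same.
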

Assuming the above claim is true (the proof is in Appendix~\ref{sec:dhigh-bf}), we sketch the proofs of Properties~\ref{prop:et-odd} and~\ref{prop:et-even}.
It turns out that when $|E(P)|$ is odd,~\eqref{eq:dhigh-sol1} has a strictly smaller objective value than $(\vx^*, \vz^*)$, resulting in a contradiction. When $|E(P)|$ is even, there exists at least one edge with value $0$ in one of~\eqref{eq:dhigh-sol1} and~\eqref{eq:dhigh-sol2}, which contradicts an extremal property of $(\vx^*, \vz^*)$. 

\subsubsection{Proof of Properties~\ref{prop:et-odd} and~\ref{prop:et-even}}
In this section, our goal is to prove Properties~\ref{prop:et-odd} and~\ref{prop:et-even}. We recall some notation $P = \{u, v_1, \dots, v_k, w\}$ for all $i \in [k - 1]$:
\begin{align*}
& x^{- \epsilon}_{\edgeud{v_i}{v_{i + 1}}} =  x^*_{\edgeud{v_i}{v_{i + 1}}} - \epsilon,  x^{+ \epsilon}_{\edgeud{v_i}{v_{i + 1}}} =  x^*_{\edgeud{v_i}{v_{i + 1}}} + \epsilon
\end{align*}
and
\begin{align*}
& z^{- \epsilon}_{\edge{u}{v_1}} = z^*_{\edge{u}{v_1}} - \epsilon , z^{+ \epsilon}_{\edge{u}{v_1}} = z^{*}_{\edge{u}{v_1}} + \epsilon \\
& z^{- \epsilon}_{\edge{w}{v_k}} = z^{*}_{\edge{w}{v_k}}  - \epsilon, z^{+ \epsilon}_{\edge{w}{v_k}} = z^{*}_{\edge{w}{v_k}} + \epsilon,
\end{align*}
where $(\vx^*, \vz^*)$ is an optimal basic feasible solution to $\LP^{(+)}$ on $G$. We restate~\eqref{eq:dhigh-sol1} below.
\begin{align*}
z^{- \epsilon}_{\edge{u}{v_1}},  \left( x^{(-1)^{i - 1} \epsilon}_{\edgeud{v_i}{v_{i + 1}}} \right)_{i \in [k - 1]}, z^{(-1)^{k} \epsilon}_{\edge{w}{v_k}} \numberthis \label{eq:dhigh-sol1-app1}.
\end{align*}
Note that the remaining values are the same as $(\vx^*, \vz^*)$.

Next, we compute the ratio of the new objective value and $(\vx^*, \vz^*)$. We have
\begin{align*}
& \frac{ \left(\uprod{\edgeud{f_1}{f_2} \in E \setminus E(P)} L^{x^*_{\edgeud{f_1}{f_2}}}  d^{z^*_{\edge{f_1}{f_2}}} \right)  \cdot d^{z^{ - \epsilon}_{\edge{u}{v_1}} + z^{ (-1)^{k -1} \epsilon}_{\edge{w}{v_k}}} \uprod{i \in [k - 1]} L^{x^{(-1)^{i - 1} \epsilon}_{\edge{v_i}{v_{i + 1}}} }} { \left(\uprod{\edgeud{f_1}{f_2} \in E \setminus E(P)} L^{x^*_{\edgeud{f_1}{f_2}}}  \cdot d^{z^*_{\edge{f_1}{f_2}}} \right) d^{z^*_{\edge{u}{v_1}} + z^*_{\edge{w}{v_k}}} 
	\uprod{i \in [k - 1]} L^{x^*_{\edgeud{v_i}{v_{i + 1}}}}
}  \\
& = \frac{d^{z^{ - \epsilon}_{\edge{u}{v_1}} + z^{ (-1)^{k - 1} \epsilon}_{\edge{w}{v_k}}} \uprod{i \in [k - 1]} L^{x^{(-1)^{i - 1} \epsilon}_{\edge{v_i}{v_{i + 1}}} }  } {  d^{z^*_{\edge{u}{v_1}} + z^*_{\edge{w}{v_k}}} \uprod{i \in [k - 1]} L^{x^*_{\edgeud{v_i}{v_{i + 1}}}} } \\
& = \frac{d^{z^{ - \epsilon}_{\edge{u}{v_1}} + z^{ - \epsilon}_{\edge{w}{v_k}}} \uprod{i \in [k - 1]} L^{x^{(-1)^{i - 1} \epsilon}_{\edge{v_i}{v_{i + 1}}} }  } {  d^{z^*_{\edge{u}{v_1}} + z^*_{\edge{w}{v_k}}} \uprod{i \in [k - 1]} L^{x^*_{\edgeud{v_i}{v_{i + 1}}}} } \quad \text{ when } k \text { is even} \\
& = \frac{d^{z^*_{\edge{u}{v_1}}  - \epsilon} d^ {z^*_{\edge{w}{v_k}} - \epsilon} \uprod{i \in [k - 1]} L^{x^{(-1)^{i - 1} \epsilon}_{\edge{v_i}{v_{i + 1}}} } }{ d^{z^*_{\edge{u}{v_1}} + z^*_{\edge{w}{v_k}}} \uprod{i \in [k - 1]} L^{x^*_{\edgeud{v_i}{v_{i + 1}}}} } \\
& =\left( \frac{L}{d^2} \right)^{\epsilon} \\
& < 1
\end{align*}
where the final inequality follows from $d^2 > L$. This proves Property~\ref{prop:et-odd} since the bound obtained by~\eqref{eq:dhigh-sol1} is strictly better than that of $(\vx^*, \vz^*)$, resulting in a contradiction.

We now consider the case when $k$ is odd, where we have
\begin{align*}
\frac{d^{z^{ - \epsilon}_{\edge{u}{v_1}} + z^{ (-1)^{k - 1} \epsilon}_{\edge{w}{v_k}}} \uprod{i \in [k - 1]} L^{x^{(-1)^{i - 1} \epsilon}_{\edge{v_i}{v_{i + 1}}} }  } {  d^{z^*_{\edge{u}{v_1}} + z^*_{\edge{w}{v_k}}} \uprod{i \in [k - 1]} L^{x^*_{\edgeud{v_i}{v_{i + 1}}}} }
& = \frac{d^{z^{ - \epsilon}_{\edge{u}{v_1}} + z^{+ \epsilon}_{\edge{w}{v_k}}  \uprod{i \in [k - 1]} L^{x^{(-1)^{i - 1} \epsilon}_{\edge{v_i}{v_{i + 1}}} } }} {d^{z^*_{\edge{u}{v_1}} + z^*_{\edge{w}{v_k}}} \uprod{i \in [k - 1]} L^{x^*_{\edgeud{v_i}{v_{i + 1}}}} } \\
& = 1 \numberthis \label{eq:dhigh-zero1}.
\end{align*}
Note that the objective values are the same in this case. We do a similar computation for~\eqref{eq:dhigh-sol2} as well, this time starting with the case when $k$ is odd and first restate~\eqref{eq:dhigh-sol2}.
\begin{align*}
z^{+ \epsilon}_{\edge{u}{v_1}},  \left( x^{(-1)^{i} \epsilon}_{\edge{v_i}{v_{i + 1}}} \right)_{i \in [k - 1]},
z^{(-1)^{k} \epsilon}_{\edge{w}{v_k}} \numberthis \label{eq:dhigh-sol2-app2}.
\end{align*}
\begin{align*}
& \frac{ \left( \uprod{\edgeud{f_1}{f_2} \in E \setminus E(P)} L^{x^*_{\edgeud{f_1}{f_2}}} d^{z^*_{\edge{f_1}{f_2}}} \right)  \cdot d^{z^{ + \epsilon}_{\edge{u}{v_1}} + z^{ (-1)^{k} \epsilon}_{\edge{w}{v_k}}} \uprod{i \in [k - 1]} L^{x^{(-1)^{i} \epsilon}_{\edge{v_i}{v_{i + 1}}} }} { \left( \uprod{\edgeud{f_1}{f_2} \in E \setminus E(P)} L^{x^*_{\edgeud{f_1}{f_2}}} d^{z^*_{\edge{f_1}{f_2}}} \right) d^{z^*_{\edge{u}{v_1}} + z^*_{\edge{w}{v_k}}} 
	\uprod{i \in [k - 1]} L^{x^*_{\edgeud{v_i}{v_{i + 1}}}}
}  \\
& = \frac{d^{z^{ + \epsilon}_{\edge{u}{v_1}} + z^{ (-1)^{k} \epsilon}_{\edge{w}{v_k}}} \uprod{i \in [k - 1]} L^{x^{(-1)^{i} \epsilon}_{\edge{v_i}{v_{i + 1}}} }} {d^{z^*_{\edge{u}{v_1}} + z^*_{\edge{w}{v_k}}}  \uprod{i \in [k - 1]} L^{x^*_{\edgeud{v_i}{v_{i + 1}}}} } \\
& = \frac{d^{z^*_{\edge{u}{v_1}} + \epsilon}  d^{z^*_{\edge{w}{v_k}} - \epsilon}  \uprod{i \in [k - 1]} L^{x^{(-1)^{i} \epsilon}_{\edge{v_i}{v_{i + 1}}} } } {d^{z^*_{\edge{u}{v_1}} + z^*_{\edge{w}{v_k}}}  \uprod{i \in [k - 1]} L^{x^*_{\edgeud{v_i}{v_{i + 1}}}} } \\
& = 1.
\end{align*}
We are now ready to argue Property~\ref{prop:et-even} based on~\eqref{eq:dhigh-sol1-app1} and~\eqref{eq:dhigh-sol2-app2}. Recall the definition of $\epsilon$:
\begin{align*}
\epsilon = \min \left \{ z^*_{\edge{u}{v_1}}, 1 - z_{\edge{u}{v_1}}, z^*_{\edge{w}{v_k}}, 1 - z^*_{\edge{w}{v_k}} , \umin{\edgeud{v_i}{v_{i + 1}} \in E(P), i \in [k - 1]} \left \{ x^*_{\edgeud{v}{u}}, 1 - x^*_{\edgeud{v}{u}} \right \} \right \}.
\end{align*}
WLOG let $x^*_{\edgeud{v_1}{v_2}}$ be the value achieving the $\epsilon$ minimum. Then, one of the following conditions is true:
\begin{align*}
x^*_{\edgeud{v_1}{v_2}} = \epsilon \\
x^*_{\edgeud{v_1}{v_2}} = 1 - \epsilon.
\end{align*}
Assuming the first case, we would have added it by $\epsilon$ in~\eqref{eq:dhigh-sol1-app1} and subtracted it by $\epsilon$ in~\eqref{eq:dhigh-sol2-app2}. The latter would give us $x^{- \epsilon}_{\edgeud{v_1}{v_2}} = 0$, contradicting Property~\ref{assump:dhigh}. Now, consider the second case, where we would have added it by $\epsilon$ in~\eqref{eq:dhigh-sol2-app2} and subtracted it by $\epsilon$ in~\eqref{eq:dhigh-sol1-app1}. In the former, we would have $x^{ + \epsilon}_{\edgeud{v_1}{v_2}} = 1$, which in turn implies $x^{- \epsilon}_{\edgeud{v_2}{v_3}} = 0$ and $z^{- \epsilon}_{\edgeud{u}{v_1}} = 0$. This contradicts Property~\ref{assump:dhigh} as well and proves Property~\ref{prop:et-even}, as required.

For the sake of completenes, we consider the case when $k$ is even as well. However, it turns out that the objective value in this case is worse than that of $(\vx^*, \vz^*)$ and as a result, we can ignore it.
\begin{align*}
\frac{d^{z^{ + \epsilon}_{\edge{u}{v_1}} + z^{ (-1)^{k} \epsilon}_{\edge{w}{v_k}}} \uprod{i \in [k - 1]} L^{x^{(-1)^{i} \epsilon}_{\edge{v_i}{v_{i + 1}}} }} {d^{z^*_{\edge{u}{v_1}} + z^*_{\edge{w}{v_k}}}  \uprod{i \in [k - 1]} L^{x^*_{\edgeud{v_i}{v_{i + 1}}}} } & = \frac{d^{z^*_{\edge{u}{v_1}} + \epsilon}  d^{z^*_{\edge{w}{v_k}} + \epsilon}  \uprod{i \in [k - 1]} L^{x^{(-1)^{i} \epsilon}_{\edge{v_i}{v_{i + 1}}} } } {d^{z^*_{\edge{u}{v_1}} + z^*_{\edge{w}{v_k}}}  \uprod{i \in [k - 1]} L^{x^*_{\edgeud{v_i}{v_{i + 1}}}} } \quad \text{ when } k \text { is even}  \\
& = \left( \frac{d^2}{L} \right)^{\epsilon} \\
& > 1.
\end{align*}
This completes the proof.

\subsubsection{Proof of Claim~\ref{claim:dhigh-path}} \label{sec:dhigh-path}
We prove Claim~\ref{claim:dhigh-path} here. Our goal here is to argue that for some $k \ge 2$, there exists a path $P = \{u, v_1, \dots, v_{k} , w\}$ such that $u, v_1 \in V(C), z_{\edge{u}{v_1}} > 0$, $z^*_{\edge{v_k}{w}} > 0$ and $x^*_{\edgeud{v_i}{v_{i + 1}}} > 0$ for every $i \in [k - 1]$. 

We first claim that there always exists a path $P' = \{u, v_1, v_2\}$ in $G$ such that $u, v_1 \in V(C), z^*_{\edge{u}{v_1}} > 0, x^*_{\edgeud{v_1}{v_2}} > 0$. Note that $P'$ is a subpath of $P$.

Assuming the above claim is true, we construct $P$ from $P'$ as follows. Consider the vertex $v_2$. If it has an incoming edge $\edge{w}{v_2}$ with $z^*_{\edge{w}{v_2}} > 0$ and we have constructed a $P = \{P', w\}$, as required. If not, there are two possibilities -- $(1)$ $v_2$ has only outgoing edges $\edge{v_2}{y} \in E$ with $z^*_{\edge{v_2}{y}} > 0$. $(2)$ $v_2$ at least one more incident edge $\edgeud{v_2}{v_2} \in E$ with $x^*_{\edgeud{v_2}{v_3}} > 0$. 

In case of $(1)$, note that we have $x^*_{\edge{v_1}{v_2}} = 1$ (by basic feasibility of $(\vx^*, \vz^*)$) and this would violate the tightness of the constraint on $v_1$ i.e., we have
\begin{align*}
z^*_{\edge{u}{v_1}} + x^*_{\edge{v_1}{v_2}} > 1,
\end{align*}
contradicting our assumption that $(\vx^*, \vz^*)$ is basic feasible. If we fall back to $(2)$, then we update $P = P' \cup \{v_3\}$ and continue this process. There are two stopping conditions for this procedure -- $(3)$ we identify a $w$ as above and terminate with a valid $P$ and $(4)$ the $v_k$ for some $k > 2$ we hit in our procedure is a leaf (i.e., has degree $1$) in $G$. We show that $(4)$ cannot happen, in particular, if $v_k$ is a leaf, then it is true that $x^*_{\edge{v_{k - 1}}{v_k}} = 1$ (again by basic feasibility of $(\vx^*, \vz^*)$) and this would violate the tightness of the constraint on $v_{k - 1}$ i.e., we have
\begin{align*}
x^*_{\edge{v_{k - 2}}{v_{k - 1}}}  + x^*_{\edge{v_{k - 1}}{v_k}}  > 1,
\end{align*} 
contradicting the basic feasibility of $(\vx^*, \vz^*)$.As a result, we can always construct a valid $P$ as required.

To complete the proof, we reason about the cases where no such path of the kind $P'$ exists. 
\begin{itemize}
	\item {The obvious case is when the graph $G$ is a cycle $C$ and all edges $\edge{v}{u} \in E(G)$ have $z_{\edge{v}{u}} > 0$.} 
	\item {All vertices in the cycle $C$ have only outgoing edges $\edge{u}{w} \in E(G), u \in V(C), w \not \in V(C)$ with $z_{\edge{u}{w}} > 0$.}
	\item {There exists at least one vertex $u \in V(C)$ with at least one incoming edge $\edge{v}{u}$ from outside the cycle i.e., $v \not \in V(C)$.}
\end{itemize}
We argue the first and second cases in one shot and argue the third case separately. For the first two cases,
we construct an alternative basic feasible solution $(\vx', \vz')$ to $G$ as follows. We pick an arbitrary edge $\edgeud{v}{u} \in E(C)$ and set $x'_{\edgeud{v}{u}} = 1$ and $z'_{\edge{v}{u}} = z'_{\edge{w}{v}} = 0$. The remaining values in $(\vx', \vz')$ are the same as $(\vx^*, \vz^*)$ Computing the ratio of objective values given by $(\vx', \vz')$ and $(\vx^*, \vz^*)$ for edges in $C$, we have
\begin{align*}
\frac{  L d^{|V(C)| - 2}}{ d^{|V(C)|}} & = \frac{L}{d^2} \\
&  < 1,
\end{align*}
where the inequality follows from $d^2 > L$. Note that the remaining edges in $G$ have the same value in $\vx'$ and $\vx^*$ (by definition of $\vx'$). We now argue that $\vx'$ is basic feasible and to do so, we only need to focus on vertices $u$ and $v$ whose tightness could have been affected by our construction. Since we have $x'_{\edgeud{v}{u}}  = 1$ and $v$, $u$ can have only outgoing edges outside of $E(C)$, the constraints are still tight on $v$ and $u$ respectively. Thus, $(\vx', \vz')$ is basic feasible and has a smaller objective value than $(\vx^*, \vz^*)$. As a result, we can replace $\edge{v}{u}$ with $\edgeud{v}{u}$ and drop the edge $\edge{w}{v}$ to make $G$ a DAG.

Finally, we argue the third case. As in the argument above, we construct an alternative optimal solution $(\vx', \vz')$ as follows. We focus on $\edge{v}{u} \in E$ such that $v \not \in V(C)$ and $u \in V(C)$ and consider $\edge{w}{u} \in E(C)$. We set $z'_{\edge{v}{u}} = z^*_{\edge{v}{u}} + z^*_{\edge{w}{u}}$ and $z'_{\edge{w}{u}} = 0$. The remaining values in $(\vx', \vz')$ are the same as $(\vx^*, \vz^*)$ Computing the ratio of objective values given by $(\vx', \vz')$ and $(\vx^*, \vz^*)$ for $G$, we have
\begin{align*}
\frac{d^{z'_{\edge{v}{u}}}} d^{z'_{\edge{w}{u}}}{d^{z^*_{\edge{v}{u}}} d^{z^*_{\edge{w}{u}}} } & = \frac{d^{z'_{\edge{v}{u}}}} {d^{z^*_{\edge{v}{u}} + z^*_{\edge{w}{u}}}} \\
& = 1,
\end{align*}
where the equation follows from definitions of $z'_{\edge{v}{u}}$ and $z'_{\edge{w}{u}}$. Note that the remaining edges in $G$ have the same value in $\vx'$ and $\vx^*$ (by definition of $\vx'$). We now argue that $\vx'$ is basic feasible and to do so, we only need to focus on vertices $u$ whose tightness could have been affected by our construction. Since $z'_{\edge{v}{u}} = z^*_{\edge{v}{u}} + z^*_{\edge{w}{u}}$, the constraint is still tight on $u$. Thus, we have constructed a new solution $(\vx', \vz')$ that is basic feasible, has the same objective value as $(\vx^*, \vz^*)$ and contradicts Property~\ref{assump:dhigh}. As a result, a $G$ of this kind cannot exist.

This completes the proof we since in all the above cases, we could always construct an alternative solution,
where $G$ is DAG.

\subsubsection{Proof of Claim~\ref{claim:dhigh-bf}} \label{sec:dhigh-bf}
We recall some notation $P = \{u, v_1, \dots, v_k, w\}$ for all $i \in [k - 1]$:
\begin{align*}
& x^{- \epsilon}_{\edgeud{v_i}{v_{i + 1}}} =  x^*_{\edgeud{v_i}{v_{i + 1}}} - \epsilon,  x^{+ \epsilon}_{\edgeud{v_i}{v_{i + 1}}} =  x^*_{\edgeud{v_i}{v_{i + 1}}} + \epsilon
\end{align*}
and
\begin{align*}
& z^{- \epsilon}_{\edge{u}{v_1}} = z^*_{\edge{u}{v_1}} - \epsilon , z^{+ \epsilon}_{\edge{u}{v_1}} = z^{*}_{\edge{u}{v_1}} + \epsilon \\
& z^{- \epsilon}_{\edge{w}{v_k}} = z^{*}_{\edge{w}{v_k}}  - \epsilon, z^{+ \epsilon}_{\edge{w}{v_k}} = z^{*}_{\edge{w}{v_k}} + \epsilon.
\end{align*}
We recall the constraint~\eqref{eq:dhigh-covering} for every $u \in V$ as well for an optimal basic feasible solution $(\vx^*, \vz^*)$:
\begin{align*}
\left( \usum{e = (v, u) \ni u} x^*_{\edgeud{v}{u}}  \right) + \left( \usum{\edge{v}{u} \in E} z^*_{\edge{v}{u}} \right) = 1 \quad \forall u \in V \numberthis \label{eq:dhigh-covering-app},
\end{align*}
where the tightness of the constraint follows from the basic feasibility of $(\vx^*, \vz^*)$.

We prove this claim first for~\eqref{eq:dhigh-sol1}, followed by~\eqref{eq:dhigh-sol2} and start by restating~\eqref{eq:dhigh-sol1}.
\begin{align*}
z^{- \epsilon}_{\edge{u}{v_1}},  \left( x^{(-1)^{i - 1} \epsilon}_{\edgeud{v_i}{v_{i + 1}}} \right)_{i \in [k - 1]}, z^{(-1)^{k - 1} \epsilon}_{\edge{w}{v_k}} \numberthis \label{eq:dhigh-sol1-app}.
\end{align*}
Note that the variables $z^{- \epsilon}_{\edge{u}{v_1}}$ and $z^{(-1)^{k - 1}}_{\edge{w}{v_1}}$ do not belong in~\eqref{eq:dhigh-covering} for $u$ and $w$. As a result, the constraint~\eqref{eq:dhigh-covering} is still tight for $u$ and $w$. Next, we consider the set of vertices $(v1, \dots, v_k)$ and for every $v_i:i \in [2, k - 2]$, we have
\begin{align*}
x^{(-1)^{i - 2}\epsilon}_{\edgeud{v_{i - 1}}{v_{i}}} + x^{(-1)^{i - 1} \epsilon}_{\edgeud{v_i}{v_{i + 1}}} & = x^{*}_{\edgeud{v_{i  - 1}}{v_i}} + x^{*}_{\edgeud{v_{i}}{v_{i + 1}}},
\end{align*}
where the equation follows from the fact that $(-1)^{i - 2}$ and $(-1)^{i - 1}$ have different parities for every $i \in [2, k -2]$. Since we didn't modify values of the other edges incident on $v_i$ and~\eqref{eq:dhigh-covering-app} was true for $v_i$ to start with, our modification doesn't affect the tightness. We can do a similar argument as above for $v_1$ and $v_k$ as well, where we have
\begin{align*}
z^{- \epsilon}_{\edge{u}{v_1}} + x^{\epsilon}_{\edgeud{v_1}{v_{2}}} & = z^{*}_{\edge{u}{v_1}} + x^*_{\edgeud{v_1}{v_2}} \\
z^{(-1)^{k - 1} \epsilon}_{\edge{w}{v_k}} + x^{(-1)^{k - 2} \epsilon}_{\edgeud{v_{k - 1}}{v_{k}}} & =
z^{*}_{\edge{w}{v_k}} +  x^{*}_{\edgeud{v_{k - 1}}{v_k}},
\end{align*}
as required.

To complete the proof, we do a similar argument for~\eqref{eq:dhigh-sol2} as well, starting by restating it.
\begin{align*}
z^{+ \epsilon}_{\edge{u}{v_1}},  \left( x^{(-1)^{i} \epsilon}_{\edgeud{v_i}{v_{i + 1}}} \right)_{i \in [k - 1]},
z^{(-1)^{k} \epsilon}_{\edge{w}{v_k}} \numberthis \label{eq:dhigh-sol2-app}.
\end{align*}
Similar to the earlier argument, $z^{- \epsilon}_{\edge{u}{v_1}}$ and $z^{(-1)^{k}}_{\edge{w}{v_1}}$ do not belong in~\eqref{eq:dhigh-covering} for $u$ and $w$. As a result, the constraint~\eqref{eq:dhigh-covering} is still tight for $u$ and $w$. Next, we consider the set of vertices $(v1, \dots, v_k)$ and for every $v_i:i \in [2, k - 2]$, we have
\begin{align*}
x^{(-1)^{i - 1}\epsilon}_{\edgeud{v_{i - 1}}{v_{i}}} + x^{(-1)^{i} \epsilon}_{\edgeud{v_i}{v_{i + 1}}} & = x^{*}_{\edgeud{v_{i  - 1}}{v_i}} + x^{*}_{\edgeud{v_i}{v_{i + 1}}},
\end{align*}
where the equation follows from the fact that $(-1)^{i - 1}$ and $(-1)^{i}$ have different parities for every $i \in [2, k -2]$.  Since we didn't modify values of the other edges incident on $v_i$ and~\eqref{eq:dhigh-covering-app} was true for $v_i$ to start with, our modification doesn't affect the tightness. We can do a similar argument as above for $v_1$ and $v_k$ as well, where we have
\begin{align*}
z^{+ \epsilon}_{\edge{u}{v_1}} + x^{ -\epsilon}_{\edgeud{v_1}{v_2}} & = z^{*}_{\edge{u}{v_1}} + x^*_{\edgeud{v_1}{v_2}} \\
z^{(-1)^{k} \epsilon}_{\edge{w}{v_k}} + x^{(-1)^{k - 1} \epsilon}_{\edgeud{v_{k - 1}}{v_{k}}} & =
z^{*}_{\edge{w}{v_k}} +  x^{*}_{\edgeud{v_{k - 1}}{v_k}},
\end{align*}
as required. This completes the proof.
\section{Extensions to Hypergraphs and other Open Questions}
\subsection{Extending our results to (acyclic) hypergraphs} \label{sec:hyper}
In this section, we outline how your techniques (which have been so far only used for simple graphs) can be extended to work with hypergraphs. The aim of this section is not to present the most general result we can prove for hypergraphs but rather to show that the bottleneck in extending our arguments to the hypergraph case is {\em not} the notion of a degree configuration. Thus, to keep the notation simple(r), we will focus on the case of $\ell_1$  bounds only and we outline how we can recover the AGM bound. We also note that by our we have a `proof from the book' for the AGM bound and while the proof below can be considered `new' we do not see potential benefit of the argument {\em other} than the fact that it shows that our degree configuration based bounds can be extended to hypergraphs.  %results for the acyclic case mirroring the result in~\cite{ngo-survey}.\footnote{We {\em can} extend our results to the case of $\ell_p$ and $\ell_\infty$ bounds for $p\in [1,\infty)$ but then it turns out that we need to handle $\ell_1$, $\ell_p$ and $\ell_\infty$ bounds together. The latter can be handled but needs more involved notations and hence, we do not consider the case of $p\in (1,\infty)$ here.} 
Finally, we only consider the problem of computing an upper bound on the size of the join output (the arguments for matching lower bound and the corresponding worst-case optimal algorithms follow along the usual lines but again for less clutter we omit those here).

Before we dive into our argument, we first define a generalization of notion of degrees and degree configuration for hypergraphs (the definitions below are also valid for the case of $\ell_\infty$ bounds though we'll not consider the latter here).

We now consider the case when $G=(\cV,\cE)$ is a {\em hypergraph} and start by recalling the notion of 
{\em simple} degree constraints from~\cite{panda-pods20}, where we have for any 
subsets $X, Y \subseteq \cV$ of variables, a degree constraint $Y | X$ is considered simple if $|X| \le 1$. For our definition of degree constraint, we place an additional restriction that $X \cup Y = \cE$ as well. This would make our degree constraints guarded by the relation $R_{X \cup Y}$.
\begin{defn} [Simple and Guarded degree constraints for hypergraphs]
	For every $F\in \cE$ and $u\in F$, we use $L_{\edgeinfty{u}{F\setminus\inset{u}}}$ to denote the {\em degree constraint of $u$ in $F$}. More specifically we first define the {\em degree of $a\in\Dom(u)$} as
	\[d_{\edge{u}{F\setminus\inset{u}}}[a]=\abs{\inset{\vt|(a,\vt)\in R_F}}.\]
	Then the {\em degree constraint} $L_{\edgeinfty{u}{F\setminus\inset{u}}}$ implies that for every $a\in\Dom(u)$:
	\[d_{\edge{u}{F\setminus\inset{u}}}[a]\le L_{\edgeinfty{u}{F\setminus\inset{u}}}.\]
\end{defn}

It will be useful to define a {\em set} of degree constrains that basically keeps track of the set of simple degree constraint:
\begin{defn} \label{def:set-deg-constraint}
	A set of degree constraints for hypergraph $G=(\cV,\cE)$, is a subset $\cC(G)\subseteq \cV\times \cE$. We now overload notation by also referring to $L_{\edgeinfty{u}{F\setminus\inset{u}}}$ as $L_{(u,F)}$.
\end{defn}
Following~\cite{ngo-survey}, we denote the corresponding constraints acyclic (and we overload notation and call $G$) to be acyclic if the simple graph obtained by replacing the constraint $\edge{u}{F\setminus \set{u}}$ by binary directed edges $\inset{\edge{u}{v}|v\in F\setminus \inset{u}}$ is acyclic.

Since $\ell_1$ constraints do not have any inherent direction we impose such a direction. In particular, fix an ordering of vertices $u_1,\dots,u_n$ in $\cV$ and for each  edge $F\in \cE$ if $u$ is the first vertex in $F$ according to this order, then add $(u,F)$ to $\cC(G)$.

For the rest of the section fix $G=(\cV,\cE)$ and a set of degree constraints $\cC(G)$ (as defined above).
Recall we want to consider the setup where we have for each $F\in \cE$, an $\ell_1$ bound $N_F$ (i.e. size of $R_F$). We now (re)consider the AGM/edge cover LP  for hypergraphs:
\begin{align*}
& \min \usum{F \in \cE}  x_F\cdot  \log(N_F) \tag{$\LP^{(+)}$}\\ %+ \usum{(u,F)\in\cC(G)}  z_{\edge{u}{F}} \log(L_{(u,F)}) \tag{$\LP^{(+)}$}\\
& \usum{u\ni F \in \cE}  x_{F} \ge 1 \quad \forall u \in V  \numberthis \label{eq:gen-covering-hyper-corr} \\
%+ \usum{(v,F')\in\cC(G): u\in F'} z_{\edge{v}{F}}   \ge 1 \quad \forall u \in V  \numberthis \label{eq:gen-covering-hyper} \\
& x_{F} \ge 0 \quad \forall F \in E %\numberthis \label{eq:gen-primal-hyper}.
\end{align*}

We will argue that for any DB instance that satisfies the $\ell_1$ bounds, we can upper bound the join output size by
\[\uprod{F \in \cE} \inparen{N_F}^{x_F}.\] %\cdot \uprod{(u,F)\in\cC(G)} \inparen{L_{(u,F)}}^{z_{\edge{u}{F}}}.\]
We argue this by induction on the size of $G$. 
As the base case consider the situation that $|\cV|=1$. In this case we have the set intersection problem and it is known that the above bound is true (this e.g. follows from the fact that the size of the output is bounded by the size of smallest set (and then using Lemma~\ref{lemma:zero-sum} along with~\eqref{eq:gen-covering-hyper-corr}). % $|\cV|=\inset{u,v}$. In this case the only $\ell_\infty$ bound we can have is $L_{\edgeinfty{u}{\inset{v}}}$ (since $G$ still has to be acyclic). Whie we could have self-loops in $G$ for sake convenience let us assume that $(u,v)\in\cE$ is the only (undirected) edge. Note that in this case $G$ is a simple graph and $\LP^{(+)}$ above matches the $\LP^{(+)}$ for the simple graph case and hence, Section~\ref{sec:overview} completes the proof of the base case.

%For the general case let us assume that the vertices in $\cV$ has a topological ordering $\sigma=u_1,\dots,u_n$. Note that this implies for any $(u,F)\in\cC(G)$, it has to be the case that $u$ is the first vertex in $F$ according to $\sigma$. 
Let us assume that for every $a\in \Dom(a)$, we have
\begin{equation}
\label{eq:hyper-assum-corr}
d_{\edge{u}{F\setminus\inset{u}}}[a] = L_{(u,F)},
\end{equation}
for some value $L_{(u,F)}$.
(We'll later outline how we can get rid of this assumption.)

We'll use notation equivalent to one used in~\cite[Theorem 4.1]{ngo-survey} (though in there the recursion happens in the reverse topological ordering):
\begin{align*}
\partial(u_1)&=\inset{F|u_1\in F} &\\
\cE'&=\inset{F|F\cup\inset{u_1}\in\cE\text{ or } F\in \cE\setminus \partial(u_1)\wedge F\ne\emptyset} &\\
\cV'&= \cV\setminus\inset{u_1}&\\
G'&=\inparen{\cV',\cE'}&\\
\cC(G')&= \cC(G)\setminus\inset{(u_1,F)|(u_1,F)\in \cC(G)}&\\
R'_F &=\begin{cases}
R_F & \text{ if } F\in \cE\setminus \partial(u_1)\\
\pi_F\inparen{R_{F\cup\inset{u_1}}} & F\cup\inset{u_1}\in\cE
\end{cases}
&F\in\cE'\\
N'_F &=\begin{cases}
N_F & \text{ if } F\in \cE\setminus \partial(u_1)\\
L_{(u_1,F)} & F\cup\inset{u_1}\in\cE
\end{cases}
&F\in\cE'\\
x'_F&=\begin{cases}
x_F &\text{ if } F\in \cE\setminus \partial(u_1)\\
x_{F\cup\inset{u_1}}& F\cup\inset{u_1}\in\cE
\end{cases}
&F\in\cE'\\
%z'_{(u,F)} &= z_{(u,F)} & \text{ for all } (u,F)\in \cC(G')\\
%L'_{(u,F)} &= L_{(u,F)} & \text{ for all } (u,F)\in \cC(G').
\end{align*}

Note that by arguments similar to those in Section~\ref{sec:overview} the number of choices of $a\in\Dom(u_1)$ that can be in the output is upper bounded by
\begin{equation}
\label{eq:u-1-ub}
\min_{u_1\ni F\in\cE} \inset{\frac{L_F}{L_{(u,F)}}} \le 
\prod_{u_1\ni F\in\cE} \inparen{\frac{L_F}{L_{(u,F)}}}^{x_F}.
\end{equation}

Note that for each such choice of $a$ for $u_1$ above, we have an instance of the join problem on $G'$ (as defined above) and by induction hypothesis the size of each of these instance is upper bounded by (note that $\vx'$ is indeed a feasible solution for $\LP^{(+)}$ on $G'$):
\begin{equation}
\label{eq:G'-ub}
\uprod{F \in \cE'} \inparen{N'_F}^{x'_F}
=\uprod{F \in \cE\setminus \partial(u_1)} \inparen{N'_F}^{x'_F}\cdot \uprod{F\cup\inset{u_1}\in\cE} \inparen{L_{(u_1,F)}}^{x_{F\cup\inset{u_1}}}
\end{equation}

Multiplying the bounds in~\eqref{eq:u-1-ub} and~\eqref{eq:G'-ub}, we re-prove the AGM bound as desired.

We quickly address the assumption in~\eqref{eq:hyper-assum-corr}. First we get around the exact degree assumption by assuming that all degrees are within buckets of type $[2^i,2^{i+1})$ for some $i$. Again, if we assumed the final AGM bound for all degree configurations, we will incur an extra $O(\log^{|\cE|}{N})$ multiplicative factor. However, we can use the trick of `pushing' the sum over all degree configuration inside the product using H\"{o}lder's inequality as we have done in other cases.

We conclude this section by noting that if we `unroll' the recursion above (as we have done implicitly in our arguments), then it turns out that for each $F\in\cE$ not only do we have to monitor the degree constraint $(u,F)$ (where again $u$ is the first according to some global ordering) but if say $F=(u=u_0,u_1,\dots,u_\ell)$ then we need to keep track of `degrees' for $\edge{u_i}{(u_{i+1},\dots,u_\ell)}$ for all $0\le i\le \ell-1$. We leave the question of working out the details as an tantalizing possibility for future work.

\subsection{More Open Questions} \label{sec:app-concl}
We leave the following open questions for future and discuss current limitations to achieve them.
\begin{ques} \label{ques:gen-bounds}
Can we generalize our results for same $\ell_1$ and $\ell_{\infty}$ bounds for all edges to different $\ell_1$ and $\ell_\infty$ bounds, potentially resolving Question~\ref{ques:acyclic} completely in affirmative for all simple graphs?
\end{ques}
Our results in this regime currently rely on very specific structural decompositions based on the value of $d$. Extending them to different $\ell_1$, $\ell_{\infty}$ values could need more finer decompositions. Finally, we note that in most of the above cases, our degree configuration is more robust and extensible but our structural results are what cause the limitation. 

Our assumption that we fix $p$ upfront in Question~\ref{ques:main} leads to the following question:
\begin{ques} \label{ques:gen-norm}
What is the optimal $p$ that must be chosen for each relation for a given query and input data? 
\end{ques}
We would like to emphasize here that Algorithm~\ref{algo:generic-ub} is $p$-agnostic and uses only a global ordering of attributes in the query. As a result, we conclude by conjecturing that the algorithmremains the same in both the above scenarios, while the analysis could be more technically involved. 

Finally, we consider the following question.
\begin{ques} \label{ques:gen-p}
Can we extend our results on simple graphs for $p \in [1, 2]$ to the case when $p > 2$?
\end{ques}
It turns out that when $p > 2$, the optimal hard instances are not necessarily Cartesian product-based. To see this, consider the case when $p = 3$, $G$ is a triangle with a cyclic orientation and all $\ell_3$-norm bounds are upper bounded by $L$. The best Cartesian product based lower bound that can be achieved in this case is $L^{9/4}$ (using our LP-based result). However, we can get a (trivial) lower bound of $L^3$ on $|\TJ|$ using the below instance: 
\begin{align*}
|| R_{\edge{A}{B}} ||_{3} = || R_{\edge{B}{C}} ||_{3} = || R_{\edge{C}{A}} ||_{3}  = \{ (i, i): i \in [L^3] \}.
\end{align*}
We present a slightly more formal result below and end with a conjecture, as we did for the hypergraph case.
\subsubsection{Extending our results to $p \in [1, \infty]$ for all simple graphs}
In this section, we outline how some of our techniques for $p\in [1,\infty)$ (which have so far been used only for graphs with girth at least $p + 1$) can be extended to work for all simple graphs. The aim of this section is not to present the most general result we can prove for this case but rather to show the bottleneck in extending our arguments to the case of arbitrary simple graphs. To this end, we first fix $p > |V| - 1$. We will focus on the case when $G$ is a triangle with a cyclic orientation, all $\ell_{p}$-norm bounds are upper bounded by $L$ and outline how we can prove tight bounds (up to a poly-logarithmic factor in $|E|$) in this case. We mainly consider the problem of proving an optimal lower bound on the size of the join output and briefly discuss a plan of attack for the upper bound. %We omit the algorithmic part since it follows from our upper bound. 

We consider the problem of obtaining the best lower bound in this case when $G$ is a triangle with a cyclic orientation. As we saw in Section~\ref{sec:concl}, our current Cartesian product based lower bounds do not hold when $p > 2$. We define the following LP.
\begin{align*}
& \max \quad  y_{A} + y_{B} + y_{C}  + m \tag{$\LP_{p}^{(**)}$} \\
& \text{s.t. } \frac{y_{A}}{p}   + y_{B}  + \frac{m}{p} \le \log (L)   \\ 
& \frac{y_{B}}{p}   + y_{C}  + \frac{m}{p} \le \log (L)   \\ 
&  \frac{y_{C}}{p}   + y_{A}  + \frac{m}{p} \le \log (L). 
\end{align*}
Consider an optimal solution $\vy^* = (y^*_A, y^*_B, y^*_C, m^*)$ to the LP above. We construct the following instance $I$ for $\TJ$, where we define (assuming $D_A = \floor{2^{y^*_A}}$, $D_B = \floor{2^{y^*_B}}$, $D_C = \floor{2^{y^*_C}}, M = \floor{2^{m^*}}$)
\begin{align*}
R_{\edge{A}{B}} & = \{ [(i - 1)\cdot D_A + 1, i\cdot D_A] \times [(i - 1)\cdot D_B + 1, i\cdot D_B] : i \in [M] \} \\
R_{\edge{B}{C}} & = \{ [(i - 1)\cdot D_B + 1, i\cdot D_B] \times [(i - 1)\cdot D_C + 1, i\cdot D_C] : i \in [M] \} \\
R_{\edge{C}{A}} & = \{ [(i - 1)\cdot D_C + 1, i\cdot D_C] \times [(i - 1)\cdot D_A + 1, i\cdot D_A] : i \in [M] \}.
\end{align*} 
Based on above, we have
\begin{align}
\norm{R_{\edge{A}{B}}}_{p} & = \sqrt[p]{D_{A} \cdot D_{B}^{p} \cdot M} \\
& \le 2^{\frac{y^*_{A}}{p} + y^*_{B} + \frac{m^*}{p}} \\ 
& \le 2^{\log(L)} = L,
\end{align}
where the first inequality follows from the definition of $D_A, D_B$ and $M$ and the second inequality follows from dual constraint above (in $\LP^{(**)}$). Using symmetry, we can do a similar argument for 
$\norm{R_{\edge{B}{C}}}_{p}$ and $\norm{R_{\edge{C}{A}}}_{p}$ as well. 

Our instance is valid and we get a lower bound of
\begin{align*}
|\TJ| & \ge D_{A} \cdot D_{B} \cdot D_{C} \cdot M \\
& \ge \frac{1}{8} 2^{y^*_A + y^*_B + y^*_C + m^*} 
\end{align*}
from it.

We now attempt to see if we can obtain an upper bound based on $\LP_{p}^{(**)}$ as well. To do this, we come up with an upper bound $\B'(\bard, G)$ (for a fixed degree configuration $\bd = (d_{\edge{v}{u}})_{d_{\edge{v}{u}} \le L}$) as the minimum of the following three expressions:
\begin{align*}
& \frac{2^{p} \cdot L^p}{d_{\edge{A}{B}}^p} \min \left \{  d_{\edge{A}{B}}, \frac{2^p \cdot L^p}{d_{\edge{B}{C}}^p} \right \}  \min \left \{  d_{\edge{B}{C}} ,\frac{2^p \cdot L^p}{d_{\edge{C}{A}}^p} \right \} \\
& \frac{2^{p} \cdot L^p}{d_{\edge{B}{C}}^p} \min \left \{  d_{\edge{B}{C}}, \frac{2^p \cdot L^p}{d_{\edge{C}{A}}^p} \right \}  \min \left \{  d_{\edge{C}{A}}, \frac{2^p \cdot L^p}{d_{\edge{A}{B}}^p} \right \} \\
& \frac{2^{p} \cdot L^p}{d_{\edge{C}{A}}^p} \min \left \{  d_{\edge{C}{A}}, \frac{2^p \cdot L^p}{d_{\edge{A}{B}}^p} \right \}  \min \left \{  d_{\edge{A}{B}}, \frac{2^p \cdot L^p}{d_{\edge{B}{C}}^p} \right \}.
\end{align*}
One natural option is to use $\LP^{(*)}$, which we used to upper bound $\B'(\bard, G)$ for the case when $p \le |V| - 1$.
\begin{align*}
& \max z_{C} \tag{$\LP^{(*)}$} \\
&   z_{C} \le z_{\edgep{A}{B}} + y'_{B} + y'_{C} \\
&   z_{C} \le z_{\edgep{B}{C}} + y'_{A} + y'_{A} \\ 
&   z_{C} \le z_{\edgep{C}{A}} + y'_{A} + y'_{B} \\  
&  \frac{z_{\edgep{A}{B}}}{p} + y'_{B} \le \log(L) \\
&  \frac{z_{\edgep{B}{C}}}{p} + y'_{C} \le \log(L) \\
& \frac{z_{\edgep{C}{A}}}{p} + y'_{A} \le \log(L) \\
& y'_{A} \le z_{\edge{A}{B}}, y'_B \le z_{\edge{B}{C}}, y'_C \le z_{\edge{C}{A}} \\
& y'_{A}, y'_{B}, y'_{C}, z'_{A}, z'_{B}, z'_{C} \ge 0. 
\end{align*}
To argue that we can upper bound $|\TJ|$ by $\LP_{p}^{(**)}$ as well, we first need to prove the below result.
\begin{lemma}
	\begin{align*}
	\B'(\bard, G) & \le 2^{\LP^{(*)}}.
	\end{align*}
\end{lemma}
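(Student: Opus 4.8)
\textbf{Proof plan for the Lemma $\B'(\bard, G) \le 2^{\LP^{(*)}}$.}

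The plan is to mimic, almost verbatim, the structure of the proof of Lemma~\ref{claim:gen-lp} (the analogous statement for the general cyclic case, proven in Appendix~\ref{sec:gen-lp}), specialized to the triangle. Concretely, I would show that for \emph{every} degree configuration $\bard = (d_{\edge{v}{u}})_{d_{\edge{v}{u}} \le L}$, there is a feasible solution of $\LP^{(*)}$ whose objective value equals $\log \B'(\bard, G)$. Since $\B'(\bard, G)$ as defined is already the minimum of the three symmetric expressions, and $\LP^{(*)}$ is a maximization problem (so its optimum dominates any one feasible point), this will immediately give $\B'(\bard, G) \le 2^{\LP^{(*)}}$ once we take the configuration $\bard$ achieving the relevant maximum in the unrolled bound. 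The construction of the feasible point is the natural one: set
\begin{align*}
\tilde z_{\edgep{A}{B}}(\bard) &= \min\set{\log d_{\edge{B}{C}},\ \log\tfrac{2^p L^p}{d_{\edge{C}{A}}^p},\ \log\tfrac{2^p L^p}{d_{\edge{A}{B}}^p}},
\end{align*}
and the two cyclic rotations of this for $\tilde z_{\edgep{B}{C}}, \tilde z_{\edgep{C}{A}}$; set $\tilde y'_A(\bard), \tilde y'_B(\bard), \tilde y'_C(\bard)$ to be the corresponding minima that also include the ``incoming degree'' term (as in~\eqref{eq:gen-yprime}); and set $\tilde z_C(\bard) = \min_{A_i \in C}(\tilde z_{\edgep{A_i}{A_{i+1}}}(\bard) + \sum_{A_j \ne A_i} \tilde y'_{A_j}(\bard))$.

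The key steps, in order, are: (i) unroll the definition of $\B'(\bard, G)$ to see that its logarithm is exactly $\tilde z_C(\bard)$ with the above assignment — this is where one checks that each of the three expressions in the $\min$ defining $\B'(\bard, G)$ corresponds to one choice of ``which cycle edge to drop'', i.e., to one of the three upper-bound constraints on $z_C$ in $\LP^{(*)}$; (ii) verify constraint-by-constraint that $(\tilde z, \tilde y', \tilde z_C)$ is feasible for $\LP^{(*)}$ — the constraints $z_C \le z_{\edgep{A_i}{A_{i+1}}} + y'_{A_{i+1}} + y'_{A_{i+2}}$ follow from the $\min$ definition of $\tilde z_C$; the constraints $\tfrac{z_{\edgep{A_i}{A_{i+1}}}}{p} + y'_{A_{i+1}} \le \log L$ follow because the $\min$ defining $\tilde z_{\edgep{A_i}{A_{i+1}}}$ includes the term $\log\tfrac{2^p L^p}{d_{\edge{A_i}{A_{i+1}}}^p}$ and the $\min$ defining $\tilde y'_{A_{i+1}}$ includes $\log d_{\edge{A_i}{A_{i+1}}}$, so their $\tfrac1p$-weighted sum is at most $\tfrac1p\log\tfrac{2^p L^p}{d^p} + \log d = \log(2L)$, and one absorbs the constant $2$ into the $2^{p|V|}$-type slack (or carries it explicitly, as Appendix~\ref{sec:gen-lp} does); the constraints $y'_{A_i} \le z_{\edgep{A_i}{A_{i+1}}}$ hold because the $\min$ defining $\tilde y'_{A_i}$ is over a superset of the terms defining $\tilde z_{\edgep{A_i}{A_{i+1}}}$; and nonnegativity is immediate; (iii) conclude that $\log\B'(\bard, G) = \tilde z_C(\bard) \le \LP^{(*)}$, hence $\B'(\bard, G) \le 2^{\LP^{(*)}}$, and finally take $\bard$ to be the configuration realizing the worst case.

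I expect step (i) — correctly unrolling $\B'(\bard, G)$ and matching its three-way $\min$ structure to the three $z_C$-constraints of $\LP^{(*)}$, while keeping track of the ``incoming'' versus ``outgoing'' degree terms for the cycle vertices — to be the only genuinely delicate part; the feasibility verification in step (ii) is routine bookkeeping of $\min$'s, entirely parallel to the triangle-with-back-edges analysis already carried out in Example~\ref{ex:triang-cyclic} and generalized in Appendix~\ref{sec:gen-lp}. One subtlety worth flagging: here $p > |V| - 1 = 2$, so unlike the $p \le |V|-1$ regime, we should \emph{not} expect the subsequent step (converting an optimal $\LP^{(*)}$ solution into a feasible dual solution of the corresponding $\LP^{(+)}$, i.e., the analogue of Lemma~\ref{lem:gen-2}) to go through — indeed the excerpt already notes the Cartesian-product lower bound is not tight for $p=3$. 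But that failure is downstream of the present lemma; the statement $\B'(\bard,G) \le 2^{\LP^{(*)}}$ itself only uses the $\min$-to-LP-relaxation direction, which is insensitive to the size of $p$, so the proof plan above is unaffected.
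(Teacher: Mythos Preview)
Your proposal is correct and takes essentially the same approach as the paper: the paper's ``proof'' of this lemma is a single sentence stating that it follows by constructing an appropriate feasible solution $(z_{\edge{A}{B}}, z_{\edge{B}{C}}, z_{\edge{C}{A}}, y'_A, y'_B, y'_C)$ to $\LP^{(*)}$ from $\B'(\bard,G)$, exactly as you plan, and exactly parallel to Lemma~\ref{claim:gen-lp}. One small caution: for the pure triangle there are no incoming/outgoing trees, so the general template~\eqref{eq:gen-z} specializes to $\tilde z_{\edge{A_i}{A_{i+1}}}(\bard) = \log\tfrac{2^pL^p}{d_{\edge{A_i}{A_{i+1}}}^p}$ (a single term), whereas your displayed assignment for $\tilde z_{\edgep{A}{B}}$ includes extra terms like $\log d_{\edge{B}{C}}$ --- double-check your indexing when you write it out, but the method itself is the intended one.
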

This result can be proved with appropriate definition of a feasible solution $(z_{\edge{A}{B}}, z_{\edge{B}{C}}, z_{\edge{C}{A}})$ and $(y'_{A}, y'_{B}, y'_{C})$ based on $\B'(\bard, G)$. For the next step, we state the below conjecture (which we believe to be true):
\begin{conjecture}
	\begin{align*}
	2^{\LP^{(*)}} & \le 2^{\LP_{p}^{(**)}} %\uprod{\edge{v}{u} \in E} L_{\edged{v}{u}}^{x_{\edge{v}{u}}},
	\end{align*}
\end{conjecture}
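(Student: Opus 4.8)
\textbf{Plan for proving the Conjecture $2^{\LP^{(*)}} \le 2^{\LP_{p}^{(**)}}$.}

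The conjecture says that the optimal value of $\LP^{(*)}$ (the relaxation we used to bound $\B'(\bard,G)$ for the cyclic triangle) is no larger than the optimal value of $\LP_{p}^{(**)}$ (the LP governing our Cartesian-product-plus-cyclic-embedding lower bound). By strong duality and the way the two LPs are set up, the clean way to attack this is to exhibit a feasibility-preserving map from optimal solutions of $\LP^{(*)}$ to feasible solutions of $\LP_{p}^{(**)}$ that does not decrease the objective. Concretely, I would first establish a normal-form lemma for optimal solutions of $\LP^{(*)}$: at optimality we may assume $z_{\edge{A}{B}} = z_{\edge{B}{C}} = z_{\edge{C}{A}}$ (call this common value $z$) and $y'_A = y'_B = y'_C$ (call it $y'$). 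This should follow by the same averaging/symmetrization argument used in Claim~\ref{claim:opt-lp-gen} and Lemma~\ref{lemma:z-opt-gen}: the triangle has a cyclic symmetry under which the LP is invariant, so averaging any optimal solution over the three rotations produces a symmetric optimal solution, and then one checks the normalized objective $z_C = \min\{ z + 2y'\}$ can only go up (or stays the same) when we further push $z$ and $y'$ towards the boundary via the same $\eps$-shift trick as in the proof of Lemma~\ref{thm:cyclic-bound-lp}.

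Once the symmetric normal form is in hand, $\LP^{(*)}$ collapses to: maximize $z_C = z + 2y'$ subject to $\tfrac{z}{p} + y' \le \log L$, $y' \le z$, $y', z \ge 0$. I would then define the candidate solution to $\LP_{p}^{(**)}$ by $m = z$ and $y_A = y_B = y_C = y'$. Feasibility of $\LP_{p}^{(**)}$ requires $\tfrac{y_A}{p} + y_B + \tfrac{m}{p} \le \log L$, i.e. $\tfrac{y'}{p} + y' + \tfrac{z}{p} \le \log L$; since $\LP^{(*)}$ only gives us $\tfrac{z}{p} + y' \le \log L$, the extra term $\tfrac{y'}{p}$ is the gap to close. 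Here is where the constraint $y' \le z$ from $\LP^{(*)}$ (constraint~\eqref{eq:ct24-app}) becomes essential, together with $p$ being large ($p > |V| - 1 = 2$): one would need a rescaling, replacing $(z, y')$ by $(\lambda z, \lambda y')$ for a suitable $\lambda \le 1$ chosen so that the $\LP_{p}^{(**)}$ constraint is met, and then argue the objective $z_C = z + 2y'$ only shrinks by a controlled (poly-logarithmic in $|E|$, or even constant) factor — matching the "up to a poly-logarithmic factor" caveat stated in the excerpt. Alternatively, and perhaps more cleanly, one observes that the lower-bound LP $\LP_p^{(**)}$ is itself essentially the dual pairing of the embedding, so its optimum equals the optimum of $\LP^{(+)}$-style covering for this instance; one could try to show $\LP^{(*)} \le \LP^{(+)}$ directly, mirroring Lemma~\ref{lem:gen-2}, except that the girth-at-least-$p+1$ hypothesis fails here (the triangle has girth $3 < p+1 = 4$), which is exactly why the inequality only holds up to slack.

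\textbf{The main obstacle.} The crux — and the reason this is stated as a conjecture rather than a lemma — is precisely that the proof of Lemma~\ref{lem:gen-2} for $p \le |V|-1$ used the inequality $p \le k - 1$ at the key step (the $\tfrac{k-1-p}{k}\eps \ge 0$ estimate in Lemma~\ref{lemma:z-opt-gen} and the feasibility checks in Lemma~\ref{lemma:z-feasible-gen}), and that sign flips when $p > k - 1$. So the "push the slack around the cycle" argument that converted $\LP^{(*)}$ into the dual $\LP^{(**)}$ no longer works verbatim: redistributing $\eps$ of $z$-mass into $y'$-mass around a triangle now \emph{costs} more on the $\tfrac{\cdot}{p}$ side than it gains, since each of the $k$ edges contributes a $\tfrac{\eps}{p}$ penalty while only $k$ vertices absorb $\tfrac{\eps}{k}$ gain and $(p+1)/k > 2/k$ is no longer dominated. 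I expect that closing this gap requires either (i) showing the loss is bounded by a constant depending only on $|E|$ (hence absorbable into the claimed poly-log factor), by carefully tracking how far $\lambda$ must be from $1$, or (ii) finding a genuinely different, non-cyclic embedding for the upper bound that is not Cartesian-product-based — which is the deeper phenomenon flagged in Section~\ref{sec:concl} and which I believe is the real content left for future work. I would attempt route (i) first, as it is mechanical once the symmetric normal form is established, and fall back to analyzing the structure of optimal $\B'(\bard,G)$ directly (bypassing $\LP^{(*)}$ altogether) if the rescaling loss turns out not to be boundable.
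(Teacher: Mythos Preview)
The paper does not prove this statement; it is explicitly left as an open conjecture, so there is no paper proof to compare against.

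Your symmetrization plan is sound and, if you had pushed it one step further, it would in fact resolve the conjecture (with equality) for this same-$L$ triangle. The obstacle you flag is illusory. After symmetrizing $\LP^{(*)}$ you are maximizing $z+2y'$ subject to $\tfrac{z}{p}+y'\le \log L$ and $0\le y'\le z$. Along the binding constraint $y'=\log L-\tfrac{z}{p}$ the objective reads $2\log L + z\bigl(1-\tfrac{2}{p}\bigr)$, which for $p>2$ is increasing in $z$; hence the optimum sits at $z=p\log L$, $y'=0$, with value $p\log L$. Your map $(m,y_A,y_B,y_C)=(z,y',y',y')$ then lands on $(p\log L,0,0,0)$, which is feasible for $\LP_p^{(**)}$ with objective $p\log L$, and a parallel computation shows the symmetric optimum of $\LP_p^{(**)}$ is also $p\log L$. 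The ``extra $\tfrac{y'}{p}$'' term you worry about is zero at the optimum, so no rescaling and no poly-log loss is needed.

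The girth obstruction you import from Lemmas~\ref{lemma:z-opt-gen}--\ref{lemma:z-feasible-gen} is a red herring for this specific conjecture. Those lemmas needed the $\eps$-shift around the cycle because the $L_{\edged{v}{u}}$ values there were arbitrary; here all $L$'s are equal and the full cyclic symmetry lets you compute the symmetric optimum directly, bypassing the shift argument (and its $p\le k-1$ requirement) entirely. Your instinct that ``route~(i)'' is mechanical once the normal form is established is correct --- it is in fact a two-line calculation --- but you talked yourself out of completing it.
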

Note that if it is indeed true, by strong duality, we would get tight bounds on $|\TJ|$ up to a polylogarithmic factor of $\log(L)^{3}$ (i.e., number of degree configurations). 
\end{document}